\DeclareMathAlphabet{\mathpzc}{OT1}{pzc}{m}{it}
\title{On Equivalence and Uniformisation Problems for Finite Transducers}
\author[1]{Emmanuel Filiot}
\author[1]{Ismaël Jecker}
\author[2]{Christof Löding}
\author[2]{Sarah Winter}
\affil[1]{Universit\'{e} Libre de Bruxelles} \affil[2]{RWTH Aachen}
\authorrunning{E.\, Filiot and I.\, Jecker and C.\, Löding and S.\, Winter} %mandatory. First: Use abbreviated first/middle names. Second (only in severe cases): Use first author plus 'et. al.'
\subjclass{F.4.3 Formal Languages}% mandatory: Please choose ACM 1998 classifications from http://www.acm.org/about/class/ccs98-html . E.g., cite as "F.1.1 Models of Computation". 
\keywords{Transducers, Equivalence, Uniformisation}% mandatory: Please provide 1-5 keywords
\newtheorem{proposition}{\textbf{Proposition}}
\newtheorem{claim}{\textbf{Claim}}
\newcommand\trans[2]{\ensuremath{#1\mid #2}}
\newcommand\aut{A}
\newcommand\tra{T}
\newcommand\uni{U}
\newcommand\sync{\mathbb{S}}
\newcommand\dsync{\mathbb{D}}
\newcommand\inp{\mathbbmtt{i}}
\newcommand\outp{\mathbbmtt{o}}
\newcommand\dom{\text{dom}}
\newcommand\id{\text{id}}
\newcommand\inr{\text{in}}
\newcommand\outr{\text{out}}
\newcommand\del{\text{delay}}
\newcommand\lag{\text{lag}}
\newcommand{\lang}[1]{\mathcal{L}_{#1}}
\newcommand{\rel}[1]{\mathcal{R}_{#1}}
\newcommand\ina{\Sigma_\inp}
\newcommand\outa{\Sigma_\outp}
\newcommand\inout{{\inp\outp}}
\newcommand\inouta{\Sigma_\inout}
\def\stackrel#1#2{\mathrel{\mathop{#2}\limits^{#1}}}
\newcommand{\nat}{\mathbb{N}}
\newcommand{\Qin}{Q^{\inp}}
\newcommand{\Qout}{Q^{\outp}}
\newcommand{\Lin}{L^{\inp}}
\newcommand{\Lout}{L^{\outp}}
\newcommand{\Vin}{V^{\inp}}
\newcommand{\Vout}{V^{\outp}}
\newcommand{\Sigmaend}{\Sigma_{\dashv}}
\newcommand{\game}{\mathcal{G}}
\newcommand{\reduce}[1]{\inp(#1)}
\newcommand{\Pump}[2]{\mathit{Pump}_{\ge #1}(#2)}
\newcommand{\choice}[2]{\textsf{Ch}_{#1}^{#2}}
\newcommand{\prop}[1]{\textbf{\textsf{P}$_{#1}$}}
\newcommand{\propH}[1]{\textbf{\textsf{R}$_{#1}$}}
\newcommand{\outru}[1]{\textsf{out}_{#1}}
\newcommand{\outf}[1]{\textsf{fin}_{#1}}
\newcommand{\heap}[2]{\textsf{H}_{#1}^{#2}}
\newcommand{\heapf}[1]{\textsf{H}{(#1)}}
\newcommand{\oldbound}{N_{\tra}'}
\newcommand{\newbound}{N_{\tra}}
\begin{document}

\maketitle

\begin{abstract}
    Transductions are binary relations of finite words. For
    rational transductions, i.e., transductions defined by finite
    transducers, the inclusion, equivalence and sequential
    uniformisation problems are known to be undecidable. In this
    paper, we investigate stronger variants of inclusion, equivalence
    and sequential uniformisation, based on a general notion of
    transducer resynchronisation, and show their decidability. We also
    investigate the classes of finite-valued rational transductions and
    deterministic rational transductions, which are  known to have a decidable
     equivalence problem. We show that sequential
    uniformisation is also decidable for them.
\end{abstract}

\section{Introduction}

Transductions generalise finite word languages to binary relations of
finite words. The notion of rationality for languages, and its
correspondence with finite automata, has been extended to transductions
and finite automata over pairs of words, called \emph{finite transducers}
\cite{berstel2009}. In this paper, we study decision problems for
finite transducers and prove new decidability results.

\vspace{1mm}
\noindent\textbf{Finite transducers} (Finite) transducers are
nondeterministic finite
automata whose transitions are labelled by pairs of words. The
(rational) transduction $\rel{\tra}$ defined by a transducer $\tra$ consists of all the pairs of
words $(u,v)$ obtained by concatenating the pairs occurring on
transitions of its successful computations. In this paper, we follow a
dynamic vision of transducers, as a machine that processes input words $u$
and produces output words $v$. Therefore, we may speak of the domain
of a transduction, as the language of input words that admit at least
one output word.

\vspace{1mm}\noindent\textbf{Equivalence problem} Unlike finite
automata, finite transducers have undecidable inclusion and equivalence problems
\cite{DBLP:journals/jacm/Griffiths68,FischerR68}, even restricted to unary
alphabets \cite{SICOMP::Ibarra1978}. The largest known classes with
decidable equivalence problem are those of finite-valued transducers
and deterministic transducers. A transducer is finite-valued if it
produces at most $k$ outputs per input, for a bound $k$
that only depends on the transducer. It is decidable whether a
transducer is $k$-valued for a given $k$ \cite{GurIba83}, and whether there
exists $k$ such that it is $k$-valued \cite{DBLP:journals/acta/Weber89}. Any finite-valued
transducer is known to be (effectively) equivalent to a finite union of unambiguous 
transducers \cite{DBLP:conf/mfcs/Weber88}, and thus to a finitely ambiguous transducer. 
Equivalence of $k$-ambiguous transducers was shown to be
decidable in \cite{GurIba83}, and equivalence of $k$-valued
transducers was first shown to be decidable in
\cite{CulKar86c,DBLP:conf/mfcs/Weber88}. Other algorithms with better
complexities appeared later, and the best known algorithm runs in
exponential time, for a fixed $k$ \cite{DBLP:conf/dlt/Souza08}.  

A transducer is deterministic if the transitions are deterministic in
the classical sense, and furthermore each state  processes either only
input symbols or only output symbols. The class of deterministic
rational transductions is also referred to as DRat, and it strictly
extends the class of synchronous rational transductions (also called
automatic relations), see e.g.\ \cite{CartonCG06} for an overview of
these sub-classes of rational transductions. As opposed to the
class of finite-valued transducers, it is undecidable whether a
transduction is equivalent to a deterministic transduction
\cite{FischerR68}. However, the equivalence problem for DRat is known
to be decidable \cite{Bird73}, even in polynomial time
\cite{FriedmanG82}. This makes this class an interesting candidate for
further investigations of decision problems.

\vspace{1mm}
\noindent\textbf{Uniformisation problem} Two classes of interest are
the rational and sequential functions, which are respectively defined
by $1$-valued transducers and sequential transducers. The latter read 
input words in a deterministic manner, and therefore produce a unique output word for each
input. There are rational functions that are not sequential, but it is
decidable in \textsc{PTime} whether a transducer defines a sequential
function \cite{DBLP:journals/iandc/WeberK95}. Since rational transductions do not define, in general,
functions, an interesting question is whether a unique output word can
be picked for each input word of a rational transduction $R$, in a
regular way, thus defining a function $f\subseteq R$ with the same
domain as $R$. Such a function $f$ is called a \emph{uniformiser} of $R$. 
It is known that any rational transduction admits a rational uniformiser
\cite{journals/iandc/Kobayashi69,Eilenberg:1974:ALM:540337} and, in
the case of DRat, even a \emph{lexicographic} uniformiser that picks the
smallest output words according to a lexicographic order, making the
uniformiser only depend on the transduction \cite{lncs194*300,PelSak99}.
In this paper, we are interested in sequential uniformisers. Even rational functions do not admit
sequential uniformisers in general, and therefore this gives rise to a
decision problem: Given a finite transducer, does it admit a
sequential uniformiser? It is worth noting that even if any rational
transduction $R$ can be uniformised by a rational uniformiser $U$, the
sequential uniformisability of $R$ does not imply, in general, that
\emph{any} of the uniformisers $U$ is equivalent to a sequential transducer. As a matter of fact, it
is known that the sequential uniformisation problem is undecidable for
rational transductions~\cite{CarayolL14}.

The sequential uniformisation problem echoes a similar problem introduced by
Church, the \emph{synthesis problem}, which currently receives a
lot of attention from the computer-aided verification community in the
context of open reactive systems (see
\cite{KupfermanPV06,FiliotJR11,BloemJPPS12} for some work on this
subject from the last decade). This problem asks whether given a logical
specification of a system, there exists an implementation that
satisfies it. In this context, reactive systems are non-terminating
systems that react to some unpredictable environment stimuli in a
\emph{synchronised} fashion: for each environment input, they produce
an output in a deterministic manner, such that the specification is met in the limit. 
Their executions are modelled by infinite words over a
product alphabet, and the interaction with the environment makes 
game theory a powerful tool in this context. A seminal result due to B\"uchi and Landweber shows
that the synthesis problem is decidable for MSO specifications
\cite{BuLa69} (see \cite{Thomas08} for a modern presentation and an
overview). 

Restricted to finite words, the sequential uniformisation problem generalises the synthesis
problem to an asynchronous setting: the transduction $R$ is the specification,
while the sequential uniformiser $f$ is the implementation.
%, and both are allowed to produce several outputs per input, or none. 

% Unlike finite automata, important problems are undecidable for finite
% transducers. The inclusion and equivalence problems asks whether given
% two transducers $\tra_1$ and $\tra_2$, the transductions they define
% are included and equal respectively. These problems are undecidable
% \cite{DBLP:journals/jacm/Griffiths68}. Large subclasses of rational
% transductions have then been introduced to recover decidability of
% inclusion and equivalence. The largest known class of 

\vspace{1mm}
\noindent\textbf{Resynchronisers} One of the main difficulty of
transducers is that two equivalent transducers may produce their
outputs very differently: One transducer may go fast and be ahead of
the other. By tagging symbols with two colours (for input and output),
transductions can be seen as languages, called \emph{synchronisation
languages}. It is known by
Nivat's theorem that rational transductions are synchronised by
regular languages \cite{Niv-transductions-lc}, and any transducer defines a
regular synchronisation language. Other correspondences between classes of
synchronisation languages and classes of rational transductions have
been established in \cite{conf/stacs/FigueiraL14}.
However in general, there is an infinite number of synchronisation
languages for a single transduction, making problems such as equivalence and sequential
uniformisation undecidable. To overcome this difficulty, Bojanczyk has
introduced \emph{transductions with origin information}, which amounts
to add the synchronisation information into the semantics of
transducers, via an origin function mapping output positions their
originating input positions \cite{DBLP:journals/corr/Bojanczyk13}. The main result of
\cite{DBLP:journals/corr/Bojanczyk13} is a
machine-independent characterisation of transductions (with origin
information) defined by two-way transducers. With respect to the
equivalence problem, considering the origin information makes the
problem easy: two transducers define the same transduction with
same origin mappings if they have the same synchronisation language. 
In this paper, we generalise this idea and propose decision problems modulo resynchronisation. A
\emph{resynchroniser} $\sync$ is a transduction, mapping a synchronisation
language to another one. Then, we consider related equivalence and
sequential uniformiser problems: for instance, given two transducers, are their
synchronisation languages equal modulo $\sync$? For the identity
resynchroniser, it is the same as origin-equivalence.

\vspace{1mm}
\noindent\textbf{Contributions} As a first contribution, we show that
inclusion, equivalence and sequential uniformisation are decidable
modulo \emph{rational} resynchronisers. For equivalence, it easily
reduces to an automata equivalence problem. For sequential
uniformisation, it boils down to solving a two-player safety game. 
We then consider a particular class of resynchronisers, the
\emph{$k$-delay resynchronisers}, that can apply a fixed delay $k$ 
to a synchronisation language, where the delay is a measure of how
ahead an output word is from another one \cite{BealCPS03}. The $k$-delay
resynchroniser is rational for each $k$, which implies the decidability of the
corresponding decision problem. Interestingly, we show that for the
class of real-time transducers (reading at least one input symbol in
each transition),
$k$-delay resynchronisers encompass all the power of rational
synchronisers with respect to the decision problems considered in this
paper. 

Our second main contribution is to show that equivalence and
sequential uniformisation modulo $k$-delay resynchronisers are
complete for finite-valued transducers. Given two finite-valued
transducers, if they are equivalent, then some $k$ can be computed 
such that they are $k$-delay equivalent. This yields another, delay-based, proof of
the decidability of finite-valued transducer equivalence. We show a
similar result for sequential uniformisation, by a pumping argument
based on an analysis of the idempotent elements in the transition
monoid of finitely-ambiguous transducers. This implies a new result: The decidability of
sequential uniformisation for finite-valued transducers.

Finally, our third main contribution is a decidability proof for the
sequential uniformisation problem for deterministic rational
transductions, extending a corresponding result for automatic
relations from  \cite{CarayolL14}.

\vspace{1mm}
\noindent\textbf{Structure of the paper} In Section~\ref{sec:prelims},
we introduce automata, transducers and decision problems for them. In
Section~\ref{sec:synch}, we define the notion of resynchronisers for
transductions and study their associated decision problems. We also
introduce the particular class of bounded delay resynchronisers. In
Section \ref{sec:fvalued}, we study the class of finite-valued
rational transductions and prove decidability of their sequential
uniformisation. Finally in Section \ref{sec:drat}, we prove
decidability of sequential uniformisation for deterministic rational
transductions. Due to lack of space, proofs are only sketched in the
paper. All full proofs can be found in the appendix section. 

%%% Local Variables:
%%% mode: latex
%%% TeX-master: t
%%% End:

\section{Automata and Transducers}\label{sec:prelims}

Let $\mathbb{N}$ denote the set of non-negative integers
$\{0,1,\dots\}$, and for every $n \in \mathbb{N}$, let $[n]$ denote the set $\{1,\ldots,n\}$.
Given a finite set $A$, let $|A|$ denote its cardinality.

 % of $A$ and let $\mathcal{P}(A)$ be set of subsets of $A$.
% Given two sets $A$ and $B$, let $\mathcal{F}(A,B)$ denote the set of
% functions from $A$ to $B$.

\vspace{1mm}
\noindent\textbf{Languages and Transductions of Words} An alphabet
$\Sigma$ is a finite set of symbols. The elements of the free monoid
$\Sigma^*$ are called \emph{words} over $\Sigma$. The length of a word $w$ is the number of its symbols. It is 
written $|w|$. The empty word (of length $0$) is denoted by $\epsilon$, and $\Sigma^+ = \Sigma^* \setminus\{\epsilon\}$ 
%For all $i\in\{1,\dots,|u|\}$, we denote by 
%$w[i]$ the $i$-th letter of $w$, and for all $j\in\{1,\dots,|u|\}$ such
%that $j\geq i$, by $w[i{:}j]$ the factor $w[i]w[i+1]\dots w[j]$. By
%$w[{:}i]$ (resp. $w[i{:}]$) we denote the prefix up to (and including)
%position $i$ (resp. the suffix from (and including) position $i$). We
%set $w[{:}0] =\epsilon$. 
%For $\Sigma'\subseteq \Sigma$, we define the morphism
%$\pi_{\Sigma'}$ from $\Sigma^*$ to $\Sigma'^*$ by
%$\pi_{\Sigma'}(\sigma) = \sigma$ if $\sigma\in \Sigma'$ and by
%$\pi_{\Sigma'}(\sigma) = \epsilon$ otherwise.
%For instance, if $\Sigma = \{a,b,c\}$ and $\Sigma' = \{a,b\}$, we have
%$\pi_{\Sigma'}(acbcb) = abb$ and $\pi_{\Sigma'}(ccc) = \epsilon$. 
The set $\Sigma^*$ can be partially ordered by the word prefix relation
$\preceq$.%, i.e., $u\preceq v$ if $u$ is a prefix of $v$. 

We denote by $\Sigma^{-1}$ the set of
symbols $\sigma^{-1}$ for all $\sigma\in\Sigma$. Any word $u\in
(\Sigma\cup \Sigma^{-1})^*$ can be reduced into a unique irreducible word $\overline{u}$ by the equations
$\sigma\sigma^{-1} = \sigma^{-1}\sigma = \epsilon$ for all
$\sigma\in\Sigma$. Let $G_\Sigma$ be the set of irreducible words over
$\Sigma\cup \Sigma^{-1}$. The set $G_\Sigma$ equipped with concatenation $u.v =
\overline{uv}$ is a group, called the free group over $\Sigma$. We denote by $u^{-1}$ the inverse
of $u$. E.g. $(a^{-1}bc)^{-1} = c^{-1}b^{-1}a$. For $u\in G_\Sigma$,
we denote by $|u|$ its number of symbols. E.g., $|a^{-1}b^{-1}| = 2$,
$|a^{-1}bc^{-1}| = 3$

 A \emph{language} $L$ over $\Sigma$ is a subset of
$\Sigma^*$. A \emph{transduction} $R$ over $\Sigma$ is a subset of 
$\Sigma^*\times \Sigma^*$. The \emph{domain} of $R$ is the set 
$\dom(R) = \{ u\ |\ \exists v\in\Sigma^*\cdot (u,v)\in R\}$. 
For a word $u\in\Sigma^*$, we denote by $R(u)$ the set $\{v\ |\
(u,v)\in R\}$, and extend this notation to languages $L$ by $R(L) =
\bigcup_{u\in L} R(u)$. When $R$ is a function, we simply write $R(u) = v$
instead of $R(u)=\{v\}$. Finally, we denote by $\id_{\Sigma^*}$ the
identity relation on $\Sigma^*$.

\vspace{1mm}
\noindent\textbf{Automata} A (finite state) \emph{automaton} over a finite alphabet $\Sigma$ is a tuple $\aut  = (Q,I,F,\Delta)$, where
$Q$ is the finite set of states,
$I \subseteq Q$ is the set of initial states,
$F \subseteq Q$ is the set of final states,
and $\Delta \subseteq Q \times \Sigma^* \times Q$ is the finite transition relation.
Given a transition $(q,w,q') \in \Delta$, $q$ is called its source, $q'$ its target, and $w$ its label.
An automaton is called \emph{deterministic} if each of its transition
is labelled by a single letter, and it admits no pair of transitions
that have same source, same label, and different targets. 

A \emph{run} of $A$ on a word $u\in\Sigma^*$ from state $q$ to
state $p$ is either a single state $q\in Q$ if $u=\epsilon$ and $q=p$, or a word $r = (q_1,u_1,p_1)(q_2,u_2,p_2)\dots (q_n,u_n,p_n)\in\Delta^+$
if $u\in\Sigma^+$, where $u = u_1\dots u_n$, $q_1 = q$ and $p_n = p$, and for all
$i\in\{1,\dots,n-1\}$, $p_i = q_{i+1}$. We write
$q_1\xrightarrow{u}_A p_n$ (or simply $q_1\xrightarrow{u} p_n$) if
such a run exists.  A run $r$ from a state $q$ to a state $p$ is \emph{accepting} if
$q$ is initial and $p$ is final. The \emph{language} recognised by $\aut$, written $\lang{\aut}$, is
the set of words $w \in \Sigma^*$ such that there exists an accepting
run of $A$ on $w$. If $B$ is an automaton, we write $A\subseteq B$
(resp. $A\equiv B$) whenever $\lang{A}\subseteq
\lang{B}$ (resp. $\lang{A} = \lang{B}$). 

\vspace{1mm}
\noindent\textbf{Transducers}
A (finite state) \emph{transducer} over a finite alphabet $\Sigma$ is a tuple $\tra  = (Q,I,F,\Delta,f)$, where 
$Q$ is the finite set of states,
$I \subseteq Q$ the set of initial states,
$F \subseteq Q$ the set of final states,
$\Delta \subseteq Q \times \Sigma^* \times \Sigma^* \times Q$ the transition relation,
and $f : F {\rightarrow} \Sigma^*$ the final output function.

As for automata, a \emph{run} of a transducer is either a single state
or a sequence of transitions. The \emph{input}  (resp. \emph{output}) of a run $r =
(q_1,u_1,v_1,p_1)\dots (q_n,u_n,v_n,p_n)\in\Delta^*$ is $\inr(r) = u_1\dots u_n$
(resp. $\outr(r)=v_1\dots v_n$). If $r$ is reduced to a single state, its input
and output are both $\epsilon$. We say that $r$ is a run of $\tra$ on
$u_1\dots u_n$. We write $q\xrightarrow{u|v} p$ to
mean that there exists a run on input $u\in\Sigma^*$ whose output is
$v\in\Sigma^*$. In particular, $q\xrightarrow{\epsilon|\epsilon} q$
for all $q\in Q$. The notion of accepting run of automata carries over
to transducers. The \emph{transduction} recognised by $\tra$, written $\rel{\tra}$ is the set
of pairs $(u,vf(p)) \in \Sigma^* \times \Sigma^*$ such that there
exists an accepting run of $\tra$ on $u$ from a state $q$ to a state
$p$ whose output is $v$.  We define $\dom(\tra)$ as
$\dom(\rel{\tra})$. The class of \emph{rational transductions} is the class of relations definable by
finite state transducers.

The \emph{input automaton} of $\tra$ is the automaton $\aut =
(Q,I,F,\Delta')$ over the alphabet $\Sigma$, where $\Delta' = \{
(q,u,q') | (q,u,v,q') \in \Delta \}$. A transducer is called \emph{real time} if each of its transition is labelled by a pair $(a,v)$, where $a \in \Sigma$ and $v \in \Sigma^*$.
A transducer is called \emph{sequential} if its input automaton is
deterministic\footnote{Our model of sequential transducers was originally called
  subsequential transducers in the literature. We follow the
  terminology of \cite{DBLP:journals/tcs/LombardyS06}, where it is discussed.}. Sequential transducers define \emph{sequential
  transductions}. 
A transducer is \emph{trim} if all its accessible
states are co-accessible, i.e. for all $q\in Q$, $q_0\in I$, $u,v\in\Sigma^*$,
if $q_0\xrightarrow{u\mid v}
q$, then there exist $u',v'\in\Sigma^*$ and $q_f\in F$ such that
$q\xrightarrow{u'\mid v'} q_f$. 

\vspace{2mm}
\noindent\textbf{Decision Problems for Transducers} Let $\tra_1,\tra_2$ be two transducers over an alphabet $\Sigma$. We
write $\tra_1\subseteq \tra_2$ whenever $\rel{\tra_1}\subseteq
\rel{\tra_2}$. The \emph{inclusion
problem} asks, given $\tra_1,\tra_2$, whether $\tra_1\subseteq
\tra_2$. Similarly, we define the equivalence problem by asking
whether $\rel{\tra_1} = \rel{\tra_2}$, denoted $\tra_1\equiv \tra_2$. 
Let $\tra$ be a transducer over an alphabet $\Sigma$.
A uniformiser of $\tra$ is a transducer $\uni$ such that
$\uni\subseteq \tra$ and $\dom(\uni) = \dom(\tra)$.
We sometimes write seq-uniformiser for sequential uniformiser. 
The \emph{sequential uniformisation problem} (seq-uniformisation
problem) asks, given a transducer $\tra$ over
$\Sigma$, whether $\tra$ admits a seq-uniformiser.

\begin{center}
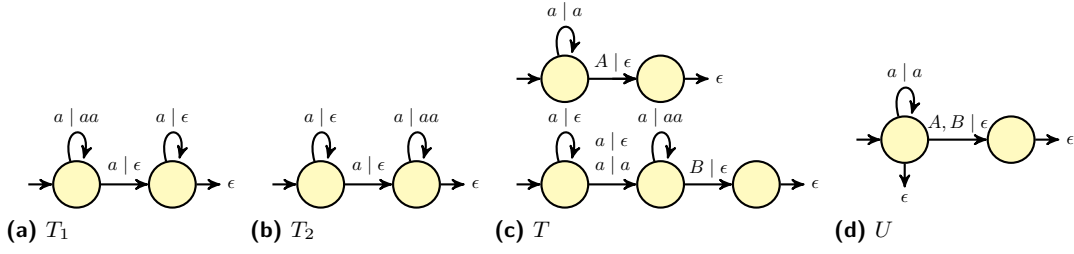
\begin{figure}[t]
\vspace{-5mm}
\begin{center}
\subfloat[$\tra_1$]{
\begin{tikzpicture}[->,>=stealth',auto,node distance=3cm,thick,scale=0.7,every node/.style={scale=0.7}]
  \tikzstyle{every state}=[fill=yellow!30,text=black]
  % \tikzstyle{every edge}=[draw=black,text=red]
  \tikzstyle{initial}=[initial by arrow, initial where=left, initial text=]
  \tikzstyle{accepting}=[accepting by arrow, accepting where=right, accepting text=$\epsilon$]

  \node[initial,state] (A) at(0,0) {};
  \node[accepting,state] at(1.8,0) (B)  {};
  \path (A) edge node {$\begin{array}{l}\trans{a}{\epsilon}\end{array}$} (B);
  \path (A) edge [loop above] node {\trans{a}{aa}} (A);
  \path (B) edge [loop above] node {\trans{a}{\epsilon}} (B);
\end{tikzpicture}
}
\subfloat[$\tra_2$]{
\begin{tikzpicture}[->,>=stealth',auto,node distance=3cm,thick,scale=0.7,every node/.style={scale=0.7}]
  \tikzstyle{every state}=[fill=yellow!30,text=black]
  % \tikzstyle{every edge}=[draw=black,text=red]
  \tikzstyle{initial}=[initial by arrow, initial where=left, initial text=]
  \tikzstyle{accepting}=[accepting by arrow, accepting where=right, accepting text=$\epsilon$]

  \node[initial,state] (A) at(0,0) {};
  \node[accepting,state] at(1.8,0) (B)  {};
  \path (A) edge node {$\begin{array}{l}\trans{a}{\epsilon}\end{array}$} (B);
  \path (A) edge [loop above] node {\trans{a}{\epsilon}} (A);
  \path (B) edge [loop above] node {\trans{a}{aa}} (B);
\end{tikzpicture}
}%
\subfloat[$\tra$]{
\begin{tikzpicture}[->,>=stealth',auto,node distance=3cm,thick,scale=0.7,every node/.style={scale=0.7}]
  \tikzstyle{every state}=[fill=yellow!30,text=black]
  % \tikzstyle{every edge}=[draw=black,text=red]
  \tikzstyle{initial}=[initial by arrow, initial where=left, initial text=]
  \tikzstyle{accepting}=[accepting by arrow, accepting where=right, accepting text=$\epsilon$]

  \node[initial,state] (A) at(0,1) {};
  \node[initial,accepting,state] (B) at(1.8,1) {};
  \node[initial,state] (A2) at(0,-1) {};
  \node[state] (B2) at(1.8,-1) {};
  \node[accepting,state] (C2) at(3.6,-1) {};
  \path (B2) edge node {\trans{B}{\epsilon}} (C2);
  \path (A) edge node {\trans{A}{\epsilon}} (B);
  \path (A2) edge node {$\begin{array}{l}\trans{a}{\epsilon}\\ \trans{a}{a}\end{array}$} (B2);
  \path (A) edge [loop above] node {\trans{a}{a}} (A);
  \path (A2) edge [loop above] node {\trans{a}{\epsilon}} (A2);
  \path (B2) edge [loop above] node {\trans{a}{aa}} (B2);
\end{tikzpicture}
}
\subfloat[$U$]{
\begin{tikzpicture}[->,>=stealth',auto,node distance=3cm,thick,scale=0.7,every node/.style={scale=0.7}]
  \tikzstyle{every state}=[fill=yellow!30,text=black]
  % \tikzstyle{every edge}=[draw=black,text=red]
  \tikzstyle{initial}=[initial by arrow, initial where=left, initial text=]
  \tikzstyle{accepting}=[accepting by arrow, accepting where=right, accepting text=$\epsilon$]
  \tikzstyle{acceptingdown}=[accepting by arrow, accepting where=below, accepting text=$\epsilon$]

  \node[initial,acceptingdown,state] (A) at(0,1) {};
  \node[accepting,state] at(2,1) (B)  {};
  \path (A) edge node {\trans{A,B}{\epsilon}} (B);
  \path (A) edge [loop above] node {\trans{a}{a}} (A);
\end{tikzpicture}
}
\caption{Transducers such that $\tra_1\equiv \tra_2$ and $\tra$ is
  seq-uniformisable by $U$.}
\label{fig:notk} 
\vspace{-4mm}
\end{center}
\end{figure}
\end{center}

\vspace{-5mm}
\begin{example}\label{ex:trans}
    Consider the transducers $\tra_1$ and $\tra_2$ of
    Fig.~\ref{fig:notk} over the alphabet $\{a\}$. They both define
    the transduction $\{ (a^n, a^{2i})\mid n\geq 1,0 \le i \le n{-}1\}$
    and are therefore equivalent. The transducer $\tra$ is over the
    alphabet $\{a,A,B\}$ and defines the transduction $\{ (a^nA, a^n)\mid n\geq 0\}\cup \{ (a^nB, a^i)\mid
    n\geq 1,\ 0 \le i \le 2n{-}1\}$. It is uniformisable by the
    sequential transducer $U$ with $\rel{U} = \{ (a^n\alpha,
    a^n)\mid n\geq 0,\alpha\in\{A,B\}\}$. 
\end{example}

\begin{theorem}[\cite{DBLP:journals/jacm/Griffiths68,CarayolL14}]\label{thm:undecunif}
\hspace{-1mm}The inclusion, equivalence and sequential uniformisation problems for rational transductions are undecidable.
\end{theorem}

% For subclasses of transducers, decidability has been recovered. The
% largest known class with decidable inclusion and equivalence is the
% class of \emph{finite-valued transducers}
% \cite{GurIba83,CulKar86c,DBLP:conf/mfcs/Weber88,DBLP:conf/dlt/Souza08}, i.e. transducers $\tra$
% such that there exists $k\in\mathbb{N}$ and for all
% $u\in\dom(\rel{\tra})$, $|\rel{\tra}(u)|\leq k$. It is also known
% that any rational relation is effectively uniformisable by a rational
% function \cite{journals/iandc/Kobayashi69}.

%%% Local Variables:
%%% mode: latex
%%% TeX-master: t
%%% End:

\section{Decision Problems Modulo Resynchronisers}\label{sec:synch}

A pair $(u,v)\in \Sigma^*\times \Sigma^*$ can be represented by a 
coloured word over $\Sigma\times \{\inp,\outp\}$,
where the colours indicate whether a symbol in $\Sigma$ is an input or an
output symbol. Such a coloured word is called a synchronisation of
$(u,v)$. 
More generally, any language over the alphabet $\Sigma\times \{\inp,\outp\}$ represents
s transduction $R\subseteq \Sigma^*\times \Sigma^*$, and is called a synchronisation language for $R$.
This way of representing transductions is analysed in \cite{conf/stacs/FigueiraL14}. 
What we call a resynchroniser below, is a transduction of synchronisations, that is, 
over words in $(\Sigma\times \{\inp,\outp\})^*$, that preserves
the represented pairs. In this section, we study stronger notion of
inclusion, equivalence and sequential uniformisation, parametrised by such a
resynchroniser. We show their decidability for rational resynchronisers
and introduce the class of bounded delay resynchronisers, and show that it
has appealing properties.

\vspace{1mm}
\noindent\textbf{Synchronisations and resynchronisers} 
Given an alphabet $\Sigma$, we let $\inouta = \Sigma\times
\{\inp,\outp\}$, $\ina = \Sigma\times
\{\inp\}$ and $\outa = \Sigma\times
\{\outp\}$. For $c\in \{\inp,\outp\}$, we write $\sigma^c$
instead of $(\sigma,c)$. The colouring $c$ can be seen as a morphism
$.^c : \Sigma^*\rightarrow \Sigma^*_c$ and we write $u^c$ its
application on a word $u\in\Sigma^*$. Conversely, for
$c\in\{\inp,\outp\}$, we define two
morphisms $\pi_c : \inouta^*\rightarrow \Sigma$ that extract the input
and output words, by $\pi_c(\sigma^c) = \sigma$ and $\pi_c(\sigma^d) =
\epsilon$, for all $\sigma\in\Sigma$, and $d\neq c$. %
% The \emph{input and output morphisms} of the words over $ (\inouta)^*$ are
% defined respectively as the monoid morphisms $\pi_\inp$ and
% $\pi_\outp$ generated by $\pi_\inp(\intag{\sigma}) = \sigma$,
% $\pi_\inp(\outtag{\sigma}) = \epsilon$, $\pi_\outp(\outtag{\sigma}) = \sigma$,
% $\pi_\outp(\outtag{\sigma}) = \epsilon$ for all $\sigma\in\Sigma$. 
Two words $u,v\in(\inouta)^*$ are said to be \emph{equivalent},
denoted by $u\sim_\inout v$, if 
$\pi_c(u)=\pi_c(v)$ for all $c\in\{\inp,\outp\}$.
For example, $a^\inp b^\inp a^\outp$ and $a^\inp a^\outp b^\inp$ are equivalent, and both are synchronisations of $(ab,a)$.  
Any language $L\subseteq \inouta^*$ defines a transduction
over $\Sigma$ defined by $\rel{L}\ =\ \{
(\pi_{\inp}(w),\pi_{\outp}(w))\in \Sigma^*\times \Sigma^*\mid w\in
L\}$, and $L$ is called a $\emph{synchronisation}$ of a transduction
$R\subseteq \Sigma^*\times \Sigma^*$ if $\rel{L} = R$. %
% Note that two different languages may define
% the same transduction. For instance, let $\Sigma = \{a,b\}$.
% Then $L_1 = \{ a^{\inp}b^{\outp} \}$ and $L_2 = \{
% b^{\outp}a^{\inp} \}$ are two different languages, but 
% $\rel{L_1} = \rel{L_2} = \{ (a,b)\}$. However if
% $L_1\subseteq L_2$, then $\rel{L_1}\subseteq \rel{L_2}$.
 % In this
% paper, we define several notions to compare languages $L_1,L_2$ such
% that $R(L_1) = R(L_2)$ or $R(L_1)\subseteq R(L_2)$. 
% \begin{proposition}
    % For all languages $L_1,L_2$ over $\inouta$, if $L_1\subseteq L_2$
    % then $\rel{L_1}\subseteq \rel{L_2}$.
% \end{proposition}
We also say that $L$
synchronises $R$. Note that two different languages may
synchronise the same transduction.

Mapping a synchronisation to
another one is done through the notion of resynchroniser. A \emph{resynchroniser} is a transduction $\sync\subseteq
\inouta^*\times \inouta^*$, such that $(i)$
$\id_{\inouta^*}\subseteq \sync$ and $(ii)$ for all
$(w,w')\in\sync$, it holds $w\sim_\inout w'$.
For instance, the identity relation $\id_{(\inouta)^*}$ is a
resynchroniser that we shall denote by $\mathbb{I}$, as well as the relation $\mathbb{U}_{\inouta}  = \{
(w,w')\in \inouta\mid w\sim_\inout w'\}$, called the \emph{universal resynchroniser} 
over $\inouta$. We write $\mathbb{U}$ instead of
$\mathbb{U}_{\inouta}$ when it is clear from the context.
Note that for any resynchroniser $\sync$, we have
$\id_{\inouta^*}\subseteq \sync \subseteq \mathbb{U}$. %
% Observe that condition $(1)$ implies that $\dom(\sync) =
 % (\inouta)^*$ and $L\subseteq \sync(L)$.
% Note that there is in general infinitely many synchronisations for a
% transduction.
The properties $(i)$ and $(ii)$ of resynchronisers are chosen such that
they preserve the represented transductions, as stated in the proposition below.
% \begin{proposition}
% Let $L$ be a synchronisation for a transduction $R$ and $\sync$ a
% resynchroniser. Then $\sync(L)$ synchronises $R$, i.e. $R = \rel{L} =
% \rel{\sync(L)}$.
% \end{proposition}
\begin{proposition}
For all $L\subseteq \inouta^*$ and all resynchronisers $\sync\subseteq
\inouta^*\times \inouta^*$, $\rel{L} = \rel{\sync(L)}$.
\end{proposition}
Classes of synchronisation languages and their correspondence with the classes of rational
relations they synchronise have been studied in \cite{conf/stacs/FigueiraL14}. 
We can formulate in this framework a result known as Nivat's
theorem \cite{Niv-transductions-lc} as follows.
\begin{theorem}{\cite{Niv-transductions-lc}}\label{thm:nivat}
    A transduction $R$ is rational iff it is synchronised by a regular language.
\end{theorem}
 
Any transducer $\tra  = (Q,I,F,\Delta,f)$ naturally defines a regular
synchronisation for $\rel{\tra}$ by its \emph{underlying automaton}, which is
the automaton obtained by concatenating the pairs of input and output words 
on the transitions and marking them with the respective symbol from $\{\inp,\outp\}$.
Formally, it is the 
automaton $A = (Q \cup \{ q_\dashv\},I, \{ q_\dashv\}, \Delta')$ over
$\inouta$, where $\Delta' = \{(q,v^{\inp}w^{\outp},q') | (q,v,w,q')
\in \Delta \} \cup \{(q,f(q)^{\outp},q_\dashv)|q \in F \}$. The \emph{language recognised by $\tra$} is the language 
recognised by its underlying automaton, denoted by $\mathcal{L}_{\tra}$, i.e. $\lang{\tra} =
\lang{A}$. Obviously, $\lang{\tra}$ is a synchronisation for the
relation $\rel{\tra}$, i.e. $\rel{\lang{\tra}} = \rel{\tra}$.

%%% Local Variables:
%%% mode: latex
%%% TeX-master: t
%%% End:

\vspace{2mm}
\noindent\textbf{Decision problems for transducers modulo
  resynchronisers} Let $\Sigma$ be an alphabet, $\sync$ be a resynchroniser over
$(\inouta)^*$, and $\tra_1,\tra_2$ be two transducers over $\Sigma$.  
We say that $\tra_1$ is \emph{included in $\tra_2$ modulo $\sync$} (or
\emph{$\sync$-included}), denoted by $\tra_1\subseteq_\sync \tra_2$, if 
$\lang{\tra_1} \subseteq \sync(\lang{\tra_2})$. We say that $\tra_1$ is \emph{equivalent to
  $\tra_2$ modulo $\sync$} (or \emph{$\sync$-equivalent}), denoted by
$\tra_1 \equiv_\sync \tra_2$, if $\tra_1 \subseteq_\sync \tra_2$ and $\tra_2\subseteq_\sync
\tra_1$. For a fixed synchroniser $\sync$, the \emph{$\sync$-inclusion
(resp. $\sync$-equivalence) problem} asks, given two transducers $\tra_1, \tra_2$ over $\Sigma$, whether 
$\tra_1\subseteq_\sync \tra_2$ (resp. $\tra_1 \equiv_\sync
\tra_2$). We say that $\tra_1$ is \emph{sequentially $\sync$-uniformisable} if it
admits a sequential uniformiser $U$ such that $U\subseteq_\sync \tra_1$, and in that case $U$ is called a
sequential $\sync$-uniformiser of $\tra_1$ (seq-$\sync$-uniformiser
for short). The \emph{sequential $\sync$-uniformisation
  problem} asks whether a given transducer is
seq-$\sync$-uniformisable.

It should be clear from the definition that $\mathbb{I}$-inclusion
implies $\sync$-inclusion for any resynchroniser $\sync$, which in
turn implies $\mathbb{U}$-inclusion. As a matter of fact, it is easy
to see that $\mathbb{U}$-inclusion is equivalent to classical
inclusion. The same remarks can be made for equivalence and
sequential uniformisation, and therefore, as a consequence of Theorem~\ref{thm:undecunif}, we get:

\begin{theorem}
    The $\mathbb{U}$-inclusion, $\mathbb{U}$-equivalence, sequential
    $\mathbb{U}$-uniformisation problems for rational transductions are undecidable. 
\end{theorem}

%%% transducer Variables:
%%% mode: latex
%%% TeX-master: "decision.tedx"
%%% End:

\vspace{.5mm}
\noindent\textbf{Decision problems for transducers modulo rational
  resynchronisers} The $\mathbb{U}$-decision problems
are undecidable, this raises the question whether there is an interesting
class of resynchronisers for which we can recover decidability. 
It turns out that $\mathbb{U}$ is not rational. In
contrast, we show that, as long as $\sync$ is rational,
the $\sync$-decision problems are reducible to the
$\mathbb{I}$-decision problems, which in turn can be solved by reduction to
decidable problems of automata and two-player games.

% % We shall begin by exposing the regular language inclusion problem, and the reduction from the $\mathbb{I}$-inclusion problem into it.

% % Given two automata $\aut_1$ and $\aut_2$, the inclusion problem asks whether $\lang{\aut_1}$ is included into $\lang{\aut_2}$.

% \begin{theorem}[reference needed]
% The regular language inclusion problem is decidable.
% \end{theorem}

\begin{restatable}{proposition}{decisionIproblems}\label{prop:Iinclusion}\label{prop:Iuniformisation}
The $\mathbb{I}$-inclusion and $\mathbb{I}$-equivalence problems are
\textsc{PSpace-complete}. The sequential $\mathbb{I}$-uniformisation problem is 
\textsc{ExpTime-complete}. 
\end{restatable}

\begin{proof}
First, note that $\tra_1\subseteq_{\mathbb{I}}\tra_2$ iff
$\lang{\tra_1}\subseteq \lang{\tra_2}$ iff $A_1\subseteq A_2$, where
$A_1,A_2$ are the underlying automata of $\tra_1,\tra_2$
respectively. Automata inclusion and equivalence problems are
\textsc{PSpace-complete}, and they easily reduce (by putting $\epsilon$
outputs) to $\mathbb{I}$-inclusion
and $\mathbb{I}$-equivalence.

To get \textsc{ExpTime} membership of seq-$\mathbb{I}$-uniformisation, for
a transducer $\tra$, we construct a two-player safety game $G_\tra =
(V = V_{\textsf{In}}\uplus V_{\textsf{Out}}, v_0, E)$ between an
adversary (Player \textsf{In}) who picks input symbols and controls
positions in $V_{\textsf{In}}$, and a protagonist
(Player \textsf{Out}) who picks sequences of output symbols and
controls positions in $V_{\textsf{Out}}$. Wlog we assume that $\tra$
has no final output function, by adding an endmarker $\dashv$ to words of
its domain. Let $\aut = (Q,q_0,F,\delta)$ be a complete DFA equivalent to the the underlying
automaton of $\tra$ (whose size is at most
exponential  in the size of $\tra$). Player positions have three
components: a residual language\footnote{A residual of a language $L$
  over some alphabet $\Sigma$ is a language $u^{-1}L = \{ v\mid uv\in L\}$ for $u\in \Sigma^*$.} of $\dom(\tra)$ that controls 
the possible continuations of the input word chosen so
far by  Player \textsf{In}, a state of $\aut$ and a round $r \in
\{\textsf{In},\textsf{Out}\}$. Let $\mathcal{D} = \{ u^{-1}\dom(\tra)\mid u\in\Sigma^*\}$
be the set of residuals of $\dom(\tra)$ (for example, represented by the states of 
the minimal DFA for $\dom(\tra)$, which can be computed in exponential time in the size of $\tra$). 
Then, $V_{\textsf{In}} = \mathcal{D}\times Q\times
\{\textsf{In}\}$ and $V_{\textsf{Out}} = \mathcal{D}\times Q\times
\{\textsf{Out}\}$. The initial position is $v_0  = (\dom(\tra), q_0,
\textsf{In})$ and the edge relation $E$ as follows: from a position 
$(D,q,\textsf{In})$, there are outgoing edges to all states
$(\sigma^{-1}D,\delta(q,\sigma^\inp), \textsf{Out})$ for all
$\sigma\in\Sigma$. From a position $(D,q,\textsf{Out})$, Player
$\textsf{Out}$ can pick any state $q'\in Q$ such that there exists a
sequence $v\in\outa^*$ such that $q\xrightarrow{v}_{\aut} q'$, and in
that case an outgoing edge to $(D, q', \textsf{In})$ is added to $E$. 
The \emph{unsafe positions} for Player \textsf{Out} are all positions
$(D,q,\textsf{In})$ such that $\epsilon\in D$ and $q\not\in F$: At
such positions, Player $\textsf{In}$ could choose to terminate the sequence
of input symbols (while staying in $\dom(T)$ since $\epsilon\in D$)
and the sequence of output symbols chosen by Player \textsf{Out},
mixed with the input symbols chosen by Player \textsf{In}, does not
belong to $L(\aut)$ (because $q\not\in F$). It can be shown that
Player \textsf{Out} has a strategy to avoid the unsafe positions in
$G_\tra$ iff there exists a seq-$\mathbb{I}$-uniformiser of $\tra$
(finiteness of the seq-$\mathbb{I}$-uniformiser comes from the fact that
safety games are memoryless determined). Since safety games can be solved in polynomial time and $G_\tra$ has exponential
size, we get the result. The results on safety games that we use here can be found, e.g., in \cite{GradelTW02}.

For the \textsc{ExpTime} lower bound, we note that in our formalism we
can model the synchronous uniformisation (or synthesis) problem, as
considered in \cite{PnuRos:89} for infinite words, by taking
synchronisations that strictly alternate between input and output. It
seems to be common knowledge in the synthesis community that the
synchronous uniformisation problem is \textsc{ExpTime-complete} if the
relation is given by a nondeterministic automaton. However, we were
not able to find a reference for this result. We thus give a reduction
from the acceptance problem for alternating \textsc{Pspace} Turing
machines in the appendix.
\end{proof}

For all transducers $\tra$ and synchronisers $\sync$,
$\sync(\lang{\tra})$ is a regular synchronisation language and by
Nivat's theorem (Theorem~\ref{thm:nivat}), there exists a transducer 
$\tra^\sync$ such that 
$\lang{\tra^\sync} = \sync(\lang{\tra})$. It implies that the
seq-$\sync$-uniformisation of $\tra$ reduces to the
seq-$\mathbb{I}$-uniformisation of $\tra^\sync$. Similar arguments apply
for inclusion and equivalence and, as a consequence of
Proposition~\ref{prop:Iuniformisation}, we obtain:

\begin{restatable}{theorem}{decisionSproblems}\label{thm:rat}
Let $\sync$ be a rational resynchroniser, given as a transducer.
The $\sync$-inclusion and $\sync$-equivalence problems are
\textsc{PSpace-complete}. The
sequential $\sync$-uniformisation problem is
\textsc{ExpTime-complete}. 
\end{restatable}

% We write $L_1\equiv L_2$ if $R(L_1) = R(L_2)$, and
% $L_1\subseteqq L_2$ if $R(L_1)\subseteq R(L_2)$. Given
% two binary relations $\triangle_1,\triangle_2$ over languages
% such that $\triangle_1 \subseteq\triangle_2$, we say that 
% $\triangle_1$ is finer than $\triangle_2$ and that
% $\triangle_2$ is coarser than $\triangle_1$. In this paper,
% we study language relations that are coarser than language
% equality (resp. inclusion) and finer than the transduction
% equivalence $\equiv$ (resp. the transduction inclusion
% $\subseteqq$). 

%%% Local Variables:
%%% mode: latex
%%% TeX-master: t
%%% End:

%%% Local Variables:
%%% mode: latex
%%% TeX-master: t
%%% End:

\vspace{.5mm}
\noindent\textbf{Bounded delay resynchronisers}\label{sec:delay} 
The notion of \emph{delay} between outputs of transducers is a
powerful way of comparing transducers, which has been used, for
instance, to characterise sequential functions
\cite{berstel2009}. Intuitively, the delay between two runs on the
same input is a parameter that measures how a run is ahead of the
other, and the lag is the maximal delay over prefixes of the two
runs. We adapt the notion of delay and lag to coloured words and
define delay resynchronisers as resynchronisers that apply a fixed
delay to words in $\inouta^*$ (our notion of lag is not related to the one from \cite{conf/stacs/FigueiraL14}). 
Our results show that delay resynchronisers form
a fundamental class of resynchronisers.

% Since bounded delay resynchronisers are rational, their associated
% decision problems are decidable. We give complexity bounds for
% uniformisation modulo bounded delay. We also prove that inclusion,
% equivalence and uniformisation modulo bounded delay are not complete,
% in the sense that two transducers may be equivalent but not bounded
% delay equivalent. 

% Intuitively, for any two equivalent words $u,v\in(\inouta)^*$, the
% delay between $u$ and $v$ is a parameter that measure, along 

% A particular and powerful way of comparing how two transducers produce their
% outputs on the same input is via the notion of \emph{delay} between
% outputs. In this section, we study stronger versions of equivalence,
% inclusion and uniformisation problem modulo bounded
% delay. Intuitively, for instance for the equivalence problem, we
% require that the two transducers that are compared produce their
% output in a similar manner, where ``similar'' is formalised here via
% a notion of bounded delay. 

% We first define formally the notion of delay, and show that the
% new decision problems are equivalent to decision problems modulo
% rational synchronisers. In other words, applying a bounded delay to 
% a sequence in $(\inouta)^*$ can be done by a rational synchroniser, and therefore decidability of the
% bounded delay decision problems follows from the
% results of Section \ref{sec:}. However, it turns out that  bounded
% delay synchronisers is an important class of rational synchronisers: for large
% classes of transducers, they provide a complete procedures to test of
% equivalence and inclusion. 

The \emph{delay} between two words $u$ and $v$ over an alphabet
$\Sigma$ is the element
from the free group $G_\Sigma$ defined by $\del(u,v) = u^{-1} v$. 
E.g., $\del(ab,acd) = b^{-1}cd$. Note that $\del(u,v)\in \Sigma^*$ iff
$u\preceq v$, and $\del(u,v)\in (\Sigma^{-1})^*$ iff $v\preceq u$. 
The \emph{lag mapping} $\lag : (\inouta)^* \times (\inouta)^* \rightarrow
\mathbb{N} \cup \{+\infty\}$ gives the maximal length of the delay between the output part of
two words in $(\inouta)^*$ that have the same input. It is the metric defined by
$\lag(u,v)=+\infty$ if $\pi_\inp(u)\neq \pi_\inp(v)$. If $\pi_\inp(u)
= \pi_\inp(v)$, then $u$ and $v$ can be decomposed into 
$u = u_0a_1u_1 \ldots u_{n-1}a_nu_n$ and $v = v_0a_1v_1 \ldots
a_{n-1}b_nv_n$, such that $a_1, \ldots, a_n \in \ina$, $u_0,v_0,
\ldots, u_n,v_n \in (\outa)^*$. Then $\lag(u,v) = \max_{0 \leq i \leq
  n} |\del(u_0 \ldots u_i,v_0 \ldots v_i)|$. %
% Note that if $\lag(u,v)<+\infty$, then $u$ and $v$ are equivalent, and
% therefore for all $0\leq i\leq n$, $\del(u_0 \ldots u_i,v_0 \ldots v_i)\in\Sigma^*\cup
% (\Sigma^{-1})^*$. 
As an example, for $n\geq 1$, take $u_n = a^\inp a^\outp (a^\inp)^n$ and $v_n =
(a^\inp)^n a^\inp a^\outp$. Then for all $n\geq 1$, $\lag(u_n,v_n){=}
1$. Note that the occurrence of $a^\outp$ in $u_n$ is arbitrary
far from that of $a^\outp$ in $v_n$. 

We now define the $k$-delay resynchroniser
$\mathbb{D}_k$. Intuitively, it can shift output symbols of a word $u$
to the left or to the right, as long as the lag between $u$ and the
new word obtained this way is bounded by $k$. Formally, the
\emph{$k$-delay resynchroniser} is defined by $\dsync_k\ =\ \{
(u,v)\in (\inouta)^2\mid u\sim_{\inout}v\wedge \lag(u,v)\leq
k\}$. We define the $k$-inclusion, $k$-equivalence and
sequential $k$-uniformisation problems as the corresponding $\dsync_k$-decision
problems, and write $\subseteq_k$ and $\equiv_k$ instead of 
$\subseteq_{\dsync_k}$ and $\equiv_{\dsync_k}$
respectively. We also say that a transduction is seq-$k$-uniformisable
if it is seq-$\dsync_k$-uniformisable. An important property of $\dsync_k$ is:
\begin{restatable}{proposition}{dsyncrational}\label{prop:dsynckrat}
    For all $k\geq 0$, $\dsync_k$ is rational. 
\end{restatable}
As a direct consequence of the latter proposition and
Theorem~\ref{thm:rat}, the $k$-delay decision problems are all
decidable. We can be more precise:
\begin{restatable}{theorem}{decisionkdelay}
    For all $k\geq 0$, the $k$-inclusion, $k$-equivalence
    and sequential $k$-uniformisation problems are decidable and
    \textsc{ExpSpace-hard} if $k$ is part of the input. If $k$ is
    fixed, then the $k$-inclusion and $k$-equivalence problems are
    \textsc{PSpace-complete}, and the sequential $k$-uniformisation problem is
    \textsc{ExpTime-complete}.
\end{restatable}

% \paragraph{Bounded delay problems for transducers} 
% Given $k \in \mathbb{N}$ and two transducers $\tra_1$ and $\tra_2$, we
% say that $\tra_1$ is $k$-delay included in $\tra_2$, denoted by
% $\tra_1 \subseteq_k \tra_2$, if for every word $w \in \lang{\tra_1}$
% there exists a word $w' \in \lang{\tra_2}$ such that $w$ and $w'$ are
% equivalent, and $\lag(w,w') \leq
% k$. Similarly,  $\tra_1$ and $\tra_2$ are $k$-delay equivalent, denoted by
% $\tra_1 \equiv_k \tra_2$, if $\tra_1 \subseteq_k \tra_2$ and $\tra_2
% \subseteq_k \tra_1$. Finally, we say that $\tra_1$ is $k$-delay uniformisable if it admits a uniformiser $\uni$ such that $\uni \subseteq_k \tra_1$.
% With $k$ being part of the input, we can naturally define the
% $k$-delay inclusion, equivalence and uniformisation problems. 

% We first show that $k$-delay inclusion, equivalence and uniformisation
% are equivalent to inclusion, equivalence and uniformisation modulo
% rational synchronisers. 

% \begin{proposition}\label{prop:equivsynchro}
% Let $k \in \mathbb{N}$, and let $\tra_1,\tra_2,\tra$ be transducers. There exists a rational synchroniser $\dsync_k$
% such that 
% \begin{itemize}
    % \item $\tra_1$ and $\tra_2$ are $k$-delay equivalent iff $\tra_1$
      % and $\tra_2$ are equivalent modulo $\dsync_k$,
    % \item $\tra_1$ is $k$-delay included in $\tra_2$ iff $\tra_1$ is
      % included in $\tra_2$ modulo $\dsync_k$,
    % \item $\tra$ is $k$-delay uniformisable iff $\tra$ is
      % uniformisable modulo $\dsync_k$. 
% \end{itemize}
% Moreover, $\dsync_k$ can be defined by a transducer with $O(|\Sigma|^k)$
% states. 
% \end{proposition}

Even if inclusion is undecidable while $k$-inclusion is decidable, it
could be the case that inclusion reduces to $k$-inclusion, for
some $k$ that cannot be computed. We show that that it is not the
case, by using the transducers of Fig.~\ref{fig:notk}.

% : there are transducers $\tra_1,\tra_2$ such that $\tra_1\subseteq
% \tra_2$ but $\tra_1\not\subseteq_k\tra_2$ for any $k$. We prove
% similar results for equivalence and uniformisation. In the following
% proposition, $\subseteq_k$ and $\equiv_k$ are seen as binary relations
% % on transducers over a fixed unary alphabet $\Sigma$, and we denote by $\text{Unif}_k$ the class of transducers that are $k$-delay
% uniformisable, and by $\text{Unif}$ the class of transducers that are
% uniformisable.

\begin{restatable}{proposition}{propapprox}\label{prop:notcomplete}
    There exist transducers $\tra_1,\tra_2,\tra$ such that
    $\tra_1\equiv \tra_2$, $\tra_1\subseteq \tra_2$ and $\tra$ is
    seq-uniformisable, but for all $k\geq 0$,
    $\tra_1\not\equiv_k\tra_2$, $\tra_1\not\subseteq_k\tra_2$, and $\tra$ is not seq-$k$-uniformisable.
    % $(\bigcup_{k=0}^{+\infty} \subseteq_k)\ \subsetneq\ (\subseteq)$, 
    % $(\bigcup_{k=0}^{+\infty} \equiv_k)\ \subsetneq\ (\equiv)$,
    % $(\bigcup_{k=0}^{+\infty}\text{Unif}_k)\ \subsetneq\ \text{Unif}$.     
\end{restatable}

\begin{proof}
    Consider $\tra_1,\tra_2$ of Fig.\ref{fig:notk} and pairs of the
    form $(a^{n+1},a^{2n})\in\rel{\tra_1}=\rel{\tra_2}$. They both
    accept these pairs but $\tra_2$ will be arbitrarily late compared
    to $\tra_1$.  Consider now the transducer $T$ and its sequential uniformiser
    $U$. On inputs $a^nB$, $U$ will be arbitrarily ahead of $T$, and
    one can show that is the case for \emph{any} seq-uniformiser of $T$.
\end{proof}

Finally, we show that for real-time transductions, $k$-delay resynchronisers subsume any rational
resynchroniser $\sync$, in the sense that $\sync$-inclusion implies 
$k$-inclusion, for some $k$ that depends on $\sync$. Similar
results hold for equivalence and sequential uniformisation. The idea is that a
rational synchroniser cannot advance or delay the production of
outputs arbitrarily far away with a finite set of states. 

\begin{restatable}{theorem}{thmsyncdel}\label{thm:thmsyncdel}
    Let $\sync$ be a rational synchroniser (given by a transducer). 
    Let $\tra_1,\tra_2,\tra$ be real-time transducers. 
    There exists a computable integer $k\in\mathbb{N}$ such that:
   $(i)$ if $\tra_1\subseteq_\sync \tra_2$, then
        $\tra_1\subseteq_k \tra_2$, $(ii)$  if $\tra_1\equiv_\sync \tra_2$, then
        $\tra_1\equiv_k \tra_2$, and $(iii)$ if $\tra$ is seq-$\sync$-uniformisable, then $\tra$ is
        seq-$k$-uniformisable.
\end{restatable}

\noindent One cannot drop the real-time assumption in the latter
theorem. Indeed consider the following transducers
$\tra_1,\tra_2,\sync$, for which $\tra_1{\equiv_\sync} \tra_2$ but
$\tra_1, \tra_2$ are not $k$-equivalent for any $k$:
\vspace{-1mm}
\begin{center}
\begin{tikzpicture}[->,>=stealth',auto,node
    distance=1cm,thick,scale=0.7,every node/.style={scale=0.7}]
  \tikzstyle{every state}=[fill=yellow!30,text=black]
  % \tikzstyle{every edge}=[draw=black,text=red]
  \tikzstyle{initial}=[initial by arrow, initial where=left, initial text=]
  \tikzstyle{accepting}=[accepting by arrow, accepting where=right, accepting text=]
  \tikzstyle{acceptingdown}=[accepting by arrow, accepting where=below, accepting text=]

  \tikzstyle{initialT1}=[initial by arrow, initial where=left, initial
  text={$\tra_1\ :\ $}]
  \tikzstyle{initialT2}=[initial by arrow, initial where=left, initial
  text=$\tra_2\ :\ $]
  \tikzstyle{initialS}=[initial by arrow, initial where=left, initial
  text=$\sync\ :\ $]

  \node[initialT1,state] (A) at(0,0) {};
  \node[accepting,state] at(2,0) (B)  {};
  \path (A) edge node {\trans{a}{\epsilon}} (B);
  \path (B) edge [loop above] node {\trans{\epsilon}{b}} (B);

  \node[initialT2,state] (C) at(6,0) {};
  \node[accepting,state] at(8,0) (D)  {};
  \path (C) edge node {\trans{a}{\epsilon}} (D);
  \path (C) edge [loop above] node {\trans{\epsilon}{b}} (C);

  \node[initialS,state] (E) at(12,0) {};
  \node[state] at(14,0) (F)  {};
  \node[accepting,state] at(16,0) (G)  {};
  \path (E) edge node {\trans{a^\inp}{\epsilon}} (F);
  \path (F) edge [loop above] node {\trans{b^\outp}{b^\outp}} (F);
  \path (F) edge node {\trans{\epsilon}{a^\inp}} (G);
\end{tikzpicture}
\end{center}
\vspace{-1mm}

%%% Local Variables:
%%% mode: latex
%%% TeX-master: t
%%% End:

\section{Finite-valued transducers}\label{sec:fvalued}

Let $m\in\mathbb{N}$. We remind the reader that a transducer $\tra$ is called
\textit{$m$-valued} if each input has at most $m$ outputs, i.e. for
all $w\in\dom(\tra)$, $|\rel{\tra}(w)|\leq m$. It is finite-valued if
it is $m$-valued for some $m$. Finite-valuedness is decidable
\cite{DBLP:journals/acta/Weber89}. We prove that for the class of
finite-valued transducers, $k$-inclusion and sequential $k$-uniformisation
are complete. This yields, for finite-valued transducers, an alternative proof of the decidability of
the inclusion problem, and a new result: The decidability of
sequential uniformisation.

Let $m$ be a natural number. An automaton $A$ (resp. transducer
$\tra$) is called $m$-ambiguous if it is real-time\footnote{For
  simplicity reasons, we put real-timeness in the definition, but it
  is known to be wlog.}, and for any word $w\in
\lang{A}$ (resp. $w\in \dom(\tra)$), there exist at most $m$
accepting runs of $A$ (resp. $\tra$) on $w$.
An automaton (transducer) is called finitely ambiguous if there exists
$m \in \mathbb{N}$ such that it is $m$-ambiguous, and unambiguous if
it is $1$-ambiguous. Our proofs uses the following known decomposition
initially due to Weber:
\begin{theorem}\label{trans_fv}
\cite{DBLP:conf/mfcs/Weber88,journals/mst/SakarovitchS10} Any finite-valued transducer $\tra$ is
(effectively) equivalent to a union of unambiguous transducers. 
\end{theorem}

%% caution: it would not be correct to say that T is n-ambiguous if
%% its input automaton is. It would be correct if we impose that for
%% any two transitions (q,a,w1,p), (q,a,w2,p') implies p \neq p'

We first prove that, for the class of finitely ambiguous transducers,
inclusion and equivalence reduces to $k$-inclusion and $k$-equivalence
for some computable $k$. We state this result for inclusion, which immediately
implies it for equivalence.
% Proposition \ref{prop:notcomplete} does not hold for inclusion and equivalence.

\begin{restatable}{theorem}{thmfinitelyambiguous}\label{lem:finitelyambiguous}
Let $\tra_1$ and $\tra_2$ be two real-time transducers such that $\tra_2$ is $m$-ambiguous. 
Then there exists a computable integer $k$ such that $\tra_1\subseteq
\tra_2 \implies \tra_1\subseteq_k \tra_2.$ Moreover, $k$ can be chosen to be exponential in the size of $\tra_2$
and linear in the size of $\tra_1$.
\end{restatable}

\begin{proof}[Sketch of Proof]

We show the result by choosing $k$ high enough with respect to the size of $\tra_1$ and $\tra_2$, assuming that $\tra_1$ is not $k$-included
    in $\tra_2$, and exhibiting a contradiction.
If $\tra_1$ is not $k$-included
    in $\tra_2$, there exists a word $w\in \lang{\tra_1}$ such that for all words
    $w'\in \lang{\tra_2}$ equivalent to $w$, $\lag(w,w')>k$.
    Such a word is called a \textit{witness}.
    Given a witness $v$, let $N_{\tra_2}(v)$ denote the number of words $v'\in\lang{\tra_2}$
    equivalent to $v$.
We prove that for each witness $v$, if $N_{\tra_2}(v) > 0$, there exists a witness
      $t\in\lang{\tra_1}$ such that $N_{\tra_2}(t)<N_{\tra_2}(v)$,
      which by induction yields a contradiction. To do so, we follow three steps.

$(i)$ We exhibit synchronised loops between the run $\rho$ of $\tra_1$ recognising $v$, and the set $R_2$ containing all the runs of $\tra_2$ recognising words equivalent to $v$.

$(ii)$ Given a run $\rho' \in R_2$, as the delay between $\rho$ and $\rho'$ is greater than $k$ at some point, we can chose one of those loops such that the delay between $\rho$ and $\rho'$ grows along the loop.

$(iii)$ By iterating the loop a sufficient number of times, we
generate a new witness $t$ such that $N_{T_2}(t)<N_{T_2}(v)$.

Then, as $\tra_2$ is $m$-ambiguous, if there exists a witness $w$, $N_{\tra_2}(w) \leq m$, and, by applying the preceding remark inductively, we expose a witness $w_0$ such that $N_{\tra_2}(w_0) = 0$.
In other words,
$(\pi_{\inp}(w_0),\pi_{\outp}(w_0))\in\rel{\tra_1}\cap
\overline{\rel{\tra_2}}$, which contradicts the fact that $\rel{\tra_1}\subseteq
\rel{\tra_2}$.
\end{proof}

Since $k$-inclusion is decidable by Theorem \ref{thm:rat},
Theorem \ref{lem:finitelyambiguous} implies that the inclusion and
equivalence problems are decidable for finitely ambiguous
transducers. From the decomposition of Theorem \ref{trans_fv}, we
obtain an alternative proof of the decidability of equivalence of
finite-valued transducers, which was proved in \cite{CulKar86c,DBLP:conf/mfcs/Weber88}. 

\begin{corollary}[\cite{CulKar86c,DBLP:conf/mfcs/Weber88} Alternative proof]
The inclusion and equivalence problems for finite-valued transducers are decidable.
\end{corollary}

We now prove the two corresponding results for the sequential uniformisation problem.

\begin{restatable}{theorem}{thmunifvalued}\label{f_a_det}
Let $\tra$ be a real-time trim transducer given as a finite union of unambiguous transducers.
Then there exists a computable integer $N_T$ such that if $\tra$ is
sequentially uniformisable, then it is sequentially $N_T$-uniformisable.
\end{restatable}

\begin{proof}[Sketch of proof]
If $\tra$ is seq-uniformisable, then there exists a sequential uniformiser $\uni$
    of $\tra$ such that
    $\dom(U) = \dom(T)$ and $U\subseteq T$. The latter inclusion implies, by Theorem
    \ref{lem:finitelyambiguous}, that there exists an integer $k$ such
    that $\uni\subseteq_k T$, and so $\uni$ is a seq-$k$-uniformiser of $\tra$.
    However, $k$ depends on the number of states of the hypothetical
    uniformiser $\uni$. We show how to construct, by simulating the
    behaviour of $U$, another  seq-$N_T$-uniformiser $U'$, where $N_T$ only depends on
    $T$ and can be computed.

    More precisely, we define a function $\rho : \Sigma^*\rightarrow
    \Sigma^*$ and define $U'$ such that on any input $w$, it simulates $U$ on 
    input $\rho(w)$. The function $\rho$ iterates some well-chosen
    subwords of $w$  to blow up the delay between the outputs of the
    runs of $T$ on $\rho(w)$. On input $\rho(w)$, any seq-$k$-uniformiser of
    $T$, and $U$ in particular, is forced to make choices between
    possible outputs of $T$ on $\rho(w)$, in
    order to decrease the delay. % 
 % in order to blow up the delay between $U$ and $T$ on
    % input $\rho(w)$ (up to $k$ as $U$ is a seq-$k$-uniformiser of
    % $T$). This blow-up only happens when on $w$, the delay between
    % $U$ and $T$ is close to $N$.
 The main idea is that if, by making
    some good choice of output, 
    $U$ is able to react to a threat of exceeding delay $k$ on
    $\rho(w)$, then by doing the
    same choice on $w$, $U'$ can also react to a threat of exceeding
    delay $N_T$.

    We identify several key properties that $\rho$ must satisfy, in
    order to be able to construct $U'$. For instance, we require that
    $\rho(w)$ is a prefix of $\rho(wa)$ for all $w\in\Sigma^*$,
    $a\in\Sigma$, but also some property relating the delays between
    $U$ and $T$ on  input $w$ and on input $\rho(w)$.

    The challenging part of the proof is to prove the existence of $N_T$ and
    $\rho$. It is based on a study of the structural properties of
    the transition monoid of finitely ambiguous transducers (a monoid
    that captures the state behaviour of automata and transducers),
    and the effect of its elements on the delays. In particular,
    subwords of $w$ that are iterated to define $\rho(w)$ correspond to idempotent
    elements in the transition monoid of $T$, and the bound $N_T$ is
    obtained by an application of Ramsey's theorem.
\end{proof}

Since by Theorem~\ref{thm:rat} every finite-valued transducer is
effectively equivalent to a finitely ambiguous transducer, the
sequential uniformisation problem for finite-valued transducers reduces to sequential
$N$-uniformisation, for computable integers $N$. Hence by Theorem
\ref{thm:rat} and the fact that any transducer can be trimmed in
polynomial time, we get decidability of sequential uniformisation of
finite-valued transducers, one of the main results of this paper. 

% This proves that for every transducer $\tra$ given as a union of
% unambiguous transducers, the uniformisation problem reduces to the
% $N$-delay uniformisation problem for a computable $N$, which is
% decidable, by Theorem \ref{thm:rat}.
% Once again, as any finite-valued transducer can be transformed into an
% equivalent union of unambiguous transducers, based on Theorem
% \ref{trans_fv}, we have the following corollary.

\begin{corollary}\label{fv_unif}
The sequential uniformisation problem for finite-valued
transducers is decidable.
\end{corollary}

%%% Local Variables:
%%% mode: latex
%%% TeX-master: t
%%% End:

\vspace{-1em}
\section{Deterministic Rational Transductions} \label{sec:drat}

In this section we consider another subclass of rational
transductions, namely the deterministic rational transductions,
denoted by DRat. This class is defined in terms of specific
deterministic transducers and some problems that are undecidable for
general rational transductions are decidable in the case of DRat. For
example, the equivalence problem is decidable \cite{Bird73} (while
inclusion is easily seen to be undecidable \cite{FischerR68}), and
whether a given relation in DRat is recognisable \cite{CartonCG06} is
also decidable.
We obtain here another decidability result, namely that the sequential
uniformisation problem is decidable for deterministic transducers. 

For the definition of deterministic rational transducers, we work with
endmarkers (this is the common way of doing it, see also \cite{DBLP:books/daglib/0023547}). The determinism includes the standard definition of unique
successor states for each symbol and additionally a deterministic
choice between input and output. This is enforced by a partition of
the state space into states processing input symbols and states
processing output symbols.

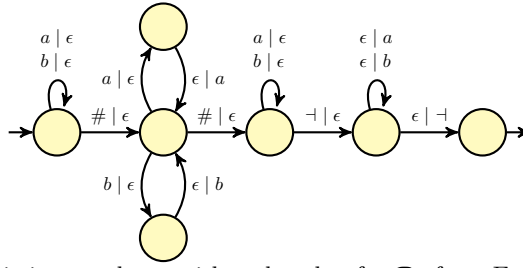
\begin{figure}[t]
\begin{center}
\begin{tikzpicture}[->,>=stealth',auto,node
    distance=1cm,thick,scale=0.7,every node/.style={scale=0.7}]
  \tikzstyle{every state}=[fill=yellow!30,text=black]
  % \tikzstyle{every edge}=[draw=black,text=red]
  \tikzstyle{initial}=[initial by arrow, initial where=left, initial text=]
  \tikzstyle{accepting}=[accepting by arrow, accepting where=right, accepting text=]
  \tikzstyle{acceptingdown}=[accepting by arrow, accepting where=below, accepting text=]

  \tikzstyle{initialT1}=[initial by arrow, initial where=left, initial
  text=]

  \node[initialT1,state] (A) at (0,0) {};
  \node[state]           (B) at (2,0) {};
  \node[state]           (C) at (4,0) {};
  \node[state]           (D) at (6,0) {};
  \node[state,accepting] (E) at (8,0) {};
  \node[state]           (F) at (2,2) {};
  \node[state]           (G) at (2,-2){};
  
  \path (A) edge [loop above] node {$\begin{array}{l}\trans{a}{\epsilon}\\\trans{b}{\epsilon}\end{array}$} (A);
  \path (A) edge node {\trans{\#}{\epsilon}} (B);
  \path (B) edge node {\trans{\#}{\epsilon}} (C);
  \path (B) edge [bend left] node {\trans{a}{\epsilon}} (F);
  \path (B) edge [bend right,swap] node {\trans{b}{\epsilon}} (G);
  \path (C) edge [loop above] node {$\begin{array}{l}\trans{a}{\epsilon}\\\trans{b}{\epsilon}\end{array}$} (C);
  \path (C) edge node {\trans{\dashv\ }{\epsilon}} (D);
  \path (D) edge [loop above] node {$\begin{array}{l}\trans{\epsilon}{a}\\\trans{\epsilon}{b}\end{array}$} (D);
  \path (D) edge node {\trans{\epsilon }{\ \dashv}} (E);
  \path (F) edge [bend left] node {\trans{\epsilon}{a}} (B);
  \path (G) edge [bend right,swap] node {\trans{\epsilon}{b}} (B);
\end{tikzpicture}
\vspace{-2em}
\end{center}
\caption{A deterministic transducer with endmarker for $\mathcal R_1$ from Ex.~\ref{ex:drat}.}
\vspace{-1em}
\label{fig:drat} 
\end{figure}

\begin{example}\label{ex:drat}
 The transduction $\mathcal R_1 = \{ (u\#v\#w,vx) \mid u,v,w,x \in \{a,b\}^*\}$ is in DRat since it is recognised by the deterministic transducer depicted in Fig.~\ref{fig:drat} over the alphabet 
 $\{a,b,\#\}$ with endmarker $\dashv$. Note that for each state the outgoing transitions
 either all have $\epsilon$ as  output component, or all have
 $\epsilon$ as input component. In the formal definition, this
 is captured by the partition into input and output states. 
\end{example}

Let $\Sigma$
be an alphabet and $\dashv$
a fresh symbol used as endmarker. We let
$\Sigmaend := \Sigma \cup \{\dashv\}$.
A deterministic transducer over the alphabet $\Sigma$
with endmarker $\dashv$
is of the form $\tra = (\Qin,\Qout,F,q_0,\delta)$
with a set $\Qin$
of input states, a set $\Qout$
of output states (we write $Q$
for the union of these two sets), a unique initial state $q_0$,
a transition function
$\delta: Q \times \Sigmaend \rightarrow Q$,
and a set $F \subseteq Q$
of accepting states. In the presence of endmarkers, the final output function is not required anymore.

For defining the semantics of such a deterministic transducer, one can transform it into a standard transducer. However, this transformation needs to take care of the endmarker only being allowed at the end of the word, which is not enforced in the definition of deterministic transducers. To avoid this, we rather define the semantics by extending the transition function to pairs of words (input and output word).
For $(u,v) \in \Sigmaend^* \times \Sigmaend^*$
and $q \in Q$,
we define
$\delta^*: Q \times \Sigmaend^* \times \Sigmaend^* \rightarrow Q
\times \Sigmaend^* \times \Sigmaend^*$ inductively as follows:
\begin{itemize}
\item If $q \in \Qin$, then $\delta^*(q,\epsilon,v) =
  (q,\epsilon,v)$ and $\delta^*(q,au,v) =
  \delta^*(\delta(q,a),u,v)$.
\item If $q \in \Qout$, then $\delta^*(q,u,\epsilon) =
  (q,u,\epsilon)$ and $\delta^*(q,u,av) = \delta^*(\delta(q,a),u,v)$.
\end{itemize}
So $\delta^*$
applies $\delta$
to the next input letter from states in $\Qin$
and to the next output letter from states in $\Qout$
as long as possible. The transduction $\rel{\tra}$
defined by $\tra$ is
\[
  \rel{\tra} = \{(u,v) \in \Sigma^* \times \Sigma^* \mid
  \delta^*(q_0,u\dashv,v\dashv) = (q,\epsilon,\epsilon) \mbox{
    with } q \in F\}.
\]
Recall from Section~\ref{sec:fvalued} that $k$-delay inclusion and
equivalence are
complete for finite-valued transducers, as stated in
Theorem~\ref{lem:finitelyambiguous}. We note that this is not the case
for DRat. 
\begin{remark}
  There are deterministic transducers $\tra_1$ and $\tra_2$ such that
  $\tra_1 \equiv \tra_2$ but there is no $k$ such that $\tra_1 \equiv_k \tra_2$.
\end{remark}
\begin{proof}
  Consider the complete relation $\Sigma^* \times \Sigma^*$,
  and let $\tra_1$
  be the deterministic transducer that first reads all input symbols
  (up to the endmarker $\dashv$),
  and then reads all output symbols. Let $\tra_2$
  be the deterministic transducer that first reads all output symbols
  and then the input symbols. Obviously,
  $\rel{\tra_1} = \rel{\tra_2} = \Sigma^* \times \Sigma^*$.
  However, the lag for the two runs of $\tra_1$,
  $\tra_2$ on a pair $(u,v)$ is $|v|$ and thus not bounded.
\end{proof}

Our main result for DRat is the following, which extends the corresponding
result for automatic relations from \cite{CarayolL14}.
\begin{restatable}{theorem}{dratunif}\label{the:drat-unif}
 The sequential uniformisation problem for deterministic transducers is decidable. 
\end{restatable}
The proof uses the game-theoretic approach, building a game between
players Input and Output. A winning strategy for player Output then
corresponds to a sequential uniformiser. The moves of the game simulate the deterministic transducer $T$ on the pairs of input and output word played by the two players in order to check whether the output indeed matches the input. However, Output might need to delay the moves to gain some lookahead on the input for making the next decisions. The main challenge in the proof is to find a way to keep the lookahead information bounded without losing too much information. It is not sufficient to simply store words of bounded length as lookahead. The information in the lookahead rather provides information on the behaviour that the lookahead word induces in $T$. Player Output can delete parts of this information to reduce the size of the lookahead.

The sequential uniformiser that is constructed from the game in the
decidability proof can be shown to have bounded delay. So we obtain
the following result, showing that sequential $k$-uniformisation is complete for deterministic transducers. 

\begin{restatable}{theorem}{dratkdelay}\label{the:drat-k-delay}
    Any sequentially uniformisable deterministic transducer is
    sequentially $k$-uniformisable for some $k\in \mathbb{N}$ that can be computed from the given transducer.
\end{restatable}

%%% Local Variables:
%%% TeX-master: "main"
%%% End:
\section{Conclusion} \label{sec:conclusion}
We have introduced the notion of resynchronisers, which are
transformations for synchronisations of transductions. The decision
problems of inclusion, equivalence, and sequential uniformisation, which
are undecidable for general rational transductions, become decidable
modulo rational resynchronisers. Furthermore, we have shown that it is
sufficient to consider $k$-delay resynchronisers in the context of
these decision problems. We have  analysed two subclasses of
transducers, finite-valued transducers and deterministic
transducers. For both classes, sequential uniformisation is decidable, and the
existence of a sequential uniformiser implies the existence of a
sequential $k$-uniformiser. Additionally, for finite-valued transducers $k$-inclusion is shown to be complete.
One interesting open question is the problem of deciding for a
transducer whether it admits a sequential $k$-uniformiser for some $k$. 

\vspace*{-5mm}
\bibliographystyle{plain}
\bibliography{biblio}

\newpage
\appendix
\noindent{\bf \Large Appendix}
\medskip 

\section{Details for Section \ref{sec:synch}}

\decisionIproblems*

\begin{proof}
\newcommand{\blank}{\ensuremath{\mbox{\protect\raisebox{.65ex}{\ensuremath{\underbracket[.5pt][1.5pt]{\hspace*{1ex}}}}}}}
It remains to show \textsc{ExpTime-hardness} of sequential $\mathbb{I}$-uniformisation.

Let $M$ be a polynomial space bounded alternating Turing Machine that solves some \textsc{ExpTime-hard} decision problem.
To show \textsc{ExpTime-hardness} of sequential $\mathbb{I}$-uniformisation, we give a polynomial time reduction from the word problem for $M$.

Given a word $w$, we construct a synchronous transducer $T$ of polynomial size in $|w|$ that is sequentially $\mathbb{I}$-uniformisable if, and only if, $M$ accepts $w$.
Let $Q_M$ denote the state set of $M$, partitioned into universal states $Q_\forall$ and existential states $Q_\exists$, and let $\Gamma_M$ denote the tape alphabet of $M$ including the blank symbol \blank.
Since a computation on $w$ uses space bounded by a polynomial $p$, we can encode a configuration of $M$ as a string of length $p(|w|)+1$ in the form $xqy\blank$, where $xy\blank \in \Gamma_M^*$ is the content of the tape of $M$ with an additional \blank-symbol at the end, $q \in Q_M$ is the current control state of $M$, and the head of $M$ is on the first position of $y$.
The additional symbol at the end was added because later on we want to mark an update error in a configuration just after the error occurs; if the errors occurs in the last position of $y$, then we need one more position to mark this.
Wlog, we can assume that every computation of $M$ on $w$ is halting (and the resulting configuration is either accepting or rejecting).

For both, input and output, we are interested in words over the alphabet $\Gamma_M \cup Q_m \cup \{\square,\$,\#\}$ of the form
$
 \# x_1\# x_2\# \dots \# x_n
$
where each $x_i$ is either a configuration of $M$, or a string over $\{\square,\$\}$ of length $p(|w|) + 1$. We say that such a word is a correct coding.

The basic idea is that $u = \# u_1\# u_2\# \dots \# u_n$ and $v = \# v_1\# v_2\# \dots \# v_n$ are both correct codings, where the next configuration is $u_{i+1}$ if the previous configuration was universal, and $v_{i+1}$ is the next configuration if the previous configuration was existential. We build the transduction such that a sequential $\mathbb{I}$-uniformiser chooses the successors of existential configurations.
The $\$$ is used as a marker that will occur at most once in the input word. The pair is rejected if this marker identifies an error in the configuration chosen on the output. Below we give the conditions for a pair to be accepted.

%We are interested in pairs $(u,v) \in (\Gamma_M \cup Q_m \cup \{\square,\$\})^* \times (\Gamma_M \cup Q_m \cup \{\square,\$\})^*$ in which the first component is of the form
%\begin{center}
% $u = \# u_1\# u_2\# \dots \# u_n$,
%\end{center}
%where each of the following holds
%\begin{itemize}
% \item Each $u_i$ is either a configuration of $M$, or a string over $\{\square,\$\}$ of length $p(|w|) + 1$.
% \item $u_1$ is the initial configuration.
% \item $u$ contains exactly one $\$$, or $u_n$ is a rejecting configuration.
%\end{itemize}
%We say a word $u$ is a correct coding if it is of this form.
%If the first component $u$ is a correct coding of the form $\# u_1\# u_2\# \dots \# u_n$, then the second component $v$ has to be a correct coding of the form $\# v_1\# v_2\# \dots \# v_n$.
%Let $c_i$ refer to the configuration encoded by $u_i$ or $v_i$.

We define $R$ to be the transduction such that a pair $(u,v) \in R$ if it satisfies the conditions 1.--3. and at least one of 4.a., 4.b., or 4.c.\ given below.
\begin{enumerate}
 \item $|u| = |v|$, and both $u = \# u_1\# u_2\# \dots \# u_n$ and $v = \# v_1\# v_2\# \dots \# v_n$ are correct codings. Furthermore,
   \begin{itemize}
   \item $u_1$ is the initial configuration of $M$ on $w$, and
   \item $u$ contains exactly one $\$$, or $u$ contains no $\$$ and in this case $u_n$ is either a rejecting configuration or not a configuration.
   \end{itemize}
 \item At least one of $u_i,v_i$ is a configuration.

       In the following we let $c_i = u_i$ if $u_i$ is a configuration, and otherwise $c_i = v_i$.  
 \item If $c_i$ is existential, then $v_{i+1}$ is a configuration.
 \item  \begin{enumerate}
 \item The input word $u$ introduces a mistake w.r.t.\ a universal configuration:
 \begin{itemize}
  \item $c_i$ is universal and $u_{i+1}$ is a not configuration, or
  \item $c_i$ is existential and $u_{i+1}$ is a configuration, or
  \item there exists $i$ such that $c_i$ is universal and $u_{i+1}$ is not a successor configuration of $c_i$.
 \end{itemize} 
 \item The input word $u$ marks a position with $\$$ that is not a mistake, i.e., there exists $i$ such that $c_i$ is existential, $u_{i+1}$ contains $\$$ at position $j$ and the positions $j-3,j-2,j-1$ in $v_{i+1}$ are correctly updated w.r.t.\ the positions $j-3,j-2,j-1$ of $c_i$% and $v_{i+1}$ is accepting.
 \item $u_n$ is an accepting configuration, or $u$ does not contain $\$$ and $v_n$ is a configuration but not rejecting.
\end{enumerate}
\end{enumerate}

We build a transducer $T$ that recognises $R$ and works synchronously, i.e., each transition is labelled by a pair of letters.
Therefore, $T$ guesses whether condition 4.a., 4.b., or 4.c., holds.
Depending on its guess, $T$ goes to $T_A$ or $T_B$ resp.\ $T_C$ in order to verify the guess as well as verify that conditions 1.--3.\ are satisfied.

First, before we describe $T_A$, $T_B$ and $T_C$, note that a transducer can easily synchronously check whether a pair $(u,v)$ satisfies conditions 1--3.
Such a transducer needs states polynomial in $|w|$ since it has to verify that the length of each $u_i$ resp.\ $v_i$ is exactly $p(|w|)+1$ and furthermore check that $u_1$ the initial configuration.
Checking the other conditions requires only a constant number of states.
The transducers $T_A$, $T_B$ and $T_C$ can be modified to do this check by a product construction with the above described transducer.

Now, we construct a transducer $T_A$ that verifies that condition 4.a.\ holds. The first two options of 4.a.\ are easy to check.
For the third possibility, a transducer guesses a position in a universal configuration and stores the letters of the three consecutive positions, advances $(p(|w|)+1)-2$ positions, and verifies that there is a mistake in the update.
The size of such a transducer is polynomial in $|w|$.

We construct a transducer $T_B$ that verifies that condition 4.b.\ holds.
$T_B$ guesses a position $j-3$ in an existential configuration $c_i$, stores the letters of $c_i$ of the three consecutive positions $j-3$, $j-2$, $j-1$, and advances $(p(|w|)+1)-2$ positions.
Then, $T_B$ has reached the position $j-3$ in $u_{i+1}$ and $v_{i+1}$ and verifies that the $M$-computation update was correct in the positions $j-3$, $j-2$ and $j-1$ in $v_{i+1}$ w.r.t.\ the stored values, and also checks that $u_{i+1}$ contains $\$$ in position $j$. % and that $c_{i+1}$ is accepting.
A transducer of size polynomial in $|w|$ suffices.

A transducer $T_C$ of constant size for condition 4.c.\ can be easily constructed, it has to guess the beginning of $u_n$ and verify that $u_n$ is an accepting configuration.

Note that $T_A$, $T_B$, and $T_C$ have a polynomial size in $|w|$, modifying the transducers to also check whether conditions 1.--3.\ hold again yields transducers of polynomial size.
Then $T$ is the union transducer of (the modified versions of) $T_A$, $T_B$ and $T_C$.

We claim that $T$ has a sequential $\mathbb{I}$-uniformiser if, and only if, $M$ accepts $w$.
To begin with, assume that $M$ accepts $w$, then a $\mathbb{I}$-uniformiser $U$ for $T$ can be constructed as follows.
The computation of $M$ on $w$ can be represented in a computation tree.
Each node of the tree is labelled with a configuration $c$ and its children are all successor configurations if $c$ is universal, and one successor configuration if $c$ is existential.
The root is labelled with the initial configuration of $M$ on $w$.
We assumed that every $M$-computation is halting.
Thus the computation tree of $M$ on $w$ is finite, and all leaves are labelled with accepting configurations.
The idea is to use this computation tree to build a synchronous finite state uniformiser for $T$ that works as follows.
At first, assume that the input word $u$ is a correct coding.
Assume the transducer has read $\#u_1\#\dots \#u_i$ so far and has produced $\#v_1\#\dots \#v_i$ such that $c_1,\dots,c_i$ are nodes along a path in the computation tree.
We further assume that $U$ as stored $c_i$ in its state. We distinguish two cases.

For the first case, assume $c_i$ is a universal configuration, then $u_{i+1}$ has to be a configuration (otherwise condition 4.b is satisfied and the uniformiser can produce anything that satisfies 1--3).
The transducer then reads $u_{i+1}$ stores its value and produces a $\square$-sequence of the same length.
If $u_{i+1}$ was not a valid  successor configuration of $c_i$, then condition 4.a.\ is satisfied.
Then the uniformiser reads the remainder of the input and just has to make sure to produce output such that conditions 1.--3.\ are satisfied.
Otherwise, if $u_{i+1}$ was a valid successor configuration of $c_i$, $U$ proceeds with the procedure we are currently describing.

For the second case, assume $c_i$ is an existential configuration.
Then $v_{i+1}$ has to be a configuration.
Since $M$ accepts $w$, the accepting computation tree contains a successor of $c_i$, which is then produced by $U$ for $c_{i+1}$ (one can show that this successor can be chosen only based on $c_i$ without knowing the whole computation tree).

If the input word $u$ does not satisfy condition 1, then $U$ simply rejects because $u$ is not in the domain of $R$.

We have to show that $U$ is indeed a sequential $\mathbb{I}$-uniformiser.
If the input $u$ is a correct coding, then it is clear that $U$ produces $v$ such that $(u,v)$ satisfies conditions 1--3.
It is left to prove that one of 4.a.--4.c.\ is satisfied. So assume that 4.a is not satisfied, which means that the input does not introduce any mistake. If the input contains a $\$$, then 4.b is satisfied because the uniformiser correctly updates the configurations. If the input does not contain $\$$, then (by condition 1) $u_n$ is either a rejecting configuration or not a configuration. The first case is not possible because we assume that the input does not introduce a mistake, and hence the simulated computation of $M$ cannot reach a rejecting configuration. Thus, $u_n$ is not a configuration. Then $v_n$ is a configuration and since it is not rejecting, 4.c is satisfied.

Conversely, assume that there is a sequential transducer $U'$ that $\mathbb{I}$-uniformises $T$. This implies that the transitions of $U'$ are labelled by pairs of letters, otherwise it can not be the case that it is an $\mathbb{I}$-uniformiser for $T$. We can use $U'$ to show that $M$ accepts $w$. We start with the input $\#u_1$ with $u_1$ the initial configuration of $M$ on $w$. If $u_1$ is universal, then we can pick an arbitrary successor configuration $u_2$ and continue extend the input to $\#u_1\#u_2$. We continue until in  $\#u_1 \cdots \#u_i$ an existential configuration $u_i$ is reached. Then we continue the input with $\square$, and $U'$ starts producing a next configuration $v_{i+1}$ (otherwise it would not be a uniformiser of $T$).  If a prefix $v'$ of $v_{i+1}$ is reached with a mistake that shows that $U'$ does not produce a successor configuration of $u_i$, then we add $\$$ to the input, and continue with $\square$ until the required length of $u_{i+1}$ is reached. The resulting pair is such that $\#u_1 \cdots \#u_{i+1}$ is in the domain of $T$ but the pair is not accepted because non of 4.a, 4.b, or 4.c is satisfied. Thus, $v_{i+1}$ must be a successor of $u_i$ because $U'$ is a uniformiser of $T$. We continue this process (picking successors of universal configurations on the input and letting $U'$ produce successors of existential configurations on the output), faithfully simulating a computation of $M$ on $w$. This computation must halt eventually. Assume that the last configuration $c_n$ is rejecting. If the second last configuration was universal, then $u_n = c_n$, and the input word is in the domain of $T$. However, none of 4.a--4.c is satisfied, contradicting the assumption that $U'$ is a uniformiser of $T$.
The other case is that the second last configuration was existential, and thus $v_n = c_n$. Then $u_n$ is not a configuration and the input word is in the domain also in this case. But again 4.a--4.c are not satisfied, contradicting the choice of $U'$ as uniformiser of $T$. 

We conclude that the above simulation always reaches an accepting configuration of $M$, no matter how we choose the successors of universal configurations. This implies that $M$ accepts $w$. 
\end{proof}

\decisionSproblems*

\begin{proof}
The lower bounds are obtained as a direction consequence of
Proposition \ref{prop:Iuniformisation}.

To establish decidability, note that since $\sync$ is rational, for every transducer $\tra$, $\sync(\lang{\tra})$ is
(effectively) regular \cite{berstel2009}. The regular language
$\sync(\lang{\tra})$ is a subset of $(\inouta)^*$ and it defines the
transduction $\rel{\sync(\lang{\tra})}$, which is clearly synchronised
by $\sync(\lang{\tra})$. Therefore by Nivat's theorem (Theorem~\ref{thm:nivat}),
$\rel{\sync(\lang{\tra})}$ is rational, and since Nivat's theorem is
effective, one can construct a transducer $\tra^\sync$ such that
$\lang{\tra^\sync} = \sync(\lang{\tra})$.

Given two transducers $\tra_1$ and $\tra_2$, we have $\tra_1
\subseteq_{\sync} \tra_2$ iff $\lang{\tra_1}\subseteq
\sync(\lang{\tra_2})$ iff $\lang{\tra_1}\subseteq \lang{\tra_2^\sync}$
iff $\tra_1\subseteq_{\mathbb{I}} \tra_2^\sync$. In other words,
$\sync$-inclusion reduces to $\mathbb{I}$-inclusion. The same
arguments applies for equivalence and sequential uniformisation, and hence we
decidability by Proposition \ref{prop:Iuniformisation}.

To get the upper-bounds, one can show that $\tra^\sync$ has a
polynomial size in the size of $\tra$ and in the size of the
transducer realising $\sync$, for all transducers $\tra$. First, $\lang{\tra}$ is defined by 
the underlying automaton of $\tra$, which has polynomial size in
$\tra$. Then, $\sync(\lang{\tra})$ is obtained first, by restricting
the domain of transducer for $\sync$ to $\lang{\tra}$, thus obtaining
a new transducer $\tra'$ (which can constructed in polynomial-time via
a product construction), and then by projecting the inputs of $\tra'$,
thus obtaining an automaton $\aut'$ that recognised
$\sync(\lang{\tra})$. Finally, it remains to turn $\aut'$ (which is
over the alphabet $\inouta^*$) into a transducer $\tra^\sync$ (over
$\Sigma$), according to Nivat's theorem. Again, this can be done in
polynomial time: first, one can make $\aut'$ to have transitions with
single letters only, by splitting transitions on words of length $n$,
$n>1$, into $n$ transitions. Then, any transition on a symbol
$\sigma^\inp$ is replaced by a transition on $\sigma\mid
\epsilon$, and any transition on $\sigma^\outp$ is replaced by a
transition on $\epsilon\mid \sigma$. This can be done in
polynomial-time. 
\end{proof}

%--------------------------------------------------
% k-delay rational
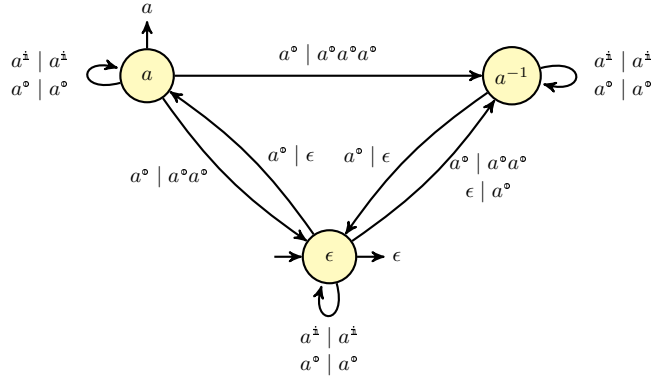
\begin{figure}[!ht]
\begin{center}
\begin{tikzpicture}[->,>=stealth',auto,node distance=3cm,thick,scale=0.8,every node/.style={scale=0.8}]
  \tikzstyle{every state}=[fill=yellow!30,text=black]
  % \tikzstyle{every edge}=[draw=black,text=red]
  \tikzstyle{initial}=[initial by arrow, initial where=left, initial text=]
  \tikzstyle{acceptinge}=[accepting by arrow, accepting where=right, accepting text=$\epsilon$]
  \tikzstyle{acceptinga}=[accepting by arrow, accepting where=above, accepting text=$a$]

  \node[initial,acceptinge,state] (A) at(0,0) {$\epsilon$};
  \node[acceptinga,state] at(-3,3) (B)  {$a$};
  \node[state] at(3,3) (C)  {$a^{-1}$};
  \path (A) edge [bend right=10] node [right] {$\begin{array}{l}\trans{a^\outp}{\epsilon}\end{array}$} (B);
  \path (A) edge [bend right=10] node [right]
  {$\begin{array}{c}\trans{a^\outp}{a^\outp a^\outp} \\ \trans{\epsilon}{a^\outp}\end{array}$} (C);
  \path (A) edge [loop below] node {$\begin{array}{l}\trans{a^\inp}{a^\inp}\\ \trans{a^\outp}{a^\outp}\end{array}$} (A);
  \path (B) edge [loop left] node {$\begin{array}{l}\trans{a^\inp}{a^\inp}\\ \trans{a^\outp}{a^\outp}\end{array}$} (B);
  \path (C) edge [loop right] node {$\begin{array}{l}\trans{a^\inp}{a^\inp}\\ \trans{a^\outp}{a^\outp}\end{array}$} (C);
  \path (B) edge [bend right=10] node [left]
  {$\begin{array}{l}\trans{a^\outp}{a^\outp a^\outp}\end{array}$} (A);
  \path (C) edge [bend right=10] node [left]
  {$\begin{array}{l}\trans{a^\outp}{\epsilon}\end{array}$} (A);

  \path (B) edge node [above]
  {$\begin{array}{l}\trans{a^\outp}{a^\outp a^\outp a^\outp}\end{array}$} (C);
\end{tikzpicture}
 \caption{\label{fig:tradelay}  $1$-Delay synchroniser for $\Sigma = \{a\}$}
\end{center}
\end{figure}

\dsyncrational*
\begin{proof}
We construct a transducer $\tra_k = (Q,I,F,\Delta,f)$ over $\inouta$ such that
$\rel{\tra_k} = \dsync_k$, by:
\begin{itemize}
    \item $Q = \{ u\in\Sigma^*\cup (\Sigma^{-1})^*\mid k\geq |u|\}$,
      $I = \{\epsilon\}$ and $F = Q\cap \Sigma^*$
    \item $f(u) = u^\outp$ for all $u\in F$
    \item $\Delta = \{ (u,x^\inp,x^\inp,u)\mid u\in
      Q,x\in\Sigma\}\cup \{(u,x^\outp,v^\outp,\del(v,ux))\mid
      x\in\Sigma\wedge u,v\in\Sigma^*\wedge u,v^{-1}u x\in Q\} \cup
      \{(\epsilon,\epsilon,v^\outp, v^{-1})\mid
      |v|\leq k\}$.
\end{itemize}
For $k = 1$ and $\Sigma = \{a\}$, the transducer $\tra_k$ is depicted
on Fig.~\ref{fig:tradelay}. 
\end{proof}
% end k-delay rational
%--------------------------------------------------

%--------------------------------------------------
% complexity of k-delay
\decisionkdelay*

%\manu{still to prove: pspace-completeness for inclusion and equivalence,
%  exptime-completeness for uniformisation}

\begin{proof}
In the case where $k$ is fixed, the upper-bounds are consequences of
Theorem~\ref{thm:rat} and the fact the rational $k$-delay
resynchroniser $\mathbb{D}_k$ can be defined by a transducer of
polynomial size (the transducer constructed in the proof of
Prop.~\ref{prop:dsynckrat} has polynomial size when $k$ is fixed). To
get the lower bounds when $k$ is fixed, it suffices to take $k=0$,
which in that case correspond to $\mathbb{I}$-decision problems, and
to apply Prop.\ref{prop:Iuniformisation}.

When $k$ is not fixed, let us show that the problems gets harder.
Let $M$ be an exponentially space bounded Turing machine that solves some \textsc{ExpSpace-hard} decision problem.
 To show \textsc{ExpSpace-hardness} of the $k$-inclusion,
 $k$-equivalence and sequential $k$-uniformisation problem, we give a polynomial time reduction from the word problem for $M$, respectively.
 
 We start with the \textsc{ExpSpace-hardness} of sequential $k$-uniformisation.
 Given a word $w$, we construct a transducer $T$ that has a sequential
 $\big((n+3)2^{n}+1\big)$-uniformiser if, and only if, $M$ rejects $w$, where $n = |w|$.
 Let $Q_M$ denote the state set of $M$ and $\Gamma_M$ denote the tape alphabet of $M$.
 Since a computation of $M$ on $w$ uses space bounded by $2^n$, we can encode a configuration of $M$ as a string of $\big((n+3)2^{n}+1\big)$ letters in the form
 \begin{center}
  $a_0\$bin(0)\$a_1\$bin(1)\$a_2\ldots \$bin(2^{n}-1)\$a_{2^n}$,
 \end{center}
 where $a_0\ldots a_{2^n} = xqy$, $xy \in \Gamma_M^*$ is the content of the tape of $M$, $q \in Q_m$ is the current control state of $M$, the head of $M$ is on the first position of $y$, and $bin(i)$ is the binary representation (using $n$ bits) of $i$ for all $i \in \{0,\ldots,2^{n}-1\}$.
 
 We define a transduction, where the core of the reduction are pairs $(u,v) \in \Sigma^* \times \Sigma^*$ in which the first component is of the form
 \begin{center}
  $c_0\#c_1\#\ldots c_i\#c_{i+1}\#\ldots c_\ell\#\kappa\#^*X$,
 \end{center}
 where each $c_i$ is a configuration of $M$, $c_0$ is the initial configuration of $M$ on $w$, $c_\ell$ is a final configuration of $M$, $\kappa$ is the string $\square\$bin(0)\$\square\$bin(1)\$\square\ldots \$bin(2^{n}-1)\$\square$ which has the length of a configuration coding, and $X \in \{A,B\}$.
 We say $u$ is a correct coding, if it is of this form.
 
 If $u$ is not a correct coding, then every word $v$ with $|v| \leq |u|$ is allowed as the second component.
 If $u$ is a correct coding and ends in $A$, then $(u,v)$ is accepted if, and only if, $v=u$.
 If $u$ is a correct coding and ends in $B$, then $v$ has to be of the form
  \begin{center}
  $c'_0\#c'_1\#\ldots c'_i\#c'_{i+1}\#\ldots c'_m\#\Sigma^*$,
 \end{center}
 such that $c_{i+1}$ is not the successor configuration of $c'_i$ for some $i \in \{0,\ldots,\ell-1\}$, and $|v| \leq |u|$.
 
 We build a transducer $T$ of polynomial size in $n$ that recognises this transduction.
 To this end, $T$ guesses at the beginning whether $u$ is a correct coding, and whether it ends in $A$ or $B$.
 Depending on its guess, $T$ goes to $T_w$ if $u$ is not a correct coding, or to $T_A$ resp.\ $T_B$ if $u$ is a correct coding and ends in $A$ resp.\ $B$.
 The transducers $T_w$, $T_A$ and $T_B$ are described below.
 
 First, we construct a real-time transducer $T_w$ that accepts pairs where $u$ is not a correct coding and $|v| \leq |u|$.
 To verify that $u$ is not a correct coding, the transducer has to check that at least one of the following mistakes occurs in $u$:
 \begin{enumerate}
  \item The word $u \notin \big(\big((\Gamma_M \cup Q_M)\$\{0,1\}^n\$\big)^*(\Gamma_M \cup Q_M)\#\big)^*(\square\$\{0,1\}^n\$)^*\square\#^*(A+B)$, or some $c_i$ does not contain exactly one state.
  \item The first configuration $c_0$ is not the initial configuration of $M$ on $w$.
  \item The last configuration $c_\ell$ is not a final configuration of $M$.
  \item The sequence of counter values in some $c_i$ or $\kappa$ is not correct, i.e.,
  \begin{enumerate}
  \item it does not start with $bin(0)$, or
  \item it does not end with $bin(2^n-1)$, or
  \item it contains more than once $bin(0)$, or
  \item it contains two successive binary counter values that are not obtained by increment.
  \end{enumerate}
 \end{enumerate}
 For these properties, we construct transducers that work synchronously, i.e., each transition is labelled by a pair of letters (or the output component is $\epsilon$ if the output has already ended).
 
 The first and the second property can be easily checked by a transducer of size linear in $n$.
 For the third property, a transducer has to guess when it reads the last configuration, then check that is does not read any accepting $M$-state and verify that no further configuration follows. This can be done by a transducer of constant size.
 Properties 4.(a)-(c) can be verified by a transducer that guesses the beginning of a configuration that contains a mistake and then verifies this.
 A transducer of size linear in $n$ suffices, since the binary counters use $n$ bits.
 Property 4.(d) requires a transducer to guess the bit of a binary counter whose update will be faulty and then to verify the guess.
 Upon reading this bit, its position and value is guessed and stored.
 Then, the transducer verifies that the guess of the position was correct by counting the remaining bits and in the process also checks and remembers whether this bit has to be flipped by testing for zeros.
 Thereafter, when the next binary counter begins, the transducer counts up to the right position and checks that the update was incorrect.
 Such a transducer needs states linear in $n$.
 
 A union of these transducers accepts pairs $(u,v)$, where $u$ is not a correct coding.
 This union-transducer can be easily modified to accept only pairs, where $|v| \leq |u|$, which results in the desired transducer $T_w$. 

 Secondly, we construct a real-time transducer $T_A$ that accepts a pair $(u,v)$ if $u$ ends in $A$ and $v = u$.
 For this purpose, $T_A$ synchronously tests whether input and output are equal and whether the last letter is $A$. Only 2 states are needed.
 
 Lastly, we construct a real-time transducer $T_B$ that accepts a pair $(u,v)$ if $u$ ends in $B$, and there is some $i$ such that $c_{i+1}$ is not the $M$-successor configuration of $c'_i$, and $|v| \leq |u|$. 
 At the beginning, $T_B$ reads the first configuration $c_0$, but does not read any part of the output.
 This then allows $T_B$ to read $c_{i+1}$ and $c'_i$ in parallel for further indices $i > 0$.
 To verify that $c_{i+1}$ is the $M$-successor of $c'_i$, $T_B$ checks that both configurations only differ and have been correctly updated at positions $j-1,j,j+1$, where the $j$th position of $c'_i$ contains an $M$-state.
 This is checked until $T_B$ has read some $c_{i+1}$ and $c'_i$ such that $c_{i+1}$ is not the successor configuration of $c'_i$.
 Then, $T_B$ reads the remainder of $u$ and $v$ in parallel until the beginning of $\kappa$ is reached.
 So far, $T_B$ has processed $|c_0|$ letters more of $u$ than of $v$, but it remains to be checked in real-time that $|v| \leq |u|$.
 Thus, $T_B$ has to catch up $|c_0|$-output letters before the input ends.
 For this purpose $\kappa$ was introduced to the correct coding of $u$.
 Since $|\kappa|=|c_0|$, it suffices that $T_B$ reads two output-letters per read input-letter while reading $\kappa$.
 This guarantees that after reading $\kappa$ the same amount of $u$ and $v$ has been processed by $T_B$.
 Subsequently, $T_B$ reads the rest in parallel and verifies that $|v| \leq |u|$ and that $u$ ends with $\#^*B$. 
 The size of such a transducer is constant.
 
 It follows that $T$ can be constructed from $M$ and $w$, and the state space of $T$ is polynomial in $n = |w|$.
 We claim that $T$ has a sequential $\big((n+3)2^{n}+1\big)$-uniformiser if, and only if, $M$ rejects $w$.
 Assume $M$ does not accept $w$, then the sequential transducer that
 synchronously realises the identity function is a sequential uniformiser of $T$.
 Let $U$ be this transducer.
 Obviously, $(u,U(u)) \in \rel{T}$ in case $u$ is not a correct coding or ends in $A$.
 Hence, we verify the case where $u$ is a correct coding and ends in $B$.
 Since $M$ rejects $w$, the configuration sequence $c_0\#c_1\#\ldots\#c_\ell$ must contain two configurations $c_i$ and $c_{i+1}$ such that $c_{i+1}$ is not the successor configuration of $c_i$.
 Since $U(u) = u$, and in particular $c'_i = c_i$, we obtain $(u,U(u)) \in \rel{T}$.
 Finally, note that $U \subseteq_k T$, where $k = \big((n+3)2^{n}+1\big)$.
 This follows from the construction of $T_B$.
 
 Conversely, assume that $M$ accepts $w$ and there is a sequential transducer $U'$ that uniformises $T$.
 We consider an input $u$ that is a correct coding and codes the accepting computation of $M$ on $w$ followed by $\#\kappa\#^j$ for some $j$ such that $|\#\kappa\#^j|$ is longer than $l_1  + l_2$, where $l_1$ is the maximal length of an output used in a transition of $U'$ and $l_2$ is the maximal length of an output of the final output function of $U'$.
 Then, after consuming the input thus far, $U'$ must have produced the accepting computation of $M$ on $w$, because the next input letter could be $A$ and end the input.
 However, if the next letter is $B$, then the output does not satisfy the condition.
 This is a contradiction, no uniformiser exists, and especially no
 seq-$k$-uniformiser exists for any $k$.
 
 Now, we show that the $k$-inclusion problem is \textsc{ExpSpace-hard}.
 Given a word $w$, let $T$ be the transducer constructed from $w$ as described above and let $U$ denote a synchronous transducer that realises the identity function over the same alphabet.
 We claim, $U \subseteq_k T$ if, and only if, $M$ rejects $w$, where $k = \big((n+3)2^{n}+1\big)$ and $n = |w|$.
 Assume $M$ rejects $w$.
 Obviously, $U \subseteq_k T$, because then $U$ is a $k$-uniformiser for $T$ as shown before.
 This implies that $U$ is $k$-included in $T$.
 Conversely, assume that $M$ accepts $w$.
 Consider an input $u$ that is a correct coding that ends in $B$ and codes the accepting computation of $M$ on $w$.
 Then $(u,u) \in \mathcal R_U$, but $(u,u) \notin \mathcal R_T$, because the output does not satisfy the condition that $c_{i+1}$ is not the successor configuration of $c'_i$ for some $i$.
 Hence, $U \not\subseteq T$ and thus $U \not\subseteq_k T$ for any $k$.
 
 Lastly, we show that the $k$-equivalence problem is \textsc{ExpSpace-hard}.
 Given a word $w$, let $T$ be constructed from $w$ as above and $U$ also be as above.
 Recall, $T$ was constructed as the union of the transducers $T_w$, $T_A$ and $T_B$.
 Let $T_1$ be the union of the transducers $T_w'$, $T_A$ and $T_B$.
 The transducer $T_w'$ works like $T_w$ and additionally tests whether input and output are equal.
 Hence, $\mathcal R_{T_1} = ((\mathcal R_{T_w} \cup \mathcal R_{T_A}) \cap \mathcal R_U) \cup \mathcal R_{T_B}$.
 Recall that $R_{T_A} \subseteq \mathcal R_U$ by construction of $T_A$.
 Furthermore, let $T_2$ be the union of $U$ and $T_B$.
 We claim that $T_1 \equiv_k T_2$ if, and only if, $M$ rejects $w$, where $k = \big((n+3)2^{n}+1\big)$ and $n = |w|$.
 
 Assume $M$ rejects $w$.
 First, consider a pair $(u,v) \in \mathcal R_{T_1}$, then clearly also $(u,v) \in \mathcal R_{T_2}$.
 Secondly, consider a pair $(u,v) \in \mathcal R_{T_2}$.
 We distinguish four cases.
 If $u \neq v$, then $(u,v) \in \mathcal R_{T_B}$.
 If $u = v$ and $u$ is not a correct coding, then $(u,v) \in \mathcal R_{T_w'}$.
 If $u = v$ and $u$ is a correct coding that ends in $A$, then $(u,u) \in \mathcal R_{T_A}$.
 If $u = v$ and $u$ is a correct coding that ends in $B$, then $(u,u) \in \mathcal R_{T_B}$: Since $M$ rejects $w$, the configuration sequence $c_0\#c_1\#\ldots\#c_\ell$ must contain two configurations $c_i$ and $c_{i+1}$ such that $c_{i+1}$ is not the successor configuration of $c_i$.
 Consequently, $(u,v) \in \mathcal R_{T_2}$ implies $(u,v) \in \mathcal R_{T_1}$.
 Altogether, $T_1 \equiv T_2$ and by construction of $T_1$ and $T_2$ follows $T_1 \equiv_k T_2$.
 
 Conversely, assume $M$ accepts $w$.
 Consider an input $u$ that is a correct coding that ends in $B$ and codes the accepting computation of $M$ on $w$.
 Then $(u,u) \in \mathcal R_U \subseteq \mathcal R_{T_2}$, but $(u,u) \notin \mathcal R_{T_1}$, because the pair neither satisfies the specification of $T_{w'}$, nor $T_A$ nor $T_B$.
 Hence $T_1 \not\equiv T_2$ and thus $T_1 \not\equiv_k T_2$ for any $k$.
\end{proof}
% end complexity of k-delay
%--------------------------------------------------

The following lemma is used in the proof of
Proposition~\ref{prop:notcomplete} below.
\begin{lemma}\label{lem:lag}
    For all $u,v{\in} (\inouta)^*,\sigma{\in}\Sigma,u',v'{\in}
    (\outa)^*$ such that $\pi_\inp(u) = \pi_\inp(v)$:
    $$
    \lag(u\sigma^\inp u',v\sigma^\inp v') = max\ (\lag(u,v),
        |\del(\pi_\outp(u)u', \pi_\outp(v)v')|)
$$
\end{lemma}

\begin{proof}
Direct by definition of the lag.
\end{proof}

\propapprox*

\begin{proof}
    % Note that showing this result for equivalence implies the result
    % for inclusion. Indeed, one can show that if
    % $(\bigcup_{k=0}^{+\infty} \subseteq_k)\ \subsetneq\
    % (\subseteq)$, then     $(\bigcup_{k=0}^{+\infty} \equiv_k)\
    % \subsetneq\ (\equiv)$. Suppose that two transducers $\tra_1$ and
    % $\tra_2$ are equivalent, then it means that $\tra_1\subseteq
    % \tra_2$ and $\tra_2\subseteq \tra_1$, and therefore by
    % assumption, there exist two natural $k_1,k_2$ such that 
    % $\tra_1\subseteq_{k_1}\tra_2$ and $\tra_2\subseteq_{k_2}
    % \tra_1$. It implies that $\tra_1\equiv_{max(k_1,k_2)} \tra_2$. 

    Consider the transducer
    $\tra_1,\tra_2$ of Fig.~\ref{fig:notk}, which are equivalent.
    We show that $\tra_1\not\subseteq_k\tra_2$ for any $k$, which
    will imply the result for equivalence as well. The transducers
    $\tra_1$ and $\tra_2$
    both
    realise the transduction $\{ (a^n, a^{2i})\mid n\geq 1,
    i=0,\dots,n{-}1\}$. Suppose that $\tra_1\equiv_k \tra_2$ for some
    $k\geq 0$. Therefore, $\tra_1\subseteq_k \tra_2$. Take the input
    word $a^{2k+2}$ and the output word $a^{2k+1}$. There are
    only one word $w_1\in \lang{\tra_1}$ and one word
    $w_2\in\lang{\tra_2}$ such that 
    $(\pi_\inp(w_1),\pi_\outp(w_1)) = (\pi_\inp(w_2),\pi_\outp(w_2)) =
    (a^{2k+2},a^{2k+1})$:
    $$
    w_1 = (a^\inp a^\outp a^\outp)^{k+1} (a^\inp)^{k+1}\qquad w_2 =  (a^\inp)^{k+1}(a^\inp a^\outp
    a^\outp)^{k+1}.$$
    However, $\lag(w_1,w_2)=2k+2>k$, which contradicts
    $\tra_1\subseteq_k \tra_2$.

    Finally, we show the result for sequential uniformisation. Consider the
    transducer $\tra$ of Fig.~\ref{fig:notk}. It realises the
    transduction $\{ (a^nA, a^n)\mid n\geq 0\}\cup \{ (a^nB, a^i)\mid
    n\geq 1,\ i=0,\dots,2n{-}1\}$. Clearly, this transduction is
    uniformisable by the sequential function $\{(a^n\alpha,a^n)\mid
    n\geq 0,\alpha\in\{A,B\}\}$, defined by the sequential
    transducer $U$ of Fig.~\ref{fig:notk}. Note that $U$ generates  an
    arbitrary large lag with $\tra$ on the family of pairs
    $((a^nB,a^n))_{n\geq 1}$. We show that \emph{any} sequential uniformiser
    generate an arbitrary large lag with $\tra$. We prove the result
    for uniformisers without output function (or equivalently, with the
    constant output function that maps accepting states to
    $\epsilon$). This is w.l.o.g. since any $k$-uniformiser of
    $\tra$ with output function can be turned into a $k$-uniformiser
    without output function (it suffices to output the content of the
    output function when reading the ending symbols $A$ and $B$, that
    are unique). 

    Let $D$ be a sequential transducer that uniformises $\tra$,
    \emph{without} output function. First, we need the following
    claim, which gives a lower a bound on the lag between the words of $\lang{T}$ 
    recognised by the upper part and lower part of $\tra$, on input of
    the form $a^n$:
    $$
    \text{\textit{Claim.} For all $n,n_1,n_2{\geq} 0$, $\alpha\in\{\epsilon,a^\outp\}$, }\lag((a^\inp
    a^\outp)^n, (a^\inp)^{n_1} a^\inp \alpha  (a^\inp a^\outp
    a^\outp)^{n_2})\geq \frac{n{-}1}{3} 
    $$

    \textit{Proof of the claim.} Let $L = \lag((a^\inp
    a^\outp)^n, (a^\inp)^{n_1} a^\inp \alpha  (a^\inp a^\outp
    a^\outp)^{n_2})$. If either $n_1+n_2+1 \neq n$ or, 
    $2n_2\neq n$ and $\alpha = \epsilon$, or $2n_2\neq n-1$ and
    $\alpha = a^\outp$, then the two arguments of $L$ are not
    equivalent, and therefore $L = +\infty$.

    Now, assume that $n = n_1 + n_2 + 1$ and $2n_2 = n$ if $\alpha =
    \epsilon$, and $2n_2 = n-1$ if $\alpha = a^\outp$. The worst case
    for the lag is when $\alpha = \epsilon$. It is not difficult to
    show that $\lag((a^\inp
    a^\outp)^n, (a^\inp)^{n_1+1} (a^\inp a^\outp
    a^\outp)^{n_2}) \geq max(n_1,n_2-n_1)$, and therefore $L\geq max(n_1,n_2-n_1)$. Since $n-1 = n_1+n_2$,
    one gets $L\geq max(n_1, n-1-2n_1)$. We now consider two
    cases:
\begin{itemize}
    \item if $n-1 > 3n_1$, then $n-1-2n_1 > n_1$,
      therefore $max(n_1,n-1-2n_1) = n-1-2n_1$ and 
      $L\geq n-1-2n_1$. From $n-1-2n_1 > n_1$ one gets $n-1>3n_1$ and
      $2n_1 < 2(n-1)/3$. Hence, $L \geq n-1-2n_1 > n-1-2(n-1)/3 =
      (n-1)/3$. 
    \item if $n-1 \leq 3n_1$, then $n-1-2n_1
      \leq   n_1$, 
      therefore $max(n_1,n-1-2n_1) = n_1$ and $L\geq n_1 \geq
      (n-1)/3$. 
\end{itemize}

Therefore $L\geq (n-1)/3$. \hfill \textit{End of Proof of the Claim.}

    \vspace{2mm}

    Let $n\geq 1$. Since $D$ uniformises $\tra$ and $a^n$ is the only output
    to $a^nA$ by $\tra$, we have $(a^nA, a^n)\in\rel{D}$. Therefore,
    there exist $u\in(\inouta)^*$ and $n\geq c\geq 0$ such that $(i)$
    $uA^\inp (a^\outp)^c\in \lang{D}$ $(ii)$ $\pi_\inp(u) = a^n$ and
    $\pi_\outp(u) = a^{n-c}$. Since $D$ is sequential, there exists
    $c'\geq 0$ such that $uB^\inp (a^\outp)^{c'}\in\lang{D}$. Now, we
    have, by Lemma~\ref{lem:lag}, for all $v\in(\inouta)^*$ such
    that $\pi_\inp(u) = \pi_\inp(v) = a^n$, 
    $$
    \begin{array}{rcl}
      \lag(uA^\inp (a^\outp)^c, (a^\inp a^\outp)^n A^\inp) & \geq & \lag(u,
                                                                    (a^i a^\outp)^n) \\
      \lag(uB^\inp (a^\outp)^{c'}, v B^\inp) & \geq & \lag(u,v)
    \end{array}
    $$
    Since $\lag$ is a metric, we get:
    $$
    \begin{array}{rcl}
      \lag(uA^\inp (a^\outp)^c, (a^\inp a^\outp)^n A^\inp) +
      \lag(uB^\inp (a^\outp)^{c'}, v B^\inp)  &  \geq & \lag(u,
                                                                    (a^i
                                                        a^\outp)^n) +
                                                        \lag(u,v)\\
      & \geq & \lag((a^ia^\outp)^n,v)
    \end{array}
    $$
    By taking $v$ of the form $(a^\inp)^{n_1} a^\inp \alpha  (a^\inp a^\outp
    a^\outp)^{n_2}$ as in the claim, such that $n = n_1+n_2+1$, one can apply the claim to
    $\lag((a^ia^\outp)^n,v)$ and we get:
    $$      \lag(uA^\inp (a^\outp)^c, (a^\inp a^\outp)^n A^\inp) +
      \lag(uB^\inp (a^\outp)^{c'}, v B^\inp)\geq \frac{n-1}{3}
      $$
      Therefore, we have:
    $$     
    \lag(uA^\inp (a^\outp)^c, (a^\inp a^\outp)^n A^\inp)\geq \frac{n-1}{6}
    \text{ or }
    \lag(uB^\inp (a^\outp)^{c'}, v B^\inp)\geq \frac{n-1}{6}
    $$
      
      Finally, since $(a^\inp a^\outp)^n A^\inp,vB^\inp\in \lang{T}$
      and $uA^\inp (a^\outp)^c,uB^\inp
      (a^\outp)^{c'}\in\lang{D}$, since $(a^\inp a^\outp)^n A^\inp$
      and $uA^\inp (a^\outp)^c$ are equivalent, and since 
      $uB^\inp (a^\outp)^{c'}$ and $v B^\inp$ are equivalent, we have
      found two pairs of equivalent words such that either the first
      pair or the second one has a lag larger than $\frac{n-1}{6}$.
      Since this holds for all $n\geq 1$, $D$ generates arbitrary
      large lags with $\tra$, which contradicts the fact that
      $D\subseteq_k \tra$ for some $k$.
\end{proof}

Note that the latter proposition has been shown for a unary alphabet, for
inclusion and equivalence. We can also strengthen it to a unary
alphabet for sequential uniformisation by replacing in $\tra$
(Fig.\ref{fig:notk}) the property ``ending with $A$'' by
``having an odd number of $a$ symbols'', and ``ending with $B$'' by
``having an even number of $a$ symbols''.

\thmsyncdel*

\begin{proof}
    It suffices to show statement $(1)$. Indeed, statement $(2)$ is
    clearly a consequence of $(1)$. To show $(3)$, assume that $(1)$
    holds and $\tra$ is $\sync$-uniformisable by a sequential
    transducer $U$. Then, $U\subseteq_\sync \tra$, and by $(1)$,
    $U$ is also a $k$-uniformiser.

    The proof of statement $(1)$ is based on the following claim:
    there exists a computable $k\geq 0$ such that for
    all $u\in \lang{\tra_1}$, all $v\in \lang{\tra_2}$ such that $u\in\sync(v)$, $\lag(u,v)\leq k$. 
    Before proving this claim, let us show it implies $(1)$. Suppose
    that $\tra_1\subseteq_\sync \tra_2$, and let
    $u\in\lang{\tra_1}$, then there
    exists $v\in\lang{\tra_2}$ such that $u\in \sync(v)$. By the
    claim, $\lag(u,v)\leq k$ and therefore
    $u\in\dsync_k(\lang{\tra_2})$.

    It remains to prove the claim. 
    The idea is that $\sync$ can advance or delay the productions of
    $\tra_2$, but since it has to preserve the equivalence
    $\sim_{\inout}$ between inputs and outputs, it cannot advance or
    delay arbitrarily far with a finite set of states, as long as
    $\tra_1$ and $\tra_2$ are real-time.

    % We define a mapping $\ell : (\inouta)^2\rightarrow \mathbb{N}$
    % that returns the length of delay between two words $u$ and $v$ up to their common 
    % input projection. That is, if $u = u_1 i_1 u_2 i_2\dots u_n i_n
    % u_{n+1}$ and $v = v_1 j_1 v_2 j_2 \dots v_m j_m v_{m+1}$, where 
    % $u_i,v_i\in\outa^*$ and $i_p,j_p\in\ina$, and if $p_0$ is the
    % largest index such that $i_1\dots i_{p_0} = j_1\dots j_{p_0}$,
    % then $\ell(u,v) = |\del(u_1 i_1 \dots i_{p_0} u_{p_0}, v_1j_1\dots
    % j_{p_0} v_{p_0})|$. Note that if $u\sim_{\inout} v$, then
    % $\ell(u,v) = |\del(u,v)|$. 

    Now, take $u\in\lang{\tra_1}$ and $v\in\lang{\tra_2}$ such that
    $u\in \sync(v)$. By definition of synchronisers, $u\sim_{\inout}
    v$ and therefore they can be decomposed into 
    $u = u_1 i_1 \dots u_n i_n u_{n+1}$ and $v = v_1 i_1 \dots v_n i_n
    v_{n+1}$ such that $u_i,v_i\in \outa^*$ and $i_j \in \ina$. 
    Since $\tra_1$ and $\tra_2$ are real-time, 
    there exists $M$ such that $|u_i|,|v_i|\leq M$ for all
    $i=1,\dots,n+1$ and this $M$ only depends on $\tra_1$ and
    $\tra_2$. Let $\tra_\sync$ be the transducer defining $\sync$, and
    consider the transducer $\tra'_\sync$ obtained by restricting the domain
    of $\tra_\sync$ to $\lang{\tra_2}$ and its range to
    $\lang{\tra_1}$ (it can be easily defined by a
    product construction between $\tra_\sync$ and the underlying
    automata of $\tra_1$ and $\tra_2$). Consider an accepting run $r$
    of $\tra'_\sync$
    on input $v$ and output $u$. Now, 
    assume there is a loop in $r$, i.e. $r$ can be decomposed into 
    $$
    r\ :\ \alpha_0\xrightarrow{w_1\mid w'_1}
    \alpha\xrightarrow{w_2\mid w'_2} \alpha \xrightarrow{w_3\mid w'_3} \alpha_f
    $$
    such that $w_1w_2w_3 = v$ and $w'_1w'_2w'_3 = u$, and
    $w_2\neq\epsilon$ or $w'_2\neq \epsilon$. First, by definition of
    synchronisers, for all $i\geq 0$, we have $w_1(w_2)^i w_3
    \sim_{\inout} w'_1(w'_2)^i w'_3$. Let us show that it implies 
    $|\pi_{\inp}(w_2)| = |\pi_{\inp}(w'_2)|$ and $|\pi_{\outp}(w_2)| =
    |\pi_{\outp}(w'_2)|$. Let $x\in\{\inp,\outp\}$. For all $i\geq 0$,
    we have $\pi_x(w_1(w_2)^iw_3) = \pi_x(w'_1(w'_2)^iw'_3)$, i.e. 
    $\pi_x(w_1)\pi_x(w_2)^i\pi_x(w_3) =
    \pi_x(w'_1)\pi_x(w'_2)^i\pi_x(w'_3)$. It implies that 
    $$
|\pi_x(w_1)|+i|\pi_x(w_2)|+|\pi_x(w_3)| =
|\pi_x(w'_1)|+i|\pi_x(w'_2)|+|\pi_x(w'_3)|\text{ for all $i\geq 0$}
$$
and therefore $|\pi_x(w_2)| = |\pi_x(w'_2)|$, and therefore, both
$w_2$ and $w'_2$ are non-empty and $|w_2| = |w'_2|$.

    % Now, suppose that $\pi_{\inp}(w_2)  = 
    % \pi_{\outp}(w'_2) = \epsilon$. It implies that
    % $w_2,w'_2\in\outa^*$. If $w_2\neq \epsilon$, then by iterating the
    % loop sufficiently many times, say $i$ times, we get an input word 
    % $w_1w_2^i w_3$ such that $w_1w_2^i w_3\in \lang{\tra_2}$ and
    % $|w_2^i| > M$, which contradicts the fact that $\tra_2$ is
    % real-time. Similarly, using the fact that $\tra_1$ is real-time,
    % one can show that $w'_2\neq \epsilon$ yields a contradiction. So
    % far, we have shown that 
    % $w_2$ and $w'_2$ have the same sequence of
    % input symbols, and that sequence contains at least one symbol. 

    Now, take $p\in\{1,\dots,n\}$. We will bound the value
    $|\del(u_1\dots u_p, v_1\dots v_p)|$, thus proving the claim. 
    There are two cases: $u_1\dots u_p\preceq v_1\dots v_p$ or $v_1\dots v_p\preceq
    u_1\dots u_p$. We only consider the case $v_1\dots v_p \preceq
    u_1\dots u_p$, the other being symmetric.

Now, we decompose the run $r$ up to the transition reading input
symbol $i_p$ by taking maximal loops, as follows:
$$
\alpha_0 \xrightarrow{t_1\mid t'_1} \alpha_1\xrightarrow{w_1\mid w'_1}
\alpha_1\xrightarrow{t_2\mid t'_2} \alpha_2\xrightarrow{w_2\mid w'_2} \alpha_2\dots
\alpha_{\ell-1}\xrightarrow{t_\ell\mid t'_\ell}\alpha_\ell\xrightarrow{w_\ell\mid w'_\ell} \alpha_\ell
\xrightarrow{t_{\ell+1}i_ps_{\ell+1}\mid t'_{\ell+1}} \beta
$$
such that $w_i\neq \epsilon$, $t_1w_1\dots t_\ell w_\ell t_{\ell+1} =
v_1i_1v_2i_2\dots v_p$, $\alpha_\ell
\xrightarrow{t_{\ell+1}i_ps_{\ell+1}\mid t'_{\ell+1}} \beta$ is a
single transition of $\tra'_\sync$ and the decomposition is done as follows:
After reading $t_1$, $\alpha_1$ is the first occurrence of a
state that repeats later on (but before reading $i_p$), and after reading $t_1w_1$, it is the
last occurrence of $\alpha_1$. Then, after reading $t_2$, $\alpha_2$ is
the first state occurring after $\alpha_1$ that repeats later on,
whose lost occurrence is after reading $w_2$, and so on.

Suppose that $\tra'_\sync$ has 
    $m$ states, and assume that $M_\sync$ is the largest length of an input word
    occurring on the transitions of $\tra'_\sync$. Then,
    $|t_1|+|t_2|+\dots +|t_{\ell}| \leq m.M_{\sync}$, otherwise
    there would be a repeating state occurring in the subwords
    $t_i$. Moreover for all $i\in\{1,\dots, \ell\}$, all
    $x\in\{\inp,\outp\}$, we have
    $|\pi_{x}(w_i)| = |\pi_{x}(w'_i)|$. Now, we decompose $u$ up to
    the input symbol $i_p$ according to the loops, $u_1i_1\dots u_p = t'_1w'_1\dots
    t'_\ell w'_\ell t'$ for some $t'\in\inouta^*$. Now, we have:
    $$
    \begin{array}{lllllllll}
    |\del(u_1\dots u_p,v_1\dots v_p)| & = & |u_1\dots u_p| - |v_1\dots
      v_p| \\
      & = & |t'_1w'_1\dots t'_\ell w'_\ell t'| - |t_1w_1\dots w_\ell
            t_{\ell+1}| \\
      & = & |t'_1t'_2\dots t'_\ell t'| - |t_1t_2\dots t_{\ell+1}| \\
      &\leq & |t'_1t'_2\dots t'_\ell t'|
    \end{array}
    $$
    Since for all $i\in\{1,\dots,\ell\}$, $|\pi_{\inp}(w_i)| =
    |\pi_{\inp}(w'_i)|$ and $|\pi_{\inp}(t'_1w'_1\dots t'_\ell w'_\ell
    t')| = |\pi_{\inp}(t_1w_1\dots w_\ell t_{\ell+1})|$, we also have
    $|\pi_{\inp}(t'_1t'_2\dots t'_\ell t')| = |\pi_{\inp}(t_1t_2\dots
    t_{\ell+1})|$. Since $\tra_1$ is real-time, we also have that 
    $|t'_1\dots t'_\ell t'| \leq M.|\pi_{\inp}(t'_1t'_2\dots t'_\ell
    t')|$. Finally, as $|t_1|+\dots + |t_\ell| \leq m.M_\sync$ and
    $|t_{\ell+1}|\leq M_\sync$ (as the transition from $\alpha_\ell$
    to $\beta$ is a single transition), we get $|\pi_{\inp}(t_1t_2\dots
    t_{\ell+1})|\leq (m+1).M_\sync$, and therefore 
    $|t'_1\dots t'_\ell t'|\leq (m+1).M.M_\sync$, i.e. 
    $|\del(u_1\dots u_p,v_1\dots v_p)|\leq (m+1).M.M_\sync$. It
    suffices to take $k = (m+1).M.M_\sync$ to conclude. 
\end{proof}

\section{Details for Section \ref{sec:fvalued}}

\subsection{Proof of Theorem \ref{lem:finitelyambiguous}}

We will need the following two lemmas that express properties about
delays. 

\begin{lemma}\label{lem:sizedelay}
    Let $u_1,u_2,u_3,v_1,v_2,v_3\in \Sigma^*$, then:
    $$
    |\del(u_1u_2u_3,v_1v_2v_3)|\leq |\del(u_1u_2,v_1v_2)|+|u_3v_3|
    $$
\end{lemma}
\begin{proof}
    Let $\beta,\alpha\in\Sigma^*$ such that $\beta^{-1}\alpha =
    v_2^{-1}v_1^{-1}u_1u_2$ and $\beta^{-1}\alpha$ is irreducible. 
    Therefore $|v_2^{-1}v_1^{-1}u_1u_2| = |\beta|+|\alpha|$. 
    Clearly, $v_3^{-1}v_2^{-1}v_1^{-1}u_1u_2u_3 =
    v_3^{-1}\beta^{-1}\alpha u_3$, and therefore 
    $|v_3^{-1}v_2^{-1}v_1^{-1}u_1u_2u_3| \leq
    |v_3|+|\beta|+|\alpha|+|u_3| = 
    |v_2^{-1}v_1^{-1}u_1u_2| + |u_3v_3|$.
\end{proof}

The following lemma is a folklore result that we prove for the sake of
completeness, as we use it intensively in this section. 

\begin{lemma}\label{lem:folkore}
    Let $v_1,w_1,v_2,w_2\in \Sigma^*$, then:
    $$
\begin{array}{c}
\del(v_1,w_1)\neq \del(v_1v_2,w_1w_2) \\ \implies \\ \forall 0\leq
  i<j,\ \del(v_1v_2^i,w_1w_2^i)\neq \del(v_1v_2^j,w_1w_2^j)
\end{array}
    $$
\end{lemma}

\begin{proof}
    First, note that $v_2$ and $w_2$ are not both equal to $\epsilon$, since $\del(v_1,w_1)\neq \del(v_1v_2,w_1w_2)$.
    Suppose that $v_1$ is not a prefix of $w_1$ and $w_1$ is not a
    prefix of $v_1$, i.e. $v_1 = u \alpha v'_1$ and $w_1 = u\beta
    w'_1$ for $u,v'_1,w'_1\in\Sigma^*$, $\alpha,\beta\in \Sigma$ and
    $\alpha \neq \beta$.

    Then for all $0\leq i<j$, the following two words in $(\Sigma\cup \overline{\Sigma})^*$
    $$
    \begin{array}{lllllllllllll}
    & \del(v_1v_2^i,w_1w_2^i) & = &
    v_2^{-i}v_1^{-1}w_1w_2^i & = & v_2^{-i} v'^{-1}_1\alpha^{-1}\beta
    w'_1w_2^i \\
    \text{and} & \del(v_1v_2^j,w_1w_2^j) & = &
    v_2^{-j}v_1^{-1}w_1w_2^j & = & v_2^{-j} v'^{-1}_1\alpha^{-1}\beta
    w'_1w_2^j \\
    \end{array}
    $$ 
    are both irreducible, and since they have different lengths
    ($i\neq j$), they are different.

    Assume now that $w_1 = v_1s$ for some $s\in\Sigma^*$ (the
    case where $w_1$ is a prefix of $v_1$ is symmetric and therefore
    untreated). Then $\del(v_1,w_1) = s$. Let $0\leq i<j$, we have:
    $$
    \begin{array}{lllllllllllll}
    & \del(v_1v_2^i,w_1w_2^i) & = &
    v_2^{-i}sw_2^i \\
    \text{and} & \del(v_1v_2^j,w_1w_2^j) & = &
    v_2^{-j}sw_2^j
    \end{array}
    $$ 
    Suppose that they are equal and let us derive a contradiction,
    i.e. suppose that  $v_2^{-i}sw_2^i = v_2^{-j}sw_2^j$. It is
    equivalent to $v_2^{j-i}s = s w_2^{j-i}$. Since $j>i$, we get
    $|w_2| = |v_2|$ and since $\del(v_1,w_1)\neq \del(v_1v_2,w_1w_2)$,
    we get that $v_2\neq \epsilon$ and $w_2\neq \epsilon$. 
    From the equality $v_2^{j-i}s = s w_2^{j-i}$ and the fact that
    $v_2^{j-i}$ and $w_2^{j-i}$ are nonempty, and $|v_2|=|w_2|$, a fundamental property
    of word conjugacy (see for instance 
    Proposition 1.3.4 of \cite{Lot83}) implies that there exist
    $p,q\in\Sigma^*$, $k>0$ and $\ell\geq 0$ such that 
    $$v_2 = (pq)^k\qquad w_2 =
    (qp)^k\qquad s = p(qp)^{\ell}$$ 
    Therefore, 
    $$
\begin{array}{llllll}
    \del(v_1v_2,w_1w_2) = v_2^{-1}s w_2 = (pq)^{-k} p(qp)^\ell (qp)^k
    & = p(qp)^{\ell} (pq)^{-k} (qp)^k \\ & = p(qp)^\ell = s = \del(v_1,w_1)
\end{array}
    $$
    from which we get a contradiction.
\end{proof}

We also need the following lemma that decomposes two words with a
large lag into smaller words with strictly increasing delays, the
number of which is greater than a value that depends on the lag.

\begin{lemma}\label{lem:delayincreasing}
    Let $w,w'\in\inouta$ such that $\pi_\inp(w) = \pi_\inp(w')$,
    i.e. $w$ and $w'$ can be decomposed into
$$w =  a_1s_1a_2s_2\dots a_ns_n\qquad w' = a_1t_1a_2t_2\dots a_nt_n$$
where $a_i\in \ina$ and $s_i,t_i\in \outa^*$.

For all $O,D\in \mathbb{N}$, if $|s_i|\leq O$, $|t_i|\leq O$ for all
$i\in\{1,\dots,n\}$, and if $\lag(w,w')\geq 2DO$, then there exist 
$\lambda > D$ and a sequence $1 = i_0\leq i_1<\dots < i_\lambda\leq
n$ such that for all $j\in\{0,\dots,\lambda-1\}$, we have 
$$
|\del(s_1\dots s_{i_j}, t_1\dots t_{i_j})|< |\del(s_1\dots s_{i_{j+1}}, t_1\dots t_{i_{j+1}})|
$$
\end{lemma}

\begin{proof}
Assume first that such a $\lambda$ exists. One defines the sequence 
$1 = i_0 < i_1 <\dots < i_{\lambda-1} < i_\lambda\leq n$ as follows:
For all  $0<j<\lambda$, 
$$
i_{j+1} = \text{min}\ \{ i\mid i_j < i \leq n, |\del(s_1\dots
s_{i_j}, t_1\dots t_{i_j})| < |\del(s_1\dots s_{i}, t_1\dots t_{i})|\}
$$
Let us now establish a lower bound for the value $\lambda$. Let $0\leq j< \lambda$. By
definition of the sequence, we have:
$$
\begin{array}{l}
|\del(s_1\dots
s_{i_{j+1}-1},t_1\dots t_{i_{j+1}-1})| \leq \\
\qquad\qquad  \qquad\qquad|\del(s_1\dots s_{i_j},t_1\dots t_{i_j})|  <  \\
\qquad\qquad\qquad\qquad\qquad\qquad\qquad\qquad|\del(s_1\dots
s_{i_{j+1}},t_1\dots t_{i_{j+1}})|
 \end{array}
$$
By Lemma \ref{lem:sizedelay} we get:
$$
|\del(s_1\dots s_{i_{j+1}},t_1\dots t_{i_{j+1}})| < |\del(s_1\dots
s_{i_{j+1}-1},t_1\dots t_{i_{j+1}-1})| + |s_{i_{j+1}}t_{i_{j+1}}|
$$
and therefore 
$$
\begin{array}{lllllll}
|\del(s_1\dots s_{i_{j+1}},t_1\dots t_{i_{j+1}})| & < & |\del(s_1\dots
s_{i_{j}},t_1\dots t_{i_{j}})| + |s_{i_{j+1}}t_{i_{j+1}}| \\
 & \leq & |\del(s_1\dots
s_{i_{j}},t_1\dots t_{i_{j}})| + 2O
\end{array}
$$

Therefore, in between any two successive indices $i_j$ and $i_{j+1}$,
the delay can increase of at most $2O$. Since $\lag(w,w') > k =
2OD$, by definition of lag, the maximal length of
the delay between prefixes of $w$ and $w'$ is at least
$2OD$. Therefore, the delay has to increase
of at least $D$ symbols to reach a delay of
length greater than $2OD$. Therefore, $\lambda> D$.
\end{proof}

We will also need the following lemma which states that for any finitely
ambiguous automaton $A$, long enough words $u$ can be decomposed into 
$u = u_1u_2u_3$ such that $u_2\neq \epsilon$ and iterating $u_2$ does
not increase or decrease the number of accepting runs of $A$. 
For $u\in\Sigma^*$, we denote by $\#_A(u)$ the number of accepting
runs of $A$ on $u$. We state the latter result more generally for
long enough concatenation of words. This formulation will ease the
proof of Theorem~\ref{lem:finitelyambiguous}.

\begin{lemma}\label{lem:nomoreruns}
    Let $A$ be an $m$-ambiguous automaton with $n$ states. Let $N>
    n^m2^n$ and $u_1,\dots,u_N\in \Sigma^+$. Then, there exist two
    integers $\ell_1\leq \ell_2$ in $\{1,\dots,N\}$ such that 
    $$\#_A(u_1\dots u_N) = \#_A(u_1\dots u_{\ell_1-1}(u_{\ell_1}\dots
    u_{\ell_2})^iu_{\ell_2+1}\dots  u_N) \text{ for all } i\geq 1$$

    Moreover for $k = \#_A(u_1\dots u_N)$, there exists $k$ states 
    $q_1,\dots,q_k$, $k$ initial states $p_1,\dots,p_k$ and $k$ final
    states $s_1,\dots,s_k$ such that for all $i\geq 0$, we have:
    $$
    \begin{array}{lllllllllllllllllll}
      p_1 & \xrightarrow{u_1\dots u_{\ell_1-1}} & q_1 & \xrightarrow{(u_{\ell_1}\dots
    u_{\ell_2})^i} & q_1 & \xrightarrow{u_{\ell_2+1}\dots u_N} & s_1
      \\
\dots\\
      p_k & \xrightarrow{u_1\dots u_{\ell_1-1}} & q_k & \xrightarrow{(u_{\ell_1}\dots
    u_{\ell_2})^i} & q_k & \xrightarrow{u_{\ell_2+1}\dots u_N} & s_k
      \\
    \end{array}
    $$
    and any accepting run on $u_1\dots u_{\ell_1-1}(u_{\ell_1}\dots
    u_{\ell_2})^iu_{\ell_2+1}\dots  u_N$ has one of the above form. 
\end{lemma}

\begin{proof}
    Let $u = u_1\dots u_N$ and $k = \#_A(u)$. Let us denote by $r_1,\dots,r_k$ all the
    (pairwise different) accepting runs of $A$ on
    $u$. For all $i\in\{0,\dots,N\}$ and all runs $r_j$,
    denote by $r_j[i]$ the state of $r_j$ after
    reading the prefix $u_1\dots u_{i}$ (where $u_1\dots u_i = \epsilon$ if $i=0$), and by
    $R_i$ the set of states $q$ such that there exists a run of $A$
    from some initial state to $q$ on the prefix $u_1\dots u_i$.

    Since $N>n^m 2^n$ and $m\geq k$, there exist two integers
    $i_1<i_2$ in $\{0,\dots,N\}$ such that for all $j\in\{1,\dots,k\}$, 
    $r_j[i_1] = r_j[i_2]$ and $R_{i_1} = R_{i_2}$. In other words, the
    two positions $i_1$ and $i_2$ form a cycle in each of the runs $r_j$, 
    and at these two positions, the set of states reached by $A$ 
    are the same. We finally decompose $u$ into
    $$
    v_1 = u_1\dots u_{i_1}\quad v_2 = u_{i_1+1}\dots u_{i_2}\quad
    v_3 = u_{i_2+1}\dots u_N
$$ 
    % $v_2 = u_1\dots u_{i_1}$ u[i_1{+}1{:} i_2{-}1]$ and $v_3 = u[i_2{:}]$ (if $i_2 =
    % i_1+1$, then $u[i_1{+}1{:} i_2{-}1]=\epsilon$). 

    We now show that $\#_A(v_1v_2v_3) = \#_A(v_1v_2v_2v_3)$ (which
    easily generalises to $\#_A(v_1v_2v_3) = \#_A(v_1(v_2)^iv_3)$ for
    all $i\geq 1$, and gives the lemma by taking $\ell_1 = i_1+1$ and
    $\ell_2 = i_2)$.

    There at least $k$ accepting runs on $v_1v_2v_2v_3$ obtained by
    iterating the loops on $v_2$ of the $k$ accepting runs on
    $v_1v_2v_3$. Therefore $\#_A(v_1v_2v_3) \leq
    \#_A(v_1v_2v_2v_3)$. 

    To show that $\#_A(v_1v_2v_3) \geq \#_A(v_1v_2v_2v_3)$, take an accepting run $\rho$ on $v_1v_2v_2v_3$. Let $r_1,r_2,r_3$ be the states
    of $\rho$ after reading $v_1$, $v_1v_2$ and $v_1v_2v_2$
    respectively. 
    We decompose $\rho$ into $\rho_1\rho_2\rho'_2\rho_3$ where $\rho_1$ is
    the part of $\rho$ on $v_1$, $\rho_2$ on $v_2$ from $r_1$ to $r_2$,
    $\rho'_2$ on $v_2$ from $r_2$ to $r_3$ and $\rho_3$ on $v_3$ from
    $r_3$. 
    We show that $r_1 = r_2 = r_3$.
    As $A$ is finitely ambiguous, there exists at most one loop over $r_2$ on input $v_2$, 
    hence this implies that $\rho_2 = \rho_2'$, which means that $\rho$
    is obtained by iterating once the loop on $v_2$ of a run on
    $v_1v_2v_3$, and that $\#_A(v_1v_2v_3) \geq
    \#_A(v_1v_2v_2v_3)$. Note that the sets of states reached by $A$ after
    reading $v_1$, $v_1v_2$ and $v_1v_2v_2$ are all equal to 
    $R := R_{i_1} = R_{i_2}$, and therefore $r_1,r_2,r_3\in R$. 
    Therefore, there exists a run $\rho'_1$ of $A$ on $v_1$
    from an initial state to $r_2$, and hence, the run
    $\rho'_1\rho'_2\rho_3$ is accepting on $v_1v_2v_3$, which implies
    that $r_2=r_3$ since on $v_1v_2v_3$, all the accepting runs loop
    on $v_2$ by assumption. From $r_2=r_3$, we get that 
    $\rho_1\rho_2\rho_3$ is a run of $A$ on $v_1v_2v_3$, and therefore
    $r_1 = r_2$ for the same reason as before. 
\end{proof}

We now proceed to the proof of Theorem~\ref{lem:finitelyambiguous},
restated below. 

\thmfinitelyambiguous*

\begin{proof}
Let $M = max \{ |v|\mid \exists (q,u,v,q')\in\Delta_1\cup \Delta_2, \text{
  or } v = f_j(q_j) \text{ for some } q_j\in Q_j\}$. We let 
$$k = 4M |Q_1|(
(|\Delta_2||Q_2|)^m2^{|Q_2||\Delta_2|}+1)$$
As explained in the sketch of proof, a \emph{witness} is a word $w \in \Sigma^*$ such that for all words
    $w'\in \lang{\tra_2}$ equivalent to $w$, $\lag(w,w')>k$, and the
    following claim implies a contradiction if we assume that
    $\tra_1\not\subseteq_k \tra_2$:
\begin{claim}
For all witnesses $w\in\lang{\tra_1}$, there exists a witness
      $t\in\lang{\tra_1}$ such that $N_{\tra_2}(t)<N_{\tra_2}(w)$.
\end{claim}

\noindent\textbf{Proof overview of the claim} Let $A_1,A_2$ be the underlying automata
of $\tra_1$ and $\tra_2$ respectively. Note that since $\tra_1$ and $\tra_2$ are
real-time, the transitions of $A_1$ and $A_2$ are labeled by words in 
$\ina(\outa)^*$. Wlog we assume that $\tra_2$ satisfies the following property: 
for any two transitions $(p,a,w_1,q_1), (p,a,w_2,q_2)$ of $\tra_2$, if
$w_1\neq w_2$, then $q_1\neq q_2$. This can be enforced by taking the
product of the states of $\tra_2$ with its set of transitions (this is
why we have the factor $|\Delta_2|$ in the constant $k$). With
such an assumption, for all words $u\in\Sigma^*$, there is a one-to-one correspondence between the
accepting runs of $\tra_2$ on $u$ and that of its input automaton
$I_2$ on $u$. Therefore, $I_2$ is $m$-ambiguous.

Now, suppose that $w$ is a witness, and let $u =
\pi_\inp(w)$ the input of $w$, $v = \pi_\outp(w)$ the output of $w$. 
We consider the set of accepting runs $R = \{\rho_1,\dots,\rho_\alpha\}$ of $A_2$ on the words $w'$ such
that $\pi_\inp(w') = u$, and an accepting run $\rho$ of $A_1$ on $w$. 
We exhibit a decomposition of the runs $\rho_i$ and the run $\rho$ such that all
these runs loop synchronously on the same input factor, and some run
$\rho_\beta$ has the same output as $\rho$, but a different output
delay with $\rho$ before and after the loop. The idea is that by
iterating this loop a sufficient number of times $\ell$, we know,
based on a folklore result about transducer delays, that it will generate a
delay so long that it will not be recovered by the output suffixes of
$\rho$ and $\rho_\beta$ after the loop. Therefore, the iterations of
$\rho$ and $\rho_\beta$ $\ell$ times, denoted by $\rho^{(\ell)}$ and
$\rho_\beta^{(\ell)}$, will have different outputs. We choose this
loop according to Lemma \ref{lem:nomoreruns} applied on $I_2$ to
make sure that we do not generate accepting runs that are not
iterated versions of the accepting runs of $A_2$.

The word $w_\ell$ accepted by $\rho^{(\ell)}$ is a good candidate to be
a new witness, because we have ruled out one run $\rho_\beta$ of $A_2$
whose accepted word was equivalent to $w$, in the sense that
$\rho^{(\ell)}$ and $\rho_\beta^{(\ell)}$ does not accept equivalent
words anymore. However, $w_\ell$ may now be equivalent to the word
accepted by the iteration of another accepting run $\rho_{\gamma}$ in
$R$, while $\rho_\gamma$ was accepting a word non-equivalent to
$w$. We show that by taking $\ell$ large enough, this cannot happen, i.e., necessarily $\rho_\gamma$ and $\rho$
were accepting equivalent words and their delay after and before the
loop were equal (otherwise we could have made their iterated output
different by taking a sufficiently large $\ell$). If they have the
same delay before and after the loop, then iterating the loop does not
change the overall lag. Therefore, if by iterating the loop a
sufficient number of times $\ell$, the word $w'_\ell$ accepted by some iterated run
$\rho_\gamma^{(\ell)}$ is equivalent to $w_\ell$, then
$\lag(w_\ell,w'_\ell) > k$. 

\vspace{2mm}
\noindent \textbf{Detailed proof of the claim} Let $u=\pi_\inp(w)$, $v =
\pi_\outp(w)$, $\rho$ be an accepting run of $A_1$ on $w$, and let
$R =\{\rho_1,\dots,\rho_\alpha\}$ the set of accepting runs of $A_2$ on
words $w'$ such that $\pi_\inp(w) = \pi_\inp(w')$. For a run $\rho_i$,
we denote by $w_i$ the word it accepts. The set $R$ can be
partitioned into two sets $R_=$ and $R_{\neq}$ depending on whether
$\pi_\outp(w_i) = v$ or $\pi_\outp(w_i)\neq v$. Since $\tra_1\subseteq
\tra_2$, we have $R_=\neq \varnothing$. Wlog, assume
that $\rho_1 \in R_=$. By assumption we have $\lag(w,w_1)>k$.

Since $w$ and $w_1$
are equivalent and $\tra_1, \tra_2$ are real-time, they can be decomposed
into: 
$$
\begin{array}{llllllll}
  w & = & a_1s_1a_2s_2\dots a_ns_n & \qquad & w_1 & = & a_1t_1a_2t_2\dots a_nt_n
\end{array}
$$
such that $a_i\in\ina$, $s_i,t_i\in(\outa)^*$ and $s_1\dots s_n =
t_1\dots t_n$. 

By Lemma \ref{lem:delayincreasing} (applied\footnote{It is not $2M$ because we
have to consider the case where $s_n$ or $t_n$ could be the
concatenation of  a word on a transition and a word produced by the
output functions} with $D = k =
|Q_1|(|Q_2|^m 2^{|Q_2|}+1)$ and $O = 2M$), there exist $\lambda > k$ and a sequence 
$1 = i_0 < i_1 <\dots < i_{\lambda-1} < i_\lambda\leq n$
such that for all $0<j<\lambda$, 
$$
i_{j+1} = \text{min}\ \{ i\mid i_j < i \leq n, |\del(s_1\dots
s_{i_j}, t_1\dots t_{i_j})| < |\del(s_1\dots s_{i}, t_1\dots t_{i})|\}
$$

Now, consider the run $\rho$ of $A_1$ on $w$. Since $\lambda> |Q_1|(|Q_2|^m
2^{|Q_2|}+1)$, there is a state that repeats $N+1$ times, with $N >
|Q_2|^m 2^{|Q_2|}$, on reading the prefixes $a_1s_1\dots a_{i_j} s_{i_j}$ for all
$j\in\{ 0,\dots,\lambda\}$. Formally, there exists a subsequence $j_0,\dots,
j_N\in\{i_0,\dots,i_\lambda\}$ such that $j_0<j_1<\dots<j_N$ and a state
$q\in Q_1$ such that after reading $a_1s_1\dots a_{j_\ell}
s_{j_\ell}$, for $\ell\in\{0,\dots,N\}$, $\rho$ is in state $q$.

We decompose the input word $u$ into $u_1\dots u_N$ according to this latter sequence of
indices: For all $\ell\in\{1,\dots,N\}$, let $u_\ell = a_{j_{\ell-1}}\dots a_{j_\ell}$. Since
$N> |Q_2|^m 2^{|Q_2|}$, we can now apply Lemma
\ref{lem:nomoreruns} on the input automaton of $\tra_2$
and get a decomposition of $u$ into $v_1v_2v_3$ such that, when
iterating $v_2$, the number of accepting runs of $\tra_2$ on the
iterated input stays the same. Let $v_1 = a_1\dots a_{i}$, $v_2 =
a_{i+1}\dots a_j$ and $v_3 = a_{j+1}\dots a_n$ for some $i< j$. By the choice of our
decomposition and Lemma
\ref{lem:nomoreruns}, we obtain that for all $\ell\geq 0$, there are
exactly $\alpha$ accepting runs
$\rho_1^{(\ell)},\dots,\rho_\alpha^{(\ell)}$ of $A_2$ on words $u'$
such that $\pi_\inp(u') = v_1v_2^\ell v_3$, and accepting run of $A_1$
on $a_1s_1\dots a_{i}s_{i}(a_{i+1}s_{i+1}\dots a_js_j)^\ell
a_{j+1}s_{j+1}\dots a_ns_n$ which have the following form:
    $$
    \begin{array}{lclclclcllllllllllllll}
      \rho_\alpha^{(\ell)} & : & p_{\alpha} &
                                              \xrightarrow{a_1s_{\alpha1}\dots
                                              a_{i}s_{\alpha i}}
      & q_{\alpha} & \xrightarrow{(a_{i+1}s_{\alpha (i+1)}\dots
    a_js_{\alpha j})^\ell} & q_\alpha & \xrightarrow{a_{j+1}s_{\alpha (j+1)}\dots a_ns_n}
      & r_\alpha \\
 & & \vdots & & \vdots & & \vdots  & & \vdots \\
      \rho_2^{(\ell)} & : & p_2 & \xrightarrow{a_1s_{21}\dots a_{i}s_{2i}} & q_2 & \xrightarrow{(a_{i+1}s_{2(i+1)}\dots
    a_js_{2j})^\ell} & q_2 & \xrightarrow{a_{j+1}s_{2(j+1)}\dots a_ns_n} & r_2
      \\
      \rho_1^{(\ell)} & : & p_1 & \xrightarrow{a_1t_1\dots a_{i}t_{i}} & q_1 & \xrightarrow{(a_{i+1}t_{i+1}\dots
    a_jt_j)^\ell} & q_1 & \xrightarrow{a_{j+1}t_{j+1}\dots a_nt_n} & r_1
      \\

     \rho^{(\ell)} & : &  p & \xrightarrow{a_1s_1\dots a_{i}s_{i}} & q & \xrightarrow{(a_{i+1}s_{i+1}\dots
    a_js_j)^\ell} & q & \xrightarrow{a_{j+1}s_{j+1}\dots a_ns_n} & r
      \\
    \end{array}
    $$
    Moreover, our decomposition  guarantees that
    $|\del(s_1\dots s_i,t_1\dots t_i)|< |\del(s_1\dots s_j,t_1\dots
    t_j)|$ and so $\del(s_1\dots s_i,t_1\dots t_i)\neq \del(s_1\dots s_j,t_1\dots
    t_j)$.

    Now that we have obtained such a decomposition, we can proceed to
    the end of the proof, based on 
    Lemma~\ref{lem:folkore}. This lemma and the fact that 
    $$
    \del(s_1\dots s_i,t_1\dots t_i)\neq \del(s_1\dots s_j,t_1\dots
    t_j)$$
    implies that for a sufficiently large value of $\ell$, the
    delay $$\del(s_1\dots s_i(s_{i+1}\dots s_j)^\ell,t_1\dots
    t_i(t_{i+1}\dots t_j)^\ell)$$ is so long that it cannot be
    recovered with the suffixes $t_{j+1}\dots t_n$ nor $s_{j+1}\dots
    s_n$. And therefore, for some value $\ell_0$ we have:
    $$
    \forall \ell\geq \ell_0,\ s_1\dots s_i(s_{i+1}\dots s_j)^{\ell} s_{j+1}\dots s_n\neq
    t_1\dots t_i (t_{i+1}\dots t_j)^{\ell} t_{j+1}\dots t_n
    $$
    Remind that we had the equality for $\ell = 1$. 

    However it is not sufficient to conclude the proof, because it
    could be the case that
    the output of some run $\rho_\beta$ was different from that of
    $\rho$ but the output of $\rho^{(\ell)}$ is equal to that of
  $\rho_{\beta}^{(\ell)}$ for all $\ell\geq \ell_0$. Suppose it is the case, that is for some $\beta\in\{2,\dots,\alpha\}$ we
  have, for all $\ell\geq \ell_0$:
  $$
  \begin{array}{rcl}
    s_1\dots s_n & \neq & s_{\beta 1}\dots s_{\beta n} \\
    s_1\dots s_i(s_{i+1}\dots s_j)^{\ell}s_{j+1}\dots s_n & = &
                                                                s_{\beta
                                                                1}\dots
                                                                s_{\beta
                                                                i}
                                                                (s_{\beta
                                                                (i+1)}
                                                                \dots
                                                                s_{\beta
                                                                j})^\ell
                                                                s_{\beta
                                                                (j+1)}\dots
                                                                s_{\beta n}
  \end{array}
  $$
Then necessarily $\del(s_1\dots s_i,s_{\beta 1} \dots s_{\beta i})\neq 
\del(s_1\dots s_j,s_{\beta 1} \dots s_{\beta j})$, otherwise by iterating
the loop, the difference  $s_1\dots s_n  \neq  s_{\beta 1}\dots s_{\beta n}$
would be preserved. By applying the same
argument as before (Lemma~\ref{lem:folkore}), one can take some $\ell_\beta$ large enough so
that for all $\ell\geq \ell_\beta$, $$\del(s_1\dots s_i(s_{i+1}\dots s_j)^{\ell},
                                                                s_{\beta
                                                                1}\dots
                                                                s_{\beta
                                                                i}
                                                                (s_{\beta
                                                                (i+1)}
                                                                \dots
                                                                s_{\beta
                                                                j})^\ell)$$
                                                              is so
                                                              large
                                                              that it
                                                              cannot
                                                              be
                                                              recovered,
                                                              and
                                                              therefore
                                                              for all
                                                              $\ell\geq
                                                              \ell_\beta$,
                                                              we have
                                                              necessarily 
    $s_1\dots s_i(s_{i+1}\dots s_j)^{\ell}s_{j+1}\dots s_n \neq
                                                                s_{\beta
                                                                1}\dots
                                                                s_{\beta
                                                                i}
                                                                (s_{\beta
                                                                (i+1)}
                                                                \dots
                                                                s_{\beta
                                                                j})^\ell
                                                                s_{\beta
                                                                (j+1)}\dots
                                                                s_{\beta n}$.

We let $\ell_*$ be the maximal value among $\ell_0$ and $\ell_\beta$
for all such $\beta$. By the choice of $\ell_*$, we obtain the following property: for all
$\ell\geq \ell_*$, 
$$
N_{T_2}(a_1s_1\dots a_is_i(a_{i+1}s_{i+1}\dots a_js_j)^{\ell} a_{j+1} s_{j+1}\dots a_n
s_n < N_{T_2}(w).
$$
Yet it is not sufficient to obtain a new witness $t$ because the
definition of witness requires that $\lag(t,t')>k$ for
all words $t'\in\lang{T_2}$ equivalent to $t$. We finally solve this
problem. Consider some $\beta\in\{2,\dots,n\}$ and $\ell\geq \ell_*$
such that the output $o_\beta$ of $\rho^{(\ell)}$ is equal to the
output $o$ of $\rho_\beta^{(\ell)}$. Then, consider the following two cases:
\begin{enumerate}
  \item $\del(s_1\dots s_i, s_{\beta 1}\dots s_{\beta i}) =
    \del(s_1\dots s_j, s_{\beta 1}\dots s_{\beta j})$
  \item $\del(s_1\dots s_i, s_{\beta 1}\dots s_{\beta i}) \neq
    \del(s_1\dots s_j, s_{\beta 1}\dots s_{\beta j})$
\end{enumerate}
If we are in case $(1)$, then necessarily the output of
$\rho_\alpha$ and $\rho$ were equal, and iteration preserves the
lag. Therefore $\lag(o,o_\beta) > k$. In case $(2)$, it suffices, with
the same arguments as before, to take $\ell$ large enough (larger than
some $\ell_\beta$) so that
necessarily $o$ and $o_\beta$ are different, thus creating a
contradiction. 

Finally, we take $z$ as the maximal value between $\ell_*$ and all the
values $\ell_\beta$ just defined. We let 
$$
t = a_1s_1\dots a_is_i(a_{i+1}s_{i+1}\dots a_js_j)^z a_{j+1}
s_{j+1}\dots a_n s_n
$$
By construction, $t$ is a new witness such that $t\in \lang{T_1}$ and
$N_{T_2}(t) < N_{T_2}(w)$. 
\end{proof}

%%%%%%%%%%%%%%%%%%%%%%%%%%%%%%%%%%%%%%%%%
\subsection{Sequential uniformisation of finite-valued transducers}
%%%%%%%%%%%%%%%%%%%%%%%%%%%%%%%%%%%%%%%%%

The decidability of the sequential uniformisation problem
(Corollary~\ref{fv_unif}) was based on the key statement,
Theorem~\ref{f_a_det}, which reduces, for any finitely ambiguous
transducer $T$, the sequential uniformisation problem to a sequential
$\newbound$-uniformisation problem. The goal of this section is to prove
this theorem. That is, we want to prove the following theorem:

\thmunifvalued*

%%%%%%%%%%%%%%%%%%%%%%%%%%%%%%%%%%%%%%%%%
\paragraph*{Important notations and assumptions}
%%%%%%%%%%%%%%%%%%%%%%%%%%%%%%%%%%%%%%%%%

\begin{itemize}
\item \textbf{Original transducer $T$:} In all this section, $\tra =
(Q_{\tra},I_{\tra},F_{\tra},\Delta_{\tra},f_{\tra})$ denotes a trim
real-time transducer given as a finite union of unambiguous
transducers, assumed to be sequentially uniformisable. 

\item\textbf{Sequential $k$-uniformiser $U$:} As explained in the sketch of proof from Section~\ref{sec:fvalued}, if
$T$ is sequentially uniformisable, there exists an integer that we denote by $k$ in all this
section such that $\tra$ is sequentially
$k$-uniformisable by some sequential transducer that we denote by $U$.

\item\textbf{Infinite Sequential $\newbound$-uniformiser $U'$:} The goal of this
section is to construct a sequential $\newbound$-uniformiser for $T$ with an
infinite number of states, denoted by $U' = (Q', I', F', \Delta',
f')$. We will see (Lemma~\ref{lem:mainlem}) that it implies the existence of a
(finite) sequential $\newbound$-uniformiser for $T$. The value $\newbound$ will be defined later.

\item \textbf{Maximal output length $m_\tra$:} We let $m_{\tra}$ denote the maximal
  length of the output words labelling the transitions of $\tra$.

\item \textbf{Runs of $T$ and $U$} For every state $q\in Q_T$ and word
  $w\in\Sigma^*$, as $\tra$ is trim and a (disjoint) union of
  unambiguous transducers, if there exists a run of $\tra$ on
  $v$ from an initial state to $q$, this run is unique and we denote
  it by $r_{\tra}(w,q)$.

  Similarly, for any sequential transducer $V$, we denote by $r_{V}(w)$ the unique run from $V$ on $w$
  (if it exists). Note that we do not need a target state here because
  $V$ is sequential.

\item \textbf{Lags and delays:} For all $p,q\in Q_T$ and
  $v\in\Sigma^*$, we denote by $\del_{p,q}(v)$ the delay between the
  outputs of the runs $r_{\tra}(v,p)$ and $r_{\tra}(v,q)$, if they
  exist, and by $\lag_{p,q}(v)$ we denote their lag. For any two
  transducers $T_1,T_2$, and runs
  $r_1$ and $r_2$ on the same input $u$, we will denote
  by $\lag(r_1,r_2)$ the lag between the words in $\inouta$ accepted
  by $r_1$ and $r_2$ respectively, assuming that they project on the
  same input. Similarly, we also define $\del(r_1,r_2)$.

\item \textbf{Choice function:} The transducer
  $\uni'$ will be constructed in such a way that it filters out runs 
  of $T$, on the same input, that are far (in terms of lag) to the
  run of $U$. This is formalised through a notion of choice function. 
Formally, for all $d\in\mathbb{N}$, we denote the choice function,
parameterized by $d$, by 
$\choice{d}{\uni} : \Sigma^* \rightarrow 2^{Q_{\tra}}$. It maps each word $v$ to the set
\[
\choice{d}{\uni}(v) = \{ q \in Q_{\tra} | \textup{the run $r_{\tra}(v,q)$ exists, and $\lag(r_{\tra}(v,q),r_{\uni}(v)) \leq d$}\}.
\]
\end{itemize}

\paragraph*{Structure of the proof} In \ref{subsec:expan2unif} we show how to construct
from the sequential uniformiser $U$, a sequential $\newbound$-uniformiser
of $T$, assuming the existence of a function $\rho:\Sigma^*\rightarrow
\Sigma^*$ which satisfies three properties $\prop{1}-\prop{3}$ given
in \ref{subsec:expan2unif}. Then the rest of the section is devoted to
the proof of the existence of such a function $\rho$. In
\ref{subsec:transmod}, we introduce the notion of transition monoid
for transducers, and prove technical lemmas about the structural
properties of idempotent elements of transitions monoids, and
useful properties of delays. In \ref{subsec:defrho}, we proceed with
the definition of $\rho$. Finally, in \ref{subsec:p1}, \ref{subsec:p2}
and \ref{subsec:p3}, we prove that the previously defined function
$\rho$ satisfies the properties $\prop{1}$ to $\prop{3}$
respectively.

%%%%%%%%%%%%%%%%%%%%%%%%%%%%%%%%%%%%%%%%%
\subsubsection{Towards the construction of a bounded delay sequential uniformiser}\label{subsec:expan2unif}
%%%%%%%%%%%%%%%%%%%%%%%%%%%%%%%%%%%%%%%%%

The construction of the sequential $\newbound$-uniformiser $U'$ for $T$
relies the existence of a function $\rho : \Sigma^* \rightarrow
\Sigma^*$ which satisfies the following properties, for every word $w \in \Sigma^*$:

\begin{description}
\item[\prop{1}]
for every state $q$, $r_{\tra}(w,q)$ exists if and only if $r_{\tra}(\rho(w),q)$ does;
\item[\prop{2}]
for every pair $p,q \in \choice{2k}{\uni}(\rho(w))$, $\lag_{p,q}(w) \leq \oldbound$;
\item[\prop{3}]
for every $a \in \Sigma$, if $wa$ is a prefix of a word of $\dom(\tra)$, there exist $q \in \choice{k}{\uni}(\rho(wa))$, $p \in \choice{k}{\uni}(\rho(w))$ and $s \in \Sigma^*$ such that $(p,a,s,q) \in \Delta_{\tra}$.
\end{description}

The following lemma states that if such a function exist, then one can
construct a sequential $\newbound$-uniformiser for $T$. 

\begin{lemma}\label{lem:mainlem}
    If there exists a function $\rho$ (with an integer $\oldbound$) which satisfies properties
    $\prop{1}-\prop{3}$, then one can construct a 
    sequential $\newbound$-uniformiser of $T$, where $\newbound = 2 |Q_{\tra}| \oldbound$. 
\end{lemma}

\begin{proof}
The proof is based on the construction of an infinite sequential
$\newbound$-uniformiser $U'$ of $T$, i.e. a sequential transducer
with an infinite number of states. First, we show that it implies the
existence of a (finite) sequential $\newbound$-uniformiser. Then, we proceed
to the construction of $U'$. 

\vspace{2mm}
\noindent \textbf{From infinite uniformisers to finite uniformisers} Using the function $\rho$, we will build an infinite seq-$\newbound$-uniformiser $U'$ of $\tra$, i.e., a seq-$\newbound$-uniformiser with an infinite set of states.
Let us prove that the existence of this infinite uniformiser $U'$ implies the seq-$\newbound$-uniformisability of $\tra$.
Let $\sync$ be the $\newbound$-delay resynchroniser.
By Nivat's theorem (Theorem~\ref{thm:nivat}), we know that there exists a transducer 
$\tra^{\sync}$ such that 
$\lang{\tra^{\sync}} = \sync(\lang{\tra})$.
Then $U'$ is an infinite seq-$\mathbb{I}$-uniformiser of $\tra^{\sync}$.
Such a uniformiser corresponds to a winning strategy for Player \textsf{Out} in the safety game 
$G_{\tra^{\sync}}$ defined in the proof of Proposition~\ref{prop:Iinclusion}.
However, we saw that, as safety games are memoryless determined, $\tra^{\sync}$ also admits a (finite) seq-$\mathbb{I}$-uniformiser $U''$.
Then $U''$ is a (finite) seq-$\newbound$-uniformiser of $\tra$, which proves that $\tra$ is sequentially $\newbound$-uniformisable.

\vspace{3mm}
\noindent \textbf{Construction of $\uni'$} Let us now define the infinite seq-$\newbound$-uniformiser $\uni'$.
The labeling of its transitions and its accepting states is based on
a partial function $\outru{U'}:\Sigma^*\rightarrow \Sigma^*$ and
a total function $\outf{\uni'}:\dom(T)\rightarrow \Sigma^*$ such that,
for all $w\in\Sigma^*$, $a\in\Sigma$, if $wa\in \dom(\outru{U'})$,
then $w\in \dom(\outru{U'})$ and $\outru{\uni'}(w)\preceq\outru{\uni'}(wa)$, and such that if $w\in \dom(T)$, then $w\in
\dom(\outru{\uni'})$ and $\outru{\uni'}(w)\preceq \outf{\uni'}(w)$. They
will be defined just after the following definition of $\uni'$:
\begin{itemize}
\item
$Q' = \Sigma^*$;
\item
$I' = \{ \epsilon \}$;
\item
$F' = \dom(\tra)$;
\item
$\Delta' = \{(w,a,\outru{\uni'}(w)^{-1}\outru{\uni'}(wa),wa) | w \in \Sigma^*, a \in \Sigma, \textup{  $wa$ is a prefix of $\dom(\tra)$} \}$;
\item
$f' : F \rightarrow \Sigma^*, v \mapsto \outru{\uni'}(v)^{-1}\outf{\uni'}(v)$.
\end{itemize}

% \vspace{3mm}
% \noindent \textbf{Output functions} 

% The output labeling the transitions
% of $U'$ will be based on function $\outru{\uni'}:\Sigma^*\rightarrow
% \Sigma^*$ and a function $\outf{\uni'} : \dom(T)\rightarrow
% \Sigma^*$. For all sequential uniformisers $V$ of $\tra$, $\outru{V}$
% maps $v \in \Sigma^*$ to the output of the run $r_{V}(v)$ (if it
% exists, otherwise the function is undefined), and
% $\outf{V}$ maps $v \in \dom(\tra)$ to $V(v)$ (if it is defined, otherwise
% the function is undefined). Note that those functions are consistent
% with the prefix relation: given $w \in \Sigma^*$ and $a \in \Sigma$,
% $\outru{V}(w)$ is a prefix of $\outru{V}(wa)$ (because $V$ is
% sequential), and given $v \in \dom(\tra)$, $\outru{V}(v)$ is a prefix
% of $\outf{V}(v)$ (if all these values are defined).
% In order to build the infinite seq-$N_{\tra}$-uniformiser $\uni'$, we
% first define the output function we want it to satisfy.

Note that in the above definition, the behaviour of $U'$ almost only
depends on the definitions of $\outru{\uni'}$ and $\outf{\uni'}$. We
now proceed to their definition. 
We first define an auxiliary function $\heap{}{} : \Sigma^+ \rightarrow 2^{Q_{\tra}}$.

\vspace{1mm}
\noindent \textbf{Definition of $\heap{}{}$}
Let $v \in \Sigma^*$ be a prefix of a word of $\dom(\tra)$.
If $v = \epsilon$, let $q_{v}$ be any initial state.
If $v = wa$, where $w \in \Sigma^*$ and $a \in \Sigma$, by \prop{3}, there exist $q_{v} \in \choice{k}{\uni}(\rho(wa))$, $p_{w} \in \choice{k}{\uni}(\rho(w))$ and $s \in \Sigma^*$ such that $(p,a,s,q) \in \Delta_{\tra}$.

The function $\heap{}{}$ maps $v$ to the union of the sets $\heap{i}{v}$, $i \in \mathbb{N}$, defined as follows.
\begin{itemize}
\item
$\heap{0}{v} = \{ q_v \}$;
\item
$\heap{i+1}{v} = \{ q \in Q_{\tra} | r_{\tra}(v,q) \textup{ exists, and } \exists q' \in \heap{i}{v} \textup{ s.t. } \lag_{q,q'}(v) \leq \oldbound \}$. 
\end{itemize}
Note that for every state $q$, as long as $r_{\tra}(v,q)$ exists, $\lag_{q,q}(v) = 0$, hence $\heap{i}{v} \subseteq \heap{i+1}{v}$.
More generally, given a pair $i<j$ of integers, $\heap{i}{v} \subseteq \heap{j}{v}$.
Moreover, if $\heap{d+1}{v} = \heap{d}{v}$ for some integer $d$, then $\heap{n}{v} = \heap{d}{v}$ for every $n \geq d$.
As the sets $\heap{i}{v}$, $1 \leq i \leq |Q_{\tra}|$, are strictly increasing subsets of $Q_{\tra}$, there exists such a $d$ between $1$ and $|Q_{\tra}|$.
Therefore $\heap{|Q_{\tra}|}{v} = \bigcup_{i \in \mathbb{N}}\heap{i}{v}$.
We define $\heap{}{} : \Sigma^+ \rightarrow 2^{Q_{\tra}}$ as the function mapping $v$ to $\heap{|Q_{\tra}|}{v}$.
Let us now show that it satisfies the following properties.
\begin{description}
\item[\propH{1}]
$\choice{k}{\uni}(\rho(v)) \subseteq \heap{1}{v}$;
\item[\propH{2}]
for every pair $q,q' \in \heapf{v}$, $\lag_{q,q'}(v) \leq \newbound = 2|Q_{\tra}|\oldbound$;
\item[\propH{3}]
for every state $q \in \heapf{wa}$, there exists $p \in \heapf{w}$ and $s \in \Sigma^*$ such that $(p,a,s,q) \in \Delta_{\tra}$.
\end{description}

\begin{enumerate}
\item
By definition of $\heap{0}{v} = \{ q_v \}$, $q_v \in \choice{k}{\uni}(\rho(v))$.
Hence by \prop{2}, for every $q \in \choice{k}{\uni}(\rho(v))$, $\lag_{q_v,q}(v) \leq \oldbound$, and therefore $q \in \heap{1}{v}$;
\item
Let $n = |Q_{\tra}|$.
For every $q,q' \in \heapf{v} = \heap{n}{v}$, there exist two sequences of states $q_0,q_1,\ldots,q_{n}$ and $q'_0,q'_1,\ldots,q'_{n}$ such that $q_{n} = q$, $q'_{n} = q'$, $q_0  = q_v = q_0'$, and for every $0 \leq i \leq n-1$, $q_i, q'_i \in \heap{i}{v}$, $\lag_{q_i,q_{i+1}}(v) \leq \oldbound$ and $\lag_{q'_i,q'_{i+1}}(v) \leq \oldbound$.
Then 
\[
\begin{array}{lll}
\lag_{q,q'}(v) \leq \lag_{q_{n},q_{n-1}}(v) + \ldots  + \lag_{q_1,q_{0}}(v) + \lag_{q_0',q_{1}'}(v) + \ldots + \lag_{q'_{n-1},q'_{n}}(v) \leq 2n\oldbound.
\end{array}
\]
\item
By definition of $\heap{0}{wa} = \{ q_{wa} \}$, there exist $p_w \in \choice{k}{\uni}(\rho(w))$ and $s \in \Sigma^*$ such that $(p_{w},a,s,q_{wa}) \in \Delta_{\tra}$.
For every state $q \in \heapf{wa}$, there exist $d \in \mathbb{N}$ and a sequence of states $q_0,q_1,\ldots,q_d$ such that $q_{d} = q$, $q_0  = q_{wa}$, and for every $0 \leq i \leq d-1$, $q_i \in \heap{i}{wa}$ and $\lag_{q_i,q_{i+1}}(wa) \leq \oldbound$.
Then, consider the sequence $p_0,p_1,\ldots,p_d$, where for every $1 \leq i \leq d$, $p_i$ is the state preceding $q_i$ on the run $r_{\tra}(wa,q_i)$, which exists by definition of $\heapf{wa}$.
Then $p_0 = p_w \in \choice{k}{\uni}(\rho(w)) \subseteq \heap{1}{w}$, where the inclusion is implied by \propH{1}.
Hence for every $0 \leq i \leq d-1$, as $\lag_{p_i,p_{i+1}}(w) \leq \lag_{q_i,q_{i+1}}(wa) \leq \oldbound$, $p_i \in \heap{i+1}{w}$.
In particular, $p_{d} \in \heap{d+1}{w} \subseteq \heapf{w}$.
Finally, as $p_d$ is the state preceding $q_d$ on the run $r_{\tra}(wa,q_d)$, there exists $s_d \in \Sigma^*$ such that $(p_d,a,s_d,q_d) \in \Delta_{\tra}$.
\end{enumerate}

\vspace{1mm}
\noindent \textbf{Definition of $\outru{\uni'}$} We define it on the
domain of the words which are prefixes of some word in $\dom(\tra)$,
otherwise it is undefined. Let $v$ be a prefix
of $\dom(\tra)$. The idea is that $U$ can be seen as a selection of
states of $T$, through the function $\choice{k}{\uni}$. We want that
$U'$, on input $v$,  selects at least all the states that $U$ chooses on input
$\rho(v)$, i.e. that $\choice{k}{\uni}(\rho(v)) \subseteq
\choice{\newbound}{\uni'}(v)$. We define $\outru{\uni'}$ in such a way
that the latter inclusion will hold true, which will be shown
in the proof of correctness of the construction of $U'$. 

Now, given a word $v$ that is a prefix of a word of $\dom(\tra)$ let us define $\outru{\uni'}(v)$.
If $v = \epsilon$, let $\outru{\uni'}(v) = \epsilon$.
If $v = wa$, let $\outru{\uni'}(v)$ be the longest common prefix of the outputs of
the runs $r_{\tra}(v,q)$, for all $q \in \heapf{v}$. 

We prove that $\outru{\uni'}$ satisfies the condition required by
the definition of $U'$, i.e. given a word $w \in \Sigma^*$ and $a \in
\Sigma$ such that $wa$ is the prefix of a word of $\dom(\tra)$, the
word $\outru{\uni'}(w)$ is a prefix of $\outru{\uni'}(wa)$.
By \propH{3}, for every $q \in \heapf{wa}$, there exists $p \in \heapf{w}$ and $s \in \Sigma^*$ such that $(p,a,s,q) \in \Delta_{\tra}$.
Therefore the output of the run $r_{\tra}(w,p)$ is a prefix of the output of the run $r_{\tra}(wa,q)$.
This proves that $\outru{\uni'}(w)$ is a prefix of $\outru{\uni'}(wa)$.

\vspace{1mm}
\noindent \textbf{Definition of $\outf{\uni'}$}
Let $v \in \dom(\tra)$.
By \prop{1},  $\rho(v) \in \dom(\tra)$, hence $\choice{k}{\uni}(\rho(v)) \cap F_{\tra}$ is not empty.
Let us choose some state $q$ in this intersection, let $u$ be the output of the run $r_{\tra}(v,q)$, which exists by \prop{1}, and let $\outf{\uni'}(v) = uf(q)$.
Once again, we need to check that $\outru{\uni'}(v)$ is a prefix of $\outf{\uni'}(v)$.
By \propH{1}, $q \in \heap{1}{v} \subseteq \heapf{v}$, hence, by definition of $\outru{\uni'}$, $\outru{\uni'}(v)$ is a prefix of $u$, and the desired result follows.

\vspace{3mm}
\noindent \textbf{Correctness of the construction} We finally
demonstrate that $\uni'$ is an infinite seq-$\newbound$-uniformiser. By
definition of $\Delta'$, $\uni'$ is sequential and by definition of
$F'$ and the fact that $\outru{\uni'}$ is defined for all prefixes of
$\dom(\tra)$, we have $\dom(\uni') = \dom(\tra)$.

It remains to show that $\uni'$ is $\newbound$-included into
$\tra$. We first show that it is the case if one assumes
$\choice{k}{\uni}(\rho(v)) \subseteq \choice{\newbound}{\uni'}(v)$ for
$v$ a prefix of $\dom(\tra)$, and then prove this inclusion. 
Let $v \in \dom(\uni') = \dom(\tra)$.
Since $\uni$ is a $k$-uniformiser of $\tra$, $\choice{k}{\uni}(v)
\cap F_{\tra}$ is non-empty. By definition of $\outf{\uni'}$, 
 $\outf{\uni'}(v) = uf(q)$ for some $q$ in this intersection, where $u$
is the output of the run $r_{\tra}(v,q)$. Then $q \in
\choice{\newbound}{\uni'}(v)$, since we assumed
$\choice{k}{\uni}(\rho(v)) \subseteq \choice{\newbound}{\uni'}(v)$. In
other words, the run of $U'$ on $v$ is $\newbound$-close (in terms of lag)
to one of the runs of $T$ on $v$. This proves that $\uni'$ is
$\newbound$-included into $\tra$.

Finally, let us show the inclusion $\choice{k}{\uni}(\rho(v))
\subseteq \choice{\newbound}{\uni'}(v)$.
By \propH{1}, it is enough to show that $\heapf{v} \subseteq \choice{\newbound}{\uni'}(v)$, which we now prove by induction on the length of $v$.
If $v = \epsilon$, $\heapf{v} = I_{\tra} = \choice{\newbound}{\uni'}(v)$.
Now suppose that $v = wa$, and that the result is true for $w$.
Let $q \in \heapf{wa}$.
By \propH{3} there exist $p \in \heapf{w}$ and $s \in \Sigma^*$ such that $(p,a,s,q) \in \Delta_{\tra}$.
By induction hypothesis, $p \in \choice{\newbound}{\uni'}(w)$, hence $\lag(r_{\tra}(w,p),r_{\uni'}(w)) \leq \newbound$.
Suppose ab absurdo that $q \notin \choice{\newbound}{\uni'}(wa)$.
Therefore $\lag(r_{\tra}(wa,q),r_{\uni'}(wa)) > \newbound$.
As the lag is smaller than $\newbound$ if those runs are restricted to the prefix $w$ of $wa$, this implies that the delay between the outputs of the runs $r_{\tra}(wa,q)$ and $r_{\uni'}(wa)$ is longer than $\newbound$.
By construction of $\uni'$, the output of $r_{\uni'}(wa)$ is $\outru{\uni'}(wa)$, the longest common prefix of the outputs of the runs $r_{\tra}(wa,q')$, $q' \in \heapf{wa}$.
This implies the existence of a state $q'' \in \heapf{wa}$ such that the delay between the outputs of $r_{\tra}(wa,q)$ and $r_{\tra}(wa,q'')$ is longer than $\newbound$.
However, this contradicts \propH{2}.
\end{proof}

% We want the behaviour of $\uni'$ on $w$ to correspond to the behaviour of $\uni$ on $\rho(w)$.
% In particular, we want to ensure that $\choice{2k}{\uni}(\rho(v)) \subseteq
% \choice{\newbound}{\uni'}(v)$ (the value $2k$ will become clear later when
% proving the existence of $\rho$ in the following sections).

The rest of this section is devoted to proving the existence of a
computable integer $\oldbound$ and the existence of a
function $\rho$ satisfying properties $\prop{1}-\prop{3}$.

%%%%%%%%%%%%%%%%%%%%%%%%%%%%%%%%%%%%%%%%%
\subsubsection{Transition monoid of a transducer and properties of delays}\label{subsec:transmod}
%%%%%%%%%%%%%%%%%%%%%%%%%%%%%%%%%%%%%%%%%

In this section, we define the transition monoid of a transducer,
and study the structural properties of its idempotent elements when the
transducer is a finite union of unambiguous transducers. We also prove
properties of delays that are necessary in the following sections.

\vspace{3mm}
\noindent \textbf{Transition monoid} Consider the monoid $\mathcal{M}$
of binary relations $m \subseteq Q_{\tra} \times Q_{\tra}$, where for
any pair $m_1,m_2 \in \mathcal{M}$, $m_1 \cdot m_{2} = \{ (x,z) |
\exists y \in Q_{\tra} \textup{ s.t. } (x,y) \in m_1, (y,z) \in m_2
\}$.

Let $\sigma_{\tra} : \Sigma^{*} \rightarrow \mathcal{M}$ be the monoid morphism mapping any word $w$ to the relation $\sigma_{\tra}(w)$ containing the pairs $(p,q) \in \sigma_{\tra}(w)$ such that there exists a run of $\tra$ on input $w$ between $p$ and $q$.
The \textit{transition monoid} $\mathcal{M}_{\tra}$ of $\tra$ is the
image of $\Sigma^*$ by the morphism $\sigma_{\tra}$.
An element $m \in \mathcal{M}$ is called an \textit{idempotent} if $m^2 = m$.
An element $m \in \mathcal{M}$ is called an \textit{$s$-form} if there exist two distinct elements $q_1,q_2 \in Q_{\tra}$ such that $(q_1,q_1),(q_1,q_2),(q_2,q_2) \in \mathcal{M}$.

The next lemmas present some properties of the elements of the transition monoid.
As we shall see, requiring $\tra$ to be finitely ambiguous greatly reduces the structural complexity of its transition monoid.

\begin{lemma}\label{z-form}
Let $m \in \mathcal{M}_{\tra}$ be an idempotent. 
Then for every element $(q_1,q_2) \in m$, there exists $q \in Q_{\tra}$ such that $(q_1,q),(q,q),(q,q_2) \in m$.
\end{lemma}

\begin{proof}
Let $p_{0}, p_{1}, \ldots, p_{n}$ be a maximal sequence of elements of $Q_{\tra}$ satisfying
\begin{itemize}
\item
$(q_1,p_{0}) \in m$;
\item
$(p_{n}, q_2) \in m$;
\item
for every $0 \leq i \leq n-1$, $(p_{i},p_{i+1}) \in m$;
\item
if $i \neq j$, $p_{i} \neq p_j$.
\end{itemize}
As $(q_1,p_0) \in m$ and $m = m^2$, there exists $q \in Q_{\tra}$ such that $(q_1,q),(q,p_0) \in m$.
Note that the sequence $q, p_{0}, p_{1}, \ldots, p_{n}$ satifies the first three properties, therefore, by maximality of $p_{0}, p_{1}, \ldots, p_{n}$, there exists $0 \leq j \leq n$ such that $q = p_j$.
By supposition, $(p_{n}, q_2) \in m$, and for every $0 \leq i \leq n$, $(p_{i},p_{i+1}) \in m$.
Therefore, as $m^{j} = m$, $(q,q) = (q,p_j) \in m$, and as $m^{n+1} = m$, $(q,q_2) \in m$.
\end{proof}

The next lemma, proved via Ramsey's theorem, 
states that for sufficiently long sequences of words, 
there are necessarily three consecutive blocks of words 
whose concatenations is the same idempotent element in the transition
monoid of $T$.

\begin{lemma}\label{fac_for}
There exists a computable integer $C_{\tra}$ such that for every sequence $v_1, \ldots, v_{C_\tra}$ of $C_\tra$ words, there exist $0 \leq i_1 < i_2 < i_3 < i_4 \leq C_\tra$ such that 
\[
\sigma_{\tra}(v_{i_1} \ldots v_{i_2-1}) = \sigma_{\tra}(v_{i_2} \ldots v_{i_3-1}) = \sigma_{\tra}(v_{i_3} \ldots v_{i_4-1})
\]
is an idempotent.
\end{lemma}

\begin{proof}
We use Ramsey's theorem. Given a sequence $s = (v_1, \ldots, v_{C_\tra})$, let $G_s$ be the complete graph on $C_\tra +1$ vertices $\{s_0, \ldots, s_j\}$ whose edges are coloured in $\mathcal{M}_{\tra}$, as follows.
For every $0 \leq i < j \leq n$ the edge $\{s_i,s_j\}$ is coloured by $\sigma_{\tra}(v_{i+1} \ldots v_{j}) \in \mathcal{M}_{\tra}$.
Then, for every $0 \leq i_1 < i_2 < i_3 < i_4 \leq C_\tra$,
\[
\sigma_{\tra}(v_{i_1} \ldots v_{i_2-1}) = \sigma_{\tra}(v_{i_2} \ldots v_{i_3-1}) = \sigma_{\tra}(v_{i_3} \ldots v_{i_4-1}).
\]
is an idempotent if and only if $s_{i_1},s_{i_2},s_{i_3},s_{i_4}$ forms a monochromatic clique in $G$.
Therefore, the desired result follows from Ramsey's theorem.
\end{proof}

The following lemmas are technical lemmas about the transition monoid
of $\tra$ and properties of delays. They are used in the proof that the
function $\rho$ defined in the next section satisfies the properties
$\prop{1}-\prop{3}$. They are not necessary to understand the
construction of $\rho$.

The next lemma shows that the transition monoid of $\tra$ does not
contain any $s$-form. 

\begin{lemma}\label{s-form}
There is no word $w \in \Sigma^*$ such that $\sigma_{\tra}(w)$ is an $s$-form.
\end{lemma}

\begin{proof}
Suppose ab absurdo that there exists a word $w \in \Sigma^*$ such that $\sigma_{\tra}(w)$ is an $s$-form.
Then there exist two distinct states $q_1,q_2 \in Q_{\tra}$ such that
$(q_1,q_1),(q_1,q_2),(q_2,q_2) \in \sigma_{\tra}(w)$.
Then there exists at least two distinct runs between $q_1$ and $q_2$ on input $ww$, which contradicts the fact that $\tra$ is a trim union of unambiguous transducers.
\end{proof}

\begin{lemma}\label{idem_loop}
Let $m \in \mathcal{M}_{\tra}$ be an idempotent, and let
$q_1,q_2,q_2',q_3 \in Q_{\tra}$. If $(q_1,q_2)\in m$, $(q_2,q_2')\in m$ and $(q_2',q_3) \in m$, then $q_2' = q_3$.
\end{lemma}

\begin{proof}
As $m$ is an idempotent, by Lemma \ref{z-form}, there exist $q^-$ and
$q^+$ such that
$$
\{(q_1,q^-),(q^-,q^-), (q^-,q_2), (q_2',q^+), (q^+,q^+),
(q^+,q_3)\}\subseteq m.
$$
Moreover, as $(q^-,q_2)\in m$, $(q_2,q_2')\in m$ and $(q_2',q^+) \in
m$ and $m$ is idempotent, we have $(q^-,q^+) \in m$.
Therefore $q^- = q^+$, otherwise $m$ would be an $s$-form, which is not possible by Lemma \ref{s-form}.
Therefore, as $(q_2,q_2'),(q_2',q^+),(q^-,q_2) \in m$, $(q_2,q_2) \in m$, and as $(q_2',q^+),(q^-,q_2),(q_2,q_2') \in m$, $(q_2',q_2') \in m$.
Hence $q_2 = q_2'$, otherwise $m$ would be an $s$-form.
\end{proof}

This lemma can be used to detect loops in the runs of $\tra$, as shown by the next corollary.

\begin{corollary}\label{run_loop}
Let $x, y, z \in \Sigma^*$ such that $\sigma_{\tra}(y)$ is an
idempotent, and suppose that there exists a run of $T$ of the form
$$
r\ :\ q_0 \xrightarrow{x|u_1} q_2 \xrightarrow{y|u_{2}} q_2'
\xrightarrow{z|u_{3}} q
$$
If there exist two states $q_1,q_3$ such that $(q_1,q_2),(q_2',q_3) \in \sigma_{\tra}(y)$, then $q_2 = q_2'$, i.e. $q_2 \xrightarrow{y|u_{2}} q_2'$ is a loop.
\end{corollary}

The next lemma is used to decompose input words for which there exist
two runs of $T$ with a sufficiently large lag, into sufficiently many
consecutive subwords on which the delay strictly increases.

\begin{lemma}\label{del_incr}
Let $n \in \mathbb{N}$, let $v \in \Sigma^*$, and let $p$ and $q$ be states such that the runs $r_{\tra}(v,p)$ and $r_{\tra}(v,q)$ exist.
If $\lag_{p,q}(v) \geq 2\alpha m_{\tra}$, then there exists a decomposition $v = v_{1} \ldots v_{\alpha+1}$ of $v$ into non-epsilon subwords such that for every $1 \leq i \leq \alpha$
\[
|\del_{p_i,q_i}(v_1 \ldots v_{i})| < |\del_{p_{i+1},q_{i+1}}(v_1 \ldots v_{i+1})|,
\]
where $p_i$ and $q_i$ denote the states corresponding to the input $v_1 \ldots v_{i}$ in the runs $r_{\tra}(v,p)$ and $r_{\tra}(v,q)$, respectively.
\end{lemma}

\begin{proof}
It is an immediate consequence of Lemma~\ref{lem:delayincreasing}.     
\end{proof}

Now, we prove two lemmas concerning the evolution of the delay between two words obtained by iterating a subword.
The first one is related to Lemma \ref{lem:folkore}, and their proofs are very similar.

\begin{lemma}\label{incr_del}
Let $u_1,u_2,v_1,v_2 \in \Sigma^*$, let $n \in \mathbb{N}$ such that $n>0$, and suppose that $|\del(u_1,v_1)|\leq n$.

If $\del(u_1u_2,v_1v_2) \neq \del(u_1,u_2)$, then $|\del(u_1u_2^{3n},v_1v_2^{3n})| > n$.
\end{lemma}

\begin{proof}
    First, note that $u_2$ and $v_2$ are not both $\epsilon$, since
    $\del(u_1,v_1)\neq \del(u_1u_2,v_1v_2)$.

    Suppose that $|u_2| \neq |v_2|$.
    Then 
    $$
    \begin{array}{lllllllllllll}
    & |\del(u_1u_2^{3n},v_1v_2^{3n})| \geq
    ||u_1u_2^{3n}|-|v_1v_2^{3n}|| \geq 3n||u_2|-|v_2|| -||u_1|-|v_1|| \geq 2n > n
    \end{array}
    $$   
    Assume now that $|u_2| = |v_2| > 0$.

    Suppose that $u_1u_2^{2n}$ is not a prefix of $v_1v_2^{2n}$ and $v_1v_2^{2n}$ is not a
    prefix of $u_1u_2^{2n}$, i.e. $u_1u_2^{2n} = w \alpha u'$ and $v_1v_2^{2n} = w \beta
    v'$ for $u',v',w\in\Sigma^*$, $\alpha,\beta\in \Sigma$ and
    $\alpha \neq \beta$. 
 
    Then the word in $(\Sigma\cup \overline{\Sigma})^*$
    $$
    \begin{array}{lllllllllllll}
    & \del(u_1u_2^{3n},v_1v_2^{3n}) & = &
    u_2^{-3n}u_1^{-1}v_1v_2^{3n} & = & u_2^{-n} u'^{-1} \alpha^{-1}\beta
    v' v_2^{n}
    \end{array}
    $$ 
    is irreducible, hence $|\del(u_1u_2^{3n},v_1v_2^{3n})| > n$.
    
    Assume now that $v_1v_2^{2n} = u_1u_2^{2n}s$ for some $s\in\Sigma^*$ (the
    case where $v_1v_2^{2n}$ is a prefix of $u_1u_2^{2n}$ is symmetric and therefore
    untreated). Then, as $|u_2| = |v_2|$, there exist $s_0,s_1 \in \Sigma^*$ such that
    $v_1 = u_1s_0$, $v_1v_2 = u_1u_2s_1$, and $|s_0| = |s_1|$. Therefore, $s_0v_2^{2n} = u_2^{2n}s$, 
    and $s_1v_2^{2n-1} = u_2^{2n-1}s$. However, as $|s_0| = |\del(u_1,v_1)|\leq n$ by supposition,
    and $|v_2| \neq 0$, $s_0$ is a prefix of $u_2^{n}$, and so is $s_1$.
    Therefore, as $|s_0| = |s_1|$, they are equal, which contradicts the fact that
    $\del(u_1u_2,v_1v_2) \neq \del(u_1,v_1)$.
\end{proof}

\begin{lemma}\label{stable_del}
Let $x_1,x_2,x_3,y_1,y_2,y_3 \in \Sigma^*$, let $n \in \mathbb{N}$ such that $n>0$, and suppose that $|\del(x_1,y_1)| \leq n$ and $|\del(x_1x_2^{3n},y_1y_2^{3n})| \leq n$.
Then $\del(x_1x_2^{3n}x_3,y_1y_2^{3n}y_3) = \del(x_1x_3,y_1y_3)$.
\end{lemma}

\begin{proof}

Since $|\del(x_1,y_1)| \leq n$ and $|\del(x_1x_2^{3n},y_1y_2^{3n})|
\leq n$ by supposition, Lemma \ref{incr_del} implies that
$\del(x_1x_2,y_1y_2) = \del(x_1,y_1)$. This proves inductively that we
have the equality $\del(x_1x_2^{3n},y_1y_2^{3n}) = \del(x_1,y_1)$, and the desired result follows.
\end{proof}

By combining the previous results concerning the elements of the transition monoid, and the delay between words, we get the following lemma.

\begin{lemma}\label{loop_del}
Let $x,y,z \in \Sigma^*$ such that 
\begin{itemize}
\item
$\sigma_{\tra}(y)$ is an idempotent;
\item for every $(p,q) \in \sigma_{\tra}(x)$, there exists a state $p'$ such that $(p',q) \in \sigma_{\tra}(y)$;
\item for every $(p,q) \in \sigma_{\tra}(z)$, there exists a state $q'$ such that $(p,q') \in \sigma_{\tra}(y)$.
\end{itemize}
Let $p,q \in Q_{\tra}$ such that the runs $r_{\tra}(xy^{3n}z,p)$ and $r_{\tra}(xy^{3n}z,q)$ exist.
If $\lag_{p,q}(xy^{3n}z) \leq n$, then the runs $r_{\tra}(xz,p)$ and $r_{\tra}(xz,q)$ exist, and $\lag_{p,q}(xy^{3n}z) \geq \lag_{p,q}(xz)$.
\end{lemma}

\begin{proof}
By Corollary \ref{run_loop}, given a run on the input $xy^{3n}z$, the transducer $\tra$ loops on the input $y^{3n}$.
This proves the existence of the runs $r_{\tra}(xz,p)$ and $r_{\tra}(xz,q)$.

Let $P$ denote the set of prefixes of $xz$, and let $P'$ denote the set of prefixes of $xy^{3n}z$.
For every $w \in P$, let $p_w$ and $q_w$ denote the states corresponding to the input $w$ in the runs $r_{\tra}(xz,p)$ and $r_{\tra}(xz,q)$, respectively.
Similarly, for every $w \in P'$, let $p'_{w}$ and $q'_{w}$ denote the states corresponding to the input $w$ in the runs $r_{\tra}(xy^{3n}z,p)$ and $r_{\tra}(xy^{3n}z,q)$, respectively.
By definition, 
\[
\begin{array}{lll}
\lag_{p,q}(xz) & = & \max \{ \del_{p_w,q_w}(w) | w \in P \},\\
\lag_{p,q}(xy^{3n}z) & = & \max \{ \del_{p'_w,q'_w}(w) | w \in P' \}.
\end{array}
\]
In order to prove the lemma, we shall expose, for every $w \in P$, a prefix $w' \in P'$ such that $\del_{p_w,q_w}(w) = \del_{p'_{w'},q'_{w'}}(w')$.

If $w$ is a prefix of $x$, let $w' = w$, and if there exists $v$ such that $w = xv$, let $w' = xy^{3n}v$.
By Corollary \ref{run_loop}, we know that for any run on the input $xyz$, $\tra$ will loop on the input $y$.
Therefore, those runs are as follows.
\[
\begin{array}{ll ll}
r_{\tra}(xyz,p): & p_0 \xrightarrow{x|u_1} p_{x} \xrightarrow{y|u_{2}} p_{x} \xrightarrow{z|u_{3}} p, &  r_{\tra}(xy^{3n}z,p): & p_0 \xrightarrow{x|u_1} p_{x} \xrightarrow{y^{3n}|u_{2}^{3n}} p_{x} \xrightarrow{z|u_{3}} p,\\
r_{\tra}(xyz,q): & q_0 \xrightarrow{x|v_1} q_{x} \xrightarrow{y|v_{2}} q_{x} \xrightarrow{z|v_{3}} q, &  r_{\tra}(xy^{3n}z,q): & q_0 \xrightarrow{x|v_1} q_{x} \xrightarrow{y^{3n}|v_{2}^{3n}} q_{x} \xrightarrow{z|v_{3}} q.
\end{array}
\]
Hence $p'_{w'} = p_w$ and $q'_{w'} = q_w$.
If $w' = w$, the desired result follows immediately.
If $w' = xy^{3n}v$,
\[
\del_{p_x,q_x}(xy^{3n}) = \del(u_1u_2^{3n},v_1v_2^{3n}) \stackrel{\mathclap{\scriptscriptstyle\smash{(1)}}}{=} \del(u_1,v_1) = \del_{p_x,q_x}(x),
\]
where equality \ensuremath{(1)} follows from Lemma \ref{stable_del}, which can be applied, as both $|\del(u_1,v_1)| \leq n$ and $|\del(u_1u_2^{3n},v_1v_2^{3n})| \leq n$ since $\lag_{p,q}(xy^{3n}z) \leq n$.
Finally,
\[ \del_{p_w,q_w}(w) = \del_{p_w,q_w}(xv) = \del_{p_{w},q_{w}}(xy^{3n}v) = \del_{p_{w},q_{w}}(w') = \del_{p'_{w'},q'_{w'}}(w').\]
\end{proof}

%%%%%%%%%%%%%%%%%%%%%%%%%%%%%%%%%%%%%%%%%
\subsubsection{Definition of the function $\rho$}\label{subsec:defrho}
%%%%%%%%%%%%%%%%%%%%%%%%%%%%%%%%%%%%%%%%%

We now define the function $\rho$ and the integer $\oldbound$
mentioned in the proof of Theorem \ref{f_a_det}.

The function $\rho$ is defined inductively on words, and the main idea
is to pump the second occurence of any two consecutive subwords which
have the same idempotent element in the transition monoid. This pumping
is done by an auxiliary function $\phi$, based on the following idea.
Whenever two consecutive subwords of $v$ correspond to the same idempotent in the transition monoid, we iterate the second one $12k$ times.
As we are iterating idempotents, we preserve the corresponding element of the transition monoid, proving that $\phi$ satisfies \prop{1}.
Moreover, since only the idempotents that appear twice in a row are iterated, we obtain good properties concerning the lag.
We define $\oldbound$ as the product $2C_{\tra}m_{\tra}$, where $C_{\tra}$ denotes the integer defined in Lemma \ref{fac_for}.
By Lemma \ref{del_incr}, this ensures that whenever the lag between two runs on a word $v$ is greater than $\oldbound$, there exist three consecutive subwords $v_1$, $v_2$ and $v_3$ of $v$ such that those three words correspond to the same idempotent in the transition monoid, and the delay grows along them.
Then, as the word $v_2$ is iterated on $\phi(v)$, the delay between the two corresponding runs on $\phi(v)$ will explode, which we use to prove \prop{2}.
Unfortunately, $\phi$ does not satisfy \prop{3}.
This stems from the fact that, given a word $w$ and a letter $a$, $\phi(w)$ is not necessarily a prefix of $\phi(wa)$.
However by using an intermediate function $\phi'$, that maps each word
$w$ to a well chosen suffix of $\phi(w)$, we are able to define the
function $\rho$ inductively, in a way that some properties of
$\phi$ transfer to $\rho$, and such that $\rho$ will also satisfy  \prop{3}.

Let us now define those functions formally.
For every $v \in \Sigma^*$, let 
\[
U_v = ((w_1,x_1,y_1,z_1),\ldots,(w_n,x_n,y_n,z_n) ) \in (\Sigma^* \times \Sigma^* \times \Sigma^* \times \Sigma^*)^*
\]
 denote the sequence of decompositions $v = wxyz$ into four words such
 that $\sigma_{\tra}(x) = \sigma_{\tra}(y)$ is an idempotent, ordered as follows:
\begin{itemize}
\item
If $|w_ix_i| < |w_jx_j|$, then $i<j$;
\item
If $|w_ix_i| = |w_jx_j|$ and $|y_i| < |y_j|$, then $i<j$;
\item
If $|w_ix_i| = |w_jx_j|$, $|y_i| = |y_j|$, and $|x_i| < |x_j|$ then $i<j$;
\end{itemize}
We now consider the decomposition of $v = v_1 \ldots v_{n+1}$ into $n+1$ words such that for every $1 \leq i \leq n$, $v = v_1 \ldots v_i y_i z_i = w_ix_i v_{i+1} \ldots v_{n+1}$.
Moreover, let $1 \leq l \leq n+1$ be equal to $n+1$ if all the $z_i$ are different from $\epsilon$, and be equal to the smallest integer $d$ such that $z_d = \epsilon$ otherwise.
The function $\phi$ iterates all the idempotents $y_i$ $12k$ times, the function $\phi'$ maps $v$ to a suffix of $\phi(v)$, and $\rho$ is defined inductively, using $\phi'$.
\[
\begin{array}{lcllll}
\phi & : & \Sigma^* & \rightarrow & \Sigma^*,\\
& & v & \mapsto & v_1y_1^{12k} \ldots v_ny_n^{12k}v_{n+1}.\\
\phi' & : & \Sigma^* & \rightarrow & \Sigma^*,\\
& & v & \mapsto & y_l^{12k-1}v_{l+1}y_{l+1}^{12k} \ldots v_ny_n^{12k}v_{n+1}.\\
\rho & : & \Sigma^* & \rightarrow & \Sigma^*,\\
& & \epsilon & \mapsto & \epsilon,\\
& & wa & \mapsto & \rho(w)a\phi'(wa).
\end{array}
\]
We also define the two following sequences, that expose the decomposition of $v$.
\[
\begin{array}{lcllll}
S_v & = & \multicolumn{3}{l}{(v_1,y_1,\ldots,v_n,y_n,v_{n+1});}\\
S_v' & = & \multicolumn{3}{l}{(v_1 \ldots v_l,y_l,v_{l+1},y_{i+1}\ldots,v_n,y_n,v_{n+1})}.
\end{array}
\]

We now prove a technical result, stating that for every word $w$ and every letter $a$, the sequences $S_w$ and $S_{wa}$ are identical on the prefix of $S_{wa}$ that is dropped in $S'_{wa}$.

\begin{lemma}\label{same_seq}
Let $w \in \Sigma^*$ and $a \in \Sigma$, let 
$$
S_{wa} = (v_1,y_1,\ldots,v_m,y_m,v_{m+1})\qquad S_{w} =
(v_1',y_1',\ldots,v_n',y_n',v_{n+1}')
$$
Let $1 \leq l \leq m+1$ be the integer such that $S'_{wa} = (v_1 \ldots v_l,y_l,v_{l+1},y_{l+1}\ldots,v_m,y_m,v_{m+1})$.
Then for every $1 \leq i < l$, $v_i' = v_i$, $y_i' = y_i$, and $v_l' \ldots v_{n+1}'a = v_ly_l$.
\end{lemma}

\begin{proof}
Let 
\[
\begin{array}{llllllll}
U_{wa} & = & ((w_1,x_1,y_1,z_1), \ldots, (w_m,x_m,y_m,z_m))\\
U_w & = & ((w_1',x_1',y_1',z_1'), \ldots, (w_n',x_n',y_n',z_n')).
\end{array}
\]
By definition of $U_w$ and $U_{wa}$, 
$
\tau_a(U_w) := ((w_1',x_1',y_1',z_1'a), \ldots, (w_{n}',x_{n}',y_{n}',z_{n}'a))
$
is a subsequence of $U_{wa}$.
For every $1 \leq i < l$, by definition of $l$, $z_i \neq \epsilon$, hence, as $w_ix_iy_iz_i = wa$, there exists $u_i \in \Sigma^*$ such that $z_i = u_ia$.
Then $(w_i,x_i,y_i,u_i)$ belongs to $U_w$, and $(w_i,x_i,y_i,z_i) = (w_i,x_i,y_i,u_ia)$ belongs to $\tau_a(U_w)$.
Therefore for every $1 \leq i < l$, $(w_i',x_i',y_i',z_i'a) = (w_i,x_i,y_i,z_i)$, hence $y_i' = y_i$, $v_i' = v_i$, and, as $v_1' \ldots v_{n+1}'a = wa = v_1 \ldots v_ly_l$, $v_l' \ldots v_{n+1}'a = v_ly_l$.
\end{proof}

%%%%%%%%%%%%%%%%%%%%%%%%%%%%%%%%%%%%%%%%%
\subsubsection{Property 1}\label{subsec:p1}
%%%%%%%%%%%%%%%%%%%%%%%%%%%%%%%%%%%%%%%%%

Now, we shall prove that, as the subwords $y_i$ correspond to idempotents in the transition monoid, iterating them does not modify the corresponding element of the transition monoid.
This will yield the proof that $\rho$ satisfies \prop{1}.

\begin{lemma}\label{equ1}
Let $v \in \Sigma^*$, and let $S_v = (v_1,y_1, \ldots, v_n,y_n,v_{n+1})$.
Then for every $0 \leq i \leq n$, for every $0 \leq j \leq n-i$,
\[
\sigma_{\tra}(v_1v_2 \ldots v_{i}v_{i+1}y_{i+1}^{12k}v_{i+2}y_{i+2}^{12k} \ldots v_{i+j}y_{i+j}^{12k}) = \sigma_{\tra}(v_{1} \ldots v_{i+j}).
\]
\end{lemma}

\begin{proof}
We shall prove this by induction over $j$.
If $j = 0$, the result is immediate.
Now suppose that $j > 0$, and that the result is true for $j-1$.
By definition of $S_v$, there exists $w_{i+j},x_{i+j} \in \Sigma^{*}$ such that $\sigma_{\tra}(x_{i+j}) = \sigma_{\tra}(y_{i+j})$ is an idempotent, and $v_{1} \ldots v_{i+j} = w_{i+j}x_{i+j}$. 
Then
\[
\begin{array}{lllll}
\multicolumn{3}{l}{\sigma_{\tra}(v_1v_2 \ldots v_{i}v_{i+1}y_{i+1}^{12k}v_{i+2}y_{i+2}^{12k} \ldots v_{i+j}y_{i+j}^{12k})}\\
& \stackrel{\mathclap{\scriptscriptstyle\smash{(1)}}}{=} & \sigma_{\tra}(v_1v_2  \ldots v_{i+j-1}v_{i+j}y_{i+j}^{12k})\\
& = & \sigma_{\tra}(w_{i+j}x_{i+j}y_{i+j}^{12k})\\
& \stackrel{\mathclap{\scriptscriptstyle\smash{(2)}}}{=} & \sigma_{\tra}(w_{i+j}x_{i+j})\\
& = & \sigma_{\tra}(v_{1} \ldots v_{i+j}),
\end{array}
\]
where equality \ensuremath{(1)} follows from the induction hypothesis, and equality \ensuremath{(2)} follows from the fact that $\sigma_{\tra}(x_{i+j}) = \sigma_{\tra}(y_{i+j})$ is an idempotent.
\end{proof}

\begin{lemma}\label{equsame}
Let $v \in \Sigma^*$, and let $S_v = (v_1,y_1, \ldots, v_n,y_n,v_{n+1})$.
Then for every $0 \leq i \leq n$, for every $0 \leq j \leq n-i$,
\[
\sigma_{\tra}(y_{i+1}^{12k}v_{i+2}y_{i+2}^{12k} \ldots v_{i+j}y_{i+j}^{12k}v_{i+j+1}v_{i+j+2} \ldots v_n v_{n+1}) = \sigma_{\tra}(v_{i+1} \ldots v_{n+1}).
\]
\end{lemma}

\begin{proof}
This is proved similarly to Lemma \ref{equ1}.
By definition of $S_v$, for every $i+1 \leq d \leq i+j$, there exists $z_{d} \in \Sigma^{*}$ such that $v_{d+1} \ldots v_{n+1} = y_{d}z_{d}$. 
This allows us to absorb $y_d$ in the suffix $v_{d+1} \ldots v_{n+1}$, starting with $d = i+j$, until $d = i+1$.
\end{proof}

As a consequence of those lemmas, we have the following corollary.

\begin{corollary}\label{equ1'}
Let $v \in \Sigma^*$, $S_v = (v_1,y_1, \ldots, v_n,y_n,v_{n+1})$, $S'_v = (v_1 \ldots v_l,y_l, \ldots, v_n,y_n,v_{n+1})$.
Then
\begin{enumerate}
\item
$\sigma_{\tra}(\phi(v)) = \sigma_{\tra}(v)$;
\item\label{equ3}
$\sigma_{\tra}(\rho(v)) = \sigma_{\tra}(v)$.
\item\label{equ4}
$\sigma_{\tra}(\phi'(v)) = \sigma_{\tra}(y_l)$.
\end{enumerate}
\end{corollary}

\begin{proof}
\begin{enumerate}
\item
Take $i = 0$, $j = n$ in Lemma \ref{equ1}.
\item
We prove the desired result by induction over the length of $v$.
If $v = \epsilon$, the result is immediate, as $\rho(\epsilon) = \epsilon$.
Now suppose that $v = wa$, and that the result is true for $w$.
Let $S'_v = (v_1 \ldots v_l,y_l,v_{l+1},y_{l+1}, \ldots, v_n,y_n,v_{n+1})$
Note that, by definition of $S'_v$, $v = v_1 \ldots v_ly_l$.
Then,
\[
\begin{array}{lll}
\multicolumn{3}{l}{\sigma_{\tra}(\rho(wa))}\\
& = & \sigma_{\tra}(\rho(w)a\phi'(wa))\\
& \stackrel{\mathclap{\scriptscriptstyle\smash{(1)}}}{=} & \sigma_{\tra}(wa\phi'(wa))\\
& = & \sigma_{\tra}(v_1 \ldots v_l y_l \phi'(wa))\\
& = & \sigma_{\tra}(v_1 \ldots v_l y_ly_l^{12k-1}v_{l+1}y_{l+1}^{12k} \ldots v_ny_n^{12k}v_{n+1})\\
& = & \sigma_{\tra}(v_1 \ldots v_l y_l^{12k}v_{l+1}y_{l+1}^{12k} \ldots v_ny_n^{12k}v_{n+1})\\
& \stackrel{\mathclap{\scriptscriptstyle\smash{(2)}}}{=} & \sigma_{\tra}(v_1 \ldots v_{n+1})\\
& = & \sigma_{\tra}(v),
\end{array}
\]
where equality \ensuremath{(1)} follows from the induction hypothesis, and equality \ensuremath{(2)} follows from Lemma \ref{equ1} in the particular case $i = l-1$, $j = n - l +1$.
\item
By definition of $S_v'$, $v_{l+1} \ldots v_{n+1} = y_l$.
Therefore,
\[
\begin{array}{lll}
\multicolumn{3}{l}{\sigma_{\tra}(\phi'(wa))}\\
& = & \sigma_{\tra}(y_{l}^{12k-1} v_{l+1} y_{l+1}^{12k} \ldots v_ny_n^{12k}v_{n+1})\\
& \stackrel{\mathclap{\scriptscriptstyle\smash{(1)}}}{=} & \sigma_{\tra}(y_{l}^{12k-1} v_{l+1} \ldots v_{n+1})\\
& = & \sigma_{\tra}(y_{l}^{12k-1}y_l)\\
& = & \sigma_{\tra}(y_l),
\end{array}
\]
where equality \ensuremath{(1)} follows from Lemma \ref{equsame} in the particular case $i = l$, $j = n - l$.
\end{enumerate}
\end{proof}

\begin{corollary}\label{coro:prop1}
The function $\rho$ satisfies \prop{1}.
\end{corollary}

\begin{proof}
Let us state \prop{1} once again.
\begin{description}
\item[\prop{1}]: for every $w \in \Sigma^*$, for every state $q$, $r_{\tra}(w,q)$ exists if and only if $r_{\tra}(\rho(w),q)$ does.
\end{description}
Let $w \in \Sigma^*$.
By definition of the transition monoid of $\tra$, the run $r_{\tra}(w,q)$ exists if and only if there exists an initial state $q_0$ such that $(q_0,q) \in \sigma_{\tra}(w)$.
As $\sigma_{\tra}(\rho(w)) = \sigma_{\tra}(w)$ by Corollary \ref{equ1'}.\ref{equ3}, we obtain the desired result.
\end{proof}

%%%%%%%%%%%%%%%%%%%%%%%%%%%%%%%%%%%%%%%%%
\subsubsection{Property 2}\label{subsec:p2}
%%%%%%%%%%%%%%%%%%%%%%%%%%%%%%%%%%%%%%%%%

We begin by exposing general results concerning the behaviour of $\phi(v)$ with respect to the notion of lag.
Then, we show that those results can be extended to $\rho$.
This will ultimately allow us to prove that $\rho$ satisfies \prop{2}.

The following lemma will guarantee that, for all words $v\in\Sigma^*$,
the way of pumping idempotents of $v$ to define
$\phi(v)$ will not decrease the initial lag between any two
runs of $T$ on $v$. 

\begin{lemma}\label{inequ1}
Let $v \in \Sigma^*$, let $S_v = (v_1,y_1, \ldots, v_n,y_n,v_{n+1})$, let $p$ and $q$ be two states such that the runs $r_{\tra}(v,q)$ and $r_{\tra}(v,p)$ exist.
For every $1 \leq i \leq j \leq n+1$, let $u,u' \in \Sigma^*$ such that $\sigma_{\tra}(u) = \sigma_{\tra}(v_{1} \ldots v_{i-1})$ and $\sigma_{\tra}(u') = \sigma_{\tra}(v_{j+1} \ldots v_{n+1})$.
If 
\[
\lag_{p,q}(uv_{i}y_{i}^{12k} \ldots v_{j-1}y_{j-1}^{12k}v_{j}u') \leq 4k,
\]
then
\[
\lag_{p,q}(uv_{i}y_{i}^{12k} \ldots v_{j-1}y_{j-1}^{12k}v_{j}u') \geq \lag_{p,q}(uv_{i} \ldots v_{j}u').
\]
\end{lemma}

\begin{proof}
By Lemma \ref{equ1}, the runs to $p$ and $q$ implicit in the definition of the $lag_{p,q}(\cdot)$ present in the statement of this lemma exist.
We prove the desired result by induction over $j-i$.
If $i = j$, it is immediate.
Now suppose that $j > i$, and that the result is true for $i+1$.

Then
\[
\begin{array}{lll}
\multicolumn{3}{l}{\lag_{p,q}(uv_{i}y_{i}^{12k}v_{i+1} \ldots y_{j-1}^{12k}v_{j}u')}\\
& \stackrel{\mathclap{\scriptscriptstyle\smash{(1)}}}{\geq} & \lag_{p,q}(uv_{i}y_{i}^{12k}v_{i+1}v_{i+2} \ldots v_{j-1} v_{j} u')\\
& \stackrel{\mathclap{\scriptscriptstyle\smash{(2)}}}{\geq} & \lag_{p,q}(uv_{i}v_{i+1}v_{i+2} \ldots v_{j-1} v_{j} u')\\
& = & \lag_{p,q}(uv_{i} \ldots v_{j} u').
\end{array}
\]
where inequality \ensuremath{(1)} follows from the induction hypothesis, which can be applied, as 
\[
\sigma_{\tra}( uv_{i}y_{i}^{12k}) = \sigma_{\tra}( v_1v_2 \ldots v_{i-1}v_{i}y_{i}^{12k}) = \sigma_{\tra}(v_1 \ldots v_{i})
\]
by Lemma \ref{equ1}, and inequality \ensuremath{(2)} follows from Lemma \ref{loop_del}, whose requirements we shall now check.

First, $\lag_{p,q}(uv_{i}y_{i}^{12k}v_{i+1}v_{i+2} \ldots v_{j-1} v_{j} u') \leq 4k$ by inequality \ensuremath{(1)} and the hypothesis.

Moreover, by definition of $S_v$, there exists $w_i,x_{i} \in \Sigma^*$ such that $\sigma_{\tra}(x_{i}) = \sigma_{\tra}(y_{i})$, and  $v_1 \ldots v_{i} = w_{i}x_{i}$.
Hence, for every 
\[
(p,q) \in \sigma_{\tra}(uv_i) = \sigma_{\tra}(v_{1} \ldots v_{i}) = \sigma_{\tra}(w_{i}x_{i}) = \sigma_{\tra}(w_{i}y_{i}),
\]
there exists a state $p'$ such that $(p,p') \in \sigma_{\tra}(w_{i})$ and $(p',q) \in \sigma_{\tra}(y_i)$.

Finally, by definition of $S_v$, there exists $z_{i} \in \Sigma^*$ such that $v_{i+1} \ldots v_{n+1} = y_{i}z_{i}$.
Hence, for every
\[
(p,q) \in \sigma_{\tra}(v_{i+1} \ldots v_j u') = \sigma_{\tra}(v_{i+1} \ldots v_{n+1}) = \sigma_{\tra}(y_{i}z_{i}),
\]
there exists a state $q'$ such that $(p,q') \in \sigma_{\tra}(y_i)$ and $(q',q) \in \sigma_{\tra}(z_{i})$.
\end{proof}

Based on the previous lemma, we show the following corollary which is
crucial in the proof that $\rho$ satisfies property $\prop{2}$. 

\begin{corollary}\label{inequ'}
Let $v \in \Sigma^*$, let $S_v = (v_1,y_1, \ldots, v_n,y_n,v_{n+1})$, and let $p$ and $q$ be two states such that the runs $r_{\tra}(v,q)$ and $r_{\tra}(v,p)$ exist, and $\lag_{p,q}(\phi(v)) \leq 4k$.
Then for every $1 \leq d \leq n$,
\begin{enumerate}
\item\label{lag|||_}
$\lag_{p,q}(\phi(v)) \geq \lag_{p,q}(v_1y_{1}^{12k}v_{2}y_{2}^{12k} \ldots v_{d-1}y_{d-1}^{12k}v_{d}v_{d+1} \ldots v_{n}v_{n+1})$;
\item\label{lag_|_}
$\lag_{p,q}(\phi(v)) \geq \lag_{p,q}(v_1v_2 \ldots v_{d-1}v_{d}y_{d}^{12k}v_{d+1}v_{d+2} \ldots v_{n}v_{n+1})$;
\item\label{lag_p}
$\lag_{p,q}(\phi(v)) \geq \lag_{p,q}(v)$;
\item\label{lag_r_p}
$\lag_{p,q}(\rho(v)) \geq \lag_{p,q}(\phi(v))$;
\item\label{lag_r}
$\lag_{p,q}(\rho(v)) \geq \lag_{p,q}(v)$.
\end{enumerate}
\end{corollary}

\begin{proof}
By Lemma \ref{equ1}, the runs to $p$ and $q$ implicit in the definition of the $lag_{p,q}(\cdot)$ present in the statement of this lemma exist.
\begin{enumerate}
\item
Take $i = d$, $j= n+1$, $u = v_1y_{1}^{12k} \ldots v_{d-1}y_{d-1}^{12k}$ and $u' = \epsilon$ in Lemma \ref{inequ1}.
\item
Two steps are required.
First, take $i = 1$, $j= d-1$, $u = \epsilon$ and $u' = v_{d} y_d^{12k} v_{d+1} v_{d+2} \ldots v_n v_{n+1}$ in Lemma \ref{inequ1}, then apply the first point.
\item
Take $d = 1$ in the first point.
\item
We prove the desired result by induction over the length of $v$.
If $v = \epsilon$, $\rho(v) = \epsilon = \phi(v)$, and the result follows.
Now suppose that $v = wa$, and that the result is true for $w$.
Let 
\[
\begin{array}{lll}
S_v' & = & (v_1 \ldots v_l,y_l,\ldots, v_m, y_m, v_{m+1});\\
S_w & = & (w_1,z_1,\ldots, w_n, z_n, w_{n+1}).
\end{array}
\]
Then
\[
\begin{array}{lll}
\multicolumn{3}{l}{\lag_{p,q}(\rho(v))}\\
& = & \lag_{p,q}(\rho(w)a\phi'(v))\\
& \stackrel{\mathclap{\scriptscriptstyle\smash{(1)}}}{\geq} & \lag_{p,q}(\phi(w)a\phi'(v))\\
& \stackrel{\mathclap{\scriptscriptstyle\smash{(2)}}}{\geq} & \lag_{p,q}(w_1z_1^{12k} \ldots w_{l-1}z_{l-1}^{12k}w_l \ldots w_{n+1}a\phi'(v))\\
& = & \lag_{p,q}(w_1z_1^{12k} \ldots w_{l-1}z_{l-1}^{12k}w_l \ldots w_{n+1}ay_l^{12k-1}v_{l+1}y_{l+1}^{12k} \ldots v_my_m^{12k}v_{m+1})\\
& \stackrel{\mathclap{\scriptscriptstyle\smash{(3)}}}{=} & \lag_{p,q}(v_1y_1^{12k} \ldots v_{l-1}y_{l-1}^{12k}v_ly_ly_l^{12k-1}v_{l+1}y_{l+1}^{12k} \ldots v_my_m^{12k}v_{m+1})\\
& = & \lag_{p,q}(v_1y_1^{12k} \ldots v_my_m^{12k}v_{m+1})\\
& = & \lag_{p,q}(\phi(v)),
\end{array}
\]
where inequality \ensuremath{(1)} follows from the induction hypothesis, inequality \ensuremath{(2)} follows from the first point, and equality \ensuremath{(3)} follows from Lemma \ref{same_seq}.
\item
This follows immediately from the points 3 and 4.
\end{enumerate}
\end{proof}

The following theorem states a property of $\phi$ similar to property
$\prop{2}$. Combined with the previous corollary (item 4), it will allow us to
show that $\rho$ satisfies property $\prop{2}$. 

\begin{theorem}
Let $v \in \Sigma^*$.
If $\lag_{p,q}(v) \geq \oldbound$, then $\lag_{p,q}(\phi(v)) > 4k$.
\end{theorem}

\begin{proof}
As $\lag_{p,q}(v) \geq \oldbound = 2m_{\tra}C_{\tra}$, by Lemma \ref{del_incr} there exists a decomposition of $v = v_1 \ldots v_{C_{\tra}+1}$ into $C_{\tra}+1$ non-epsilon subwords such that for every $1 \leq i \leq C_{\tra}-1$, if $p_i$ and $q_i$ denote the states corresponding to the input $v_1 \ldots v_i$ in the runs $r_{\tra}(v,p)$ and $r_{\tra}(v,q)$, respectively, then $|\del_{p_i,q_i}(v_1 \ldots v_i)| < |\del_{p_{i+1},q_{i+1}}(v_1 \ldots v_{i+1})|$.
Moreover, by Lemma \ref{fac_for}, there exist $0 < i_1 < i_2 < i_3 < i_4 \leq C_{\tra}$ such that 
\[
\sigma_{\tra}(v_{i_1} \ldots v_{i_2-1}) = \sigma_{\tra}(v_{i_2} \ldots v_{i_3-1}) = \sigma_{\tra}(v_{i_3} \ldots v_{i_4-1})
\]
is an idempotent.
Now, consider the runs of $\tra$:
\[
\begin{array}{lll}
p_0 \! \xrightarrow{v_1\ldots v_{i_1{-}1}|g_1} \! p_{i_1{-}1} \! \xrightarrow{v_{i_1}\ldots v_{i_2{-}1}|g_2} \! p_{i_2{-}1} \! \xrightarrow{v_{i_2} \ldots v_{i_3{-}1}|g_{3}} \! p_{i_3{-}1} \! \xrightarrow{v_{i_3} \ldots v_{i_4-1}|g_{4}} \! p_{i_4-1} \! \xrightarrow{v_{i_4} \ldots v_{C_{\tra}+1}|g_{5}} \! p,\\
q_0 \! \xrightarrow{v_1\ldots v_{i_1-1}|h_1} \! p_{i_1-1} \! \xrightarrow{v_{i_1}\ldots v_{i_2-1}|h_2} \! q_{i_2-1} \! \xrightarrow{v_{i_2} \ldots v_{i_3-1}|h_{3}} \! q_{i_3-1} \! \xrightarrow{v_{i_3} \ldots v_{i_4-1}|h_{4}} \! q_{i_4-1} \! \xrightarrow{v_{i_4} \ldots v_{C_{\tra}+1}|h_{5}} \! q.
\end{array}
\]
$p_{i_2-1} = p_{i_3-1}$ and $q_{i_2-1} = q_{i_3-1}$, by Corollary \ref{run_loop}.
Hence, by Lemma \ref{stable_del}, since 
\[
\begin{array}{l}
|\del_{p_{i_2-1},q_{i_2-1}}(v_1 \ldots v_{i_2-1})| < |\del_{p_{i_3-1},q_{i_3-1}}(v_1 \ldots v_{i_3-1})|, \\
\end{array}
\]
either
\[
\begin{array}{lllll}
|\del_{p_{i_2-1},q_{i_2-1}}(v_1 \ldots v_{i_2-1})| > 4k
& \! \! \! \! \textup{or} \! \! \! \! &
|\del_{p_{i_2-1},q_{i_2-1}}(v_1\ldots v_{i_2-1}(v_{i_2} \ldots v_{i_3-1})^{12k})| > 4k.
\end{array}
\]
Therefore $\lag_{p_{i_2-1},q_{i_2-1}}(v_1\ldots v_{i_2-1}(v_{i_2} \ldots v_{i_3-1})^{12k}) > 4k$.
Now, suppose by absurdo that $\lag_{p,q}(\phi(v)) \leq 4k$.
Let $S_v =  (w_1,y_1, \ldots, w_n,y_n,w_{n+1})$.
By definition of $S_v$, there exists $1 \leq d \leq n$ such that $y_d = v_{i_2} \ldots v_{i_3-1}$, $w_1\ldots w_d = v_1\ldots v_{i_2-1}$, and $w_{d+1} \ldots w_{n+1} = v_{i_3} \ldots v_{C_{\tra}+1}$.
However, this causes a contradiction, as
\[
\begin{array}{lll}
\multicolumn{3}{l}{\lag_{p,q}(\phi(v))}\\
& \stackrel{\mathclap{\scriptscriptstyle\smash{(1)}}}{\geq} & \lag_{p,q}(w_1 w_2 \ldots w_{d-1} w_d y_d^{12k} w_{d+1} w_{d+2} \ldots w_nw_{n+1})\\
& = & \lag_{p,q}(v_1\ldots v_{i_2-1}(v_{i_2} \ldots v_{i_3-1})^{12k} v_{i_3}\ldots v_{C_{\tra}+1})\\
& \stackrel{\mathclap{\scriptscriptstyle\smash{(2)}}}{\geq} & \lag_{p_{i_2-1},q_{i_2-1}}(v_1\ldots v_{i_2-1}(v_{i_2} \ldots v_{i_3-1})^{12k})\\
& > & 4k,
\end{array}
\]
where inequality \ensuremath{(1)} follows from Corollary \ref{inequ'}.\ref{lag_|_}, and inequality \ensuremath{(2)} follows from the definition of the lag.
\end{proof}

\begin{corollary}\label{pr_3}
The function $\rho$ satisfies \prop{2}.
\end{corollary}

\begin{proof}
Let us state \prop{2} once again.
\begin{description}
\item[\prop{2}]: for every word $w$, for every pair $p,q \in \choice{2k}{\uni}(\rho(w))$, $\lag_{p,q}(w) \leq \oldbound$.
\end{description}
Let $w$ be a word, and let $p,q \in \choice{2k}{\uni}(\rho(w))$.
Then
\[
\begin{array}{lll}
\multicolumn{3}{l}{\lag_{p,q}(\rho(w))}\\
& \leq & \lag(r_{\tra}(\rho(w),p),r_{\uni}(\rho(w))) + \lag(r_{\uni}(\rho(w)),r_{\tra}(\rho(w),q))\\
& \leq & 2k + 2k = 4k.
\end{array}
\]
Suppose ab absurdo that $\lag_{p,q}(w) > \oldbound$.
Then, by the previous theorem and Corollary \ref{inequ'}.\ref{lag_r_p}, 
\[
\lag_{p,q}(\rho(w)) \geq \lag_{p,q}(\phi(w)) > 4k,
\]
which is a contradiction.
\end{proof}

%%%%%%%%%%%%%%%%%%%%%%%%%%%%%%%%%%%%%%%%%
\subsubsection{Property 3}\label{subsec:p3}
%%%%%%%%%%%%%%%%%%%%%%%%%%%%%%%%%%%%%%%%%

We can finally prove that $\rho$ satisfies \prop{3}.

\begin{lemma}\label{p4}
The function $\rho$ satisfies \prop{3}.
\end{lemma}

\begin{proof}
Let us state \prop{3} once again.
\begin{description}
\item[\prop{3}]: for every $w \in \Sigma^* $, for every $a \in \Sigma$, if $wa$ is a prefix of a word of $\dom(\tra)$, then there exist $q \in \choice{k}{\uni}(\rho(wa))$, $p \in \choice{k}{\uni}(\rho(w))$ and $s \in \Sigma^*$ such that $(p,a,s,q) \in \Delta_{\tra}$.
\end{description}

Let $S'_{wa} = (v_1 \ldots v_l,y_l, \ldots, v_n,y_n,v_{n+1})$.
If $l = n+1$, then $\rho(wa) = \rho(w)a$.
Since $\sigma_{\tra}(\rho(wa)) = \sigma_{\tra}(wa)$ by Corollary \ref{equ1'}, $\rho(wa)$ is also a prefix of some word in $\dom(\rel{\tra})$.
Hence, since $\uni$ sequentially $k$-uniformises $\tra$, there exists $q$ in $\choice{k}{\uni}(\rho(wa))$.
Let $p$ be the state preceding $q$ in the run $r_{\tra}(\rho(wa),q)$.
Then $p \in \choice{k}{\uni}(\rho(w))$, and there exists $s \in \Sigma^*$ such that $(p,a,s,q) \in \Delta_{\tra}$.

Now suppose that $l \leq n$.
In order to find a pair of states satisfying the lemma, we will study the behaviour of $\tra$ over the word $\rho(wa)y_l$.
By definition of $S'_{wa}$, $wa = v_1 \ldots v_ly_l$, and $y_l$ is an idempotent.
Since 
\[
\sigma_{\tra}(\rho(wa)y_l) = \sigma_{\tra}(way_l) = \sigma_{\tra}(v_1 \ldots v_ly_ly_l) = \sigma_{\tra}(v_1 \ldots v_ly_l) =  \sigma_{\tra}(wa),
\]
$\rho(wa)y_l$ is also a prefix of a word of $\dom(\rel{\tra})$, hence, as $\uni$ sequentially k-uniformises $\tra$, there exists $q''$ in $\choice{k}{\uni}(\rho(wa)y_l)$.
Let
\[
r_{\tra}(\rho(wa)y_l,q'') = q_0 \xrightarrow{\rho(w)|u_1} p \xrightarrow{a|u_{2}} q \xrightarrow{\phi'(wa)|u_{3}} q' \xrightarrow{y_l|u_{4}} q''.
\]
We show that Corollary \ref{run_loop} can be applied.
First, by Corollary \ref{equ1'}, $\sigma_{\tra}(\phi'(wa)) = \sigma_{\tra}(y_l)$, hence it is an idempotent.
Moreover, since $\sigma_{\tra}(\rho(w)a) = \sigma_{\tra}(wa)  = \sigma_{\tra}(v_1 \ldots v_ly_l)$, there exists a state $r$ such that $(r,q) \in \sigma_{\tra}(y_l) = \sigma_{\tra}(\phi'(wa))$, and $(q',q'') \in \sigma_{\tra}(y_l) = \sigma_{\tra}(\phi'(wa))$.
This implies that $q = q'$.
Therefore, as $q'' \in \choice{k}{\uni}(\rho(wa)y_l)$, $p \in \choice{k}{\uni}(\rho(w))$, $q = q' \in \choice{k}{\uni}(\rho(wa))$, and $(p,a,u_2,q) \in \Delta_{\tra}$.
\end{proof}

\subsubsection{Proof of Theorem~\ref{f_a_det}}

We can finally prove Theorem~\ref{f_a_det}. It is a direct consequence
of Lemma~\ref{lem:mainlem} that shows how to construct a
$2|Q_{\tra}|\oldbound$-uniformiser from $\rho$, of Corollary~\ref{coro:prop1} showing
that $\rho$ satisfies $\prop{1}$, of Corollary~\ref{pr_3} showing that
$\rho$ satisfies $\prop{2}$, and Theorem~\ref{p4} which finally shows
that $\rho$ satisfies $\prop{3}$.

%%% Local Variables:
%%% mode: latex
%%% TeX-master: t
%%% End:

%%% Local Variables:
%%% mode: latex
%%% TeX-master: t
%%% End:

\section{Details for Section~\ref{sec:drat}}

The largest part of this section covers the proof of the following theorem that was stated in the main part of the paper.

\dratunif*

We define and analyse the notion of transformation
sequence for an input word in Sections~\ref{subsec:transformation-sequences}--\ref{subsec:profiles-and-saturated-sequences}. These transformation sequences are the key information that is stored in the vertices of the game graph. In Section~\ref{subsec:drat-game-construction} we construct the game graph and show how to construct a sequential uniformiser from a winning strategy.
The most involved part of the proof, the construction of a winning strategy from a sequential uniformiser, is presented in Section~\ref{subsec:uniformisers-to-strategies}. The proof of Theorem~\ref{the:drat-unif} is finalised at the end of Section~\ref{subsec:uniformisers-to-strategies}.

In Section~\ref{subsec:drat-k-delay} we then give the proof of the second theorem stated in Section~\ref{sec:drat} on the bounded delay uniformisation.

In the formal definition of deterministic transducers, the endmarker
can occur anywhere in the input and output words. However, for the
semantics we only consider the endmarker at the end of the two
words. We assume from now on that a deterministic transducer rejects
whenever the endmarker occurs anywhere else than at the end of the
input or at the end of the output word. So whenever we consider words
over $\Sigmaend^*$ we rather refer to $\Sigma^* \cup \Sigma^* \cdot
\dashv$ because all other words in $\Sigmaend^*$ are not relevant.

We mainly work with words including the endmarker
(which is not included in $\rel{\tra}$). We therefore denote by
$R_\tra$ the relation $\{(u\dashv,v\dashv) \mid (u,v) \in
\rel{\tra}\}$. A sequential uniformiser for $R_\tra$ can easily be
transformed into a sequential uniformiser for $\rel{\tra}$ by shifting
the output that is produced on reading the endmarker into the output
function for the final states.

The following remark on the composition of applications of $\delta^*$
is a direct consequence of the definitions. However, it is used in
several proofs and therefore we state it explicitly.
\begin{lemma} \label{lem:delta-composition}
If $\delta^*(p,u_1,v_1) = (r, \varepsilon, \varepsilon)$ and 
 $\delta^*(r,u_2,v_2) = (q, \varepsilon, \varepsilon)$, then
 $\delta^*(p,u_1u_2,v_1v_2) = (q, \varepsilon, \varepsilon)$. 
\end{lemma}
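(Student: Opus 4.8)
The idea is simply to glue together the two runs witnessing the hypotheses. Recall that $\delta^*(p,u,v)=(s,x,y)$ is obtained by iterating the one-step transition function $\delta$ on configurations of the form (state, remaining input, remaining output): there is a run that starts in $(p,u,v)$, applies a sequence of transitions each consuming a prefix of the current input and/or output, and gets stuck (or terminates) in $(s,x,y)$. So the first hypothesis gives a run $\rho_1$ from $(p,u_1,v_1)$ to $(r,\varepsilon,\varepsilon)$, and the second gives a run $\rho_2$ from $(r,u_2,v_2)$ to $(q,\varepsilon,\varepsilon)$.

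First I would observe that appending fixed suffixes to the two tapes does not affect the applicability of a transition: if $(s,x,y)\to(s',x',y')$ is a valid step of $\delta$, then so is $(s,xw,yw')\to(s',x'w,y'w')$ for all words $w,w'$, since a transition inspects and consumes only a prefix of the current tape contents. Applying this to every step of $\rho_1$ with $w=u_2$ and $w'=v_2$ yields a valid run from $(p,u_1u_2,v_1v_2)$ to $(r,u_2,v_2)$.

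Then I would concatenate this padded run with $\rho_2$, obtaining a run from $(p,u_1u_2,v_1v_2)$ to $(q,\varepsilon,\varepsilon)$. Since the transducer is deterministic, $\delta^*$ is a partial function and this is the unique maximal run from $(p,u_1u_2,v_1v_2)$, hence $\delta^*(p,u_1u_2,v_1v_2)=(q,\varepsilon,\varepsilon)$. (Alternatively, one proves the same statement by a routine induction on the number of steps of $\rho_1$, peeling off one transition at a time and invoking the one-step padding observation.)

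There is essentially no hard part here; the only point requiring a line of care is that the maximal-run semantics of $\delta^*$ really composes, i.e.\ the extended run on $(p,u_1u_2,v_1v_2)$ passes through the configuration $(r,u_2,v_2)$ rather than ``overshooting'' into $u_2$ or $v_2$ while still treating the first block. This is immediate from determinism together with the fact that $\rho_1$ already ends in state $r$ with both of its tapes empty, so no transition taken within the first block can have consumed any symbol of $u_2$ or $v_2$.
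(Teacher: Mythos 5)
Your proof is correct: the padding observation plus concatenation of runs (or the equivalent induction on the length of the first run, which you also sketch) is exactly the routine verification, and your remark that the first run ends with both tapes empty is the right reason why the composed run passes through $(r,u_2,v_2)$. The paper itself gives no proof at all — it states the lemma as "a direct consequence of the definitions" of $\delta^*$ — so your argument simply spells out the details the paper leaves implicit, in the intended way.
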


%%%%%%%%%%%%%%%%%%%%%%%%%%%%%%%%%%%%%%%%%
\subsection{Transformation Sequences} \label{subsec:transformation-sequences}
%%%%%%%%%%%%%%%%%%%%%%%%%%%%%%%%%%%%%%%%%

Our goal is to build a game for $\tra$
such that a winning strategy for player Output corresponds to a
sequential uniformiser for $R_\tra$.
Wlog, we assume that $\tra$ is complete.
In the game, $\tra$
is simulated on the pairs of input and output that are produced by the
players, i.e., we apply $\delta^*$
to these pairs. The aim of Output is to reach a final state if Input
plays a word in the domain of $R_\tra$.
However, it might happen that the application of $\delta^*$
requires an output symbol but for deciding which output to produce
next, the strategy needs some lookahead on the input (the case that
the output is ahead of the input will not occur). Instead of storing
this lookahead on the input explicitly, we store an abstraction of it
that is sufficient to simulate $\delta^*$
once the next output symbols are provided. This is done by state
transformations induced by an input word.

We consider partial functions $\tau: \Qin \rightarrow Q$,
also called \emph{partial transformations}, and we write $\tau(q) = \bot$
if $\tau$ is not defined for $q \in \Qin$. The set of all partial
transformations (for the transducer $\tra$) is denoted by
$\Theta_\tra$. In the following, we only speak of transformations
instead of partial transformations.

The transformation $\tau_a$ for $a \in \Sigmaend$ is defined by the
transition function, namely, $\tau_a(q) = \delta(q,a)$ for each $q \in
\Qin$. 

We say that $\tau \in \Theta_\tra$
is \emph{consistent} with an input word $u \in \Sigmaend^*$
if for each $q\in \Qin$:
\begin{itemize}
\item  If $\tau(q) \not= \bot$, then $\delta^*(q,u,\varepsilon) = (\tau(q), \varepsilon, \varepsilon)$.
\item If $\delta^*(q,u,\varepsilon) = (p,\varepsilon,\varepsilon)$ with $p \in \Qin$, then $\tau(q) \not= \bot$ (and thus $\tau(q) = p$ by the first condition).
\end{itemize}
This means that the defined part of $\tau$ is consistent with the
state transformation induced by $u$, and that $\tau(q)$ can only be
undefined if reading $u$ from $q$ leads to some output state. Note
that each input letter $a$ is consistent with $\tau_a$, and that it remains
consistent if we set some values with $\tau_a(q) \in \Qout$ to $\bot$
(both conditions of consistency are not affected by this operation).

We say that $\tau$
is \emph{maximal} if there is $q \in \Qin$ with $\tau(q) \in
\Qout$ (intuitively,  $\tau$ cannot be extended without reading an
output symbol). For $\tau_1,\tau_2 \in \Theta_\tra$ such that
$\tau_1$ is not maximal, we define the product $\tau_1 \circ
\tau_2$ as the composition of $\tau_1,\tau_2$, that is
\[
\tau_1 \circ
\tau_2 (q) = 
  \begin{cases}
    \bot \mbox{ if } \tau_1(q) = \bot \mbox{ or }
    \tau_2(\tau_1(q)) = \bot \\
    \tau_2(\tau_1(q)) \mbox{ otherwise}.
  \end{cases}
\]

We extend the above definitions to finite
sequences of transformations.
For $\rho_1,\rho_2 \in \Theta_\tra^*$ define
\[
\rho_1 \circ \rho_2 =
\begin{cases}
\rho_1 \mbox { if $\rho_2 = \varepsilon$} \\  
\rho_2 \mbox { if $\rho_1 = \varepsilon$} \\  
\rho_1 \rho_2 \mbox{ if the last transformation in $\rho_1$ is maximal} \\
\rho_1' (\tau_1 \circ \tau_2) \rho_2' \mbox{ if $\rho_1 =
  \rho_1'\tau_1$, $\rho_2 =
  \tau_2\rho_2'$ with $\tau_1,\tau_2 \in \Theta_T$ and $\tau_1$ not
  maximal} \\
\end{cases}
\]
By a case distinction, one can show that this operation is
associative (it is a mix of concatenation and composition of
functions, which are both associative).
For $u \in \Sigmaend^*$
with $u = a_1 \cdots a_n$,
we define the transformation sequence of $u$
as $\rho_u = \tau_{a_1} \circ \cdots \circ \tau_{a_n}$.

Note that one has to distinguish between the notations
$\rho = \tau_1 \cdots \tau_n$
and $\rho = \tau_1 \circ \cdots \circ \tau_n$.
The first notation means that $\rho$
is the sequence consisting of the transformations
$\tau_1, \ldots, \tau_n$,
and the second notation means that $\rho$
is the sequence obtained from the transformations
$\tau_1, \ldots, \tau_n$
by composing them with $\circ$.
These are the same if, and only if, $\tau_1, \ldots, \tau_{n-1}$
are maximal.

We say that $\rho = \tau_1 \cdots \tau_n$
is \emph{consistent} with an input word $u \in \Sigmaend^*$
if there are words $u_1, \ldots, u_n$
such that $u = u_1 \cdots u_n$
and each $\tau_i$
is consistent with $u_i$.
The empty sequence of transformations is defined to be consistent only
with $\varepsilon$. As for single letters, $u$ is consistent with
$\rho_u$ but also with all sequences obtained from $\rho$ by setting
some values that are output states to $\bot$.

We call $\rho$
\emph{reduced} if $\tau_1, \ldots, \tau_{n-1}$
are maximal (if $\rho$
is not reduced, then a non-maximal $\tau_i$
could be merged with $\tau_{i+1}$
by $\circ$).
Note that $\rho \circ \tau$ is reduced if $\rho$ is reduced.

%Given a transformation sequence $\rho$ and state in $q \in Q$, we can
%apply $\rho$ to $q$ until a state in $\Qout$ is reached. Formally, we
%can view $\rho$ as a mapping $\rho: Q \rightarrow Q \times
%\Theta_\tra^*$ defined by
%\[
%\rho(q) = 
%\begin{cases}
%  (q,\varepsilon) \mbox{ if $\rho$ is the empty sequence} \\
%  (q,\rho) \mbox{ if $q \in \Qout$} \\
%  \rho'(\tau(q)) \mbox{ if $\rho = \tau\rho'$, $q \in \Qin$, and $\tau(q) \not= \bot$} \\
%  \bot \mbox{ otherwise}.
%\end{cases}
%\]
%Since $\rho$ encodes possible behaviours of $\tra$ for the words in $\Lin(\rho)$.

%%%%%%%%%%%%%%%%%%%%%%%%%%%%%%%%%%%%%%%%%
\subsection{Languages and Relations of Transformation Sequences}
\label{subsec:langauges-and-relations-of-transformation-sequences}
%%%%%%%%%%%%%%%%%%%%%%%%%%%%%%%%%%%%%%%%%
Using the consistency notion, a transformation sequence
$\rho \in \Theta_\tra^*$ defines a set of inputs
\[
\Lin(\rho) = \{u \in \Sigmaend^* \mid \rho \mbox{ is consistent with } u\}.
\]
Note that by the consistency definition for the empty sequence, we obtain $\Lin(\varepsilon) = \{\varepsilon\}$.

%The following lemma makes the intuition precise that applying $\rho$ to a state $q$ corresponds to applying $\delta^*$ from $q$ on an input sequence in $\Lin(\rho)$.
%\begin{lemma} \label{lem:simulate-transformation-sequence}
%  If $w \in \Lin(\rho)$ and $\rho(q) = (p,\xi)$ is defined, then $\delta^*(q,w,\varepsilon) = (p,u,\varepsilon)$ with $u \in \Lin(\xi)$.
%\end{lemma}
%\begin{proof}
%We give the proof along the first three cases used in the definition of $\rho(q)$ (the last case being irrelevant since we assume that $\rho(q)$ is defined.
%
%If $\rho = \varepsilon$, then $w = \varepsilon$ and the claim is obviously true. If $q \in \Qout$, then $\delta^*(q,w,\varepsilon) = (q,w,\varepsilon)$ and the claim is again trivially true. 
%
%If $q \in \Qin$ and $\rho = \tau\rho'$ with $\tau(q) = q' \not= \bot$, then $w = xy$ such that $x$ is consistent with $\tau$ and $y$ is consistent with $\rho'$. This means that $\delta^*(q,xy,\varepsilon) = \delta^*(q',y,\varepsilon)$, and $y \in \Lin(\rho')$. Now the claim follows by induction on the length of the transformation sequence. 
%\end{proof}
%
%
% Output language
%

Keeping in mind that $\rho$ encodes information on a lookahead on the
input word, we also define a set of output words that can be used
to ``catch up'' this lookahead. For this definition we also specify a
starting and a target state. For $\tau \in \Theta_\tra$, and $p,q
\in Q$, let
\[
\Lout(p,\tau,q) = \{v \in \Sigmaend^* \mid \exists r \in \Qin:\;
\delta^*(p,\varepsilon,v) = (r,\varepsilon,\varepsilon) \mbox{ and }
\tau(r) = q \}.
\]
We inductively extend this to sequences
of transformations:
\[
\Lout(p,\rho_1 \circ \rho_2,q) = \bigcup_{r \in Q} \Lout(p,\rho_1,r) \cdot \Lout(r,\rho_2,q).
\]
For the empty transformation sequence we let
\[
\Lout(p,\varepsilon,q) =
\begin{cases}
 \{\varepsilon\} & \mbox{if } p=q \\
 \emptyset & \mbox{otherwise}. 
\end{cases}
\]
The language without a specific target state is
\[
\Lout(p,\rho) = \bigcup_{q \in Q} \Lout(p,\rho,q).
\]
We refer to $\Lout(p,\rho)$
as the \emph{output language of $\rho$},
and for a word $v \in \Lout(p,\rho)$,
we say that it \emph{traverses} $\rho$.

The following properties are direct consequences of the definition of the output language, and the determinism of
$\tra$ (including its partition into input and output states).
%------------------------------
\begin{lemma}\label{lem:properties-lout}
  \begin{enumerate}
%  \item If $\varepsilon \in \Lout(p,\rho)$, then $\rho=\varepsilon$ or $p \in \Qin$. 
  \item If $v \in \Lout(p,\rho,q)$
    and $v \in \Lout(p,\rho,q')$, then $q=q'$.
  \item If $v \in \Lout(p,\rho)$,
    then no proper prefix of $v$ is in $\Lout(p,\rho)$.
  \end{enumerate}
\end{lemma}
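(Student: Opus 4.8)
The plan is to prove both items by induction on the length of the transformation sequence $\rho$, using the recursive definition $\Lout(p,\rho_1\circ\rho_2,q) = \bigcup_r \Lout(p,\rho_1,r)\cdot\Lout(r,\rho_2,q)$, and reducing everything to the base case of a single transformation $\tau$ together with the determinism of $\tra$.

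For item~(1), I first establish the base case $\rho = \tau$. Suppose $v \in \Lout(p,\tau,q)\cap\Lout(p,\tau,q')$. By definition there are $r,r'\in\Qin$ with $\delta^*(p,\varepsilon,v) = (r,\varepsilon,\varepsilon) = (r',\varepsilon,\varepsilon)$ (the run of $\tra$ on $(\varepsilon,v)$ is unique since $\tra$ is deterministic, so $r = r'$), and $\tau(r) = q$, $\tau(r') = q'$; since $\tau$ is a function, $q = q'$. For the inductive step write $\rho = \rho_1\circ\tau$ with $\rho_1$ shorter. If $v\in\Lout(p,\rho,q)$, then $v = v_1 v_2$ with $v_1\in\Lout(p,\rho_1,r)$ and $v_2\in\Lout(r,\tau,q)$ for some $r$; similarly $v = v_1'v_2'$ with $v_1'\in\Lout(p,\rho_1,r')$, $v_2'\in\Lout(r',\tau,q')$. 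The key observation is that the split point is forced: by the inductive hypothesis applied to item~(2) for $\rho_1$, the word $v_1$ is the unique prefix of $v$ lying in $\Lout(p,\rho_1)$, hence $v_1 = v_1'$ and then $r = r'$ by the inductive hypothesis for item~(1); finally $v_2 = v_2'$ and the base case gives $q = q'$. (This is why the two items must be proved together in a single induction.)

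For item~(2), again start with $\rho = \tau$: if $v\in\Lout(p,\tau)$ then $\delta^*(p,\varepsilon,v) = (r,\varepsilon,\varepsilon)$ with $r\in\Qin$ an input state, so along the run on $(\varepsilon,v)$ every intermediate state visited after consuming a proper prefix $v'$ of $v$ is an \emph{output} state (an input state would not consume an output symbol), hence no such $v'$ can lie in $\Lout(p,\tau)$, which requires ending in an input state. For the inductive step $\rho = \rho_1\circ\tau$: let $v\in\Lout(p,\rho)$, so $v = v_1v_2$ with $v_1\in\Lout(p,\rho_1)$ ending in some input state $r$ and $v_2\in\Lout(r,\tau)$. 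Suppose a proper prefix $v'$ of $v$ is also in $\Lout(p,\rho)$, say $v' = v_1'v_2'$ with $v_1'\in\Lout(p,\rho_1)$, $v_2'\in\Lout(r',\tau)$. By the inductive hypothesis for item~(2) on $\rho_1$, neither of $v_1,v_1'$ is a proper prefix of the other, and since both are prefixes of $v$ they are equal, so $r = r'$ (item~(1) on $\rho_1$). Then $v_2'$ is a prefix of $v_2$, both in $\Lout(r,\tau)$, so by the base case of item~(2) we get $v_2' = v_2$, hence $v' = v$, contradicting properness.

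The main obstacle is purely organisational: making the simultaneous induction on the two items airtight, in particular being careful that in item~(1) the uniqueness of the decomposition $v = v_1 v_2$ genuinely follows from item~(2) at the shorter length rather than being assumed, and handling the empty-sequence base case ($\Lout(p,\varepsilon,q)$ is $\{\varepsilon\}$ if $p=q$ and empty otherwise, for which both statements are immediate). All remaining reasoning is a direct unwinding of determinism of $\delta^*$ and of the input/output partition of the states, with no delicate calculation required.
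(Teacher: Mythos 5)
Your proof is correct. The paper in fact gives no proof of this lemma at all, merely asserting that both items are direct consequences of the definition of $\Lout$ and the determinism of $\tra$ together with its partition into input and output states; your simultaneous induction on the length of $\rho$ (base case via uniqueness of the run on $(\varepsilon,v)$ and the fact that all states visited before the end of $v$ must be output states, inductive step via the forced split point coming from item~(2) at shorter length) is precisely a careful elaboration of that intended argument, so it matches the paper's approach.
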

%------------------------------

The next lemma states that $\rho$
encodes enough information to simulate $\tra$
on output words in the output language of $\rho$ from $p$.
\begin{lemma} \label{lem:Lin-Lout}
If $u \in \Lin(\rho)$ and $v \in \Lout(p,\rho,q)$, then
$\delta^*(p,u,v) = (q,\varepsilon,\varepsilon)$.
\end{lemma}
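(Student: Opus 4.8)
The plan is to prove Lemma~\ref{lem:Lin-Lout} by induction on the structure of the reduced form of $\rho$, mirroring the inductive definitions of $\Lin$ and $\Lout$. First I would reduce to the case where $\rho$ is written as $\rho_1 \circ \rho_2$ with $\rho_1$ a single transformation $\tau$ (and the last transformation of $\rho_1$ therefore maximal, so that $\circ$ behaves as concatenation at the top level), so that I can peel off one transformation at a time. The base cases are the two given degenerate definitions: for $\rho = \varepsilon$, consistency with $u$ forces $u = \varepsilon$, and $v \in \Lout(p,\varepsilon,q)$ forces $v = \varepsilon$ and $p = q$, so $\delta^*(p,\varepsilon,\varepsilon) = (p,\varepsilon,\varepsilon)$ holds trivially. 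For $\rho$ a single transformation $\tau$, consistency of $\tau$ with $u$ and $v \in \Lout(p,\tau,q)$ mean precisely that there is $r \in \Qin$ with $\delta^*(p,\varepsilon,v) = (r,\varepsilon,\varepsilon)$ and $\tau(r) = q$; since $\tau$ is consistent with $u$ and $\tau(r) = q \neq \bot$, the first consistency clause gives $\delta^*(r,u,\varepsilon) = (q,\varepsilon,\varepsilon)$; combining with $\delta^*(p,\varepsilon,v) = (r,\varepsilon,\varepsilon)$ via Lemma~\ref{lem:delta-composition} yields $\delta^*(p,u,v) = (q,\varepsilon,\varepsilon)$.

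For the inductive step, suppose $\rho = \rho_1 \circ \rho_2$ is reduced with $\rho_1 = \tau$ maximal and $\rho_2$ shorter. If $u \in \Lin(\rho)$, then by the consistency definition for sequences there is a factorisation $u = u_1 u_2$ with $\tau$ consistent with $u_1$ and $\rho_2$ consistent with $u_2$, so $u_1 \in \Lin(\tau)$ and $u_2 \in \Lin(\rho_2)$. If $v \in \Lout(p,\rho,q)$, then by the inductive definition of $\Lout$ over $\circ$ there is $r \in Q$ with $v = v_1 v_2$, $v_1 \in \Lout(p,\tau,r)$ and $v_2 \in \Lout(r,\rho_2,q)$. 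Applying the base case to $\tau$ gives $\delta^*(p,u_1,v_1) = (r,\varepsilon,\varepsilon)$, and applying the induction hypothesis to $\rho_2$ gives $\delta^*(r,u_2,v_2) = (q,\varepsilon,\varepsilon)$. Then Lemma~\ref{lem:delta-composition} composes these into $\delta^*(p,u_1 u_2, v_1 v_2) = \delta^*(p,u,v) = (q,\varepsilon,\varepsilon)$, as required.

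The main subtlety to handle carefully is the interplay between the two notions of consistency and the definition of $\circ$: the statement of the lemma is about an arbitrary sequence $\rho = \tau_1 \cdots \tau_n$, but the definitions of $\Lout$ over a sequence and of consistency over a sequence are phrased relative to the decomposition of $\rho$ as written, while $\circ$ may merge non-maximal transformations. I would argue that it suffices to prove the statement for reduced $\rho$, because both $\Lin(\rho)$ and $\Lout(p,\rho,q)$ are invariant under replacing $\rho$ by its reduced form $\rho'$ obtained by applying $\circ$ (this invariance should itself be a short lemma or a remark, using associativity of $\circ$ and the fact that merging a non-maximal $\tau_i$ with $\tau_{i+1}$ neither changes which input words are consistent nor which output words traverse the sequence). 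Once we are in the reduced case, $\rho = \tau_1 (\tau_2 \cdots \tau_n)$ where $\tau_1$ is maximal (if $n \geq 2$), so the top-level $\circ$ is genuine concatenation and the induction above goes through cleanly. I expect this bookkeeping about reducing to the reduced form — rather than any deep argument — to be the only real obstacle; the arithmetic of $\delta^*$ is entirely absorbed into Lemma~\ref{lem:delta-composition}.
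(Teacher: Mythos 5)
Your proof is correct and, at its core, it is the same induction as the paper's: peel off the first transformation, handle the base cases $\varepsilon$ and a single $\tau$ exactly as you do, and glue the two halves together with Lemma~\ref{lem:delta-composition}. The only real difference is your preliminary pass to a reduced sequence, and that detour is unnecessary: consistency of a sequence $\rho=\tau_1\cdots\tau_n$ with $u$ is defined by splitting $u$ along the sequence \emph{as written}, and $\Lout(p,\rho,q)$ likewise decomposes as $\bigcup_{r\in Q}\Lout(p,\tau_1,r)\cdot\Lout(r,\tau_2\cdots\tau_n,q)$ for plain concatenation, with no maximality assumption on $\tau_1$ (this is also how the paper uses it when defining profile multiplication, $\Lout(p,\rho_1\rho_2,q)=\bigcup_r\Lout(p,\rho_1,r)\Lout(r,\rho_2,q)$). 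So the paper simply writes $\rho=\tau\rho'$ and inducts on the length of an arbitrary sequence. Your reduction does go through, but note that you only need the inclusions $\Lin(\rho)\subseteq\Lin(\rho')$ and $\Lout(p,\rho,q)\subseteq\Lout(p,\rho',q)$ for the reduced form $\rho'$; the two-way ``invariance'' you assert is stronger than necessary, and the converse direction for $\Lin$ (recovering a factorisation of $u$ from consistency with a merged transformation) is precisely the kind of claim that would itself require an argument, so it is better not to rely on it. In short: same argument, with one layer of bookkeeping that the paper's phrasing of the definitions lets you skip.
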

\begin{proof}
  We show the claim by induction on the length of $\rho$. If $\rho = \varepsilon$, then $u = v = \varepsilon$ and $p = q$. 
 
  If $\rho = \tau \in \Theta_\tra$,
  then $v \in \Lout(p,\rho,q)$
  means that
  $\delta^*(p,\varepsilon, v) = (r, \varepsilon, \varepsilon)$
  and $\tau(r) = q$.
  Furthermore, $\tau$
  is consistent with $u$,
  that is,
  $\delta^*(r,u,\varepsilon) = (q, \varepsilon, \varepsilon)$.
  In combination (using Lemma~\ref{lem:delta-composition}) we obtain
  $\delta^*(p,u,v) = (q, \varepsilon, \varepsilon)$.

  If $\rho = \tau \rho'$,
  then $u = u_1u'$
  such that $\tau$
  is consistent with $u_1$,
  and $v = v_1v'$ such that $v_1 \in \Lout(p,\tau,r)$ for some state
  $r$. From the base case for sequences of length $1$ we obtain
  $\delta^*(p,u_1,v_1) = (r, \varepsilon, \varepsilon)$. We conclude by
  induction since $u' \in \Lin(\rho')$ and $v' \in \Lout(r,\rho,q)$. 
\end{proof}

The nodes of the game graph that we construct later, consists of a
state together with a sequence of transformations, encoding the
lookahead on the input. For a state $p$
and a transformation sequence $\rho \in \Theta_\tra^*$,
we define the relation $R_p^\rho$
as those pairs of words whose output starts with a prefix that
traverses $\rho$
from $p$ to some state $q$,
and the remaining pair is accepted by $\tra$
from $q$. Formally,
\[
  R_p^\rho = \{(x,yz) \in \Sigmaend^* \times \Sigmaend^* \mid \exists q
  \in Q:\; y \in \Lout(p,\rho,q) \mbox{ and } \delta^*(q,x,z) = (r,\varepsilon,\varepsilon) \mbox{ with } r \in F\}.
\]
Note that, while in $R_\tra$ all words have to end with the endmarker, it is possible that $\varepsilon$ is in the domain or image of $R_p^\rho$. For example, if $p \in F$, and $\rho = \varepsilon$, then $R_p^\rho = \{(\varepsilon,\varepsilon)\}$.

The following Lemma states the connection between $R_p^\rho$ and $R_\tra$.
%------------------------------
\begin{lemma} \label{lem:lookahead-relation}
If $\delta^*(q_0,u,v) = (p,w,\varepsilon)$, $w \in \Lin(\rho)$, and
$(u',v') \in R_p^\rho$, then $(uu',vv') \in R_\tra$. 
\end{lemma}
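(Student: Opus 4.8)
The plan is to reconstruct the run of $\tra$ on $(uu',vv')$ from $q_0$ by gluing together three runs: the one realising $\delta^*(q_0,u,v) = (p,w,\varepsilon)$, the one reading the prefix of $v'$ that traverses $\rho$, and the one reading the remaining suffix of $v'$ together with $u'$. The glue is Lemma~\ref{lem:delta-composition}, and the only extra ingredient is Lemma~\ref{lem:Lin-Lout} to connect $\Lin(\rho)$ and $\Lout(p,\rho,\cdot)$.

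First I would unfold the hypothesis $(u',v') \in R_p^\rho$: by definition of $R_p^\rho$ there are a state $q$ and a factorisation $v' = yz$ with $y \in \Lout(p,\rho,q)$ and $\delta^*(q,u',z) = (r,\varepsilon,\varepsilon)$ for some $r \in F$. Since $w \in \Lin(\rho)$ and $y \in \Lout(p,\rho,q)$, Lemma~\ref{lem:Lin-Lout} gives $\delta^*(p,w,y) = (q,\varepsilon,\varepsilon)$. To be able to prepend the run on $(u,v)$, I note that the leftover input $w$ in $\delta^*(q_0,u,v) = (p,w,\varepsilon)$ is a suffix of $u$, say $u = u_0 w$ with $u_0$ the consumed part, and that this suffix is never read in the run witnessing $\delta^*(q_0,u,v) = (p,w,\varepsilon)$; hence the very same sequence of transitions also shows $\delta^*(q_0,u_0,v) = (p,\varepsilon,\varepsilon)$ (this configuration is halting whether $p$ is an input or an output state, since both remainders are now empty). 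Applying Lemma~\ref{lem:delta-composition} in turn to $\delta^*(q_0,u_0,v) = (p,\varepsilon,\varepsilon)$ and $\delta^*(p,w,y) = (q,\varepsilon,\varepsilon)$, and then to the result and $\delta^*(q,u',z) = (r,\varepsilon,\varepsilon)$, yields $\delta^*(q_0,u_0 w u', v y z) = (r,\varepsilon,\varepsilon)$, that is, $\delta^*(q_0,uu',vv') = (r,\varepsilon,\varepsilon)$ with $r \in F$; since $\tra$ is deterministic and complete, this is exactly $(uu',vv') \in R_\tra$.

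The only non-mechanical point is the trimming step, namely that the unconsumed suffix $w$ of the input may be deleted without affecting the run reaching $p$. Equivalently, one could isolate a mild strengthening of Lemma~\ref{lem:delta-composition} allowing a non-empty input remainder after the first factor and justify it by the same elementary case analysis on $\delta^*$. In either formulation the care needed is only to see that $w$ is untouched before $p$ is reached, which holds because that run halts with empty output remainder, so $p$ must be an output state unless $w$ is already empty. I expect this piece of $\delta^*$-bookkeeping to be the sole obstacle; everything else is an immediate chaining of the quoted lemmas.
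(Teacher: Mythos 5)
Your proof is correct and follows essentially the same route as the paper's: unfold $R_p^\rho$ into $v'=yz$ with $y\in \Lout(p,\rho,q)$, apply Lemma~\ref{lem:Lin-Lout} to get $\delta^*(p,w,y)=(q,\varepsilon,\varepsilon)$, and chain the three runs with Lemma~\ref{lem:delta-composition}. The ``trimming'' step you single out (passing from $\delta^*(q_0,u,v)=(p,w,\varepsilon)$ to $\delta^*(q_0,u_0,v)=(p,\varepsilon,\varepsilon)$ with $u=u_0w$) is exactly the paper's first step, which it asserts without further justification, so your extra bookkeeping is harmless.
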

%------------------------------
\begin{proof}
 First of all, $\delta^*(q_0,u,v) = (p,w,\varepsilon)$
  implies that $u = u_1w$
  and $\delta^*(q_0,u_1,v) = (p,\varepsilon,\varepsilon)$.

  Furthermore, since $(u',v') \in R_p^\rho$,
  we know that $v' = yz$
  with $y \in \Lout(p,\rho,q)$
  and 
  $\delta^*(q,u',z) = (r,\varepsilon,\varepsilon)$
  with $r \in F$.
  From $y \in \Lout(p,\rho,q)$
  and $w \in \Lin(\rho)$
  with Lemma~\ref{lem:Lin-Lout} we obtain
  $\delta^*(p,w,y) = (q,\varepsilon,\varepsilon)$.

  Combining  $\delta^*(q_0,u_1,v) = (p,\varepsilon,\varepsilon)$,
  $\delta^*(p,w,y) = (q,\varepsilon,\varepsilon)$, and
  $\delta^*(q,u',z) = (r,\varepsilon,\varepsilon)$ using
  Lemma~\ref{lem:delta-composition}, we obtain 
  $\delta^*(q_0,u_1wu',vyz) = (r,\varepsilon,\varepsilon)$ with $r \in
  F$, hence $(uu',vv') =  (u_1wu',vyz) \in R_\tra$.
\end{proof}
%------------------------------

The idea of our game construction can be illustrated using the
statement of Lemma~\ref{lem:lookahead-relation}. Assume that the
players Input and Output have already played the pair $(u,v)$
of words with $\delta^*(q_0,u,v) = (p,w,\varepsilon)$
as in Lemma~\ref{lem:lookahead-relation}. Then the current node of the
game is of the form $(p,\rho)$,
where $w \in \Lin(\rho)$.
The statement of Lemma~\ref{lem:lookahead-relation} now means that if
player Output ensures that the pair $(u',v')$
of words from the remaining play is in $R_p^\rho$, then she wins 
because $(uu',vv') \in R_\tra$.

%%%%%%%%%%%%%%%%%%%%%%%%%%%%%%%%%%%%%%%%%
\subsection{Reduction of Transformation Sequences} \label{subsec:reduction-transformation-sequences}
%%%%%%%%%%%%%%%%%%%%%%%%%%%%%%%%%%%%%%%%%

Transformation sequences encode information about the lookahead on the
input.  To keep this information bounded, we apply an operation to
specific sequences that corresponds to removing all paths that
require output. Formally, for $\rho \in \Theta_\tra^+$
with $\rho = \tau_1 \cdots \tau_n$,
we define its \emph{reduction to input paths}
$\reduce{\rho} \in \Theta_\tra$
by removing all intermediate output states, which formally is
$\reduce{\rho} = \tau_1' \circ \cdots \circ \tau_{n-1}' \circ \tau_n$,
where for $i \in \{1,\ldots,n-1\}$:
\[
\tau_i'(q) =
\begin{cases}
 \bot \mbox{ if } \tau_i(q) \in \Qout \\
 \tau_i(q) \mbox{ otherwise}
\end{cases}
\]
Note that $\reduce{\rho}$
is a single transformation because all output states have been removed
in the first $n-1$
transformations, and then all are merged into one transformation by
$\circ$.

From the definition of consistency it is clear that $\Lin(\rho) \subseteq \Lin(\reduce{\rho})$, while the output languages can only decrease by setting some values to undefined, that is, $\Lout(p,\reduce{\rho},q) \subseteq \Lout(p,\rho,q)$. However, since $\reduce{\rho}$ only removes intermediate output states, the empty word cannot be removed from the output languages, as stated in the lemma below. 
%------------------------------
\begin{lemma} \label{lem:empty-word-traversal}
  Let $\rho \in \Theta_\tra^+$
  and $p,q \in Q$.
  Then $\varepsilon \in \Lout(p,\rho,q)$
  iff $\varepsilon \in \Lout(p,\reduce{\rho},q)$.
\end{lemma}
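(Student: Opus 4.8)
The claim is purely about the empty word, so the key observation is that $\varepsilon \in \Lout(p,\rho,q)$ means precisely that the state transformation induced by $\rho$ "from $p$ without consuming any output" sends $p$ to $q$, and this should be unaffected by deleting intermediate output states, since such deletions only remove transitions that genuinely require reading at least one output symbol. The plan is to unfold the inductive definition of $\Lout(p,\rho,q)$ along the decomposition $\rho = \tau_1 \circ \cdots \circ \tau_n$ and track what it means for $\varepsilon$ to traverse it.

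First I would prove the single-transformation case. For $\tau \in \Theta_\tra$, by definition $\varepsilon \in \Lout(p,\tau,q)$ iff there is $r \in \Qin$ with $\delta^*(p,\varepsilon,\varepsilon) = (r,\varepsilon,\varepsilon)$ and $\tau(r) = q$; since $\delta^*(p,\varepsilon,\varepsilon) = (p,\varepsilon,\varepsilon)$, this holds iff $p \in \Qin$ and $\tau(p) = q$. Now I distinguish on whether $q \in \Qin$ or $q \in \Qout$. If $q \in \Qin$, then $\tau(p) \in \Qin$, so the modification $\tau \mapsto \tau'$ in the definition of $\reduce{\cdot}$ (which only sets values in $\Qout$ to $\bot$) leaves this value untouched. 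If $q \in \Qout$, then I argue that $\varepsilon$ cannot be in $\Lout(p,\reduce{\rho},q)$ either, because $\reduce{\rho}$ is a composition ending in $\tau_n$ but with all earlier values in $\Qout$ killed; more carefully, I should use Lemma~\ref{lem:properties-lout}(2) together with the observation that if $\varepsilon$ traverses $\reduce{\rho}$ to an output state $q$, then the full $\rho$ would also have $\varepsilon$ traversing to that same output state — but I actually want the converse direction here, so let me instead note directly: for $\rho = \tau$, $\reduce{\rho} = \tau$ when $n=1$, so the case is trivial, and for $n \ge 2$ the last transformation $\tau_n$ is applied to a value that, if it is in $\Qout$, means the whole composition value is in $\Qout$, and the point is that being in $\Qout$ is preserved or killed consistently.

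The cleaner route, which I would actually take, is induction on $n$ using the inductive clause $\Lout(p,\rho_1 \circ \rho_2, q) = \bigcup_r \Lout(p,\rho_1,r)\cdot\Lout(r,\rho_2,q)$ with $\rho_1 = \tau_1 \cdots \tau_{n-1}$ and $\rho_2 = \tau_n$, and similarly $\reduce{\rho} = (\tau_1' \circ \cdots \circ \tau_{n-1}') \circ \tau_n$. Since $\varepsilon = \varepsilon \cdot \varepsilon$ is the only decomposition of $\varepsilon$, we get $\varepsilon \in \Lout(p,\rho,q)$ iff there is $r$ with $\varepsilon \in \Lout(p,\tau_1\cdots\tau_{n-1},r)$ and $\varepsilon \in \Lout(r,\tau_n,q)$. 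By the base case, $\varepsilon \in \Lout(r,\tau_n,q)$ forces $r \in \Qin$ and $\tau_n(r) = q$; so the intermediate state $r$ reached by the $\varepsilon$-traversal of the first part is an input state, which is exactly the kind of state that the primed transformations $\tau_i'$ preserve. This is where I apply the induction hypothesis to $\tau_1 \cdots \tau_{n-1}$: the $\varepsilon$-traversal there ends in an input state $r$, and I claim $\varepsilon \in \Lout(p,\tau_1'\circ\cdots\circ\tau_{n-1}', r)$ iff $\varepsilon \in \Lout(p,\tau_1\cdots\tau_{n-1},r)$, for any input state $r$. Then reassembling gives the equivalence, with $\tau_n' = \tau_n$ on input states anyway (and in $\reduce{\cdot}$ the last transformation is not primed at all).

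The main obstacle is handling the intermediate states correctly: the subtlety is that the $\varepsilon$-traversal of $\tau_1 \cdots \tau_{n-1}$ passes through a sequence of states, each of which must be an input state except possibly the last, and the priming operation kills exactly the transitions into output states among the first $n-1$ transformations — so I need the induction hypothesis stated for a fixed input target state $r$ (not just "some state"), and I need to observe that any $\varepsilon$-traversal of a nonempty prefix that ends in an input state automatically passes only through input states (since once a transformation maps to an output state, the traversal of the remaining nonempty suffix would have to start from an output state, but $\Lout(p,\tau,q)$ requires $p$ to reach an input state $r$ via $\delta^*(p,\varepsilon,v)$ — actually an output state cannot $\varepsilon$-move to anything producing output, so $\varepsilon \in \Lout$ from an output state is impossible). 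Making this "only passes through input states" observation precise, probably as a small auxiliary claim or inline remark, is the crux; the rest is bookkeeping with the definitions.
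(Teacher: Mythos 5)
Your proof is correct and takes essentially the same route as the paper's: induction on the length of $\rho$, driven by the observation that an $\varepsilon$-traversal can only pass through states of $\Qin$, which is exactly what the priming in $\reduce{\cdot}$ preserves. The only cosmetic difference is that you split off the last transformation (which forces your auxiliary strengthened claim about the fully-primed prefix with a fixed input target state), whereas the paper splits off the first one, $\rho = \tau\rho'$, where the induction hypothesis applies directly because $\reduce{\tau\rho'} = \tau'\circ\reduce{\rho'}$.
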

%------------------------------
\begin{proof}
  First note that if the output language of $\rho$ from $p$ contains
  $\varepsilon$, then $p \in \Qin$. We show the claim by induction on the length of $\rho$.

  If  $\rho = \varepsilon$, the claim is obviously true.

  If $\rho = \tau\rho'$,
  then $\varepsilon \in \Lout(p,\rho,q)$
  iff there is a state $r \in \Qin$
  such that $\varepsilon \in \Lout(p,\tau,r)$ and $\varepsilon \in \Lout(r,\rho',q)$.
  Then $\tau(p) = r$ is not changed in $\reduce{\rho}$, and hence the claim follows by induction.
\end{proof}
%------------------------------

%We use the operation $\reduce{\rho}$
%on parts of sequences to keep the number of vertices in the game graph
%(which is to be defined later) bounded. If a sequence becomes too
%long, we replace a suffix $\rho$
%of it by the single transformation $\reduce{\rho}$.
%The following lemma is technical tool for this operation. It basically
%states that an output word that traverses $\rho_1$
%and $\rho_1\rho_2$ from $p$ also traverses $\rho_1\reduce{\rho_2}$.
%%------------------------------
%\begin{lemma}\label{lem:output-suffix-epsilon}
%  Let $\rho_1,\rho_2 \in \Theta_\tra^+$,
%  $p,q \in Q$,
%  and $v \in \Sigma^*$.
%  If $v \in \Lout(p,\rho_1)$
%  and $v \in \Lout(p,\rho_1\rho_2,q)$,
%  then $v \in \Lout(p,\rho_1\reduce{\rho_2},q)$.
%\end{lemma}
%%------------------------------
%\begin{proof}
%  Since $v \in \Lout(p,\rho_1)$,
%  there is a unique state $r$
%  with $v \in \Lout(p,\rho_1,r)$
%  (see Lemma~\ref{lem:properties-lout}). Because
%  $v \in \Lout(p,\rho_1\rho_2,q)$,
%  we have $v \in \Lout(p,\rho_1,r) \Lout(r,\rho_2,q)$.
%  Again by Lemma~\ref{lem:properties-lout}, there is no prefix of
%  $v$
%  in $\Lout(p,\rho_1,r)$.
%  Thus, $\varepsilon \in \Lout(r,\rho_2,q)$
%  and therefore also $\varepsilon \in \Lout(r,\reduce{\rho_2},q)$
%  (see Lemma~\ref{lem:empty-word-traversal}).  This implies
%  $v \in \Lout(p,\rho_1\reduce{\rho_2},q)$. 
%\end{proof}
%%------------------------------

%%%%%%%%%%%%%%%%%%%%%%%%%%%%%%%%%%%%%%%%%
\subsection{Profiles and Saturated Sequences}
\label{subsec:profiles-and-saturated-sequences}
%%%%%%%%%%%%%%%%%%%%%%%%%%%%%%%%%%%%%%%%%

An important tool in this analysis is an abstraction of
transformation sequences $\rho$ into their profiles $P_\rho$. This
abstraction basically contains the information which of the languages
$\Lout(p,\rho,q)$ are nonempty, and which of them contain
$\varepsilon$. Intuitively, this abstraction is useful in a
uniformisation setting because a uniformiser does not need to know the
exact language $\Lout(p,\rho,q)$, it just needs to know some word
in $\Lout(p,\rho,q)$ that it can produce. The special case of
$\varepsilon$ is interesting because it means that a uniformiser does
not have to produce any output for traversing $\rho$ from $p$ to $q$.

Formally, the profile of $\rho$ is of the form $P_\rho \subseteq (Q
\times \{\varepsilon,+\} \times Q)$ with 
\begin{itemize}
\item $(p,\varepsilon,q) \in P_\rho$ if $\varepsilon \in
  \Lout(p,\rho,q)$,
\item $(p,+,q) \in P_\rho$
  if $\Lout(p,\rho,q) \not= \emptyset$
  and $\varepsilon \notin \Lout(p,\rho,q)$.
\end{itemize}

Since
$\Lout(p,\rho_1\rho_2,q) = \bigcup_{r \in Q} \Lout(p,\rho_1,r)
\Lout(r,\rho_2,q)$, the profile
$P_{\rho_1\rho_2}$
is uniquely determined from $P_{\rho_1}$
and $P_{\rho_2}$.
We thus can define the multiplication
$P_{\rho_1}P_{\rho_2} = P_{\rho_1\rho_2}$. As usual, a profile $P$ is
called \emph{idempotent} if $PP = P$, and 
 a transformation sequence $\rho$ is
\emph{idempotent} if its profile $P_{\rho}$ is idempotent.

We state a few simple properties of profiles that are useful in later
proofs.
\begin{lemma} \label{lem:profile-properties}
 Let $P =  P_\rho$ be the profile of some transformation sequence
 $\rho$. 
  \begin{enumerate}
  \item If $(q_1,\varepsilon,q_2),  (q_1,\varepsilon,q_3) \in P$, then
    $q_2 = q_3$.
  \item  If $P$ is idempotent and
    $(q_1,\varepsilon,q_2),(q_2,\varepsilon,q_3) \in P$, then $q_2 =
    q_3$. 
  \end{enumerate}
\end{lemma}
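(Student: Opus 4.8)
Both parts are consequences of the structural properties of output languages recorded in Lemma~\ref{lem:properties-lout}, together with the closure of the profile under products.

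For part (1): Suppose $(q_1,\varepsilon,q_2)\in P_\rho$ and $(q_1,\varepsilon,q_3)\in P_\rho$. By definition of the profile this means $\varepsilon\in\Lout(q_1,\rho,q_2)$ and $\varepsilon\in\Lout(q_1,\rho,q_3)$. But by part~(1) of Lemma~\ref{lem:properties-lout}, if the same word $v$ lies in both $\Lout(p,\rho,q)$ and $\Lout(p,\rho,q')$ then $q=q'$. Applying this with $v=\varepsilon$, $p=q_1$ gives $q_2=q_3$. This is immediate and needs no induction.

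For part (2): Assume $P=P_\rho$ is idempotent and $(q_1,\varepsilon,q_2),(q_2,\varepsilon,q_3)\in P$. From idempotency we have $P=PP=P_{\rho\rho}$, and by the inductive formula for output languages $\Lout(q_1,\rho\rho,q_3)=\bigcup_{r\in Q}\Lout(q_1,\rho,r)\cdot\Lout(r,\rho,q_3)$. Since $\varepsilon\in\Lout(q_1,\rho,q_2)$ and $\varepsilon\in\Lout(q_2,\rho,q_3)$, taking $r=q_2$ shows $\varepsilon\in\Lout(q_1,\rho\rho,q_3)$, hence $(q_1,\varepsilon,q_3)\in PP=P$. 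Now we have both $(q_1,\varepsilon,q_2)\in P$ and $(q_1,\varepsilon,q_3)\in P$, so part~(1) yields $q_2=q_3$.

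The argument is short and essentially bookkeeping; the only place one must be a little careful is to invoke the product formula at the level of the \emph{concatenation} $\rho\rho$ (which is the definition underlying the profile multiplication $P_\rho P_\rho$) rather than the $\circ$-composition, so that the union over intermediate states $r$ is available. Once that is set up, part~(2) reduces to part~(1). I do not anticipate any real obstacle here.
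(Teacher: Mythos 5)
Your proof is correct and follows the same route as the paper: part (1) is exactly Lemma~\ref{lem:properties-lout}(1) with $v=\varepsilon$, and part (2) uses idempotency to get $(q_1,\varepsilon,q_3)\in P$ and then reduces to part (1). You merely spell out the step the paper leaves implicit (why idempotency yields $(q_1,\varepsilon,q_3)\in P$, via the concatenation formula for $\Lout$ and the intermediate state $r=q_2$), which is fine.
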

\begin{proof}
  The first claim follows from Lemma~\ref{lem:properties-lout}(1)
  for $v = \varepsilon$.
  
  For the second claim, $P$ being idempotent implies
  $(q_1,\varepsilon,q_3) \in P$. Now the first claim implies $q_2 = q_3$.
\end{proof}

The notion of profile of a transformation sequence can be extended to
a profile of an input word $u$ by letting $P_u = P_{\rho_u}$, that is,
$P_u$ is the profile of the full transformation sequence corresponding
to $u$.

Later, we show how to obtain a strategy in the game that we construct
from a sequential uniformiser of the relation. In this
strategy we apply the operation $\reduce$
to infixes of transformation sequences whose profiles satisfy certain
properties. To ensure the existence of such an infix, we define the
notion of saturated sequence. Basically, saturated means that the
transformation sequence contains a nontrivial, idempotent infix whose
profile is furthermore absorbed by the profile of the prefix (see
below for the formal definition). However, when constructing the
strategy, we do not just work with transformation sequences, but the
strategy keeps in memory the input word that led to the transformation
sequence. The idempotent infix of the transformation sequence then
corresponds to an infix of this input word. The definition of saturated
sequence below requires that the profile of this infix of the input word
should also be idempotent and absorbed by the prefix. The profile of
the input word is captured in the definition below by the additional
sequence of profiles.

A transformation sequence $\rho$
is called \emph{saturated} if for each sequence of profiles
$\bar{P} = P_1,\cdots, P_{|\rho|}$ (of the same length as $\rho$),
it is possible to split $\rho$
as $\rho = \rho_1\rho_2\rho_3$,
such that the following properties are satisfied.
\begin{itemize}
\item $\rho_1 \not= \varepsilon$
\item $\rho_2$ is non-trivial and idempotent, and $P_{\rho_1} =
  P_{\rho_1}P_{\rho_2}$
\item Define the profiles $\hat{P}_1 = P_1 \cdots P_{|\rho_1|}$ and
  $\hat{P}_2 = P_{|\rho_1|+1} \cdots P_{|\rho_2|}$ as the products of
  the profiles corresponding to $\rho_1$ and $\rho_2$,
  respectively. Then $\hat{P}_1 \hat{P}_2 = \hat{P}_1$, and
  $\hat{P}_2\hat{P}_2 = \hat{P}_2$. 
\end{itemize}
We refer to the splitting $\rho_1\rho_2\rho_3$
as saturation witness for $\rho$ and $\bar{P}$.

The bound from the following lemma is used to bound the length of
transformation sequences used in the game graph.

%------------------------------
\begin{lemma} \label{lem:length-saturated}
  There is a number $K$
  such that each reduced transformation sequence $\rho$
  of length at least $K$ is saturated. This number $K$ is computable from $\tra$.
\end{lemma}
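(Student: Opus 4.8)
The plan is to prove this as a Ramsey-type statement on the finite monoid of profiles, combined with a second Ramsey argument that simultaneously handles the externally given sequence $\bar P$ of profiles. First I would observe that the set of profiles $\{P_\rho \mid \rho \in \Theta_\tra^*\}$ is finite (it is a subset of the fixed finite set $2^{Q \times \{\varepsilon,+\} \times Q}$) and, by the definition $P_{\rho_1}P_{\rho_2} = P_{\rho_1\rho_2}$, forms a finite semigroup $S$. Let $m = |S|$. The classical Ramsey/factorisation bound for finite semigroups gives a number $N(m)$ such that any word over $S$ of length at least $N(m)$ has a factorisation $w = w_1 w_2 w_3$ with $w_1 \neq \varepsilon$, the factor $w_2$ nontrivial, the $S$-product of $w_2$ idempotent, and the product of $w_1$ absorbing the product of $w_2$ on the right (i.e.\ $\pi(w_1)\pi(w_2) = \pi(w_1)$). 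Indeed, colouring pairs $i < j$ of positions by the product of the factor between them yields, by Ramsey's theorem, a long monochromatic set of positions; the common colour $e$ is idempotent, and one more application (or a pigeonhole on the prefixes $\pi(w[1..i])$, whose right-multiplication by $e$ takes finitely many values that stabilise) produces the absorbing prefix.

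The key step is to apply this argument \emph{twice simultaneously}: once to the sequence of profiles $P_{\tau_1}, \dots, P_{\tau_n}$ read off from $\rho = \tau_1 \cdots \tau_n$ itself, and once to the adversarially chosen sequence $\bar P = P_1, \dots, P_n$. Concretely, I would colour each pair of positions $1 \le i < j \le n$ by the pair $(\,\pi(P_{\tau_{i+1}}\cdots P_{\tau_j}),\ P_{i+1}\cdots P_j\,) \in S \times S$, a set of size $m^2$. Ramsey's theorem on this colouring, applied to a sufficiently long $\rho$, yields a monochromatic set of positions $i_0 < i_1 < \dots$; picking $\rho_1$ to end at $i_1$ and $\rho_2$ to be the block from $i_1{+}1$ to $i_2$ gives $\rho_2$ nontrivial and idempotent with $P_{\rho_2}$ idempotent, $\hat P_2 = P_{i_1+1}\cdots P_{i_2}$ idempotent, and both $P_{\rho_1}P_{\rho_2} = P_{\rho_1}$ and $\hat P_1 \hat P_2 = \hat P_1$ after one extra stabilisation step on the prefixes (the prefix products $\pi(P_{\tau_1}\cdots P_{\tau_i})$ and $P_1\cdots P_i$, right-multiplied by the two idempotents, take finitely many values, so on a long enough monochromatic run they stabilise; absorb the initial unstable part into $\rho_1$). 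Taking $K$ to be the Ramsey number for the colouring $S \times S$ with the required homogeneous-set size (a constant depending only on $m$, hence on $\tra$) gives the bound, and its computability is immediate since $S$ is computable from $\tra$ and Ramsey numbers are computable.

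I would structure the write-up so that the two-colour Ramsey argument is stated once as a lemma about finite semigroups and then instantiated, to avoid duplicating the absorbing-prefix bookkeeping. The main obstacle I anticipate is getting all four required equalities to hold for the \emph{same} splitting point: the idempotency of $P_{\rho_2}$ and of $\hat P_2$, together with the two right-absorption equalities $P_{\rho_1}P_{\rho_2}=P_{\rho_1}$ and $\hat P_1\hat P_2 = \hat P_1$, must all be witnessed by one choice of $\rho_1,\rho_2$. This is exactly why the colouring must live in the product semigroup $S \times S$ rather than in $S$ twice independently, and why one needs a monochromatic set of size at least $4$ or so (one position to cut off the unstable prefix, then $i_1, i_2, i_3$ to extract an idempotent block that is itself a product of two equal idempotent blocks); I would make sure the requested homogeneous-set size is chosen large enough for this, and note that the reducedness hypothesis on $\rho$ plays no role here beyond ensuring $\rho = \tau_1\cdots\tau_n$ is literally a sequence of $n = |\rho|$ transformations so that lengths match the profile sequence $\bar P$.
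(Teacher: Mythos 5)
Your proposal is correct and essentially the paper's own proof: the paper likewise colours pairs of positions by the pair $(P_{\tau_i\cdots\tau_{j-1}},\,P_i\cdots P_{j-1})$ and applies Ramsey's theorem to obtain three monochromatic positions $h<i<j$ (with $i+1<j$), from which both idempotencies and both absorptions follow immediately. The only difference is that your extra ``stabilisation'' step for the prefix products is superfluous---since $P_{\rho_1}$ ends with a copy of the idempotent colour, $P_{\rho_1}P_{\rho_2}=P_{\rho_1}$ (and likewise for $\hat P_1\hat P_2=\hat P_1$) holds automatically once you have three monochromatic positions, with a fourth only needed to guarantee $|\rho_2|\ge 2$.
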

%------------------------------
\begin{proof}
  Let $\rho = \tau_1 \cdots \tau_n$. We take the set $\{1,\ldots,n+1\}$ as nodes of a finite complete graph, and colour the edges $\{i,j\}$ with $i < j$ by the pair of profiles $(P_{\tau_i \cdots \tau_{j-1}},P_i \cdots P_{j-1})$. Ramsey's theorem yields that there is a number $K$ (that is computable) such that for $n \ge K$, there are three positions $h < i < j$ such that all edges are assigned the same pair. We can further assume that $i+1 < j$.  Choosing $\rho_2 = \tau_i \cdots \tau_{j-1}$ yields the saturation witness. 
\end{proof}
%------------------------------

%%%%%%%%%%%%%%%%%%%%%%%%%%%%%%%%%%%%%%%%%
\subsection{Game Construction} \label{subsec:drat-game-construction}
%%%%%%%%%%%%%%%%%%%%%%%%%%%%%%%%%%%%%%%%%

We now have all the ingredients for defining the uniformisation game. 
As mentioned earlier, the basic idea is that player Input plays an
input sequence and player Output plays an output sequence, such that
if Input plays a word $u$ in the domain of $R_\tra$, then Output has to
produce a word $v$ such that $(u,v) \in R_\tra$. For checking the
condition on the domain of $R_\tra$, we use a DFA $A_\dom =
(Q_\dom,\Sigma,q_0^\dom,\delta_\dom,F_\dom)$ that recognises the
domain of $R_\tra$, and which is simulated on the played input sequence in the game.
We can furthermore safely assume that two input words $u_1,u_2$ with the same profile $P_{u_1} = P_{u_2}$ also induce the same state transformation on $A_\dom$. This can always be ensured by taking the product of $A_\dom$ with the transducer $\tra$, and considering the profiles w.r.t.\ this product transducer. These profiles then also encode the state transformations of $A_\dom$.

The other components of the vertices are of the form $(p,\rho)$
where $p$
is a state of $\tra$,
and $\rho$
is a transformation sequence of length at most $2K+1$
with $K$
as in Lemma~\ref{lem:length-saturated} (the reason for choosing $2K+1$ becomes clear when we construct a winning strategy for Output from a uniformising sequential
transducer; see Lemma~\ref{lem:drat-uniformiser-to-strategy}).
The vertices of player Output additionally encode the last input
letter played by Input.

Formally, the uniformisation game $\game_\tra$ for $\tra$ has the
following components (with $K$ as in Lemma~\ref{lem:length-saturated}):
\begin{itemize}
\item $\Vin = Q \times \{\rho \in \Theta_\tra^* \mid
  |\rho| \le 2K+1\}  \times Q_\dom$
\item $\Vout = \Vin \times \Sigmaend$
\item The initial vertex is $(q_0,\varepsilon,q_0^\dom)$
\item The edges of the game graph are annotated with input and output words, respectively, which are for later reference when transforming strategies into sequential uniformisers and vice versa. For this purpose, let $v_{p,\rho,q}$ be a shortest word in $\Lout(p,\rho,q)$ for all $p,q \in Q$ and all transformation sequences $\rho$ such that $\Lout(p,\rho,q) \not= \emptyset$ (if $\Lout(p,\rho,q) = \emptyset$, then $v_{p,\rho,q}$ is undefined and cannot be used for the moves defined below).

We then have the following edges (the names of the moves are for later reference in the proofs):
  
  \begin{itemize}

  \item[(In)] $(p,\rho,d) \xrightarrow{a} (p,\rho,d,a)$ (Input chooses the next letter)
  \item[(Out0)] $(p,\varepsilon,d,a) \xrightarrow{v_{p,\tau_a,q}}
    (q,\varepsilon,\delta_\dom(d,a))$  (produce output matching the next input symbol and simulate $\tra$; only possible if there is no lookahead)
  \item[(Out1)] $(p,\rho,d,a) \xrightarrow{\varepsilon}
    (p,\rho \circ \tau_a,\delta_\dom(d,a))$ if $|\rho \circ \tau_a|
    \le 2K+1$ (increase the lookahead on the input)
  \item[(Out2)] $(p,\rho,d,a) \xrightarrow{v_{p,\rho_1,q}} (q,\rho_2,d,a)$ if $\rho= \rho_1\rho_2$ with $|\rho_1| \ge 1|$ (produce output that consumes a prefix $\rho_1$ of the current lookahead)
  \item[(Out3)]
    $(p,\rho,d,a) \xrightarrow{\varepsilon}
    (p,\rho_1\reduce{\rho_2}\rho_3,d,a)$ if
    $\rho = \rho_1\rho_2\rho_3$
    with $|\rho_2| \ge 2$ (reduce the information in the lookahead)
  \end{itemize}
  Note that the target of the last two moves is again a vertex of
  player Output. However, each of these moves strictly reduces the
  length of the transformation sequence, which means that there are
  only finitely many such moves before a vertex of player Input is
  reached. 

  Also note that Output can always move because at least one of (Out1) or (Out3) is possible.
\item The winning condition of Output is a safety condition. The set
  of bad vertices (to be avoided by Output) are those of the form $(p,\rho,d)$ such that $d \in F_\dom$ but ($p \notin F$ or $\rho \not= \varepsilon$).

Note that $d \in F_\dom$ is only reached after the endmarker on the
input. The condition says that then there is no more lookahead and the
pair $(u,v)$ of played input and output words is accepted by $\tra$.  
\end{itemize}

The following lemma formally states that a play simulates $A_\dom$ on the played input word and $\tra$ on the pair of played input and output words, and that the transformation sequence encodes the lookahead on the input.
\begin{lemma} \label{lem:game-property}
Assume that the players have reached a vertex $(p,\rho,d) \in \Vin$ by  moves corresponding to words $u$ as input and $v$ as output. Then $A_\dom:  q_0^\dom \xrightarrow{u} d$, and furthermore $u=xw$ with $\delta^*(q_0,x,v) = (p,\varepsilon,\varepsilon)$ and $w \in \Lin(\rho)$.
\end{lemma}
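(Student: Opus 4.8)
The plan is to prove Lemma~\ref{lem:game-property} by induction on the number of moves in the play that led to the vertex $(p,\rho,d) \in \Vin$, i.e.\ on the length of the sequence of moves. The base case is the initial vertex $(q_0,\varepsilon,q_0^\dom)$, reached by the empty play, with $u = v = \varepsilon$: then indeed $A_\dom\colon q_0^\dom \xrightarrow{\varepsilon} q_0^\dom$, and $u = \varepsilon \cdot \varepsilon$ with $\delta^*(q_0,\varepsilon,\varepsilon) = (q_0,\varepsilon,\varepsilon)$ and $\varepsilon \in \Lin(\varepsilon) = \{\varepsilon\}$.

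For the inductive step, the subtlety is that to reach a vertex of player Input one must pass through a vertex of player Output, and in fact through a possibly long chain of Output moves (moves (Out1), (Out2), (Out3) all lead again to Output vertices). So I would first prove an auxiliary invariant for Output vertices $(p,\rho,d,a)$: such a vertex, reached by input $u$ and output $v$, satisfies $A_\dom\colon q_0^\dom \xrightarrow{u} d$ and $u = xw$ with $\delta^*(q_0,x,v) = (p,\varepsilon,\varepsilon)$ and $w \in \Lin(\rho)$ — exactly the same statement as for Input vertices. Then the lemma for Input vertices follows because the last move into an Input vertex is a move (In), $(p,\rho,d) \xrightarrow{a} (p,\rho,d,a)$, which changes neither the word read, nor the output produced, nor $p,\rho,d$; wait — actually it is the other way: a move (In) goes \emph{from} an Input vertex \emph{to} an Output vertex. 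So the correct statement is to prove both invariants simultaneously by induction on the move count, handling each move type as a case.

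The case analysis runs as follows. Move (In): $(p,\rho,d) \xrightarrow{a} (p,\rho,d,a)$ appends $a$ to the input, nothing to the output. By induction $A_\dom\colon q_0^\dom \xrightarrow{u'} d$ so $A_\dom\colon q_0^\dom \xrightarrow{u'a} \delta_\dom$\dots — but note the $\dom$-state does not change on an (In) move, it only changes on Output moves; so here I must be careful that in move (In) the target's $\dom$-component is still $d$, hence I need $u'a$ to lead to $d$? That is not right either. Looking again at the edge definitions: the $\dom$-state is updated on (Out0) and (Out1) via $\delta_\dom(d,a)$, and move (In) only records the letter $a$ without consuming it in $A_\dom$. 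So the right bookkeeping is that after an (In) move reading total input $u = u'a$, the vertex stores $d$ which is still $\delta_\dom^*(q_0^\dom, u')$, and the letter $a$ is "pending"; the claimed equality $A_\dom\colon q_0^\dom \xrightarrow{u} d$ of the lemma can only hold at Input vertices where no letter is pending. Hence for Output vertices the invariant should read $A_\dom\colon q_0^\dom \xrightarrow{u'} d$ where $u = u'a$ and $a$ is the stored letter. I would state this precisely, then: (Out0) consumes the pending $a$, updates $p \to q$ via $v_{p,\tau_a,q} \in \Lout(p,\tau_a,q)$, updates $d \to \delta_\dom(d,a)$; using Lemma~\ref{lem:Lin-Lout} with $\rho = \tau_a$ and the fact that $\tau_a$ is consistent with $a$ (so $a \in \Lin(\tau_a)$ after possibly setting output-state values to $\bot$ — here $\rho=\varepsilon$ so $w'=\varepsilon$, $x' = u'$, and we need $\delta^*(q_0,u', v'\,v_{p,\tau_a,q}) = (q,\varepsilon,\varepsilon)$), combined with Lemma~\ref{lem:delta-composition}. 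Move (Out1) extends $\rho$ to $\rho \circ \tau_a$ and $w$ to $w\cdot a$: here I use that $\Lin(\rho \circ \tau_a) = \Lin(\rho \circ \tau_a)$ contains $w a$ when $w \in \Lin(\rho)$ and $a \in \Lin(\tau_a)$, which follows from the definitions of $\circ$ on sequences and of consistency (splitting $wa = w \cdot a$). Move (Out2) splits $\rho = \rho_1\rho_2$, produces $v_{p,\rho_1,q}$, moves $p \to q$ and $\rho \to \rho_2$: since $w \in \Lin(\rho_1\rho_2)$, write $w = w_1 w_2$ with $w_i \in \Lin(\rho_i)$; Lemma~\ref{lem:Lin-Lout} gives $\delta^*(p,w_1,v_{p,\rho_1,q}) = (q,\varepsilon,\varepsilon)$, so with Lemma~\ref{lem:delta-composition} the new $x$ is $x w_1$, new $w$ is $w_2 \in \Lin(\rho_2)$, and the output grows by $v_{p,\rho_1,q}$. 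Move (Out3) replaces $\rho_2$ by $\reduce{\rho_2}$ inside $\rho = \rho_1\rho_2\rho_3$, changing neither $p$, $d$, $v$, nor the pending letter; I need $w \in \Lin(\rho_1\reduce{\rho_2}\rho_3)$, which follows from $\Lin(\rho_2) \subseteq \Lin(\reduce{\rho_2})$ (stated right after the definition of $\reduce{\cdot}$) and compatibility of $\Lin$ with the concatenation of transformation sequences.

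The main obstacle I anticipate is twofold: first, getting the statement of the Output-vertex invariant exactly right with respect to the pending input letter $a$ and the $\dom$-component (this is purely a bookkeeping issue, but it is easy to state incorrectly); and second, the claim that $\Lin$ behaves well under the composition operations on transformation sequences — specifically that $w_1 \in \Lin(\rho_1)$ and $w_2 \in \Lin(\rho_2)$ imply $w_1 w_2 \in \Lin(\rho_1 \circ \rho_2)$ and conversely, accounting for the $\circ$ merging a non-maximal boundary transformation. This requires unfolding the definition of $\circ$ on sequences and the definition of consistency of a sequence with a word (the decomposition $u = u_1\cdots u_n$). I expect this to be a short lemma-like argument that I would either factor out or handle inline in cases (Out1) and (Out2); it is routine given the definitions but is the one place where genuine care is needed rather than mechanical substitution.
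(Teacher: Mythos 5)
Your proposal is correct and follows essentially the same route as the paper: induction on the number of moves with a case analysis over (Out0)--(Out3), using Lemma~\ref{lem:Lin-Lout}, Lemma~\ref{lem:delta-composition}, the splitting of $w$ according to $\rho = \rho_1\rho_2$, and $\Lin(\rho_2) \subseteq \Lin(\reduce{\rho_2})$. The pending-letter bookkeeping you worry about is handled in the paper in exactly the way you settle on (the invariant at Output vertices ignores the stored letter $a$ until an (Out0) or (Out1) move consumes it), and the compatibility of $\Lin$ with $\circ$ that you flag is indeed the only step the paper asserts without unfolding the definitions.
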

\begin{proof}
The property $A_\dom:  q_0^\dom \xrightarrow{u} d$ is obvious since $A_\dom$ is simulated on the input symbols in the game construction.

The property $u=xw$ with $\delta^*(q_0,x,v) = (p,\varepsilon,\varepsilon)$ and $w \in \Lin(\rho)$ can be shown inductively by the number of moves. It certainly holds at the initial vertex with $u=v=\varepsilon$. Assume that the property is true for the words $u$ and $v$ at the vertex $(p,\rho,d) \in \Vin$. Consider the next input symbol $a$ played by Input. The play moves to $(p,\rho,d,a)$, and we consider the different types of moves that are available.

For a move (Out0) with word $v_{p,\tau_a,q}$, we have $\rho = \varepsilon$ and thus $w = \varepsilon$. The play moves to $(q,\varepsilon,\delta_\dom(d,a))$. We have $\delta^*(q_0,ua,vv_{p,\tau_a,q}) = \delta^*(p,a,v_{p,\tau_a,q}) = (q, \varepsilon, \varepsilon)$. The first equality is by assumption, and the second equality follows from Lemma~\ref{lem:Lin-Lout} with the fact that $a \in \Lin(\tau_a)$.

The move (Out1) takes the play into $(p,\rho \circ \tau_a,\delta_\dom(d,a))$. Since $w \in \Lin(\rho)$, we obtain that $wa \in \Lin(\rho \circ \tau_a)$.

For (Out2) and (Out3) the moves lead to a vertex in $\Vout$. We show that the claimed property is preserved on the first three components (ignoring the $a$ until an (Out0) or (Out1) move is played).

The move (Out2) produces some output $v_{p,\rho_1,q}$ and leads to $(q,\rho_2,d,a)$ for
$\rho = \rho_1\rho_2$. We can split $w = w_1w_2$ with $w_1\in \Lin(\rho_1)$ and $w_2 \in \Lin(\rho_2)$. Since $v_{p,\rho_1,q} \in \Lout(p,\rho_1,q)$, we obtain by Lemma~\ref{lem:Lin-Lout} that $\delta^*(p,w_1,v_{p,\rho_1,q}) = (q,\varepsilon,\varepsilon)$. Hence, we have $u = xw_1w_2$ with $\delta^*(q_0,xw_1,vv_{p,\rho_1,q}) = (q,\varepsilon,\varepsilon)$ and $w_2 \in \Lin(\rho_2)$.

The move (Out3) leads to $(p,\rho',d,a)$ with $\rho' = \rho_1\reduce{\rho_2}\rho_3$. Since $\Lin(\rho_2) \subseteq \Lin(\reduce{\rho_2})$, we conclude that $w \in \Lin(\rho')$.
\end{proof}

\begin{lemma} \label{lem:drat-strategy-to-uniformiser}
If Output has a winning strategy in $\game_\tra$, then $R_\tra$ can be
uniformised by a sequential transducer. 
\end{lemma}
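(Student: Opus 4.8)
The plan is to extract a sequential transducer $\uni$ directly from the winning strategy $\sigma$ of player Output in $\game_\tra$, using the input/output annotations on the edges. Since the game graph is finite, we may assume $\sigma$ is positional. The states of $\uni$ will be the vertices of player Input reachable under $\sigma$ from the initial vertex, and a transition of $\uni$ on input letter $a$ from a vertex $(p,\rho,d)$ is obtained by: first taking the (In)-edge to $(p,\rho,d,a)$, then following the (unique, under $\sigma$) finite chain of Output moves — which by the remark after the move definitions strictly decreases $|\rho|$ and hence terminates — until the next (In)-vertex is reached; the output word of the $\uni$-transition is the concatenation of the annotations $v_{\cdot,\cdot,\cdot}$ (or $\varepsilon$) read off these Output edges. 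The accepting states of $\uni$ are those $(p,\rho,d)$ with $d \in F_\dom$; by the safety winning condition, such a vertex necessarily has $p \in F$ and $\rho = \varepsilon$. We produce no output on the endmarker (consistent with the earlier remark about shifting the endmarker-output into the final-state output function), so $\uni$ is a genuine sequential transducer for $\rel\tra$ once we pass from $R_\tra$ back to $\rel\tra$.

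Next I would verify that $\uni$ is well-defined and functional: it is deterministic by construction (one transition per input letter, since $\sigma$ is positional and Output's move chain from $(p,\rho,d,a)$ is uniquely determined), and it is complete because the excerpt notes that at every Output vertex at least one of (Out1)/(Out3) is available, so the Output-move chain never gets stuck and always reaches an (In)-vertex; the (In)-move is available for every letter since $\tra$ is assumed complete. Thus $\uni$ computes a total function on $\Sigmaend^*$, hence a partial function on $\Sigma^*$ after handling the endmarker.

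The core of the argument is soundness: I must show that for every $u \in \dom(R_\tra)$, the word $v$ output by $\uni$ on $u$ satisfies $(u,v) \in R_\tra$. Run $\uni$ on $u = u'\dashv$ with $u' \in \dom(\rel\tra)$; this traces a play of $\game_\tra$ consistent with $\sigma$, reaching some final-type vertex $(p,\varepsilon,d)$ with $d \in F_\dom$ after the endmarker, with $p \in F$ and empty lookahead by the safety condition. By Lemma~\ref{lem:game-property} applied at this vertex, we get $u = xw$ with $\delta^*(q_0,x,v) = (p,\varepsilon,\varepsilon)$ and $w \in \Lin(\varepsilon) = \{\varepsilon\}$, so $w = \varepsilon$ and $x = u$; hence $\delta^*(q_0,u,v) = (p,\varepsilon,\varepsilon)$ with $p \in F$, i.e.\ $(u,v) \in R_\tra$. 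The only delicate point is that $\game_\tra$ actually permits the play induced by $\uni$ on a word in the domain to reach such a final vertex at all — but this is exactly what the winning (safety) strategy guarantees, since every play consistent with $\sigma$ avoids the bad vertices, and when the endmarker has been consumed and $d \in F_\dom$, avoiding badness forces $p \in F$ and $\rho = \varepsilon$. I expect the main obstacle to be the careful bookkeeping in defining the $\uni$-transitions from the Output-move chains and checking termination/completeness of these chains, together with correctly tracing the game play induced by an arbitrary input word so that Lemma~\ref{lem:game-property} can be invoked; the semantic correctness then follows cleanly from that lemma.
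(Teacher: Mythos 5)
Your construction is exactly the paper's: take a positional winning strategy (the game is a safety game), use the Input vertices as transducer states, let each transition follow the strategy's finite chain of Output moves (terminating since (Out2)/(Out3) strictly shorten $\rho$) and emit the concatenated edge annotations, and then conclude correctness via Lemma~\ref{lem:game-property} together with the safety condition forcing $p \in F$ and $\rho = \varepsilon$ at vertices with $d \in F_\dom$. This matches the paper's proof of Lemma~\ref{lem:drat-strategy-to-uniformiser} in both structure and detail, including the endmarker handling via the earlier remark.
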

\begin{proof}
 Since $\game_\tra$ is a safety game, there is a positional winning
 strategy for Output. We build a sequential transducer $S$ with $\Vin$ as state set. The transition function is derived from the winning
 strategy: Let $s=(p,\rho,d) \in \Vin$ be a state of $S$ and $a \in \Sigmaend$.
 
The successor vertex in $\game_\tra$ is $(p,\rho,d,a)$,
 and the winning strategy describes a finite sequence of Output moves that
 ends up in a vertex of the form $(p',\rho',d')$ with $d' = \delta_\dom(d,a)$. The
 sequential transducer moves to this state
 $(p',\rho',d')$ and outputs the word obtained along
 this finite sequence of moves. %If $d' \in F$, then by the winning condition, there is a final state $q$ of $\tra$ such that $\Lout(p',\rho',q) \not= \emptyset$. Then $S$ appends a word in $\Lout(p',\rho',q)$ to its output. 

 From Lemma~\ref{lem:game-property} follows that this defines a
 uniformiser of $R_\tra$: Consider a word $u$ in the domain of $R_\tra$, which uniquely determines the moves of Player Input. The strategy for Player Output generates moves inducing a word $v$, such that the play ends up in a vertex $(p,\rho,d)$ with $d \in F_\dom$. Since the strategy is winning, $p \in F$ and $\rho = \varepsilon$.
 Lemma~\ref{lem:game-property} yields that $u =xw$ with $\delta^*(q_0,u,v) = (p,w,\varepsilon)$ and $w \in \Lin(\rho)$. Since $\rho = \varepsilon$, we get $w = \varepsilon$ and conclude that $(u,v) \in R_\tra$.
\end{proof}

%%%%%%%%%%%%%%%%%%%%%%%%%%%%%%%%%%%%%%%%%
\subsection{From Uniformisers to Strategies}
\label{subsec:uniformisers-to-strategies}
%%%%%%%%%%%%%%%%%%%%%%%%%%%%%%%%%%%%%%%%%

We prove the existence of a winning strategy for Output in case that $R_\tra$
is uniformised by a sequential transducer. A note for the reader who
is not very familiar with this kind of game-theoretic reasoning: The
strategy that we construct below is not positional, it uses a memory of unbounded size. Furthermore, we do not even need to care whether we can compute the individual moves. Proving the existence of some strategy is sufficient and implies the existence of a positional strategy, as assumed in the proof of Lemma~\ref{lem:drat-strategy-to-uniformiser}.

\begin{lemma}\label{lem:drat-uniformiser-to-strategy}
  If $R_\tra$
  can be uniformised by a sequential transducer, then Output has a
  winning strategy in $\game_\tra$.
\end{lemma}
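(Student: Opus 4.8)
The plan is to take a sequential uniformiser $\uni$ for $R_\tra$ and build a (non-positional) strategy for Output whose memory records the entire input word $u$ read so far together with the state of $\uni$ and the output that $\uni$ has produced on $u$. The invariant to maintain is the one from Lemma~\ref{lem:game-property}: after Input has played $u$ and Output has played $v$, the current game vertex is $(p,\rho,d)$ with $u = xw$, $\delta^*(q_0,x,v) = (p,\varepsilon,\varepsilon)$, and $w \in \Lin(\rho)$; additionally $v$ should be a prefix of $\uni(u)$ (or of $\uni$'s output on some extension of $u$ in $\dom R_\tra$). The key tension is that Output must eventually produce output to ``catch up'' with the lookahead encoded in $\rho$, but $\uni$ might delay its output arbitrarily, so $\rho$ could in principle grow without bound. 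This is exactly where saturation (Lemma~\ref{lem:length-saturated}) and the reduction operation $\reduce{\cdot}$ (together with (Out3)) come in: whenever $\rho$ becomes long enough to be saturated, we use the saturation witness $\rho = \rho_1\rho_2\rho_3$ to apply $\reduce{\cdot}$ to the idempotent infix $\rho_2$ and shrink the sequence back down.

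First I would set up the strategy. At an Input vertex with memory $(u, q_\uni, v)$ and game vertex $(p,\rho,d)$, when Input plays $a$, the strategy reads $a$ in $\uni$ to update $q_\uni$ and appends the output $\uni$ produces to a buffer of ``output still owed.'' The strategy then performs a sequence of Output moves: it uses (Out2) to feed owed output into the game (consuming prefixes of $\rho$ via the words $v_{p,\rho_1,q}$), uses (Out1) to absorb the new letter $a$ into the lookahead, uses (Out0) when $\rho$ is empty, and crucially, whenever the length of $\rho$ would exceed $2K+1$, invokes (Out3) on a saturated infix to reduce. The bound $2K+1$ is chosen so that after absorbing one more letter (reaching length up to $2K+2$) there is still a reduced infix of length $\ge K$, hence saturated, on which to apply (Out3); this is the ``reason for $2K+1$'' alluded to in the game construction. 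One must check that these moves keep $v$ a prefix of what $\uni$ eventually outputs — this uses that $\uni$ is a uniformiser, so on any extension of $u$ into the domain it must produce an output compatible with $R_\tra$, and Lemma~\ref{lem:Lin-Lout}/Lemma~\ref{lem:lookahead-relation} force $v$ to traverse the consumed part of $\rho$.

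Next I would argue the winning condition. Suppose a bad vertex $(p,\rho,d)$ with $d \in F_\dom$ is reached, so Input has played $u \dashv$ with $u\dashv \in \dom R_\tra$ (here $u$ ranges over $\Sigma^*$). Then $\uni(u\dashv)$ is defined and $(u\dashv,\uni(u\dashv)) \in R_\tra$. The invariant gives $u\dashv = xw$ with $\delta^*(q_0,x,v) = (p,\varepsilon,\varepsilon)$ and $w \in \Lin(\rho)$, where $v$ is a prefix of $\uni(u\dashv)$. The strategy must be designed so that by the time the play settles at an Input vertex after the endmarker, all owed output has been flushed and the entire lookahead consumed: formally $v = \uni(u\dashv)$ and $\rho = \varepsilon$, whence $w = \varepsilon$, $x = u\dashv$, $\delta^*(q_0,u\dashv,\uni(u\dashv)) = (p,\varepsilon,\varepsilon)$, and $(u\dashv,\uni(u\dashv)) \in R_\tra$ forces $p \in F$ — so the vertex is \emph{not} bad, a contradiction. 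The one subtlety is ensuring the lookahead can be fully consumed: after the endmarker no more input comes, so $w$ is a completed word in $\Lin(\rho)$, and using $\varepsilon \in \Lout$ whenever possible (Lemma~\ref{lem:empty-word-traversal} guarantees reductions don't destroy $\varepsilon$-traversals) plus the fact that $\uni$'s output matches $R_\tra$, the owed output is exactly enough to traverse $\rho$ to $p$ with $p \in F$.

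The main obstacle I expect is the bookkeeping around reductions: each (Out3) replaces an infix $\rho_2$ by $\reduce{\rho_2}$, which shrinks $\Lin$ and $\Lout$, and one must verify that (i) the input word $w$ is still consistent with the reduced sequence, (ii) whatever output $\uni$ will eventually produce still traverses the reduced sequence correctly (in particular $\varepsilon$-traversals are preserved, which is Lemma~\ref{lem:empty-word-traversal}), and (iii) the profile constraints in the definition of saturated are exactly what guarantees that applying $\reduce{\cdot}$ to $\rho_2$ does not lose any output behaviour that $\uni$ might need — this is why the saturation definition carries the extra sequence of profiles tracking the profile of the underlying input word. Threading this invariant through the reduction steps, and pinning down precisely which infix of the memory-input-word corresponds to $\rho_2$, is the technical heart of the argument; the rest is an induction on the number of moves using the lemmas already established.
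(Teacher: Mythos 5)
There is a genuine gap, and it sits exactly at the point your sketch waves through. Your plan follows one fixed uniformiser $\uni$, keeps its pending output as ``owed output'', and maintains the invariant that the output played in the game is (a prefix of) what $\uni$ produces. This invariant cannot survive the (Out3) moves. Reduction strictly shrinks the output languages: $\Lout(p,\reduce{\rho_2},q) \subseteq \Lout(p,\rho_2,q)$, and every traversal of $\rho_2$ that consumes output is removed (only $\varepsilon$-traversals survive, Lemma~\ref{lem:empty-word-traversal}). So the output that $\uni$ will eventually produce on the actual input word --- which traverses the \emph{unreduced} lookahead --- in general does not traverse $\rho_1\reduce{\rho_2}\rho_3$ at all. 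Your point (iii), that the profile conditions in the definition of saturation ``guarantee that applying $\reduce{\cdot}$ to $\rho_2$ does not lose any output behaviour that $\uni$ might need'', is false; losing output behaviour is precisely what reduction does, and this is the difficulty the whole proof is organised around. What idempotency and absorption actually buy is much weaker and only via pumping (Lemma~\ref{lem:output-large-pumping}): \emph{if} some output matches a sufficiently pumped version of the input, then some --- generally different --- output traverses the reduced sequence. A further, smaller mismatch: the (Out0)/(Out2) edges carry only the canonical words $v_{p,\rho_1,q}$, so Output cannot play $\uni$'s output verbatim anyway, and your endgame step ``the owed output is exactly enough to traverse $\rho$ to $p\in F$'' has no justification once reductions have been applied.

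The paper's proof therefore does not track a fixed uniformiser. It splits the game sequence as $\rho\xi$ with $|\rho|\le K+1$, $|\xi|\le K$ (this, not your ``room for one more letter'' reading, is the reason for the bound $2K+1$), never reduces $\rho$, and stores in memory a pump-word $(u,M)$ recording the input that generated $\xi$ together with the factors to which $\reduce{\cdot}$ was applied. The invariant is existential and universal rather than ``follow $\uni$'': (Inv2) \emph{some} $(p,\rho)$-uniformiser exists whose domain covers $L(A_\dom,d')$, and (Inv3) \emph{every} $(p,\rho)$-uniformiser $S$ is silent on some word of $\Pump{|S|!}{(u,M)}$. The failure of (Inv3) after the next letter is the trigger to flush $\rho$: since $S$ cannot distinguish large pumpings, its output on a heavily pumped input plus Lemma~\ref{lem:output-large-pumping} yield a state $q$ with $\Lout(p,\rho,q)\neq\emptyset$ and, via a regular-language substitution on $S$'s outputs, a freshly constructed $(q,\xi\circ\tau_a)$-uniformiser re-establishing (Inv2) (Lemma~\ref{lem:propagate-invariant}); and preserving (Inv3) across an (Out3) move needs yet another uniformiser transformation (building $S'$ from a hypothetical counterexample $S$ by inserting a virtual pumped block). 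None of these ingredients --- the pump-word bookkeeping, the universally quantified silence condition, the flush trigger, or the construction of new uniformisers after flushing and after reduction --- appears in your sketch, and without them the induction you propose does not close.
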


The rest of this section is devoted to the proof of
Lemma~\ref{lem:drat-uniformiser-to-strategy}, which requires some
definitions and other lemmas. For the construction of the strategy, we
split the transformation sequence in a game position in two parts. So
we consider game positions of the form $(p,\rho\xi,d)$ with $|\rho|
\le K+1$ and $|\xi| \le K$ (where $K$ is the number from
Lemma~\ref{lem:length-saturated} used in the game construction). The moves of type (Out3), which apply the operation $\reduce{\cdot}$ to an infix of the transformation sequence, are only applied to the second part $\xi$ of the transformation sequence. The first part $\rho$ remains fixed until it is consumed by an output move of type (Out2). 

The input word $u$ that generates the part $\xi$ of the transformation sequence is stored in the memory of the strategy, together with the information to which parts the operation $\reduce{\cdot}$ was applied (the operation $\reduce{\cdot}$ is applied to parts of the transformation sequence, but each such part corresponds to a part of $u$).
This leads to a structure of the form $(u,M)$ with a set $M$ of edges $(i,j)$, where $i < j$ are positions in $u$. These edges are well-nested (they do not cross). In the construction of the strategy, we need to refer to words that are obtained from $u$ by pumping the parts of $u$ enclosed by an edge.  Hence, we refer to $(u,M)$ as a \emph{pump-word}. 

Formally, pump-words $(u,M)$ and their corresponding transformation sequences $\rho_{(u,M)}$ are defined inductively as detailed below. 
Along the inductive definition we also define how to concatenate two pump-words.

%For these pumping operations we require that the edges $(i,j)$ are such that the infix $u[i,j]$ of $u$, and the infix of $\xi$ corresponding to $u[i,j]$ are idempotent and absorbed. This ensures that pumping the infixes corresponding to the edges does not change the profile of the word.
%
%Before we give the precise conditions that pump-words $(u,M)$ have to satisfy, we  define 

\begin{definition}\label{def:pump-word}
\begin{enumerate}
\item For each $u \in \Sigmaend^*$, $(u,\emptyset)$ is a pump-word. The transformation sequence of $(u,\emptyset)$ is $\rho_{(u,\emptyset)} := \rho_u$, the transformation sequence induced by $u$.
\item The concatenation of two pump-words $(u_1,M_1)$ and $(u_2,M_2)$ is defined in the expected way as $(u_1,M_1)(u_2,M_2) = (u_1u_2,M_1 \cup (M_2 + |u_1|))$, where $M_2 + |u_1|$ denotes the set of edges in $M_2$ shifted by $|u_1|$ to the right. For example $(aba,\{(1,2)\})(bbc,\{(1,3)\}) = (ababbc,\{(1,2),(4,6)\}$. 

%If $(u,M) = (u_1,M_1)(u_2,M_2)$, then the transformaiton sequence is defined as $\rho_{(u,M)} = \rho_{(u_1,M_1)} \circ  \rho_{(u_2,M_2)}$.

\item Let $(u,M)$ be a pump-word and let $\xi = \rho_{(u,M)}$ to simplify notation. Let $\xi = \xi_1\xi_2\xi_3$, where $\xi_1,\xi_2,\xi_3$ are transformation sequences and $|\xi_2| \ge 2$. Let $(u,M) = (u_1,M_1)(u_2,M_2)(u_3,M_3)$ be the corresponding split of $(u,M)$, that is, $\rho_{(u_i,M_i)} = \xi_i$ for all $i \in \{1,2,3\}$. 
If $P_{\xi_2}$ is idempotent and $P_{u_2}$ is idempotent, then $(u,M') = (u_1,M_1)(u_2,M_2 \cup \{(1,|u_2|)\})(u_3,M_3)$ is a pump-word, and $\rho_{(u,M')} := \xi_1\reduce{\xi_2}\xi_3$ is the transformation sequence of $(u,M')$.

Note that $M' = M \cup \{(|u_1|+1,|u_1|+|u_2|)\})$ according to the definition of concatenation of pump-words. \qed
\end{enumerate}
\end{definition}
In point~3 of Definition~\ref{def:pump-word}, we implicitly assume that the split $(u,M)) = (u_1,M_1)(u_2,M_2)(u_3,M_3)$ that is induced by the split $\xi = \xi_1\xi_2\xi_3$ does not ``cut'' any edge of $M$ (all edges of $u$ are inside $u_1$, $u_2$, or $u_3$). This follows from the fact that an infix $v$ of $u$ that is enclosed by an edge only contributes a single transformation $\reduce{\rho_v}$ to  $\rho_{(u,M)}$.

%Point 2 is only needed to define the concatenation of pump-words, which is used in point 3. 

As mentioned earlier, the edges in a pump-word mark factors that we want to pump. In the following, when we speak of a pumping factor, then we refer to an infix of $u$ that corresponds to an edge. For a pump-word $(u,M)$ and a number $k$, we define the set $\Pump{k}{(u,M)}$, which is obtained by repeating each pumping factor at least $k$ times:

\begin{itemize}
\item If $M = \emptyset$, then $\Pump{k}{(u,M)} = \{u\}$.
%\item If $(u,M) = (u_1,M_1)(u_2,M_2)$, then $\pump{k}{(u,M)} = \pump{k}{(u_1,M_1)}\pump{k}{(u_2,M_2)}$ and $\Pump{k}{(u,M)} = \Pump{k}{(u_1,M_1)}\Pump{k}{(u_2,M_2)}$.
\item If $(u,M) = (u_1,M_1)(u_2,M_2 \cup \{(1,|u_2|)\})(u_3,M_3)$ as in point 3 of Definition~\ref{def:pump-word}, then
 $\Pump{k}{(u,M)} = \bigcup_{\ell \ge k} \Pump{k}{(u_1,M_1)}(\Pump{k}{(u_2,M_2)})^\ell\Pump{k}{(u_3,M_3)}$ 
\end{itemize}
Clearly, $\Pump{k}{(u,M)}$ is a regular set for each $k$ and each $(u,M)$.

The condition on the profiles of pumping factors being idempotent, ensures that pumpings do not change profiles, as expressed in the following lemma.
\begin{lemma} \label{lem:pump-word-profile}
If $(u,M)$ is a pump-word, then $P_w = P_u$ for each word $w \in \Pump{1}{(u,M)}$.  
\end{lemma}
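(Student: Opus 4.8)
The plan is to prove the statement by induction on the structure of the pump-word $(u,M)$, following exactly the inductive definition in Definition~\ref{def:pump-word}. The base case is trivial: if $M = \emptyset$ then $\Pump{1}{(u,M)} = \{u\}$, so the only word $w$ is $u$ itself and $P_w = P_u$ holds. The inductive step is where the work lies: we have $(u,M) = (u_1,M_1)(u_2,M_2 \cup \{(1,|u_2|)\})(u_3,M_3)$ as in point~3 of the definition, and a word $w \in \Pump{1}{(u,M)}$, which by definition has the form $w = w_1 w_2^{(1)} \cdots w_2^{(\ell)} w_3$ for some $\ell \ge 1$, where $w_1 \in \Pump{1}{(u_1,M_1)}$, each $w_2^{(t)} \in \Pump{1}{(u_2,M_2)}$, and $w_3 \in \Pump{1}{(u_3,M_3)}$.

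First I would apply the induction hypothesis to the three sub-pump-words $(u_1,M_1)$, $(u_2,M_2)$, $(u_3,M_3)$ (these are genuine pump-words of smaller structure, since the extra edge $(1,|u_2|)$ has been removed), which gives $P_{w_1} = P_{u_1}$, $P_{w_2^{(t)}} = P_{u_2}$ for each $t$, and $P_{w_3} = P_{u_3}$. Since the profile multiplication is well defined and compatible with word concatenation ($P_{xy} = P_x P_y$), we then get $P_w = P_{w_1}\,(P_{w_2^{(1)}} \cdots P_{w_2^{(\ell)}})\,P_{w_3} = P_{u_1}\,(P_{u_2})^{\ell}\,P_{u_3}$. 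Now I would use the hypothesis from point~3 of Definition~\ref{def:pump-word} that $P_{u_2}$ is idempotent, so $(P_{u_2})^\ell = P_{u_2}$ for every $\ell \ge 1$. Hence $P_w = P_{u_1} P_{u_2} P_{u_3} = P_{u_1 u_2 u_3} = P_u$, which is exactly what we want.

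The one subtlety to be careful about — and the main obstacle, though a mild one — is bookkeeping the relationship between the word $u$ appearing in the pump-word and the transformation sequence $\rho_{(u,M)}$: in point~3 of the definition the split of $(u,M)$ into $(u_1,M_1)(u_2,M_2)(u_3,M_3)$ is induced by a split of the transformation sequence $\xi = \xi_1\xi_2\xi_3$, and one must check (as the paragraph after the definition notes, and as we may simply invoke) that this split of $u$ is at genuine positions of $u$, so that $u = u_1 u_2 u_3$ as words and in particular $P_u = P_{u_1}P_{u_2}P_{u_3}$. With that identification in hand, the idempotence of $P_{u_2}$ together with the multiplicativity of profiles closes the induction cleanly; note that the idempotence of the transformation-sequence profile $P_{\xi_2}$ plays no role in \emph{this} lemma (it is only needed elsewhere to control the transformation sequences themselves), so the entire argument rests on the $P_{u_2}$-idempotence condition.
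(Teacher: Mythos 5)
Your proof is correct and rests on the same two pillars as the paper's argument: multiplicativity of profiles under concatenation and idempotence of the word-level profile $P_{u_2}$ (you are also right that idempotence of $P_{\xi_2}$ plays no role in this lemma). The difference lies in how the induction is organised. After decomposing $w = w_1 w_{2,1}\cdots w_{2,\ell} w_3$, the paper applies its induction hypothesis once, to the \emph{unfolded} pump-word $(u_1,M_1)(u_2,M_2)^{\ell}(u_3,M_3)$; since this object may contain more edges than $(u,M)$, a plain edge count does not decrease, and the paper therefore inducts on the lexicographically ordered vector counting edges per nesting depth. You instead apply the induction hypothesis to the three factors $(u_1,M_1)$, $(u_2,M_2)$, $(u_3,M_3)$ separately, so the total number of edges drops strictly and the simplest measure suffices; the price is that you must know that a factor of a pump-word (with its induced edges) is again a pump-word, a fact the paper itself asserts in the discussion after Definition~\ref{def:safe-pump-word} (justified by the locality of the idempotence conditions), so invoking it is legitimate, and indeed the recursive definition of $\Pump{1}{(u,M)}$ already presupposes it. Your flag about the split $u=u_1u_2u_3$ respecting the edges is the right subtlety to note and is covered by the paper's remark after Definition~\ref{def:pump-word}. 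Both arguments then conclude identically via $P_w = P_{u_1}(P_{u_2})^{\ell}P_{u_3} = P_{u_1}P_{u_2}P_{u_3} = P_u$.
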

\begin{proof}
  The proof is by induction on the complexity of $(u,M)$, which is a mapping $\varphi: \nat \rightarrow \nat$ where $\varphi(n)$ is the number of edges in $M$ that are of nesting depth $n$. The nesting depth of an edge $(i,j)$ is defined as $1$ if there are no other edges inside $(i,j)$, and otherwise it is $n+1$ if the maximal nesting depth of an edge inside $(i,j)$ is $n$.
Note that $\varphi(n) > 0$ for only finitely many $n$.

We use a lexicographic ordering for comparing these mappings, letting $\varphi_1 < \varphi_2$ if $\varphi_1(n) < \varphi_2(n)$ for the biggest $n$ with $\varphi_1(n) \not= \varphi_2(n)$ (note that such an $n$ exists if $\varphi_1 \not= \varphi_2$, as there are only finitely many non-zero entries). This is a well-ordering \cite{Dickson13} and thus can be used for an induction.

For the induction base, if $\varphi$ maps everything to $0$, then $M = \emptyset$ and $w = u$. 

So consider the last edge that has been added to $M$ according to Definition~\ref{def:pump-word}. This means that $(u,M) = (u_1,M_1)(u_2,M_2 \cup \{1,|u_2|\})(u_3,M_3)$, where the edge enclosing $u_2$ is the last one that was added. Then $w = w_1w_{2,1}\cdots w_{2,\ell}w_3$ with $w_1 \in \Pump{k}{(u_1,M_1)}$, $w_{2,1},\cdots, w_{2,\ell} \in \Pump{k}{(u_2,M_2)}$, and $w_3 \in \Pump{k}{(u_3,M_3)}$. Note that $(u_1,M_1)(u_2,M_2)^\ell(u_3,M_3)$ is a pump-word, and its complexity is smaller than the one of $(u,M)$ because the edge enclosing $u_2$ is removed and the iteration of $(u_2,M_2)$ only increases the number of edges of smaller nesting depth.

Then by induction, $P_{w_1w_{2,1}\cdots w_{2,\ell}w_3} = P_{u_1(u_2)^\ell u_3}$, and since $P_{u_2}$ is idempotent, we obtain $P_{u_1(u_2)^\ell u_3} = P_{u_1u_2u_3} = P_u$ for $\ell \ge 1$.
\end{proof}

The pump-words that we build during the strategy construction apply point 3 of Definition~\ref{def:pump-word} to saturation witnesses (see Section~\ref{subsec:profiles-and-saturated-sequences}). So they satisfy some further properties that, intuitively, ensure that outputs matching pumped words can be replaced by outputs matching the unpumped words. We refer to such pump-words as \emph{safe}:
 
\begin{definition}\label{def:safe-pump-word}
A pump-word is called \emph{safe} if in Definition~\ref{def:pump-word}(3) the following additional conditions are satisfied (using the same notations as in Definition~\ref{def:pump-word}(3)): 
\begin{itemize}
\item $M_3 = \emptyset$, so there are no edges after the one that is newly introduced.
\item $P_{\xi_1} = P_{\xi_1}P_{\xi_2}$ ($\xi_2$ is absorbed by $\xi_1$).
\item $P_{u_1} = P_{u_1}P_{u_2}$ ($u_2$ is absorbed by $u_1$).
\end{itemize}
The connection of saturation witnesses (see Section~\ref{subsec:profiles-and-saturated-sequences}) and the condition for safe pump-words is as follows. Let $\xi = \tau_1 \cdots \tau_{|\xi|}$, and  let $(u,M) = (u_1',M_1') \cdots (u_{|\xi|}',M_{|\xi|}')$ be the corresponding split of $(u,M)$ (that is, $\rho_{(u_i',M_i')} = \tau_i$). Let $\bar{P} = P_{u_1},\cdots, P_{u_{|\xi|}}$. Then the conditions for safe pump-words implies that $\xi_1\xi_2\xi_3$ is a saturation witness for $\xi$ and $\bar{P}$. Since $(u,M)$ determines $\xi = \rho_{(u,M)}$, and $\bar{P}$, we call this a \emph{saturation witness for $(u,M)$}. \qed
\end{definition}
The reason for considering pump-words and safe pump-words (and not just defining the latter) is that in a decomposition of a safe pump-word $(u,M) = (u_1,M_1)(u_2,M_2)(u_3,M_3)$, the part $(u_2,M_2)$ needs not to be safe (the condition of safe pump-word refers to the prefixes before the pumping factors). However, $(u_2,M_2)$ is a pump-word because being idempotent is a local property that does not depend on the prefix.

The following lemma is essential for the strategy construction. It basically states that if an output word matches a large pumping of $(u,M)$, then there is also an output word matching $\rho_{(u,M)}$, that is, an output that is consumed outside the pumping factors of $(u,M)$.
\begin{lemma} \label{lem:output-large-pumping}
  Let $(u,M)$ be a safe pump-word, $p,q \in Q$, and $v \in \Sigmaend^*$ be an output word such that there is $w \in \Pump{3|v|}{(u,M)}$ with $v \in \Lout(p,\rho_w,q)$. Then $\Lout(p,\rho_{(u,M)},q) \not= \emptyset$.
\end{lemma}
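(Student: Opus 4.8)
The plan is to induct on the complexity $\varphi$ of the safe pump-word $(u,M)$ (the same well-ordered measure used in the proof of Lemma~\ref{lem:pump-word-profile}), and in the inductive step to "peel off" the last edge added to $M$. For the base case $M = \emptyset$ we have $\Pump{k}{(u,M)} = \{u\}$ and $\rho_{(u,M)} = \rho_u$, so $w = u$ and the hypothesis directly gives $v \in \Lout(p,\rho_u,q) \ne \emptyset$. For the inductive step, write $(u,M) = (u_1,M_1)(u_2,M_2 \cup \{(1,|u_2|)\})(u_3,M_3)$ for the last edge added, so that by safety $M_3 = \emptyset$, $P_{\xi_1} = P_{\xi_1}P_{\xi_2}$ and $P_{u_1} = P_{u_1}P_{u_2}$, where $\xi_i = \rho_{(u_i,M_i)}$ and $\rho_{(u,M)} = \xi_1\reduce{\xi_2}\xi_3$. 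A word $w \in \Pump{3|v|}{(u,M)}$ splits as $w = w_1 w_{2,1}\cdots w_{2,\ell} w_3$ with $\ell \ge 3|v|$, $w_1 \in \Pump{3|v|}{(u_1,M_1)}$, each $w_{2,m} \in \Pump{3|v|}{(u_2,M_2)}$, and $w_3 \in \Pump{3|v|}{(u_3,M_3)} = \{u_3\}$ (since $M_3 = \emptyset$).

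Next I would use the factorisation of $\Lout$ over concatenations of transformation sequences together with Lemma~\ref{lem:properties-lout}(1) to split $v$ along the traversal of $\rho_w = \rho_{w_1}(\rho_{w_{2,1}}\circ\cdots)\cdots\rho_{u_3}$: there are states and a factorisation $v = v_1 v_{2,1}\cdots v_{2,\ell} v_3$ with $v_1 \in \Lout(p,\rho_{w_1},r_0)$, $v_{2,m} \in \Lout(r_{m-1},\rho_{w_{2,m}},r_m)$, and $v_3 \in \Lout(r_\ell,\rho_{u_3},q)$. Here comes the key counting argument: since $\ell \ge 3|v|$ and $\sum_m |v_{2,m}| \le |v|$, strictly more than half (in fact at least $2\ell/3$) of the blocks $v_{2,m}$ are empty; among the $\ge 2\ell/3$ indices $m$ with $v_{2,m} = \varepsilon$, two of them, say $m < m'$, yield the same state $r_{m-1} = r_{m'-1}$ once $\ell$ is large enough relative to $|Q|$ — and for $\ell \ge 3|v|$ with $v$ nonempty this is automatic, while the case $v = \varepsilon$ is handled separately via Lemma~\ref{lem:empty-word-traversal}. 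Thus we can cut out the block $w_{2,m}\cdots w_{2,m'-1}$ without changing the traversal of the remaining word, obtaining $w' = w_1 w_{2,1}\cdots w_{2,m-1} w_{2,m'}\cdots w_{2,\ell}u_3$ with $v \in \Lout(p,\rho_{w'},q)$ and a strictly smaller number of $(u_2,M_2)$-copies. Iterating, we bring the number of copies down to $1$, so $v \in \Lout(p,\rho_{w_1}\rho_{w_{2,1}}\rho_{u_3},q)$ where $w_1 \in \Pump{3|v|}{(u_1,M_1)}$ and $w_{2,1}\in\Pump{3|v|}{(u_2,M_2)}$; now the inductive hypothesis applied to the smaller-complexity safe pump-word $(u_1,M_1)(u_2,M_2)(u_3,\emptyset)$ (smaller because the peeled edge is gone and nothing deeper was added) gives $\Lout(p,\rho_{(u_1,M_1)(u_2,M_2)(u_3,\emptyset)},q) \ne \emptyset$, i.e.\ some word $v'$ traverses $\xi_1\xi_2\xi_3$ from $p$ to $q$.

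The final move is to pass from a word traversing $\xi_1\xi_2\xi_3$ to a (possibly different) word traversing $\xi_1\reduce{\xi_2}\xi_3 = \rho_{(u,M)}$; this is exactly where the profile-absorption conditions $P_{\xi_1} = P_{\xi_1}P_{\xi_2}$ and the idempotency of $P_{\xi_2}$ (both part of the pump-word / safe pump-word hypotheses) are used. Write $v' = v_1' v_2' v_3'$ with $v_1' \in \Lout(p,\xi_1,s)$, $v_2'\in\Lout(s,\xi_2,t)$, $v_3'\in\Lout(t,\xi_3,q)$. Since $(s,\text{either }\varepsilon\text{ or }+,t) \in P_{\xi_2}$ and $P_{\xi_1}P_{\xi_2} = P_{\xi_1}$, the pair through $\xi_1$ ending in $s$ can be "redirected" so that there is $s'$ with $v_1''\in\Lout(p,\xi_1,s')$ and $(s',\varepsilon,t')\in P_{\reduce{\xi_2}}$ for a suitable $t'$ — concretely, using Lemma~\ref{lem:empty-word-traversal}, $\varepsilon$-traversals of $\xi_2$ survive in $\reduce{\xi_2}$, and the absorption forces the relevant $\xi_1$-traversals to land on states from which $\reduce{\xi_2}$ has a nonempty output language. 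Composing with a $v_3$-traversal of $\xi_3$ from the resulting state (nonempty because such a traversal existed for $v'$ and the target-state is pinned down by Lemma~\ref{lem:properties-lout}(1)) yields a word in $\Lout(p,\xi_1\reduce{\xi_2}\xi_3,q) = \Lout(p,\rho_{(u,M)},q)$, which is therefore nonempty.

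The main obstacle I expect is the last paragraph: the bookkeeping needed to show that an arbitrary traversal of $\xi_1\xi_2\xi_3$ can be converted into a traversal of $\xi_1\reduce{\xi_2}\xi_3$. The operation $\reduce{\cdot}$ deletes intermediate output states, so a traversal of $\xi_2$ that genuinely passes through an output state is destroyed; one must argue, using $P_{\xi_1}P_{\xi_2}=P_{\xi_1}$ together with idempotency of $P_{\xi_2}$, that we may instead route through a state for which $\xi_2$ admits an $\varepsilon$-traversal (preserved by $\reduce{\cdot}$ via Lemma~\ref{lem:empty-word-traversal}), at the cost of changing $v_1'$ but not its property of being a valid $\xi_1$-traversal to some state, and not changing the overall pair being in $R_\tra$. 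Getting the state-chasing right — in particular ensuring the new target state after $\reduce{\xi_2}$ still admits a $\xi_3$-traversal to $q$ — is the delicate part; everything else (the complexity induction, the $\Lout$-factorisation, the pigeonhole count cutting copies down to one) is routine given the lemmas already established.
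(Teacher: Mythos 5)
Your inductive setup (induction on the complexity measure from Lemma~\ref{lem:pump-word-profile}, peeling off the last edge, splitting $w$ and $v$ along the copies of $(u_2,M_2)$) coincides with the paper's, but two steps do not go through as you state them. First, the counting claim ``two empty blocks with the same source state \dots\ for $\ell \ge 3|v|$ with $v$ nonempty this is automatic'' is false: $3|v|$ can be far smaller than $|Q|$ (take $|v|=1$, $\ell=3$, $|Q|$ large), so the pigeonhole on states is not available. What $\ell \ge 3|v|$ does give is two \emph{adjacent} empty blocks $v_{2,i}=v_{2,i+1}=\varepsilon$; equality of the intermediate states must then be extracted from the idempotency of $P_{u_2}$ via Lemma~\ref{lem:profile-properties}(2). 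This is exactly what the paper does, and it yields the much stronger fact that there is a state $q_2$ with $(q_2,\varepsilon,q_2)\in P_{u_2}$ on which all subsequent blocks loop with empty output, so that the original $v_3$ still lies in $\Lout(q_2,\xi_3,q)$.

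The more serious gap is your final step: after cutting down to a single copy and applying the induction hypothesis, you only know that \emph{some} word traverses $\xi_1\xi_2\xi_3$ from $p$ to $q$, and you then try to convert this into a traversal of $\xi_1\reduce{\xi_2}\xi_3$ purely from $P_{\xi_1}=P_{\xi_1}P_{\xi_2}$ and idempotency of $P_{\xi_2}$. These profile identities say nothing about $\reduce{\xi_2}$: the $\xi_2$-portion of the given traversal may pass through output states and is simply destroyed by $\reduce{\cdot}$, and absorption does not force the state $s$ reached after $\xi_1$ to satisfy $\varepsilon\in\Lout(s,\xi_2,s)$, nor to admit any $\reduce{\xi_2}$-traversal ending in a state from which $\xi_3$ still reaches $q$. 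What is needed is precisely a state $q_2$ with $(q_2,\varepsilon,q_2)\in P_{u_2}$ (so Lemma~\ref{lem:empty-word-traversal} preserves the $\varepsilon$-traversal under $\reduce{\cdot}$) that is simultaneously a possible target of $\Lout(p,\xi_1,\cdot)$ and a source with $\Lout(q_2,\xi_3,q)\ne\emptyset$; your route of collapsing to one copy and then arguing at the profile level discards exactly this information. The paper keeps it: it fixes the looping state $q_2$ from the pumped traversal, applies the induction hypothesis to $(u_1,M_1)(u_2,M_2)^{2|v|}$ with \emph{target} $q_2$, uses $P_{\xi_1}=P_{\xi_1}P_{\xi_2}$ and idempotency only to shrink $\xi_1(\xi_2)^{2|v|}$ to $\xi_1$ with the same target $q_2$, and glues $\varepsilon\in\Lout(q_2,\reduce{\xi_2},q_2)$ with the original $v_3\in\Lout(q_2,\xi_3,q)$. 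Without recovering a state playing the role of $q_2$, your last paragraph cannot be completed, so the proof as proposed has a genuine gap.
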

\begin{proof}
  The proof is by induction on the complexity of $(u,M)$, as in the proof of Lemma~\ref{lem:pump-word-profile}. If $M = \emptyset$, and the claim obviously holds because $w = u$ (there are no pumping factors) and $\rho_u = \rho_{(u,\emptyset)}$.

So consider the last edge that has been added to $M$ according to Definition~\ref{def:safe-pump-word}. This means that $(u,M) = (u_1,M_1)(u_2,M_2 \cup \{1,|u_2|\})(u_3,M_3)$, where the edge enclosing $u_2$ is the last one that was added. And as in the definition of pump-words, let $\xi_i = \rho_{(u_i,M_i)}$. We want to show that $\Lout(p,\xi_1\reduce{\xi_2}\xi_3,q) \not = \emptyset$.

By definition of $\Pump{3|v|}{(u,M)}$, we can write $w = w_1 w_{2,1}\cdots w_{2,\ell} w_3$ with $\ell \ge 3|v|$, $w_1 \in \Pump{3|v|}{(u_1,M_1)}$, $w_3 \in \Pump{3|v|}{(u_3,M_3)}$, and $w_{2,i} \in \Pump{3|v|}{(u_2,M_2)}$. Note that $P_{w_1} = P_{u_1}$, $P_{w_{2,i}} = P_{u_2}$, and $P_{w_3} = P_{u_3}$ by Lemma~\ref{lem:pump-word-profile}.

According to the definition of $\Lout(p,\rho_w,q)$, we can write $v = v_1v_{2,1}\cdots  v_{2,\ell} v_3$ with $v_1 \in \Lout(p,\rho_{w_1},q_2^0)$, each $v_{2,i} \in \Lout(q_2^{i-1},\rho_{w_{2,i}},q_2^i)$, and $v_3 \in \Lout(q_2^\ell,\rho_{w_3},q)$ for some states $q_2^0, \ldots, q_2^\ell$.

Since $\ell \ge 3|v|$, there must be an $i < 2|v|$ with $v_{2,i} = v_{2,{i+1}} = \varepsilon$. By definition of profiles we obtain $(q_2^{i-1},\varepsilon, q_2^i) \in P_{u_2}$ and $(q_2^i,\varepsilon, q_2^{i+1}) \in P_{u_2}$.  Since $P_{u_2}$ is idempotent by definition of pump-words, we obtain $(q_2^{i-1},\varepsilon, q_2^{i+1}) \in P_{u_2}$, and thus $q_2^i = q_2^{i+1}$ (see Lemma~\ref{lem:profile-properties}). We conclude that $(q_2^i, \varepsilon, q_2^i) \in P_{u_2}$, and $q_2^j = q_2^i$ and $v_{2,j} = \varepsilon$ for each $i \le j \le \ell$. 

Let $q_2 := q_2^i$, $w' =  w_1w_{2,1} \cdots w_{2,{2|v|}}$ and $v' = v_1v_{2,1} \cdots v_{2,{2|v|}}$. As a consequence of the above considerations, $v' \in \Lout(p,\rho_{w'},q_2)$.
Note that $(u_1,M_1)(u_2,M_2)^{2|v|}$ is a safe pump-word of smaller complexity than the one of $(u,M)$ (the edge enclosing $u_2$ in $(u,M)$ was removed, and the iteration of $(u_2,M_2)$ only adds edges of smaller nesting depth). Furthermore $w' \in \Pump{3|v|}{(u_1,M_1)(u_2,M_2)^{2|v|}}$, and $3|v| \ge 3|v'|$. We can thus apply the induction and obtain that $\Lout(p,\xi_1(\xi_2)^{2|v|},q_2) \not= \emptyset$. Since  $P_{\xi_2} = P_{\xi_2}P_{\xi_2}$ and $P_{\xi_1} = P_{\xi_1}P_{\xi_2}$, we conclude that $\Lout(p,\xi_1,q_2) \not= \emptyset$. 

From $(q_2,\varepsilon,q_2) \in P_{u_2}$ (as deduced above), we obtain $\varepsilon \in \Lout(q_2,\rho_{u_2},q_2)$, which implies $\varepsilon \in \Lout(q_2,\reduce{\rho_{u_2}},q_2)$ (see Lemma~\ref{lem:empty-word-traversal}). Since $\reduce{\rho_{u_2}} = \reduce{\rho_{(u_2,M_2)}} = \reduce{\xi_2}$, we conclude that $\Lout(q_2,\reduce{\xi_2},q_2) \not= \emptyset$.

Finally, note that $M_3 = \emptyset$ in the definition of safe pump-word. Hence $w_3 = u_3$, and $\xi_3 = \rho_{u_3}$. Thus $v_3 \in \Lout(q_2^\ell,\xi_3,q)$ because $P_{w_3} = P_{u_3}$.

Putting all these together, we obtain that $\emptyset \not= \Lout(p,\xi_1,q_2)\Lout(q_2,\reduce{\xi_2},q_2)\Lout(q_2,\xi_3,q) \subseteq \Lout(p,\xi_1\reduce{\xi_2}\xi_3)$ as desired. 
\end{proof}

\subsubsection*{An invariant for the strategy}

Before we start the construction of the strategy, we introduce one more terminology concerning sequential uniformisers that  simplifies the presentation below. Recall the definition of the relation $R_p^\rho$ from Section~\ref{subsec:langauges-and-relations-of-transformation-sequences}. Every output in this relation has to start with a prefix in $\Lout(p,\rho)$. We say that a sequential transducer $S$ is a \emph{$(p,\rho)$-uniformiser} if it is a uniformiser of of $R_p^\rho$ such that
\begin{itemize}
\item in every run, the first non-empty output that $S$ produces has a prefix in $\Lout(p,\rho)$, and
\item $S$ outputs the endmarker only if it reads the endmarker on the input.
\end{itemize}
So with $(p,\rho)$-uniformisers we exclude the case that the prefix from $\Lout(p,\rho)$ is built up incrementally along several transitions, and the case that an endmarker is produced on the output before the endmarker has appeared on the input. It is quite easy to see that working with $(p,\rho)$-uniformisers is not a restriction.

\begin{remark}
If there is a sequential uniformiser for $R_p^\rho$, then there is also a $(p,\rho)$-uniformiser. 
\end{remark}
\begin{proof}
First note that for each $q \in Q$ the language $\Lout(p,\rho,q)$ is regular. 
Given a sequential uniformiser $S$ for $R_p^\rho$, one can construct a $(p,\rho)$-uniformiser $S'$ as follows. It mimics the transitions of $S$ but instead of producing output, it simulates 
DFAs for each $\Lout(p,\rho,q)$ on the output that $S$ would have produced. If the DFA for $\Lout(p,\rho,q)$ reaches an accepting state, then $S'$ produces some output word in $\Lout(p,\rho,q)$, appending the possibly remaining part $v$ of the output from the last transition. 

The property concerning the endmarker is easily achieved by delaying the output of the endmarker to the transition that reads $\dashv$ on the input.
\end{proof}

%A memory state of the strategy $f_\tra$ that we construct from $\tra$ is of the form $[(d,p,\rho,\xi),(u,M)]$ where  $(d,p,\rho\xi)$ is the current vertex of the play, $|\rho|,|\xi|\le K$, and $(u,M)$ is a pump-word with $\rho_{(u,M)} = \xi$. 
We construct a strategy $f_\tra$ from $\tra$, such that an invariant is maintained in each move, from which follows that $f_T$ is winning. The invariant is stated below. Condition (Inv0) only describes some properties on the structure of the memory that is used for $f_\tra$. Condition (Inv1) implies that $f_\tra$ is a winning strategy. The conditions (Inv2) and (Inv3) state intuitively that it is still possible to uniformise the remaining relation w.r.t.\ the input and output moves that have already been played.

For a sequential transducer $S$ and an input word $w$, we write $S(w)=v$ if $S$ produces output $v$ when reading input $w$.

\begin{itemize}
\item[(Inv0)] The memory states of $f_\tra$ are of the form $[(p,\rho,d',\xi,d),(u,M)]$, where 
  \begin{enumerate}[(i)]
  \item the current game position is of the form $(p,\rho\xi,d)$ or $(p,\rho\xi,d,a)$ (the memory maintains two parts of the transformation sequence in the game position;  $f_\tra$ only applies moves of type (Out3) to the second part $\xi$);
  \item  $|\rho| \le K+1$, $|\xi| \le K$;
  \item $(u,M)$ is a safe pump-word with $\rho_{(u,M)} = \xi$ ($u$ is the input sequence that induced $\xi$, and $M$ corresponds to the applications of moves of type (Out3));
  \item if $\xi$ is saturated, then a saturation witness for $(u,M)$ is of the form $\xi_1\xi_2\xi_3$ with $\xi_3 = \varepsilon$ (new edges to $u$ will be added only at the end of the word);
  \item $A_\dom: d' \xrightarrow{u} d$ (before reading $u$, the automaton $A_\dom$ was in state $d'$). 
  \end{enumerate}
\item[(Inv1)] If $d \in F_\dom$, then $\rho = \xi = \varepsilon$ and $p \in F$.
\item[(Inv2)] If $d \notin F_\dom$, then there is a $(p,\rho)$-uniformiser, and the domain of $R_p^{\rho}$ contains all words from $L(A_\dom,d')$, which is the set of word accepted by  $A_\dom$ from state $d'$. 
\item[(Inv3)] For every $(p,\rho)$-uniformiser $S$ there is $w \in \Pump{|S|!}{(u,M)}$ such that $S(w) = \varepsilon$.
\end{itemize}

The last condition (Inv3) is used to build a $(q,\xi)$-uniformiser (ensuring (Inv2) in the new vertex), once the strategy is ready to play an output in $\Lout(p,\rho,q)$ that consumes $\rho$ from the lookahead. The choice of $|S|!$ for the number of repetitions in the pumping ensures that $S$ enters a loop on each of the pumped factors. This is useful in the proofs below.

\begin{lemma} \label{lem:invariant-implies-winning}
  If the strategy $f_\tra$ satisfies the properties (Inv0)--(Inv3), then it is a winning strategy. 
\end{lemma}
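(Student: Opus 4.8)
The plan is to show that conditions (Inv0)--(Inv3) directly yield the two defining properties of a winning strategy in the safety game $\game_\tra$: first, that $f_\tra$ never leads the play into a bad vertex, and second (technically needed so that the safety condition is meaningful) that $f_\tra$ is well-defined, i.e.\ it always prescribes a legal move as long as the invariant can be maintained. Since $\game_\tra$ is a safety game with Output trying to avoid the bad vertices, a strategy is winning precisely if every play consistent with it avoids all bad vertices; so the core of the argument is a short deduction from (Inv1).

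\medskip

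First I would recall which vertices are bad: a vertex $(p,\rho',d) \in \Vin$ is bad iff $d \in F_\dom$ but $p \notin F$ or $\rho' \neq \varepsilon$. Now take any play in which Output follows $f_\tra$, and consider any $\Vin$-vertex $(p,\rho\xi,d)$ reached along this play, with associated memory state $[(p,\rho,d',\xi,d),(u,M)]$ as guaranteed by (Inv0)(i). Suppose for contradiction that this vertex is bad, so $d \in F_\dom$. Then (Inv1) applies and gives $\rho = \xi = \varepsilon$ and $p \in F$. Hence the transformation sequence in the game position is $\rho\xi = \varepsilon$ and $p \in F$, which means the vertex $(p,\varepsilon,d)$ is \emph{not} bad --- a contradiction. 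Therefore no $\Vin$-vertex on the play is bad. Since bad vertices lie in $\Vin$ by definition, the play never visits a bad vertex, so the safety condition is satisfied on this play.

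\medskip

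Second I would address well-definedness of $f_\tra$ as a strategy, since the invariant is stated as something $f_\tra$ maintains "in each move": we must know that from every Output vertex reached, the move dictated by $f_\tra$ is an available move of $\game_\tra$. This is where the structural conditions (Inv0)(ii) and the safe-pump-word bookkeeping in (Inv0)(iii),(iv) matter. The bound $|\rho| \le K+1$ and $|\xi| \le K$ gives $|\rho\xi| \le 2K+1$, so the game positions stay inside the vertex set of $\game_\tra$; and the remark just before the lemma (that $\game_\tra$ is constructed so that Output can always move, via (Out1) or (Out3)) shows there is always \emph{some} legal move. The content of (Inv2) and (Inv3) is what makes it possible to \emph{choose} a move that re-establishes the invariant --- but for the present lemma we only need that following $f_\tra$ keeps us out of bad vertices, which is exactly the deduction from (Inv1) above. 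I expect no real obstacle here: the lemma is essentially a bookkeeping consequence of the definitions, with the only subtlety being to make explicit that the bad-vertex condition is evaluated on the \emph{full} transformation sequence $\rho\xi$ stored across the two memory components, so that "$\rho = \xi = \varepsilon$" in (Inv1) is precisely what is needed to certify the visited $\Vin$-vertex is good.

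\medskip

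So the proof is short: fix a play consistent with $f_\tra$; at every $\Vin$-vertex $(p,\rho\xi,d)$ on it, if $d \in F_\dom$ then (Inv1) gives $\rho\xi = \varepsilon$ and $p \in F$, so the vertex is not bad; and if $d \notin F_\dom$ the vertex is trivially not bad (the bad condition requires $d \in F_\dom$). Hence $f_\tra$ avoids all bad vertices and is therefore a winning strategy for Output in the safety game $\game_\tra$.
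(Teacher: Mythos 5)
Your proof is correct and follows the same route as the paper, which simply observes that the statement is a direct consequence of (Inv1): whenever $d \in F_\dom$, the invariant forces $\rho\xi = \varepsilon$ and $p \in F$, so no bad vertex is ever reached. Your additional remarks on well-definedness of the moves are harmless elaboration but not needed for this lemma.
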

\begin{proof}
This is a direct consequence of  (Inv1).
\end{proof}

The next lemma is the key lemma for preserving the invariant in a move that produces output. It covers the case that (Inv3) fails after appending the next input letter $a$. Then the goal is to produce output that consumes $\rho$ and to transfer the property (Inv2) to $\xi \circ \tau_a$.
\begin{lemma} \label{lem:propagate-invariant}
Let $[(p,\rho,d',\xi,d),(u,M)]$ be a memory state that satisfies (Inv0)--(Inv3). Let $a \in \Sigmaend$ be such that $S(wa) \not= \varepsilon$ for all $w \in \Pump{|S|!}{(u,M)}$ for some $(p,\rho)$-uniformiser $S$. Then there exists $q \in Q$ such that 
\begin{itemize}
\item $\Lout(p,\rho,q) \not= \emptyset$, 
\item The domain of $R_q^{\xi \circ \tau_a}$ contains all words in $L(A_\dom,d_a)$ with $d_a = \delta_\dom(d,a)$, 
\item and if $a \not= \dashv$, there is a $(q,\xi \circ \tau_a)$-uniformiser.  
\end{itemize}
\end{lemma}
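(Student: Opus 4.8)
The plan is to take the hypothesised $(p,\rho)$-uniformiser $S$ and look at what it does on the input words in $\Pump{|S|!}{(u,M)}$ extended by $a$. By assumption $S(wa)\neq\varepsilon$ for all $w$ in this set, whereas by (Inv3) we have $S(w)=\varepsilon$ for some such $w$. So $S$ is forced, on reading the single extra letter $a$, to produce its first non-empty output; by the definition of $(p,\rho)$-uniformiser this first non-empty output has a prefix $v\in\Lout(p,\rho,q)$ for some $q\in Q$. This already gives the first bullet, $\Lout(p,\rho,q)\neq\emptyset$. The state $q$ is the candidate target state; fix it from here on. The subtle point is that different $w$'s could in principle lead to different $q$'s, so I want to choose $w$ with enough pumping that I can control this: take $w\in\Pump{3|v|}{(u,M)}$ (possible once I have a bound on $|v|$, namely $|v|$ is at most the output produced by $S$ on the length-one transition, which is bounded by $|S|$ times the max output of a transition — a fixed constant, so the pumping level is fine). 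By Lemma~\ref{lem:pump-word-profile}, $P_w=P_u$ for all such $w$, and via Lemma~\ref{lem:output-large-pumping} the existence of one $w\in\Pump{3|v|}{(u,M)}$ with $v\in\Lout(p,\rho_w,q)$ yields $\Lout(p,\rho_{(u,M)},q)=\Lout(p,\xi,q)\neq\emptyset$, which I will need below to relate $R_p^\rho$-behaviour to $R_q^{\xi\circ\tau_a}$-behaviour.

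Next I establish the domain property. Let $d_a=\delta_\dom(d,a)$ and take any $w'\in L(A_\dom,d_a)$; then $aw'\in L(A_\dom,d)$, and since $A_\dom$ reaches $d$ from $d'$ on reading $u$ (by (Inv0)(v)), the word $u a w'$ — or rather any pumping $waw'$ with $w\in\Pump{\cdot}{(u,M)}$, using that pumping preserves the $A_\dom$-state because profiles encode $A_\dom$-transformations — is in $L(A_\dom,d')$. By (Inv2) it is then in the domain of $R_p^\rho$, so $S(waw')$ is a complete correct output for it; in particular $S$ on $waw'$ produces some output $vv'z$ with $v\in\Lout(p,\rho,q)$ (the prefix just identified) and the rest traversing the remaining play. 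Stripping off $v$ and using Lemma~\ref{lem:Lin-Lout} / the definition of $R_q^{\xi\circ\tau_a}$, together with the fact that $wa$ is consistent with (a setting-to-$\bot$ refinement of) $\xi\circ\tau_a$ and pumping preserves profiles, I can read off that $w'$ lies in the domain of $R_q^{\xi\circ\tau_a}$. This gives the second bullet.

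Finally, for $a\neq\dashv$, I need to build a $(q,\xi\circ\tau_a)$-uniformiser. The natural construction is to run $S$, but pre-feed it a fixed pumped prefix $waw_0$ — where $w$ is chosen from a sufficiently high pumping of $(u,M)$ so that $S$ has entered a loop on each pumping factor, and then on the genuine input $x$ it feeds $S$ the input $x$ (shifted appropriately), discards the first non-empty chunk up to the prefix $v\in\Lout(p,\rho,q)$, and outputs the rest. Correctness of this as a uniformiser of $R_q^{\xi\circ\tau_a}$ follows from Lemma~\ref{lem:lookahead-relation}-style reasoning combined with (Inv2): every input that should be uniformised from $(q,\xi\circ\tau_a)$ corresponds, after prepending the pumped prefix, to an input that $S$ correctly uniformises from $(p,\rho)$; the $(p,\rho)$-uniformiser discipline ensures the output factors as $v$ followed by an output traversing $\xi\circ\tau_a$ and then completing a run of $\tra$, which is exactly what membership in $R_q^{\xi\circ\tau_a}$ demands. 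The main obstacle I anticipate is the bookkeeping in this last step: showing that the constructed machine is genuinely \emph{sequential} (it is, since it is $S$ with a hard-wired prefix and a bounded output-filtering gadget as in the Remark before this lemma), that its first non-empty output really has a prefix in $\Lout(q,\xi\circ\tau_a)$ rather than in $\Lout(p,\rho,q)\cdot\Lout(q,\xi\circ\tau_a)$ (handled by the output-filtering gadget, simulating DFAs for the $\Lout(q,\xi\circ\tau_a,q')$), and that the endmarker discipline is preserved. The pumping-preserves-profiles lemma and Lemma~\ref{lem:output-large-pumping} are the load-bearing technical inputs throughout; the rest is careful tracking of which play prefix corresponds to which factor of $S$'s output.
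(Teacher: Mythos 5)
Your outline follows the same route as the paper (use (Inv3) to find $w$ with $S(w)=\varepsilon$, let the first non-empty output on $a$ give the state $q$, prepend pumped inputs to invoke (Inv2), use Lemma~\ref{lem:output-large-pumping}, then rebuild a uniformiser from $S$), but there is a genuine gap at the load-bearing step. You conflate the two output prefixes that play a role. The first non-empty output of $S$ on $wa$ has, by the $(p,\rho)$-uniformiser discipline, a prefix $y\in\Lout(p,\rho,q)$; this alone gives the first bullet and has nothing to do with $\rho_w$. Your claim that this prefix lies in $\Lout(p,\rho_w,q)$ and that Lemma~\ref{lem:output-large-pumping} then yields $\Lout(p,\xi,q)\neq\emptyset$ is therefore unjustified, and moreover it is not the fact the rest of the lemma needs. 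What the second and third bullets require is $\Lout(q,\xi\circ\tau_a,q_1)\neq\emptyset$ for a suitable $q_1$: for a continuation $x\in L(A_\dom,d_a)$, the output that $S$ produces \emph{after} $y$ contains a prefix $v_x$ traversing $\rho_{wa}$ (the pumped, unreduced sequence) from $q$ to some $q_1$, and one must convert this into a word traversing the reduced sequence $\xi\circ\tau_a$ via Lemma~\ref{lem:output-large-pumping}. That lemma needs the pumping amount to be at least $3|v_x|$, and $|v_x|$ depends on $x$ (it is a prefix of the output $S$ produces on the whole of $wax$); it is not bounded by the output of a single transition of $S$, so fixing the pumping level once and for all at $3|v|$ with $v$ the first-transition output does not work. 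The missing device is the one the paper uses: since the pumping exponents are multiples of $|S|!$, $S$ loops on every pumping factor, so for every $k$ there is $w_k\in\Pump{k|S|!}{(u,M)}$ with $S(w_k)=\varepsilon$ reaching the \emph{same} state; hence the later output is independent of $k$, and for each $x$ one can choose $k$ large relative to the relevant prefix (the paper pigeonholes to fix one prefix $v_1$ and state $q_1$ and takes $k=3|v_1|$).

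The construction of the $(q,\xi\circ\tau_a)$-uniformiser has the same problem in a different guise. You propose an output-filtering gadget that simulates DFAs for the languages $\Lout(q,\xi\circ\tau_a,q')$ on the output of $S$, as in the remark preceding the lemma. But the output $S$ produces (after $y$) matches the pumped sequence $\rho_{wa}$, not $\xi\circ\tau_a$; in general it never has a prefix in $\Lout(q,\xi\circ\tau_a,q')$ --- the whole point of the reduction $\reduce{\cdot}$ is that such a word may not occur in $S$'s output at all. The gadget must instead monitor the output for a prefix belonging to the regular language of words in $\Lout(q,\rho_{w'a},q_1)$ for some \emph{guessed} $w'\in\Pump{|S|!}{(u,M)}$ on which $S$ stays silent and reaches the right state, and upon detection substitute some word of $\Lout(q,\xi\circ\tau_a,q_1)$ (nonempty by the argument above) before copying the rest of $S$'s output. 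Without the ``same state after every pumping $w_k$'' device and with the gadget aimed at the wrong language, neither the domain bullet nor the third bullet closes as written.
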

\begin{proof}
By (Inv3), there is  $w \in \Pump{|S|!}{(u,M)}$ such that $S(w) = \varepsilon$. Let $s_0$ be the initial state of $S$, and $s',s$ be the states of $S$ such that $S: s_0 \xrightarrow{w/\varepsilon} s' \xrightarrow{a/yz} s$, where $y \in \Lout(p,\rho,q)$ for some state $q$ (the first non-empty output contains a prefix in $\Lout(p,\rho)$). This proves the existence of $q \in Q$ such that $\Lout(p,\rho,q) \not= \emptyset$.

Since $S$ does not see the difference between a factor repeated $\ell \ge |S|!$ times or $\ell + k|S|!$ times, one can even find $w_k \in  \Pump{k|S|!}{(u,M)}$ for each $k\ge 1$ such that $S(w_k) = \varepsilon$ and furthermore, $S$ reaches the same state $s'$ after reading $w_k$ for all $k$. So $S: s_0 \xrightarrow{w_k/\varepsilon} s' \xrightarrow{a/yz} s$ for all $k$. 

Let $x \in L(A_\dom,d_a)$. Then $uax \in L(A_\dom,d')$ because:
\[
  A_\dom: d' \xrightarrow{u} d \xrightarrow{a} d_a \xrightarrow{x} F_\dom.
\]
Since the pumped words of $(u,M)$ induce the same state transformation on $A_\dom$ as $u$, we obtain that $w_kax \in L(A_\dom,d')$ for all $k$.
Let $z_x$ be the output produced by $S$ on input $x$ from state $s$. Then
\[
S: s_0 \xrightarrow{w_k/\varepsilon} s' \xrightarrow{a/yz} s \xrightarrow{x/z_x} F_S
\]
for each $k$, where $F_S$ is the set of final states of $S$.
This implies that $\delta^*(q,w_kax,zz_x) = (r_k,\varepsilon,\varepsilon)$ with $r_k \in F$ for each $k$ because $S$ is a $(p,\rho)$-uniformiser. So for each $k$, $zz_x$ has a prefix $v_k$ such that $\delta^*(q,w_ka,v_k) = \delta^*(q_k',a,\varepsilon) = (q_k,\varepsilon,\varepsilon)$ for some state $q_k$. In other words, $v_k \in \Lout(q,\rho_{w_ka},q_k)$.

Since there are only finitely many possible prefixes $v_k$ and states $q_k$,  there are infinitely many $k$ with the same $v_k$ and the same $q_k$. By choosing the corresponding subsequence of $w_1,w_2,\ldots$, we can assume that all $v_k$ and $q_k$ are the same, so we just denote them by $v_1$ and $q_1$.

We obtain that $v_1 \in \Lout(q,\rho_{w_ka},q_1)$ for all $k$. By choosing $k = 3|v_1|$, we can apply Lemma~\ref{lem:output-large-pumping}, and obtain that $\Lout(q,\xi \circ \tau_a,q_1) \not= \emptyset$. 

We can conclude that  $x \in \dom(R_q^{\xi \circ \tau_a})$ as follows. The output word $zz_x$ was shown to be the form $v_1v'$. We can replace the prefix $v_1$ by some word $v'' \in \Lout(q,\xi \circ \tau_a,q_1)$. Then $(x,v''v') \in R_q^{\xi \circ \tau_a}$.

It remains to prove that there is a $(q,\xi \circ \tau_a)$-uniformiser $S'$ in case $a \not= \dashv$. The reason for excluding $\dashv$ is that in this case there are no further transitions of $S$ from $s$.

We already proved that for each input $x \in L(A_\dom,d_a)$, the output word $zz_x$ (as above) is of the form $v_1v'$ with $v_1 \in \Lout(q,\rho_{w'a},q_1)$ for some $w' \in  \Pump{|S|!}{(u,M)}$ with $S:s_0 \xrightarrow{w'/\varepsilon} s'$, and some state $q_1$, such that $\Lout(q,\xi \circ \tau_a,q_1) \not= \emptyset$. And then $(x,v''v') \in R_q^{\xi \circ \tau_a}$ for $v'' \in \Lout(q,\xi \circ \tau_a,q_1)$.

The idea for constructing a $(q,\xi \circ \tau_a)$-uniformiser $S'$ is that we modify $S$ to detect a prefix of the output with the properties of $v_1$ as above, and then replace it by $v''$. To implement this operation, we show that the set of output words $v_1$ with the above property is regular, by using the following observations:
\begin{itemize}
\item $L_1 := \Pump{|S|!}{(u,M)}$ is regular.
\item $L_2 := \{w' \in \Sigmaend^* \mid S:s_0 \xrightarrow{w'/\varepsilon} s'\}$ is regular.
\item For each $q_1$, the set $L_{q_1}$ of output words $v_1$ such that there exists $w' \in L_1 \cap L_2$ and $v_1 \in \Lout(q,\rho_{w'a},q_1)$ is regular. An automaton for this set reads $v_1$, guesses $w'$, simulates $\tra$ from $q$ on the pair $(w',v_1)$, and simulates an automaton for $L_1 \cap L_2$ on the guessed $w'$. 
\end{itemize}
So the $(q,\xi \circ \tau_a)$-uniformiser $S'$ simulates  the transitions of $S$ starting from state $s$, and in parallel simulates for each $q_1 \in Q$ such that $\Lout(q,\xi \circ \tau_a,q_1) \not= \emptyset$ an automaton for $L_{q_1}$ on $z$ followed by the output produced in the simulation of $S$. During this simulation, $S'$ does not produce any output. As observed earlier, for some $q_1$,  the automaton for $L_{q_1}$ will eventually reach an accepting state. In this transition, $S'$ produces an output word $v''$ in $\Lout(q,\xi \circ \tau_a,q_1)$ and then continues by simply copying the remaining output of $S$.
\end{proof}

\subsubsection*{The strategy construction}

Finally, we explain how to construct $f_\tra$ such that (Inv0)--(Inv3) are ensured. 

The initial memory state is $[(q_0, \varepsilon, q_0^\dom, \varepsilon, q_0^\dom), (\varepsilon,\emptyset)]$. Then (Inv0)--(Inv3) are all satisfied:
\begin{itemize}
\item (Inv0) obviously holds.
\item (Inv1) holds because $q_0^\dom$ is not in $F_\dom$ because each word accepted by $A_\dom$ has to end with $\dashv$.
\item Since $R_{q_0}^\varepsilon = R_\tra$ and by assumption there is a sequential uniformiser for $R_\tra$, (Inv2) is satisfied.
\item (Inv3) is obviously satisfied by choosing $w = \varepsilon$ (because pumping the empty word results again in the empty word). 
\end{itemize}

Let $[(p,\rho,d',\xi,d);(u,M)]$ be the current memory state, $(p,\rho\xi,d)$ be the current position of the play, and assume that (Inv0)--(Inv3) are satisfied. Let $a \in \Sigmaend$ be the next move of Input leading to $(p,\rho\xi,d,a)$.

We now describe how $f_\tra$ selects the next moves of Output leading again to an Input vertex, and updates the memory. Later we verify that the invariant holds at the new Input vertex reached by the moves. For all cases, let $d_a = \delta_\dom(d,a)$.

\begin{enumerate}[(a)]
%\item If $p \in \Qin$, then $\rho\xi = \varepsilon$, $d' = d$, and $u=\varepsilon$. The only available move is (Out0), leading to $(p_1,\varepsilon,d_1)$ with $p_1 = \delta(p,a)$ and $d_1 = \delta_\dom(d,a)$. We update the memory to $[(p_1,\varepsilon,d_1,\varepsilon,d_1);(\varepsilon,\emptyset)]$. 
%
%for $p \in \Qout$ we show how to select the next move of Output such that the invariant is preserved. 
%\item If $|\rho| < K$ (and thus $\xi = u = x\varepsilon$, $d=d'$), then $f_T$ plays move (Out1) to $(p,\rho \circ \tau_a,d_1)$ with $d_1 = \delta_\dom(d,a)$, and updates the memory to $[(p,\rho,d_1,\varepsilon,d_1),(\varepsilon,\emptyset)]$.
\item Assume there is a $(p,\rho)$-uniformiser $S$ with $S(wa) \not= \varepsilon$ for all $w \in \Pump{|S|!}{(u,M)}$. Then the conditions of Lemma~\ref{lem:propagate-invariant} are satisfied. Let $q \in Q$ be the state with the properties from Lemma~\ref{lem:propagate-invariant}.

Let $w \in \Pump{|S|!}{(u,M)}$ such that $S(w) = \varepsilon$. The existence of $w$ is ensured by (Inv3) for the current node. 

We distinguish two cases depending on whether $a = \dashv$ or not.
  \begin{enumerate}[(1)]
  \item If $a \not= \dashv$, then play (Out2) with $v_{p,\rho,q}$, leading to vertex $(q,\xi,d,a)$, followed by the move (Out1) leading to $(q,\xi \circ \tau_a,d_a)$. 

Update the memory to $[(q,\xi \circ \tau_a,d_a,\varepsilon,d_a);(\varepsilon,\emptyset)]$.
  \item Assume $a = \dashv$ and $d_\dashv \in F_\dom$ (meaning that Input has played a word in the domain of $R_\tra$). In this case, the properties of $q$ are 
$\Lout(p,\rho,q) \not= \emptyset$ and $\dom(R_q^{\xi \circ \tau_\dashv}) = L(A_\dom,d_\dashv)$. Since  $L(A_\dom,d_\dashv) = \{\varepsilon\}$, the latter property implies that there is $r \in F$ with $\Lout(q,\xi \circ \tau_\dashv,r) \not= \emptyset$. So there must be $q'$ such that $\Lout(q,\xi,q') \not= \emptyset$ and $\Lout(q',\tau_\dashv,r) \not= \emptyset$.

Then $f_\tra$ plays (Out2) with $v_{p,\rho\xi,q'}$, leading to vertex $(q',\varepsilon,d,\dashv)$, followed by (Out0) with $v_{q',\tau_\dashv,r}$, leading to $(r,\varepsilon,d_\dashv)$. 

Update the memory to $[(r,\varepsilon,d_\dashv,\varepsilon,d_\dashv));(\varepsilon,\emptyset)]$.
  \end{enumerate}
\item Not case (a), and $\xi$ is not saturated. Then make the move (Out1) to $(p,\rho\xi \circ \tau_a,d_a)$.

Update the memory to $[(p,\rho,d',\xi \circ \tau_a,d_a);(ua,M)]$

This move is possible because $\xi = K$ would imply that $\xi$ is saturated  (Lemma~\ref{lem:length-saturated}).

\item Not case (a), and $\xi$ is saturated. Then there is a saturation witness $\xi = \xi_1\xi_2\xi_3$ with $\xi_3 = \varepsilon$. In particular, $M_3 = \emptyset$ for the  corresponding decomposition $(u,M) = (u_1,M_1)(u_2,M_2)(u_3,M_3)$ of $(u,M)$.
This means that $\xi_1\xi_2\xi_3 = \xi_1\xi_2$ is a saturation witness for $(u,M)$  as in Definition~\ref{def:safe-pump-word}. Then make the move (Out3) to $(p,\rho\xi_1\reduce{\xi_2},d,a)$ and update the memory to $[(p,\rho,d',\xi_1\reduce{\xi_2},d);(u,M')]$ with $M' = M \cup \{(|u_1|+1,|u_1|+|u_2|)\}$.
\end{enumerate}

\begin{lemma} \label{lem:invariant-preserved}
The strategy $f_T$ satisfies (Inv0)--(Inv3).
\end{lemma}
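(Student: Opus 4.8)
The plan is to verify (Inv0)--(Inv3) separately for each of the four update cases (a.1), (a.2), (b), (c), starting from the induction hypothesis that the previous memory state $[(p,\rho,d',\xi,d);(u,M)]$ satisfies the invariant and that Input has played letter $a$. First I would treat the easy structural bookkeeping: in every case the updated game position is the one reached by the moves named, and the bookkeeping conditions (Inv0)(i)--(v) are checked directly from the definitions. For cases (a.1) and (a.2) the new pump-word is $(\varepsilon,\emptyset)$, so (Inv0)(ii)--(iv) are trivial, $|\rho|\le K+1$ follows from $|\xi|\le K$ and $|\tau_a|=1$, and (Inv0)(v) holds because $A_\dom\colon d_a \xrightarrow{\varepsilon} d_a$. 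For case (b), $(u,M)$ becomes $(ua,M)$; I need that $\xi\circ\tau_a$ is still realised by $(ua,M)$ (no edge of $M$ is affected by appending $a$, which is immediate), that $|\xi\circ\tau_a|\le K$ (guaranteed since $\xi$ was not saturated, hence $|\xi|<K$ by Lemma~\ref{lem:length-saturated}, and composing with $\tau_a$ increases length by at most one) and that the pump-word remains safe (appending a letter at the end does not touch any of the safe-pump-word conditions). For case (c), $(u,M)$ becomes $(u,M')$ where the new edge encloses the factor $u_2$; here I invoke the connection established in Definition~\ref{def:safe-pump-word} between saturation witnesses and safe pump-words, together with (Inv0)(iv) guaranteeing $\xi_3=\varepsilon$ and hence $M_3=\emptyset$, to conclude $(u,M')$ is a safe pump-word with $\rho_{(u,M')}=\xi_1\reduce{\xi_2}$.

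Next I would handle (Inv1). It is vacuous unless $d_a\in F_\dom$, and since $A_\dom$ reaches $F_\dom$ only after reading $\dashv$, this forces $a=\dashv$, which can only happen in case (a.2) (in case (b)/(c) with $a=\dashv$ one would be in case (a), since for an input word in the domain there must be a uniformiser producing output; but more carefully, if $d_\dashv\in F_\dom$ and we are not in case (a), then by (Inv2) there is a $(p,\rho)$-uniformiser whose domain contains $\varepsilon=L(A_\dom,d')$ restricted appropriately --- I would spell out that the non-(a) branch is actually impossible when $a=\dashv$ leads into $F_\dom$, because then $S(\dashv)$ must be an output realising $R_p^\rho$ on input $\dashv$, which together with (Inv3) gives a contradiction unless the output is already determined; alternatively, and more cleanly, case (a.2) is by construction the branch taken whenever $a=\dashv$ and $d_\dashv\in F_\dom$, so I should re-examine the case split so that it is genuinely exhaustive in this situation). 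In case (a.2) the new memory is $[(r,\varepsilon,d_\dashv,\varepsilon,d_\dashv);(\varepsilon,\emptyset)]$ with $r\in F$ and $\rho=\xi=\varepsilon$, so (Inv1) holds.

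For (Inv2) and (Inv3) --- the substantive part --- I would argue as follows. In case (a.1) the new position is $(q,\xi\circ\tau_a,d_a)$ with $d_a\notin F_\dom$. Lemma~\ref{lem:propagate-invariant} directly gives that $\dom(R_q^{\xi\circ\tau_a})\supseteq L(A_\dom,d_a)$ and that a $(q,\xi\circ\tau_a)$-uniformiser exists, which is exactly (Inv2) with the new $d',\rho$ being $d_a,\xi\circ\tau_a$ (note $L(A_\dom,d_a)$ equals the set of words accepted from the new $d'=d_a$, as required). For (Inv3) with the new pump-word $(\varepsilon,\emptyset)$, $\Pump{|S'|!}{(\varepsilon,\emptyset)}=\{\varepsilon\}$ and $S'(\varepsilon)=\varepsilon$ since a $(q,\xi\circ\tau_a)$-uniformiser produces output only after seeing the lookahead (the first non-empty output has a prefix in $\Lout(q,\xi\circ\tau_a)$, which is non-empty words), so (Inv3) holds. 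In cases (b) and (c) we have $d_a\notin F_\dom$ as well (if $d_a\in F_\dom$ we would have been forced into (a) by the domain argument), and the relevant relation $R_p^\rho$ is unchanged --- $\rho$ is not modified in (b) or (c) --- so (Inv2) is inherited verbatim, using that $L(A_\dom,d')$ is unchanged in (b)? here I must be careful: in (b) we append $a$ to $u$ but $d'$ stays, and $L(A_\dom,d')$ must still be in $\dom(R_p^\rho)$, which is true since $\rho$ is unchanged. Actually (Inv2) only constrains $\rho$ and $d'$, neither of which changes in (b) or (c), so (Inv2) is immediate. For (Inv3) in case (b): given any $(p,\rho)$-uniformiser $S$, we are in the ``not case (a)'' branch, which means for every $(p,\rho)$-uniformiser $S$ there exists $w\in\Pump{|S|!}{(u,M)}$ with $S(wa)=\varepsilon$; since $\Pump{|S|!}{(ua,M)}=\Pump{|S|!}{(u,M)}\cdot\{a\}$ (appending $a$ outside all edges), such $wa\in\Pump{|S|!}{(ua,M)}$ witnesses (Inv3) for the new pump-word. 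For (Inv3) in case (c): the new pump-word $(u,M')$ differs from $(u,M)$ by one extra idempotent pumping factor; given a $(p,\rho)$-uniformiser $S$, by the negation of (a) there is $w\in\Pump{|S|!}{(u,M)}$ with $S(wa)=\varepsilon$, and because $S$ enters a loop on the new pumping factor (this is precisely why $|S|!$ repetitions were built in), we can re-inflate that factor arbitrarily to land in $\Pump{|S|!}{(u,M')}$ while keeping $S$'s output empty --- so $\exists w'\in\Pump{|S|!}{(u,M')}$ with $S(w'a)=\varepsilon$, whence the ``not (a)'' condition persists and (Inv3) holds; I would formalise the loop argument by noting $S$ has $\le|S|$ states so on $|S|!$ copies of the factor $u_2$ (or $w_2\in\Pump{|S|!}{(u_2,M_2)}$) it must repeat a state while outputting nothing.

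The main obstacle I expect is making the case split genuinely exhaustive and the non-overlap with (a) clean: in particular, showing that when we are in case (b) or (c) the domain-membership consequence of (Inv2) does not secretly force us into case (a), and symmetrically that in case (a) the hypotheses of Lemma~\ref{lem:propagate-invariant} really are met (they are, by the defining condition of case (a)). A secondary technical point is the precise handling of (Inv3) in case (c): one must verify that re-pumping the newly-enclosed factor stays inside $\Pump{|S|!}{(u,M')}$ and that $S$ genuinely loops with empty output on it, which relies on the $|S|!$ bound and on the output of $S$ being empty on the relevant prefix by (Inv3) for the old node; I would isolate this as the one place where a short pumping argument on $S$ is needed. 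Everything else reduces to unwinding the definitions of pump-word, safe pump-word, saturation witness, and the move types (Out0)--(Out3).
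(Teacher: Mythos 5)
Your treatment of (Inv0), (Inv1), (Inv2), and of (Inv3) in cases (a.1), (a.2) and (b) coincides with the paper's, which indeed disposes of these points by exactly the observations you make (in particular, the negation of case (a) is literally (Inv3) for the new pump-word $(ua,M)$, which settles case (b); and your worry about exhaustiveness when $a=\dashv$ is resolved in the paper simply by noting that (a)(2) is the only move after which the $A_\dom$-component lies in $F_\dom$). The genuine gap is (Inv3) in case (c), which is the only substantial part of the lemma, and there your argument does not go through. The witness $w\in\Pump{|S|!}{(u,M)}$ on which $S$ is silent contains the newly enclosed factor --- a word $w_2\in\Pump{|S|!}{(u_2,M_2)}$ --- exactly \emph{once}: the bound $|S|!$ forces $S$ to loop only on factors that are already repeated at least $|S|!$ times, i.e.\ those recorded in $M$, and says nothing about the behaviour of $S$ on repetitions of $w_2$. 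After reading $w_2$ once, $S$ need not return to the state it entered it in, and if one inserts additional copies of $w_2$, $S$ may well start producing output on them; so your claim that $S$ ``must repeat a state while outputting nothing'' on $|S|!$ copies of the new factor is precisely the assertion that needs proof, and it is false in general. If this naive re-inflation worked, the machinery of safe pump-words and Lemma~\ref{lem:output-large-pumping} would be unnecessary.

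The paper instead proves (Inv3) for $(u,M')$ by contradiction: assuming some $(p,\rho)$-uniformiser $S$ produces non-empty output on \emph{every} word of $\Pump{|S|!}{(u,M')}$, it constructs a new $(p,\rho)$-uniformiser $S'$ that produces non-empty output on every word of $\Pump{|S'|!}{(u,M)}$, contradicting the induction hypothesis (Inv3) for $(u,M)$. This construction is the heart of the proof: $S'$ simulates $S$ as if the maximal number $h$ of copies of $w_2$ on which $S$ stays silent had been inserted, detects (via a regular language of outputs) a prefix $\hat v\in\Lout(q,\rho_{\hat w},r)$ of the simulated output for some $\hat w\in\Pump{|S|!}{(u_1,M_1)(u_2,M_2)^h}$, and replaces it by a word in $\Lout(q,\xi_1,r)$, whose existence rests on Lemma~\ref{lem:output-large-pumping} together with the absorption properties $P_{\xi_1}=P_{\xi_1}P_{\xi_2}$ and $P_{u_1}=P_{u_1}P_{u_2}$ of a safe pump-word. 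Some argument of this kind --- converting a hypothetical counterexample for $(u,M')$ into one for $(u,M)$, using the saturation/safety conditions --- is indispensable; your proposal is missing this idea entirely, so the case (c) step of the induction is not established.
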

\begin{proof}
Most of the properties are obvious from the construction of the move:
\begin{itemize}
\item (Inv0) describes simple properties that directly follow from the construction of the new memory states. As only case we mention (Inv0)(iv). This follows from the fact that the move of type (Out3) is always applied when $\xi$ is saturated. This means if a new edge can be added to $(u,M)$ then it must involve the last letter that was added to $u$. Otherwise, the edge could have been added in a previous move.

\item (Inv1) clearly holds after (a)(2) by construction of the move. Furthermore, this is the only move after which the state of $A_\dom$ is in $F_\dom$.

\item (Inv2) holds after (b) and (c) because $p,\rho$, and $d'$ do not change. It holds after (a)(1) by Lemma~\ref{lem:propagate-invariant}. After (a)(2) the state of $A_\dom$ is in $F_\dom$ and hence (Inv2) is trivially satisfied.

\item (Inv3) is trivially satisfied after (a)(1) and (a)(2) because the new pump-word is $(\varepsilon,\emptyset)$. (Inv3) also holds after (b) because the new pump-word is $(ua,M)$ and the negation of (a) is exactly (Inv3) for $(ua,M)$. (Inv3) after (c) requires some more work, and is shown below.
\end{itemize}
We need to show that (Inv3) is preserved after an application of case (c) in the strategy description. So we need to show that for every $(p,\rho)$-uniformiser $S$ there is $w \in \Pump{|S|!}{(u,M')}$ such that $S(w) = \varepsilon$, where $M'$ extends $M$ by the new edge as defined in case (c).

Toward a contradiction, assume that there is $(p,\rho)$-uniformiser $S$ such that $S(w) \not= \varepsilon$ for all $w \in \Pump{|S|!}{(u,M')}$. We show that then there also exists a $(p,\rho)$-uniformiser $S'$ such that $S(w) \not= \varepsilon$ for all $w \in \Pump{|S'|!}{(u,M)}$, contradicting the assumption that (Inv3) holds for $(u,M)$. 

We construct $S'$ such that $|S'| \ge |S|$ and $S(w) \not= \varepsilon$ for all $w \in \Pump{|S|!}{(u,M)}$, which then implies $S(w) \not= \varepsilon$ for all $w \in \Pump{|S'|!}{(u,M)}$.

Let $\xi = \xi_1\xi_2\xi_3$ and $(u,M) = (u_1,M_1)(u_2,M_2)(u_3,M_3)$ be as in case (c). Recall that $\xi_3 = \varepsilon$ and thus $u_3 = \varepsilon$,  $\xi = \xi_1\xi_2$, and $(u,M) = (u_1,M_1)(u_2,M_2)$
%Words in $\Pump{|S|!}{(u,M')}$ are thus of the form $w_1(w_2)^\ell$ with  $w_1 \in \Pump{|S|!}{(u_1,M_1)}$ and $w_2 \in \Pump{|S|!}{(u_2,M_2)}$.

We describe how $S'$ works in two phases. We do not provide a formal definition of $S'$ because it should be clear that the description below can be implemented by a sequential finite state transducer.
\begin{itemize}
\item In the first phase, $S'$ mimics  $S$ on the input, while at the same time simulating an automaton for $\Pump{|S|!}{(u_1,M_1)}$. If during this phase $S$ produces output, $S'$ simply continues to mimic $S$ and cancels all other activities (recall that the goal is construct a transducer $S'$ that produces output on all words from $\Pump{|S|!}{(u,M)}$; so if $S$ produces output we are done for this input word).

If $S'$ detects that $w_1 \in \Pump{|S|!}{(u_1,M_1)}$ has been read (and no output has been produced), jump to the next phase.

\item Let $s_1$ be the state of $S$ with $S:s_0 \xrightarrow{w_1/\varepsilon} s$ (which is reached during the simulation of $S$ in the first phase). Fix some $w_2 \in \Pump{|S|!}{(u_2,M_2)}$, e.g., the shortest word in this set. Let $h \ge 0$ be maximal such that starting from $s_1$, $S$ does not produce any output on $(w_2)^h$. Note that there exists a maximal such $h$ because $S$ will produce output for $|S|!$ repetitions of $w_2$. So the situation is now as follows:
\[
S: s_0 \xrightarrow{w_1 / \varepsilon} s_1 \xrightarrow{(w_2)^h / \varepsilon} s_1'
\]

In the second phase, $S'$ continues the simulation of $S$ but now from state $s_1'$. If the first output in this simulation of $S$ is produced, it is of the form $yz$ with $y \in \Lout(p,\rho,q)$ for some state $q$. Then, $S'$ outputs $y$ and jumps to the third phase.
\item The current situation can be depicted as follows, where the second line only indicates the part of $S'$ that simulates $S$:
\[
  \begin{array}{l}
S: s_0 \xrightarrow{w_1 / \varepsilon} s_1 \xrightarrow{(w_2)^h / \varepsilon} s_1' \xrightarrow{w'/yz} s_2 \\    
S': s_0 \xrightarrow{w_1 / \varepsilon} s_1 \leadsto s_1' \xrightarrow{w'/y} s_2
  \end{array}
\] 
Note that the actual input is $w_1w'$, and that $(w_2)^h$ has only virtually be inserted in the computation of $S$.

For the third phase, let $L_{\text{Pump},{s_1}'}$ be the set of all $\hat{w} \in \Pump{|S|!}{(u_1,M_1),(u_2,M_2)^h}$ with $S: s_0 \xrightarrow{\hat{w}/\varepsilon} s_1'$. Note that $w_1(w_2)^h$ is such a word, and that $L_{\text{Pump},{s_1}'}$ is regular.

In the third phase, $S'$ continues the simulation of $S$ from $s_2$ but without producing output. Instead, it waits until the output that $S$ would have produced, including the pending $z$ from the previous transition, contains a prefix $\hat{v} \in \Lout(q,\rho_{\hat{w}},r)$ for some $r$ and some $\hat{w} \in L_{\text{Pump},{s_1}'}$ such that $\Lout(q,\xi_1(\xi_2)^h,r) \not= \emptyset$. If such a prefix $\hat{v}$ is reached, then $S'$ outputs instead a word $v_1 \in \Lout(q,\xi_1,r)$, which exists because $P_{\xi_1} = P_{\xi_1}P_{\xi_2} \cdots P_{\xi_2}$. And after that, $S'$ just copies the output of $S$.
\end{itemize}

We need to argue that $S'$ is a $(p,\rho)$-uniformiser with the desired property, and  for the third phase, that the output of $S$ contains such a prefix $\hat{v}$ that can be substituted by $v_1 \in \Lout(q,\xi_1(\xi_2)^h,r)$.

For this purpose, let $x$ be an input that takes $S$ from $s_2$ into a final state $s_3$, and let $v$ be the corresponding output. The whole computation of $S$ that is used in the description above looks as follows:
\[
S: s_0 \xrightarrow{w_1 / \varepsilon} s_1 \xrightarrow{(w_2)^h / \varepsilon} s_1' \xrightarrow{w'/yz} s_2 \xrightarrow{x/v} s_3    
\]
We note that by definition of $R_\rho^p$, the pair $(w_1(w_2)^hw'x,zv)$ is accepted by $\tra$ from state $q$. Now assume that $zv = \hat{v}v'$ has a prefix $\hat{v}$ as described in the third phase. Then we can replace $w_1(w_2)^h$ in the above computations of $S$ by $\hat{w}$:
\[
S: s_0 \xrightarrow{\hat{w} / \varepsilon} s_1' \xrightarrow{w'/yz} s_2 \xrightarrow{x/v} s_3    
\]
and we thus know that $(\hat{w}w'x,\hat{v}v')$ is accepted by $T$ from $q$, and furthermore $\delta^*(q,\hat{w},\hat{v}) = (r,\varepsilon,\varepsilon)$ (for $r$ as in the third phase). By the choice of $v_1 \in \Lout(q,\xi_1,r)$ and the fact that $w_1$ is compatible with $\xi_1$, we obtain that $\delta^*(q,w_1,v_1) = (r,\varepsilon,\varepsilon)$.

The computation of $S'$ can be sketched  as follows (again only showing the states of the simulation of $S$ inside $S'$):
\[
S': s_0 \xrightarrow{w_1 / \varepsilon} s_1 \leadsto s_1' \xrightarrow{w'/y} s_2  \xrightarrow{x/v_1v'} s_3 
\] 
Since $\delta^*(q,w_1,v_1) = (r,\varepsilon,\varepsilon)$, we obtain that $(w_1w'x,v_1v')$ is accepted by $\tra$ from $q$. Hence $S'$ is indeed a $(p,\rho)$-uniformiser.

It remains to prove the existence of the prefix $v_1$ in the output $zv$.
Let us verify by applying  Lemma~\ref{lem:output-large-pumping} that the output must contain such a prefix.

Because the pumping factors for $w_1$ and $w_2$ are at least $|S|!$, one can safely pump the factors further by multiples of $|S|!$ without $S$ noticing the difference. Hence, for each $k$ one can find some $\hat{w} \in \Pump{k}{(u_1,M_1)(u_2,M_2)^h}$ with $S:s_0 \xrightarrow{\hat{w}/\varepsilon} s_1'$.
Thus, there is a prefix $v_1$ of $zv$ and some  $\hat{w} \in \Pump{3|v_1|}{(u_1,M_1)(u_2,M_2)^h}$ with $v_1 \in \Lout(q,\rho_{\hat{w}},r)$ for a state $r$. Then Lemma~\ref{lem:output-large-pumping} yields that $\Lout(q,\xi_1(\xi_2)^h,r) \not= \emptyset$.
\end{proof}

Lemma~\ref{lem:drat-uniformiser-to-strategy} now follows from
Lemma~\ref{lem:invariant-implies-winning} and
Lemma~\ref{lem:invariant-preserved}.

We can now finish the proof of Theorem~\ref{the:drat-unif}: The game $\game_\tra$ can be effectively constructed (the number $K$ from Lemma~\ref{lem:length-saturated} is computable from $\tra$). According to Lemma~\ref{lem:drat-uniformiser-to-strategy} and Lemma~\ref{lem:drat-strategy-to-uniformiser}, there is a sequential uniformiser of $\tra$ if, and only if, player Output has a winning strategy in $\game_\tra$. Determining the player that has a winning strategy in the safety game $\game_\tra$ can be done in polynomial time in the size of the game graph (see, e.g., \cite{GradelTW02}). 
%%%%%%%%%%%%%%%%%%%%%%%%%%%%%%%%%%%%%%%%%
\subsection{Bounded Delay Uniformisation}
\label{subsec:drat-k-delay}
%%%%%%%%%%%%%%%%%%%%%%%%%%%%%%%%%%%%%%%%%

Based on the the decidability proof we can now prove the following theorem.

\dratkdelay*

\begin{proof}
  Let $T$ be deterministic transducer. If there is a sequential uniformiser for $T$, then there is a winning strategy in $\game_\tra$ by Lemma~\ref{lem:drat-uniformiser-to-strategy}. Consider the sequential uniformiser $S$ that is constructed in the proof of Lemma~\ref{lem:drat-strategy-to-uniformiser} from a positional winning strategy in $\game_\tra$. 

For each input word $u$ in the domain of $R_\tra$, there is a unique computation of $S$ on $u$. Let $v$ be the output produced by this computation. This induces a unique computation of $\tra$ on $(u,v)$. 

Let $u'$ be a prefix of $u$ and $S$ produces the prefix $v'$ of $v$ as output while reading $u'$, and reaches the state $(p,\rho,d)$. Then Lemma~\ref{lem:game-property} yields that $u'=xw$ with $\delta^*(q_0,u',v') = (p,w,\varepsilon)$ and $w \in \Lin(\rho)$. This implies that the output produced by $S$ on $u'$ is a prefix of the output that $T$ reads in its computation for $u$ and $v$. So the delay for the prefix $u'$ is the part of the output word that $T$ is ahead: If $v''$ is such that $\delta^*(q_0,u',v'v'') = (p,\varepsilon,\varepsilon)$, then $v''$ is the delay. 

By the rules of the game, $v''$ is a word in $\Lout(p,\rho)$ that is produced by moves of type (Out2). The number of these moves that produce $v''$ is at most $|\rho|$, and in each move a word of bounded length is produced. Therefore, the length of $v''$ is bounded by $(2K+1)M$, where $M$ is the maximal length of a word annotating an edge of the game graph, and $2K+1$ is the maximal length of $\rho$. This number can be computed. 
\end{proof}

%%% Local Variables:
%%% TeX-master: "main"
%%% End:

\end{document}